\theoremstyle{plain}
\newtheorem{theorem}{Theorem}[section]
\newtheorem{proposition}[theorem]{Proposition}
\newtheorem{lemma}[theorem]{Lemma}
\newtheorem{corollary}[theorem]{Corollary}
\theoremstyle{definition}
\newtheorem{definition}[theorem]{Definition}
\newtheorem{remark}[theorem]{Remark}
\numberwithin{equation}{section}
\def\cB{{\mathcal{B}}}
\def\C{{\mathbb{C}}}
\def\cC{{\mathcal{C}}}
\def\cD{{\mathcal{D}}}
\def\be{{\mathbf{e}}}
\def\cF{{\mathcal{F}}}
\def\cH{{\mathcal{H}}}
\def\bk{{\mathbf{k}}}
\def\cL{{\mathcal{L}}}
\def\cM{{\mathcal{M}}}
\def\hm{{\hat{m}}}
\def\N{{\mathbb{N}}}
\def\cN{{\mathcal{N}}}
\def\cO{{\mathcal{O}}}
\def\cP{{\mathcal{P}}}
\def\bp{{\mathbf{p}}}
\def\R{{\mathbb{R}}}
\def\S{{\mathbb{S}}}
\def\bs{{\mathbf{s}}}
\def\T{{\mathbb{T}}}
\def\cU{{\mathcal{U}}}
\def\bu{{\mathbf{u}}}
\def\cV{{\mathcal{V}}}
\def\bv{{\mathbf{v}}}
\def\bW{{\mathbf{W}}}
\def\bx{{\mathbf{x}}}
\def\bX{{\mathbf{X}}}
\def\hcX{{\hat{\mathcal{X}}}}
\def\hbx{{\hat{\mathbf{x}}}}
\def\hbX{{\hat{\mathbf{X}}}}
\def\by{{\mathbf{y}}}
\def\bY{{\mathbf{Y}}}
\def\hcY{{\hat{\mathcal{Y}}}}
\def\hby{{\hat{\mathbf{y}}}}
\def\hbY{{\hat{\mathbf{Y}}}}
\def\Z{{\mathbb{Z}}}
\def\bZ{{\mathbf{Z}}}
\def\opsi{{\overline{\psi}}}
\def\spin{{\{\uparrow,\downarrow\}}}
\def\ua{{\uparrow}}
\def\da{{\downarrow}}
\def\la{{\lambda}}
\def\O{{\Omega}}
\def\o{{\omega}}
\def\eps{{\varepsilon}}
\def\g{{\gamma}}
\def\G{{\Gamma}}
\def\s{{\sigma}}
\def\hs{\hat{\sigma}}
\def\htau{\hat{\tau}}
\def\hrho{\hat{\rho}}
\def\heta{\hat{\eta}}
\def\D{{\Delta}}
\def\<{{\langle}}
\def\>{{\rangle}}
\def\Tr{\mathop{\mathrm{Tr}}}
\def\Arg{\mathop{\mathrm{Arg}}}
\def\dis{\mathop{\mathrm{dis}}\nolimits}
\def\sgn{\mathop{\mathrm{sgn}}\nolimits}
\def\Im{\mathop{\mathrm{Im}}}
\def\Re{\mathop{\mathrm{Re}}}
\def\b0{{\mathbf{0}}}
\def\ec{{\epsilon_c^{\sigma}}}
\def\eo{{\epsilon_o^{\sigma}}}
\begin{document}

\title{Exponential Decay of 
Equal-Time Four-Point Correlation Functions
 in the Hubbard Model on the Copper-Oxide Lattice}
\author{Yohei Kashima \medskip \\
Department of Mathematical Sciences, University of Tokyo\\
Komaba, Tokyo, 153-8914, Japan\\ 
kashima@ms.u-tokyo.ac.jp}
\date{}

\maketitle

\begin{abstract}
\noindent
For the Hubbard model on the two-dimensional
 copper-oxide lattice, equal-time four-point correlation functions at positive
 temperature are proved to decay exponentially in the thermodynamic 
 limit if the magnitude of the on-site interactions is smaller than some power of
 temperature. This result especially implies that the equal-time
 correlation functions for singlet Cooper pairs of various symmetries
 decay exponentially in the distance between the Cooper pairs in high temperatures or in low-temperature
 weak-coupling regimes. The proof is based on a multi-scale integration
 over the Matsubara frequency.
\end{abstract}

\section{Introduction}\label{sec_introduction}
\subsection{Introductory remarks}
In order to explain high-temperature superconductivity in ceramic copper
oxide materials, several tight-binding models for the charge carriers in
2 dimensional plane have been proposed with the consensus that the
superconducting pairing mechanism should be understood by focusing on
the conducting $\text{CuO}_2$ plane first. In the hierarchy of the
well-known 2D models (see, e.g, \cite{D}) the three-band Hubbard model
on the copper-oxide lattice (\cite{E}), or the CuO Hubbard model in short, is
believed to be the closest to the reality since it explicitly
distinguishes one relevant electron orbital of the copper and those of
the oxygens surrounding the copper in the unit cell. Being more
realistic also means being more complex. Rigorous mathematical methods
need to be developed to explore the relatively involved structure of the CuO
Hubbard model in depth. 

In this paper we prove that equal-time 4-point correlation functions in
the CuO Hubbard model at positive temperature decay exponentially in the
thermodynamic limit if the coupling constants on both the copper
and the oxygen sites are smaller than some power of
temperature. The result will be fully stated in Subsection
\ref{subsec_main_result}.  One direct consequence of this theorem is
the exponential decay of pairing-pairing correlation functions 
in the distance between the center of 2 electrons and that
of 2 holes, excluding long range correlations between singlet Cooper
pairs in high temperatures or in low-temperature weak-coupling regimes. 

It has been proved in \cite{K2} that finite-temperature equal-time correlation functions for
many-electron models, including the Hubbard model as one instance, on the
hyper-cubic lattice of arbitrary dimension decay exponentially if the
interaction is smaller than some power of temperature. The proof of
\cite{K2} essentially uses the volume-, temperature-independent determinant
bound on the covariance matrix established by Pedra and Salmhofer
(\cite{PS}). The exponential decay of the correlation functions in the CuO
Hubbard model cannot be deduced as an immediate corollary of the theorems in
\cite{K2}, since Pedra-Salmhofer's determinant bound in its original form
\cite[\mbox{Theorem 2.4}]{PS} does not apply to the covariance for
multi-band many-Fermion models such as the CuO Hubbard model. Thus one
has to alter the way to achieve the goal. As a way out we expand the
covariance over the Matsubara frequency through the Fourier transform
this time and try to control the correlation function analytically by means
of a multi-scale expansion along the segments of the large Matsubara
frequency. The dispersion relation for the free particle hopping to the
nearest neighbor sites on the CuO lattice can be a square root of cosine
of the momentum variable, which is, unlike in the single-band models
treated in \cite{K1}, \cite{K2}, non-analytic. Once transformed into the
Matsubara sum, however, the covariance appears to contain only the
square of the dispersion relation. Thus the covariance in the Matsubara
sum representation explicitly shows its analytic property with respect
to the momentum variable.  As in \cite{K2} the
analyticity of the covariance enables us to reformulate the correlation function multiplied
by the distance between the electrons and the holes into a multi-contour
integral of the correlation function with respect to new complex
variables inserted in the covariance. The practical role of the
multi-scale integration over the Matsubara frequency in this paper is to
establish a volume-independent upper bound on the perturbed correlation
function inside the multi-contour integral. Due to a self-contained
nature of the multi-scale Matsubara expansion, the proofs in this paper
merely rely on the repeated use of the tree formula for logarithm of the
Grassmann Gaussian integral. 

More precisely speaking, the correlation function of our original interest is
expressed as a well-defined finite dimensional Grassmann integral
during the intermediate technical construction. In the major part of
this paper we deal with the Grassmann integral formulation, which is
flexible to mathematical manipulations, as the rigorous counterpart of
the correlation function. This is the same stance as taken in \cite{K1},
\cite{K2}, or more generally in the constructive Fermionic quantum field
theory (see, e.g, \cite{FKT}). Finally by sending the finite dimensional
formulation to the limit we withdraw the conclusion on the original
correlation function defined by trace operations over the Fermionic Fock space.

This paper is not the first to consider multi-scale analysis over
the large Matsubara frequency. On the contrary, a number of papers have
already discussed qualitatively similar problems to the Matsubara
ultraviolet problem posed in this paper. See, e.g, \cite{BGM}, \cite{BGPS},
\cite{G}, \cite{GM} by one of the pioneering groups of the subject. 
One of the purposes of this paper is set to provide readers with an
alternative method to solve the Matsubara ultraviolet problem.
In order to help the readers to properly comprehend the purpose of this
paper, let us summarize the main differences between the methods used in this
article and those in the preceding papers. First, this paper uses a
version of the finite dimensional Grassmann integral formulation reported in \cite{K1}.
 The reduction to the finite dimensionality in this formulation is based
 on the discretization of the interval of temperature in the perturbative
 expansion of the partition function. Accordingly the basis of Grassmann algebra is indexed by the finite space-time variables and the
 step size of the discretization explicitly appears in the
 characterization of the covariance as a parameter, changing the face of
 the covariance from the well-known free propagator. This paper does not
 introduce Grassmann algebra indexed by the momentum variables. In
 \cite{BGM}, \cite{G}, \cite{GM} the derivation of the finite
 dimensional Grassmann integral formulation is based on the cut-off on
 the Matsubara frequency. As a result the basis of Grassmann algebra is
 indexed by the finite momentum variables. Secondly, the multi-scale
 analysis in this paper is completed by the induction on the scale
 level, which assumes a norm bound on the input and then proves the
 relevant norm bound on the output produced by the single-scale
 integration. The papers \cite{BGM}, \cite{BGPS}, \cite{G}, \cite{GM}
 use a family of trees called the Gallavotti-Nicol\`o trees to organize
 the multi-scale integration process, achieving collective descriptions
 of the theory all through the integration levels. This paper's concept
 of finding a norm bound on the output of the integration at one scale is
 closer to the rigorous analysis on finite dimensional Grassmann algebra
 established in \cite{FKT1}, \cite{FKT}. However, the paper \cite{FKT1}
 and the book \cite{FKT} apply a representation theorem developed by
 themselves to expand logarithm
 of the Grassmann Gaussian integral, while this paper as well as the
 papers \cite{BGM}, \cite{BGPS}, \cite{G}, \cite{GM} use the tree
 expansion for the same purpose. Thirdly, this paper
 derives equal-time 4-point correlation functions by substituting an artificial
 quartic term into the original Hamiltonian and differentiating the free
 energy governed by the modified Hamiltonian with respect to the
 coefficient of the artificial term. The papers \cite{BGM}, \cite{BGPS},
 \cite{G}, \cite{GM} derive correlation functions by inserting the
 source Grassmann variables into the Grassmann integral formulation and then
 letting the Grassmann derivatives act on the modified Grassmann
 integral formulation called the generating function.

Though this paper involves a multi-scale analysis concerning the
Matsubara sum as the main technical ingredient, it does not treat any
infrared multi-scale analysis around zero points of the dispersion
relation. Accordingly this paper has no improvement on the temperature
dependency of the allowed magnitude of the interaction over the single-scale analysis \cite{K1}, 
\cite{K2} and cannot study the behavior of correlation functions at zero
temperature. In recent years infrared multi-scale integration techniques
have been intensively applied to describe the zero-temperature limit of
thermal expectation values of various observables in the Hubbard model
on the honeycomb lattice by Giuliani and Mastropietro (\cite{GM}) and by
Giuliani, Mastropietro and Porta (\cite{GMP}). In connection with the
main result of this article we should remark that the many-electron model of graphene
studied in \cite{GM}, \cite{GMP} also has a matrix-valued kinetic
energy, so the single-scale analysis previously reported in \cite{K2}
does not prove the exponential decay of the finite-temperature
correlation functions in the system. However, it is straightforward to
adapt the proofs in this article to conclude the same result for the Hubbard model
on the honeycomb lattice as claimed for the CuO Hubbard model.

This paper is outlined as follows. In the following subsections we
define the CuO Hubbard model and
state the main result of this paper. In Section \ref{sec_formulation} we
characterize the correlation function as a limit of the finite
dimensional Grassmann integral and derive the
contour integral formulation. In Section
\ref{sec_preliminaries} we prepare some necessary tools for the
multi-scale integration such as the cut-off function and the sliced
covariances. In Section \ref{sec_multiscale_integration} we carry out
the multi-scale integration over the Matsubara frequency and prove the
main theorem. In Appendix \ref{app_covariance} we derive the covariance
governed by the free Hamiltonian on the CuO lattice. Appendix \ref{app_formulation_convergence}
provides a sketch of how to prove the convergence property of the
Grassmann integral formulation. In Appendix \ref{app_log_grassmann} we
prove a general formula for logarithm of Grassmann polynomials, which is
necessary for the multi-scale
integration. Finally Appendix \ref{app_thermodynamic_limit} shows
that the correlation function converges to a finite value in the
thermodynamic limit if the coupling constants obey the smallness
condition under which the multi-scale analysis is performed.

\subsection{The Hubbard model on the CuO lattice}\label{subsec_model_hamiltonian}
Here we define the model Hamiltonian operator. For $L\in\N$ let
$\G:=(\Z/L\Z)^2$. The CuO lattice consists of 3 separate lattices, each
of which is isomorphic to $\G$ (see Figure \ref{fig_CuO_lattice}).
\begin{figure}
\begin{center}
\begin{picture}(60,60)(0,0)
\put(0,0){$\bullet$}
\put(40,0){$\bullet$}
\put(0,40){$\bullet$}
\put(40,40){$\bullet$}
\put(20,0){$\circ$}
\put(60,0){$\circ$}
\put(20,40){$\circ$}
\put(60,40){$\circ$}
\put(0,20){$\diamond$}
\put(40,20){$\diamond$}
\put(0,60){$\diamond$}
\put(40,60){$\diamond$}
\end{picture}
 \caption{The CuO lattice for $L=2$, where `$\bullet$' denotes Cu sites,
 `$\circ$' denotes O sites and `$\diamond$' denotes the other O sites.}\label{fig_CuO_lattice}
\end{center}
\end{figure}
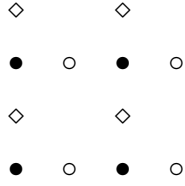
For $\bx\in \G$ let $(1,\bx)$ represent a Cu site, $(2,\bx)$ represent the
O site right to $(1,\bx)$, and $(3,\bx)$ denote the O site above
$(1,\bx)$ (see Figure \ref{fig_site_index}).
\begin{figure}
\begin{center}
\begin{picture}(55,55)(0,0)
\put(5,15){$\bullet$}
\put(0,0){$(1,\bx)$}
\put(45,15){$\circ$}
\put(40,0){$(2,\bx)$}
\put(5,55){$\diamond$}
\put(0,40){$(3,\bx)$}
\end{picture}
 \caption{Labeling each site.}\label{fig_site_index}
\end{center}
\end{figure}
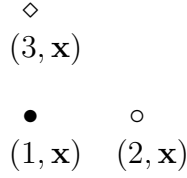
The CuO lattice is viewed as the union of $\{(\rho,\bx)\ |\ \bx\in\G\}$
$(\rho=1,2,3)$. The model Hamiltonian is defined as a self-adjoint
operator on the Fermionic Fock space
$F_f(L^2(\{1,2,3\}\times\G\times\spin))$. See, e.g, \cite[\mbox{Appendix
A}]{K1} for a brief description of the Fermionic Fock space defined on a finite
lattice. The CuO Hubbard model was originally designed to govern the total energy
of holes moving and interacting on the $\text{CuO}_2$ plane (see \cite{E}). Thus
the vacuum of $F_f(L^2(\{1,2,3\}\times\G\times\spin))$ should be
interpreted as the state where every site of $\{1,2,3\}\times \G$ is
occupied by an electron-pair. 

For $(\rho,\bx,\s)\in \{1,2,3\}\times\G\times\spin$ let $\psi_{\rho \bx
\s}$ be the annihilation operator defined on
$F_f(L^2(\{1,2,3\}\times\G\times\spin))$. The physical role of $\psi_{\rho \bx
\s}$ is to annihilate a hole with spin $\s$ at the site $(\rho,\bx)$. We
write the adjoint operator of $\psi_{\rho \bx \s}$ as $\psi_{\rho \bx
\s}^*$. The operator $\psi_{\rho \bx \s}^*$ is called the creation
operator and physically considered to be creating a hole with spin $\s$
at the site $(\rho,\bx)$. The CuO Hubbard model $H$ is defined as
follows. 
\begin{align*}
&H:=H_0+V,\\
&H_0:=t\sum_{(\bx,\s)\in\G\times\spin}(\psi_{1\bx\s}^*\psi_{2\bx\s}+\psi_{1\bx\s}^*\psi_{2(\bx-\be_1)\s}+\psi_{1\bx\s}^*\psi_{3\bx\s}+\psi_{1\bx\s}^*\psi_{3(\bx-\be_2)\s}+\text{h.c})\\
&\qquad\quad+\sum_{(\bx,\s)\in\G\times\spin}\ec\psi_{1\bx\s}^*\psi_{1\bx\s}+\sum_{(\rho,\bx,\s)\in\{2,3\}\times\G\times\spin}\eo\psi_{\rho\bx\s}^*\psi_{\rho\bx\s},\\
&V:=U_c\sum_{\bx\in\G}\psi_{1\bx\ua}^*\psi_{1\bx\da}^*\psi_{1\bx\da}\psi_{1\bx\ua}+U_o\sum_{(\rho,\bx)\in \{2,3\}\times
 \G}\psi_{\rho\bx\ua}^*\psi_{\rho\bx\da}^*\psi_{\rho\bx\da}\psi_{\rho\bx\ua},
\end{align*}
where $\be_1:=(1,0),\be_2:=(0,1)\in\Z^2$ and the terminology
``Hermitian conjugate'' is shortened to ``h.c'', meaning that the
adjoint operators of the operators in front are placed. The parameters $t$,
$U_c$, $U_o$, $\ec$, $\eo$ ($\s\in\spin$) are initially set to be
real. The parameter $t$ is the hopping amplitude between a Cu
site and the neighboring O sites. The parameters $\ec$ and $\eo$
represent the on-site energy minus the hole chemical potential for the Cu
sites and the O sites, respectively. We assume that the quadratic
Hamiltonian $H_0$ may contain the contribution from the magnetic field
such as
$h_c\sum_{\bx\in\G}S_{1,\bx}^z+h_o\sum_{(\rho,\bx)\in\{2,3\}\times\G}S_{\rho,\bx}^z$
with $h_c, h_o\in\R$, 
$S_{\rho,\bx}^z:=\frac{1}{2}(\psi_{\rho\bx\ua}^*\psi_{\rho\bx\ua}-\psi_{\rho\bx\da}^*\psi_{\rho\bx\da})$
$(\rho\in \{1,2,3\},\bx\in\G)$. This is the reason why $\ec$ and $\eo$ are defined to be
spin-dependent. The strength of the on-site interaction is expressed
by $U_c$ on the Cu sites and by $U_o$ on the O sites. 

Let $\beta>0$ denote the inverse of temperature times the Boltzmann
constant. The thermal expectation value of an observable $O$ is defined
as $\Tr(e^{-\beta H}O)/\Tr e^{-\beta H}$, where the trace is taken over
the Fock space $F_f(L^2(\{1,2,3\}\times\G\times\spin))$. For conciseness
we write $\<O\>_L$ in place of $\Tr(e^{-\beta H}O)/\Tr e^{-\beta H}$.
   
\subsection{Exponential decay property of the correlation
  functions}\label{subsec_main_result}
Let $\|\cdot\|_{\R^2}$ denote the Euclidean norm of $\R^2$ and $e(\approx
2.71828)$ be the base of the natural logarithms. This paper is devoted
to establish the following theorem.
\begin{theorem}\label{thm_exponential_decay}
There exist non-decreasing positive functions $f_1(\cdot)$,
 $f_2(\cdot):\R_{\ge 1}\to\R_{>0}$ such that if
\begin{equation}\label{eq_theorem_constraint}
|U_c|,|U_o|\le
 \frac{1}{f_1\left(\max_{\s\in\spin}\{1,|t|,|\ec|,|\eo|\}\right)\max\{1,\beta^{16}\}\beta},
\end{equation}
$\lim_{L\to \infty\atop L\in
 \N}\<\psi_{\hrho_1\hbx_1\hs_1}^*\psi_{\hrho_2\hbx_2\hs_2}^*\psi_{\heta_2\hby_2\htau_2}\psi_{\heta_1\hby_1\htau_1}+\text{h.c}\>_L$ 
exists and satisfies that
\begin{align}
&\left|\lim_{L\to \infty\atop L\in
 \N}\<\psi_{\hrho_1\hbx_1\hs_1}^*\psi_{\hrho_2\hbx_2\hs_2}^*\psi_{\heta_2\hby_2\htau_2}\psi_{\heta_1\hby_1\htau_1}+\text{h.c}\>_L\right|\le
 f_2\left(\max_{\s\in\spin}\{1,|t|,|\ec|,|\eo|\}\right)\max\{1,\beta^{16}\}\notag\\
&\quad\cdot\left(\frac{1}{\max\{1,t^2\}\max\{\beta,\beta^2\}}+1\right)^{-\frac{1}{8e}\|\sum_{j=1}^2(\hat{s}(\hs_j)\hbx_j-\hat{s}(\htau_j)\hby_j)\|_{\R^2}},\label{eq_theorem_decay_bound}
\end{align}
for any $(\hrho_j,\hbx_j,\hs_j),(\heta_j,\hby_j,\htau_j)\in
 \{1,2,3\}\times\Z^2\times\spin$ $(j=1,2)$,
 $t,\ec,\eo\in\R$ $(\s\in\spin)$, $\beta\in\R_{>0}$ and any map $\hat{s}(\cdot):\spin\to
 \{1,-1\}$. 
\end{theorem}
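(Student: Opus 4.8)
The plan is to pass from the trace expression $\<\cdots\>_L$ to a finite-dimensional Grassmann Gaussian integral, to extract the exponential decay by a contour deformation that exploits the analyticity of the covariance in the spatial momentum, and then to control the resulting integrand by a multi-scale integration over the large Matsubara frequency. First I would realize the equal-time four-point function as a derivative of a free energy: I would add to $H$ an artificial quartic term whose coefficient $\la$ couples to $\psi_{\hrho_1\hbx_1\hs_1}^*\psi_{\hrho_2\hbx_2\hs_2}^*\psi_{\heta_2\hby_2\htau_2}\psi_{\heta_1\hby_1\htau_1}+\text{h.c}$, so that the desired correlation function is $\partial_\la$ of $-\beta^{-1}\log\Tr e^{-\beta H}$ evaluated at $\la=0$. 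Following the discretization-in-$\beta$ construction of \cite{K1}, this free energy is rewritten, up to a limit in the step size of the time-slicing, as the logarithm of a Grassmann Gaussian integral on a finite-dimensional Grassmann algebra indexed by the space-time-spin variables, with a covariance whose explicit form is derived in Appendix \ref{app_covariance}. The key structural input is that, after the Matsubara sum is carried out, the covariance depends on the spatial momentum only through the \emph{square} of the CuO dispersion relation; although the dispersion itself involves a non-analytic square root of cosines, its square is a trigonometric polynomial, so the covariance extends to an analytic function of the complexified momentum.

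Second, I would use this analyticity to manufacture the decay factor. The point is that the translation weight carrying $\|\sum_{j=1}^2(\hat{s}(\hs_j)\hbx_j-\hat{s}(\htau_j)\hby_j)\|_{\R^2}$ can be absorbed by inserting complex variables into the covariance: writing the spatial dependence through the momentum sum and shifting the momentum contours into the complex domain by an amount in the direction of $\sum_{j=1}^2(\hat{s}(\hs_j)\hbx_j-\hat{s}(\htau_j)\hby_j)$, an application of Cauchy's theorem rewrites the product of the correlation function and the exponential weight as a multi-contour integral of a modified correlation function whose covariance now carries the shifted momenta. Each shifted covariance acquires only a bounded multiplicative factor, while the overall prefactor produces precisely the exponential decay in the norm. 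The admissible shift radius, constrained by the width of the analyticity strip of the squared dispersion, is what fixes the base $(\max\{1,t^2\}\max\{\beta,\beta^2\})^{-1}+1$ of the exponential recorded in \eqref{eq_theorem_decay_bound}.

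Third, the heart of the argument is a volume-independent bound on the Grassmann integral appearing inside the contour integral. I would slice the covariance along dyadic segments of the large Matsubara frequency, using the cut-off functions and sliced covariances prepared in Section \ref{sec_preliminaries}, and integrate the scales out one at a time. The scheme is an induction on the scale level: assuming a suitable norm bound on the effective interaction entering one scale, I would expand the logarithm of the single-scale Grassmann Gaussian integral by the tree formula proved in Appendix \ref{app_log_grassmann}, and then show that the output effective interaction obeys the same type of bound. This step rests on determinant and Gram bounds for each sliced covariance that are uniform in $L$ and in the scale index; the smallness hypothesis \eqref{eq_theorem_constraint}, with its explicit power $\beta^{16}$, is exactly what guarantees that the inductive norm bound closes and that the perturbation series converges uniformly in $L$.

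Finally, collecting the single-scale bounds yields an estimate on the integrand that is independent of $L$ and integrable against the contour measure; combined with the exponential prefactor from the second step, this produces the bound \eqref{eq_theorem_decay_bound} for each finite $L$. Sending $L\to\infty$, with the convergence established in Appendix \ref{app_thermodynamic_limit}, gives the existence of the limit and the claimed estimate. I expect the main obstacle to lie in the third step: propagating the inductive norm bound through the multi-scale integration while the complex contour shifts are present, since the shifts must be kept inside the analyticity strip without spoiling the determinant and Gram bounds on the sliced covariances, and the resulting loss must be reconciled with the decay rate encoded in the base of \eqref{eq_theorem_decay_bound}.
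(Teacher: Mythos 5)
Your proposal is correct and follows essentially the same route as the paper: the artificial quartic perturbation and time-discretized Grassmann formulation, the complex momentum shifts justified by analyticity of the covariance (which contains only the square of the dispersion) to produce the multi-contour representation, the Matsubara-frequency slicing with scale-by-scale tree expansion and inductive norm bounds closed by the $\beta^{16}$-smallness condition, and finally the $h\to\infty$ and $L\to\infty$ limits. One minor correction: the tree expansion itself is taken from Salmhofer--Wieczerkowski \cite[\mbox{Theorem 3}]{SW}, while Appendix \ref{app_log_grassmann} supplies the identity $e^{\log f}=f$ for Grassmann polynomials with complex constant term, needed to justify the recursive single-scale relation \eqref{eq_recursive_relation_grassmann}.
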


\begin{remark}
 The correlation function $\<\psi_{\hrho_1\hbx_1\hs_1}^*\psi_{\hrho_2\hbx_2\hs_2}^*\psi_{\heta_2\hby_2\htau_2}\psi_{\heta_1\hby_1\htau_1}+\text{h.c}\>_L$
 is defined for $\hbx_1,\hbx_2,\hby_1,\hby_2\in\Z^2$ by considering $\hbx_1,\hbx_2,\hby_1,\hby_2$ as the corresponding sites in $\G$ by periodicity.
\end{remark}

\begin{remark}
As a result of our proof, the growth rates of $f_1(\cdot)$, $f_2(\cdot)$
 are estimated as $f_1(x)=O(x^{44})$, $f_2(x)=O(x^{36})$ ($x\to\infty$). However, since it is not the main aim of our analysis, these
 orders are not quantitatively optimized. 
\end{remark}

\begin{remark}
The theorem provides decay bounds on the thermodynamic limit of the
 correlation functions for singlet Cooper pairs. For instance let us
 define the s-wave pairing operator $\D_s(\rho,\bx)$, the extended s-wave
 pairing operator $\D_{s^*}(\rho,\bx)$ and the $d_{x^2-y^2}$-wave pairing operator $\D_{d_{x^2-y^2}}(\rho,\bx)$ as follows. For $(\rho,\bx)\in\{1,2,3\}\times\G$,
\begin{align*}
&\D_s(\rho,\bx):=\psi_{\rho\bx\da}\psi_{\rho\bx\ua},\\
&\D_{s^*}(\rho,\bx):=\frac{1}{2}(\psi_{\rho(\bx+\be_1)\da}\psi_{\rho\bx\ua}+\psi_{\rho(\bx-\be_1)\da}\psi_{\rho\bx\ua}+\psi_{\rho(\bx+\be_2)\da}\psi_{\rho\bx\ua}+\psi_{\rho(\bx-\be_2)\da}\psi_{\rho\bx\ua}),\\
&\D_{d_{x^2-y^2}}(\rho,\bx):=\frac{1}{2}(\psi_{\rho(\bx+\be_1)\da}\psi_{\rho\bx\ua}+\psi_{\rho(\bx-\be_1)\da}\psi_{\rho\bx\ua}-\psi_{\rho(\bx+\be_2)\da}\psi_{\rho\bx\ua}-\psi_{\rho(\bx-\be_2)\da}\psi_{\rho\bx\ua}).
\end{align*}
If the map $\hat{s}(\cdot):\spin\to \{1,-1\}$ is identically 1, the
 theorem shows that \\
$|\lim_{L\to
 \infty,L\in\N}\<\D_a(\hat{\rho},\hbx)^*\D_a(\hat{\eta},\hby)+\text{h.c}\>_L|$ decays
 exponentially with $\|\hbx-\hby\|_{\R^2}$ for
 $\hat{\rho},\hat{\eta}\in\{1,2,3\}$, 
 $a=s,s^*,d_{x^2-y^2}$. If we take $\hat{s}(\cdot)$ to obey
 $\hat{s}(\ua)=-\hat{s}(\da)$, on the other hand, the theorem also implies
 exponential decay of spin-spin correlation
 functions of the form
 $\lim_{L\to \infty,L\in\N}\<S^x_{\hrho,\hbx}S^x_{\heta,\hby}+S^y_{\hrho,\hbx}S^y_{\heta,\hby}\>_L$
 with $\|\hbx-\hby\|_{\R^2}$, where the spin
 operators $S_{\rho,\bx}^x$, $S_{\rho,\bx}^y$ are defined by
 $S_{\rho,\bx}^x:=\frac{1}{2}(\psi_{\rho\bx\ua}^*\psi_{\rho\bx\da}+\psi_{\rho\bx\da}^*\psi_{\rho\bx\ua})$,
 $S_{\rho,\bx}^y:=\frac{1}{2}(-i\psi_{\rho\bx\ua}^*\psi_{\rho\bx\da}+i\psi_{\rho\bx\da}^*\psi_{\rho\bx\ua})$
 $((\rho,\bx)\in \{1,2,3\}\times \G)$.
 \end{remark}

\begin{remark}
The coupling constants $U_c$, $U_o$ satisfying \eqref{eq_theorem_constraint}
can be taken arbitrarily large as
 $\beta\searrow 0$. This means that the theorem generally 
proves exponential decay of the correlation functions in high temperatures.
\end{remark}

\begin{remark}\label{rem_half_filled}
Consider the case that $\ec=-\frac{1}{2}U_c$ and $\eo=-\frac{1}{2}U_o$ $(\forall
 \s\in\spin)$. The Hamiltonian $H$ becomes invariant under the transform
$\psi_{1\bx\s}\to \psi_{1\bx\s}^*$, $\psi_{1\bx\s}^*\to \psi_{1\bx\s}$, 
$\psi_{\rho\bx\s}\to -\psi_{\rho\bx\s}^*$, $\psi_{\rho\bx\s}^*\to
 -\psi_{\rho\bx\s}$ ($\rho\in \{2,3\},(\bx,\s)\in\G\times
 \spin$). This invariance implies that
 $\<\psi_{\rho\bx\s}^*\psi_{\rho\bx \s}\>_L=\frac{1}{2}$ $(\forall
 (\rho,\bx,\s)\in\{1,2,3\}\times\G\times\spin)$ and thus the system is
 half-filled. According to our construction, $f_1(1)>1$. If $\beta>1$,
 the constraint \eqref{eq_theorem_constraint} implies $|U_c|$,
 $|U_o|<1$. Therefore, we can claim the theorem for $\beta>1$ by eliminating $\ec,\eo$
 $(\s\in\spin)$ in the right-hand sides of \eqref{eq_theorem_constraint}
 and \eqref{eq_theorem_decay_bound}. On the other hand, for arbitrarily large
 $|U_c|$, $|U_o|$ there exists $\beta\le 1$ such that
 \eqref{eq_theorem_constraint} holds. Thus, the theorem concludes the
 exponential decay of correlation functions with the strong couplings if the temperature is high enough. \end{remark}

\begin{remark}
A power-law decay property of equal-time 4-point correlation
 functions can be proved by exactly following the argument of
 \cite{KT}. One result is that
\begin{equation*}
\limsup_{L\to \infty\atop L\in
 \N}|\<\psi_{\hrho_1\hbx\hs_1}^*\psi_{\hrho_2\hbx\hs_2}^*\psi_{\heta_2\hby\htau_2}\psi_{\heta_1\hby\htau_1}+\text{h.c}\>_L|
\le 2\|\hbx-\hby\|_{\R^2}^{-\tilde{c}f(\beta)},
\end{equation*}
for any $\hbx,\hby\in\Z^2$ with sufficiently large $\|\hbx-\hby\|_{\R^2}$, $(\hrho_j,\hs_j),(\heta_j,\htau_j)\in
 \{1,2,3\}\times\spin$ $(j=1,2)$ and $\beta \in\R_{>0}$, where $\tilde{c}>0$ is a
 constant, the function $f(\cdot):\R_{>0}\to \R_{>0}$ is
 decreasing and asymptotically behaves as $f(\beta)=O(\beta^{-1})$
$(\beta\to\infty)$, $O(|\log\beta|)$ $(\beta\searrow 0)$. An advantage
 of the framework \cite{KT}, apart from its conciseness, is that it
 requires no constraint on the magnitude of the interactions. However,
 it has not been applied to prove exponential decay of correlations in 2D
 many-electron systems, to the author's knowledge.
\end{remark} 

\section{Formulation}\label{sec_formulation}
In this section we formulate the
correlation function by using the notion of Grassmann integral
and show that the Grassmann integral representation of the correlation function
multiplied by the distance between the holes and the electrons is
transformed into a contour integral of the Grassmann integral.  This
procedure is essentially the same as we did in \cite{K1},
\cite{K2}. In order to avoid unnecessary repetition we present the
proofs at a minimum.

Let us introduce notations which are used throughout the paper. For
simplicity set $E_{max}:=\max_{\s\in\spin}\{1,|t|,|\ec|,|\eo|\}$. 
The sites on which the 4-point correlation function
 is defined are fixed to be $(\hrho_1,\hbx_1,\hs_1)$, $(\hrho_2,\hbx_2,\hs_2)$,
$(\heta_1,\hby_1,\htau_1)$, $(\heta_2,\hby_2,\htau_2)\in \{1,2,3\}\times
\Z^2\times \spin$. We simply write $\hcX_j$, $\hcY_j$
 $(j=1,2)$ instead of $(\hrho_j,\hbx_j,\hs_j)$,
 $(\heta_j,\hby_j,\htau_j)$ $(j=1,2)$, respectively. We also fix a map
 $\hat{s}(\cdot):\spin\to \{1,-1\}$. Let us accept that a site of $\Z^2$ is identified as the corresponding
 site of $\G$ whenever we consider a problem in $\G$.  
For $\bx=(x_1,x_2,\cdots,x_n)$, $\by=(y_1,y_2,\cdots,y_n)\in \C^n$,
$\<\bx,\by\>:=\sum_{j=1}^nx_jy_j$,
$\<\bx,\by\>_{\C^n}:=\sum_{j=1}^nx_j\overline{y_j}$ and
$\|\bx\|_{\C^n}:=\sqrt{\<\bx,\bx\>_{\C^n}}$. 
For $x\in \R$, $\lfloor x\rfloor$ denotes the largest integer
which does not exceed $x$.
Let $1_P:=1$ if the proposition $P$ is true, $1_P:=0$ otherwise. 
For any subset $\cO$ of a topological space let $\cO^i$ denote the
interior of $\cO$. 
Let $\S_n$ be the set of all permutations over
$\{1,2,\cdots,n\}$ $(n\in\N)$. 
It will be convenient to use the
function $\cF_{t,\beta}(\cdot):\R\to\R$ defined by
$$
\cF_{t,\beta}(x):=\frac{1}{2}\sinh^{-1}\left(\frac{x\pi^2}{8\max\{1,t^2\}\max\{\beta,\beta^2\}}\right).
$$
Here recall that $\sinh^{-1}(x)=\log(x+\sqrt{x^2+1})$.

The correlation function will be formulated as a limit of
Grassmann integration over a finite dimensional Grassmann algebra. The
reduction to the finite dimensional problem is done by discretizing the
integrals over the interval $[0,\beta)$ in the perturbative expansion of the partition function. For this purpose, take a parameter $h\in 2\N/\beta$ and set $[0,\beta)_h:=
\{0,1/h,2/h,\cdots,\beta-1/h\}$, 
$[-\beta,\beta)_h:= \{-\beta,-\beta+1/h,\cdots,-1/h\}\cup
[0,\beta)_h$. Note that $\sharp [0,\beta)_h=\beta h$, $\sharp
[-\beta,\beta)_h=2\beta h$. We have seen in \cite[\mbox{Appendix
C}]{K1} that taking the parameter $h$ from $2\N/\beta$ rather than from
$\N/\beta$ is convenient for the discretization of $[0,\beta)$ and 
$[-\beta,\beta)$. Set
$I_{L,h}:=\{1,2,3\}\times\G\times\spin\times[0,\beta)_h$ 
and $N_{L,h}:=\sharp I_{L,h}=6L^2\beta h$. We define the lattice of
the momentum variable $\G^*$ and the subset of the Matsubara frequency
$\cM_h$ by $\G^*:=(\frac{2\pi}{L}\Z/(2\pi\Z))^2$ and $\cM_h:=\{\o\in
\pi(2\Z+1)/\beta\ |\ |\o|<\pi h\}$.
 
\subsection{The Grassmann Gaussian integral}\label{subsec_grassmann_integral}
Here let us summarize the notion of Grassmann Gaussian integral. For a finite 
dimensional complex vector space $W$ and $n\in\N$, let $\bigwedge^nW$
denote the $n$-fold anti-symmetric tensor product of $W$ and $\bigwedge^0W:=\C$. Moreover, set $\bigwedge
W:=\bigoplus_{n=0}^{\dim W}\bigwedge^nW$. 

Let $\cV$, $\cV^+$, $\cV^-$, $\cV_p$ $(p\in \N)$ be the complex vector
spaces spanned by the basis $\{\opsi_X,\psi_X\}_{X\in I_{L,h}}$,
$\{\opsi_X\}_{X\in I_{L,h}}$, $\{\psi_X\}_{X\in I_{L,h}}$,
$\{\opsi_X^p,\psi_X^p\}_{X\in I_{L,h}}$ $(p\in\N)$, respectively. 
This paper concerns various problems formulated in the Grassmann
algebras $\bigwedge \cV$, $\bigwedge \cV^+$, $\bigwedge \cV^-$,
$\bigwedge \cV_p$ $(p\in \N)$. Remark that there is a vector space
isomorphism between $\bigwedge \cV$ and $\left(\bigwedge \cV^+\right)\otimes \left(\bigwedge \cV^-\right)$,
the tensor product of $\bigwedge \cV^+$ and $\bigwedge \cV^-$. Then,
let $\cP_n:\bigwedge \cV\to \left(\bigwedge^n\cV^+\right)\otimes
\left(\bigwedge^n\cV^-\right)$ denote the standard projection $(
n\in \{0,1,\cdots,N_{L,h}\})$. 

Let us give a number from 1 to $N_{L,h}$ to each element of $I_{L,h}$ so
that we can write $I_{L,h}=\{X_{o,j}\}_{j=1}^{N_{L,h}}$. Set
$\psi:=(\opsi_{X_{o,1}},\cdots,\opsi_{X_{o,N_{L,h}}},\psi_{X_{o,1}},\cdots,\psi_{X_{o,N_{L,h}}})$,
$\psi^p:=(\opsi_{X_{o,1}}^p,\cdots,\opsi_{X_{o,N_{L,h}}}^p,\psi_{X_{o,1}}^p,\cdots,\psi_{X_{o,N_{L,h}}}^p)$
$(p\in \N)$. Take $p$, $q_1,\cdots,q_n\in\N$ with $p\neq
q_j$ $(\forall j\in\{1,\cdots,n\})$. The Grassmann Gaussian integral
$\int\cdot d\mu_C(\psi^p)$ with a covariance $(C(X,Y))_{X,Y\in I_{L,h}}$
is a linear map from $\bigwedge \left(\left(\bigoplus_{j=1}^n
\cV_{q_j}\right)\bigoplus\cV_p\right)$ to $\bigwedge
\left(\bigoplus_{j=1}^n\cV_{q_j}\right)$ defined as follows. For 
$f\in \bigwedge \left(\bigoplus_{j=1}^n\cV_{q_j}\right)$ and
$X_1,\cdots,X_a,Y_1,\cdots,Y_b\in I_{L,h}$,
 \begin{align*}
&\int f d\mu_{C}(\psi^p):=f,\\
&\int f\opsi_{X_1}^p\cdots
 \opsi_{X_a}^p\psi_{Y_b}^p\cdots\psi_{Y_1}^pd\mu_C(\psi^p):=\left\{\begin{array}{ll}\det(C(X_j,Y_k))_{1\le j,k\le a}f &\text{ if }a=b,\\
0 &\text{ if }a\neq b.
\end{array}\right.
\end{align*}
Then for any $g\in\bigwedge \left(\left(\bigoplus_{j=1}^n
\cV_{q_j}\right)\bigoplus\cV_p\right)$, $\int gd\mu_C(\psi^p)$ can be
defined by linearity and anti-symmetry. 

Though it is not used during the formulation in this section, let us
recall the notion of left derivative at this stage for later use. For $X'\in I_{L,h}$ the left
derivative $\partial/ \partial \psi_{X'}^p$ is a linear operator on
$\bigwedge \left(\left(\bigoplus_{j=1}^n
\cV_{q_j}\right)\bigoplus\cV_p\right)$. By letting $\cV_p'$ be the vector
space with the basis $\{\opsi_X^p,\psi_X^p\}_{X\in I_{L,h}}\backslash
\{\psi_{X'}^p\}$,
$$
\frac{\partial}{\partial \psi_{X'}^p}(f\psi_{X'}^pg):=(-1)^mfg,\quad \frac{\partial}{\partial \psi_{X'}^p}g:=0,
$$
for $f\in \bigwedge^m \left(\left(\bigoplus_{j=1}^n
\cV_{q_j}\right)\bigoplus\cV_p'\right)$ $(m\in\N\cup\{0\})$, $g\in \bigwedge \left(\left(\bigoplus_{j=1}^n
\cV_{q_j}\right)\bigoplus\cV_p'\right)$. Then, $(\partial/\partial
\psi_{X'}^p)g$ can be defined for any $g\in \bigwedge \left(\left(\bigoplus_{j=1}^n
\cV_{q_j}\right)\bigoplus\cV_p\right)$ by linearity. The definition of
the left derivative $\partial/\partial
\opsi_{X'}^p$ is parallel to that of $\partial/\partial\psi_{X'}^p$.

\subsection{The covariance}\label{subsec_covariance}
In our formulation the covariance is given as a 2-point correlation
function governed by the free Hamiltonian $H_0$. For
$(\rho,\bx,\s,x),(\eta,\by,\tau,y)\in \{1,2,3\}\times \G\times \spin \times [0,\beta)$,
$$
\cC(\rho\bx\s x, \eta \by \tau y):=\frac{\Tr(e^{-\beta
H_0}T(\psi_{\rho\bx\s}^*(x)\psi_{\eta\by\tau}(y)))}{\Tr e^{-\beta H_0}},
$$
where $\psi_{\rho\bx\s}^*(x):=e^{xH_0}\psi_{\rho\bx\s}^*e^{-xH_0}$,
$\psi_{\eta\by\tau}(y):=e^{yH_0}\psi_{\eta\by\tau}e^{-yH_0}$,
$T(\psi_{\rho\bx\s}^*(x)\psi_{\eta\by\tau}(y)):=1_{x\ge
y}\psi_{\rho\bx\s}^*(x)\psi_{\eta\by\tau}(y)-1_{x<y}\psi_{\eta\by\tau}(y)\psi_{\rho\bx\s}^*(x)$. 

The following characterization of $\cC$ is done in Appendix
\ref{app_covariance}. For any $(\rho,\bx,\s,x)$, $(\eta,\by,\tau,y)\in
I_{L,h}$,
\begin{equation}\label{eq_covariance_characterization}
\cC(\rho\bx\s x, \eta \by \tau y) =\frac{\delta_{\s,\tau}}{\beta
 L^2}\sum_{(\bk,\o)\in\G^*\times\cM_h}e^{-i\<\bx-\by,\bk\>}e^{i(x-y)\o}\cB_{\rho,\eta}^{\s}(\bk,\o),
\end{equation}
where for $\bk=(k_1,k_2)\in\G^*$, $\o\in\cM_h$, $\s\in\spin$,
\begin{align}
&\left(\cB_{\rho,\eta}^{\s}(\bk,\o)\right)_{1\le \rho,\eta\le 3}:=\notag\\
&\quad\left(\begin{array}{ccc} \frac{\cN_{1,1}^{\s}(\bk,\o)}{\cD^{\s}(\bk,\o)}
 & \frac{\cN_{1,2}^{\s}(\bk,\o)}{\cD^{\s}(\bk,\o)} &
  \frac{\cN_{1,3}^{\s}(\bk,\o)}{\cD^{\s}(\bk,\o)} \\
\frac{\cN_{2,1}^{\s}(\bk,\o)}{\cD^{\s}(\bk,\o)} &
 \frac{1}{h(1-e^{-i\o/h+\eo/h})}\left(1+
 \frac{\cN_{2,2}^{\s}(\bk,\o)}{\cD^{\s}(\bk,\o)}\right) &
 \frac{\cN_{2,3}^{\s}(\bk,\o)}{h(1-e^{-i\o/h+\eo/h})\cD^{\s}(\bk,\o)}\\
\frac{\cN_{3,1}^{\s}(\bk,\o)}{\cD^{\s}(\bk,\o)} &
 \frac{\cN_{3,2}^{\s}(\bk,\o)}{h(1-e^{-i\o/h+\eo/h})\cD^{\s}(\bk,\o)} &
\frac{1}{h(1-e^{-i\o/h+\eo/h})}\left(1+
 \frac{\cN_{3,3}^{\s}(\bk,\o)}{\cD^{\s}(\bk,\o)}\right)\end{array}
\right),\notag\\
&\cD^{\s}(\bk,\o):=h^2\left(1-e^{-\frac{i}{h}\o+\frac{1}{2h}(\ec+\eo)}\right)^2\notag\\
&\qquad-e^{-\frac{i}{h}\o+\frac{1}{2h}(\ec+\eo)}\left(\frac{(\ec-\eo)^2}{4}+2t^2\sum_{j=1}^2(1+\cos
 k_j)\right)+e^{-\frac{i}{h}\o+\frac{1}{2h}(\ec+\eo)}O_1^{\s}(\bk),\notag\\
&\cN_{1,1}^{\s}(\bk,\o):=h\left(1-e^{-\frac{i}{h}\o+\frac{1}{2h}(\ec+\eo)}\right)+\frac{\ec-\eo}{2}e^{-\frac{i}{h}\o+\frac{1}{2h}(\ec+\eo)}+e^{-\frac{i}{h}\o+\frac{1}{2h}(\ec+\eo)}O_2^{\s}(\bk),\notag\\
&\cN_{1,2}^{\s}(\bk,\o):=t(1+e^{ik_1})e^{-\frac{i}{h}\o+\frac{1}{2h}(\ec+\eo)}(1+O_3^{\s}(\bk)),\quad
 \cN_{1,3}^{\s}(\bk,\o):=\cN_{1,2}^{\s}((k_2,k_1),\o),\notag\\
&\cN_{2,1}^{\s}(\bk,\o):=\cN_{1,2}^{\s}(-\bk,\o),\notag\\
&\cN_{2,2}^{\s}(\bk,\o):=2t^2(1+\cos
 k_1)\Big(\frac{1}{2}e^{-\frac{i}{h}\o+\frac{1}{2h}(\ec+\eo)}+\frac{1}{2}e^{-\frac{2i}{h}\o+\frac{1}{2h}(\ec+3\eo)}\notag\\
&\qquad\qquad\qquad+\left(e^{-\frac{i}{h}\o+\frac{1}{2h}(\ec+\eo)}+e^{-\frac{2i}{h}\o+\frac{1}{2h}(\ec+3\eo)}\right)O_4^{\s}(\bk)\notag\\
&\qquad\qquad\qquad+\left(e^{-\frac{i}{h}\o+\frac{1}{2h}(\ec+\eo)}-e^{-\frac{2i}{h}\o+\frac{1}{2h}(\ec+3\eo)}\right)O_5^{\s}(\bk)\Big),\notag\\
&\cN_{2,3}^{\s}(\bk,\o):=t^2(1+e^{-ik_1})(1+e^{ik_2})\Big(\frac{1}{2}e^{-\frac{i}{h}\o+\frac{1}{2h}(\ec+\eo)}+\frac{1}{2}e^{-\frac{2i}{h}\o+\frac{1}{2h}(\ec+3\eo)}\notag\\
&\qquad\qquad\qquad+\left(e^{-\frac{i}{h}\o+\frac{1}{2h}(\ec+\eo)}+e^{-\frac{2i}{h}\o+\frac{1}{2h}(\ec+3\eo)}\right)O_4^{\s}(\bk)\notag\\
&\qquad\qquad\qquad+\left(e^{-\frac{i}{h}\o+\frac{1}{2h}(\ec+\eo)}-e^{-\frac{2i}{h}\o+\frac{1}{2h}(\ec+3\eo)}\right)O_5^{\s}(\bk)\Big),\notag\\
&\cN_{3,1}^{\s}(\bk,\o):=\cN_{1,2}^{\s}(-(k_2,k_1),\o),\quad
 \cN_{3,2}^{\s}(\bk,\o):=\cN_{2,3}^{\s}(-\bk,\o),\notag\\
&\cN_{3,3}^{\s}(\bk,\o):=\cN_{2,2}^{\s}((k_2,k_1),\o).\label{eq_covariance_inside_function}
\end{align}
The functions $O_j^{\s}(\cdot):\C^2\to \C$ $(j\in\{1,\cdots,5\},\s\in\spin)$ are entirely analytic and
satisfy that $O_j^{\s}(\bk+2\pi m\be_1+2\pi n\be_2)=O_j^{\s}(\bk)$
$(\forall \bk\in\C^2,m,n\in\Z)$. Moreover, for any compact set $K\subset
\C^2$,
\begin{equation}\label{eq_decay_extra_terms}
\sup_{\bk\in K,j\in\{1,\cdots,5\},\s\in\spin}|O_j^{\s}(\bk)|\le \frac{C_{K,E_{max}}}{h},
\end{equation}
where $C_{K,E_{max}}$ is a positive constant depending only on $K$ and
$E_{max}$. Though these information about $O_j^{\s}$ are sufficient
for our analysis to proceed, the functions $O_j^{\s}$
 are made explicit in \eqref{eq_vanishing_functions} in Appendix
 \ref{app_covariance}.
\begin{remark}
The functions $\cD^{\s}(\bk,\o)$, $\cN_{\rho,\eta}^{\s}(\bk,\o)$
 $(\rho,\eta\in\{1,2,3\},\s\in\spin)$ are analytic with respect to $\bk$. This
 property is one essential requirement of our method to prove 
 exponential decay of the correlation functions. As shown in Appendix
 \ref{app_covariance}, in the preliminary form before being expanded over $\cM_h$ the covariance $\cC(X,Y)$ contains a square root
 of $(\ec-\eo)^2/t^2+8\sum_{j=1}^2(1+\cos k_j)$, which is not
 analytic. In order to make the analyticity with $\bk$ apparent, 
we choose to transform the covariance into the sum over $\G^*\times \cM_h$.
\end{remark}
\begin{remark}
The dispersion relation for the free particle hopping to the nearest
 neighbor sites on the CuO lattice is given by \eqref{eq_eigen_values}
 in Appendix \ref{app_covariance}. As discussed in Remark
 \ref{rem_half_filled}, taking $\ec$, $\eo$ to be $-U_c/2$, $-U_o/2$
 respectively makes the system half-filled. If we shift the on-site
 quadratic term to the interacting part of the Hamiltonian, one of the
 dispersion relation denoted by $A_1^{\s}(t,\bk)$ in
 \eqref{eq_eigen_values} is changed into $0$. 
The formulation including the quadratic term in the interacting part is
 parallel to the formulation of the half-filled honeycomb lattice model
 in \cite{GM}, though in \cite{GM} the quadratic term is eventually
 erased by the non-corresponding property of the covariance at equal
 space-time. One remarkable fact is that the zero set of the free
 particle dispersion relation in the half-filled formulation of the CuO
 Hubbard model is, thus, the whole momentum space, while that is the contour 
of a square in the half-filled Hubbard model on the square lattice (see, e.g,
 \cite{R}) and that consists of 2 distinct points in the half-filled
 Hubbard model on the honeycomb lattice (see \cite{GM}). This suggests
 that trying to improve the temperature dependency of the convergence
 theory in the half-filled CuO Hubbard model would require a qualitatively
 different method from the infrared integration regimes for the
 half-filled 2D Hubbard model developed so far, in which the
 degeneracy of the zero set of the dispersion relation is crucial. 
 \end{remark}  

\subsection{The Grassmann integral formulation}\label{subsec_grassmann_formulation}
In order to relate the correlation function to the Grassmann
Gaussian integral, we introduce parameters $\la_{1}$,
$\la_{-1}\in\C$ and define
$U_{(\la_1,\la_{-1})}(\cdot,\cdot,\cdot,\cdot):(\{1,2,3\}\times\G\times\spin)^4\to\C$ by
\begin{align}
&U_{(\la_1,\la_{-1})}(\rho_1\bx_1\s_1,\rho_2\bx_2\s_2,\eta_1\by_1\tau_1,\eta_2\by_2\tau_2)\notag\\
&:=\frac{1}{4}(1_{(\s_1,\s_2)=(\ua,\da)}-1_{(\s_1,\s_2)=(\da,\ua)})
(1_{(\tau_1,\tau_2)=(\da,\ua)}-1_{(\tau_1,\tau_2)=(\ua,\da)})1_{\bx_1=\bx_2=\by_1=\by_2}\notag\\
&\qquad\cdot(U_c1_{\rho_1=\rho_2=\eta_1=\eta_2=1}+U_o1_{\rho_1=\rho_2=\eta_1=\eta_2=2\text{
 or }3})\notag\\
&\quad+\frac{1}{4}\la_1(1_{(\rho_1\bx_1\s_1,\rho_2\bx_2\s_2)=(\hcX_1,\hcX_2)}-1_{(\rho_1\bx_1\s_1,\rho_2\bx_2\s_2)=(\hcX_2,\hcX_1)})\notag\\
&\qquad\cdot(1_{(\eta_1\by_1\tau_1,\eta_2\by_2\tau_2)=(\hcY_2,\hcY_1)}-1_{(\eta_1\by_1\tau_1,\eta_2\by_2\tau_2)=(\hcY_1,\hcY_2)})\notag\\
&\quad+\frac{1}{4}\la_{-1}(1_{(\rho_1\bx_1\s_1,\rho_2\bx_2\s_2)=(\hcY_1,\hcY_2)}-1_{(\rho_1\bx_1\s_1,\rho_2\bx_2\s_2)=(\hcY_2,\hcY_1)})\notag\\
&\qquad\cdot(1_{(\eta_1\by_1\tau_1,\eta_2\by_2\tau_2)=(\hcX_2,\hcX_1)}-1_{(\eta_1\by_1\tau_1,\eta_2\by_2\tau_2)=(\hcX_1,\hcX_2)}).\label{eq_original_bi_anti_symmetric}
\end{align}
For another application in Section \ref{sec_multiscale_integration} we
purposely defined $U_{(\la_1,\la_{-1})}(\cdot,\cdot,\cdot,\cdot)$ to
satisfy
$U_{(\la_1,\la_{-1})}(X_2,X_1,Y_1,Y_2)=U_{(\la_1,\la_{-1})}(X_1,X_2,Y_2,Y_1)=-U_{(\la_1,\la_{-1})}(X_1,X_2,Y_1,Y_2)$.
Define the Grassmann polynomial
$V_{(\la_1,\la_{-1})}(\psi)\in\bigwedge \cV$ by
\begin{equation*}
V_{(\la_1,\la_{-1})}(\psi):=-\frac{1}{h}\sum_{x\in
 [0,\beta)_h}\sum_{X_1,X_2,Y_1,Y_2\atop\in\{1,2,3\}\times\G\times\spin}U_{(\la_1,\la_{-1})}(X_1,X_2,Y_1,Y_2)\opsi_{X_1x}\opsi_{X_2x}\psi_{Y_1x}\psi_{Y_2x}.
\end{equation*}

The Grassmann integral formulation of the correlation function is
summarized as follows.
\begin{lemma}\label{lem_grassmann_formulation}
\begin{enumerate}[(i)]
\item\label{item_real_part} For any $U>0$ there exists
     $N_U\in\N$ such that $\Re\int
     e^{V_{(\la,\la)}(\psi)}d\mu_{\cC}(\psi)$ $>0$ for any $h\in 2\N/\beta$
     with $h\ge 2N_U/\beta$, $\la,U_c,U_o\in\R$ with
     $|\la|,|U_c|,|U_o|\le U$.
\item\label{item_grassmann_formulation}
\begin{equation*}
\<\psi_{\hcX_1}^*\psi_{\hcX_2}^*\psi_{\hcY_2}\psi_{\hcY_1}+\text{h.c}\>_L=-\frac{1}{\beta}\lim_{h\to\infty\atop
 h\in 2\N/\beta}\frac{\partial}{\partial\la}\log\left(\int
						 e^{V_{(\la,\la)}(\psi)}d\mu_{\cC}(\psi)\right)\Big|_{\la=0},
\end{equation*}
where for $z\in\C$ with $\Re z>0$, $\log z:=\log |z|+i\Arg z$, $\Arg
     z\in (-\pi/2,\pi/2)$.
\end{enumerate}
\end{lemma}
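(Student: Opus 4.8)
The plan is to recognize $V_{(\la,\la)}$ as the discretized Euclidean action of a Hamiltonian perturbed by an artificial quartic term, and then to extract the correlation function by Duhamel differentiation of the log-partition function. First I would introduce the self-adjoint operator
\[
W:=\psi_{\hcX_1}^*\psi_{\hcX_2}^*\psi_{\hcY_2}\psi_{\hcY_1}+\text{h.c}
\]
and set $H_\la:=H_0+V+\la W$, so that $H_\la=H$ at $\la=0$. Expanding the indicator functions in \eqref{eq_original_bi_anti_symmetric} and contracting the $\la_1$- and $\la_{-1}$-blocks of $U_{(\la,\la)}$ against the antisymmetric monomials $\opsi_{X_1x}\opsi_{X_2x}\psi_{Y_1x}\psi_{Y_2x}$, each of the two factors of paired indicators contributes a $2$ that cancels the $\tfrac14$ prefactor; the blocks collapse to $\la\,\opsi_{\hcX_1x}\opsi_{\hcX_2x}\psi_{\hcY_2x}\psi_{\hcY_1x}$ and $\la\,\opsi_{\hcY_1x}\opsi_{\hcY_2x}\psi_{\hcX_2x}\psi_{\hcX_1x}$, which under the dictionary $\opsi\leftrightarrow\psi^*$, $\psi\leftrightarrow\psi$ are exactly the time-slice discretizations of the two summands of $\la W$. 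This routine combinatorial check identifies $V_{(\la,\la)}$ as the Grassmann action of $H_\la$ relative to the free covariance $\cC$ of Subsection \ref{subsec_covariance}.

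The second ingredient is the Trotter-type convergence
\[
\lim_{h\to\infty,\,h\in2\N/\beta}\int e^{V_{(\la,\la)}(\psi)}d\mu_{\cC}(\psi)=\frac{\Tr e^{-\beta H_\la}}{\Tr e^{-\beta H_0}},
\]
which I would establish exactly as in \cite{K1} and relegate to Appendix \ref{app_formulation_convergence}, treating it here as a black box and noting that the argument there delivers the convergence locally uniformly in $\la\in\C$. Since $\G$ is finite the Fock space is finite-dimensional, so both sides are entire in $\la$, and for real $\la,U_c,U_o$ the operator $H_\la$ is self-adjoint, whence $\Tr e^{-\beta H_\la}>0$ and the right-hand side is a strictly positive real number. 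For part \ref{item_real_part} I would observe that on the compact set $\{|\la|,|U_c|,|U_o|\le U\}$ this positive continuous limit is bounded below by some $c>0$; combined with the locally uniform convergence above, this forces $\Re\int e^{V_{(\la,\la)}}d\mu_{\cC}>0$ as soon as $h\ge 2N_U/\beta$ for a suitable $N_U\in\N$, which legitimizes the branch of $\log$ with $\Re z>0$.

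For part \ref{item_grassmann_formulation} I would differentiate. For large $h$ the map $\la\mapsto\log\int e^{V_{(\la,\la)}}d\mu_{\cC}$ is holomorphic near $\la=0$ and tends locally uniformly to $\log(\Tr e^{-\beta H_\la}/\Tr e^{-\beta H_0})$; Cauchy's integral formula for derivatives then permits exchanging $\lim_{h\to\infty}$ with $\partial/\partial\la$. The Duhamel expansion of $\partial_\la e^{-\beta H_\la}$ together with cyclicity of the trace (which turns the resulting $\int_0^\beta ds$ into a factor $\beta$) gives
\[
\frac{\partial}{\partial\la}\log\frac{\Tr e^{-\beta H_\la}}{\Tr e^{-\beta H_0}}\Big|_{\la=0}=-\beta\frac{\Tr(e^{-\beta H}W)}{\Tr e^{-\beta H}}=-\beta\<W\>_L .
\]
Dividing by $-\beta$ and recalling the form of $W$ yields the asserted formula.

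I expect the one genuinely delicate point to be the interchange of the limit $h\to\infty$ with the $\la$-derivative. Rather than trying to bound $\partial_\la$ of the Grassmann integral by hand, I would exploit that both sides are entire in $\la$ and the convergence is locally uniform, so that convergence of the $\la$-derivatives follows from Cauchy's estimates; the positivity of part \ref{item_real_part} is precisely what is needed to place the argument of $\log$ in the domain where this complex-analytic reasoning applies.
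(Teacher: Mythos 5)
Your proposal is correct and follows essentially the same route as the paper's Appendix B: both introduce the modified Hamiltonian $H_\la$ with the artificial quartic term, identify the Grassmann integral with a Riemann-sum discretization of the perturbation series for $\Tr e^{-\beta H_\la}/\Tr e^{-\beta H_0}$, deduce part (i) from positivity of that ratio for real couplings plus uniform convergence on the compact parameter set, and deduce part (ii) by exchanging the $h\to\infty$ limit with $\partial/\partial\la$ via Cauchy's integral formula. The only cosmetic differences are that the paper makes the convergence explicit through the series $P_h\to P$ with a crude volume-dependent determinant bound and cites \cite[Lemma 2.3]{K1} for the derivative identity, where you invoke Duhamel expansion and cyclicity of the trace directly.
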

\noindent
Lemma \ref{lem_grassmann_formulation} can be proved in a way similar to 
\cite[\mbox{Section 3}]{K2}. For the readers' convenience we
outline the proof in Appendix \ref{app_formulation_convergence}.

The analysis in the following sections treats the perturbed covariance
containing complex momentum variables inside. For $\bp\in\C^2$, 
$$
\cC(\rho\bx\s x,\eta\by\tau y)(\bp):=\frac{\delta_{\s,\tau}}{\beta L^2}\sum_{(\bk,\o)\in\G^*\times\cM_h}e^{-i\<\bx-\by,\bk\>}e^{i(x-y)\o}\cB_{\rho,\eta}^{\s}(\bk+\hat{s}(\s)\bp,\o),
$$
$\cC(\bp):=(\cC(X,Y)(\bp))_{X,Y\in
I_{L,h}}$. By admitting a few facts proved in Section
\ref{sec_preliminaries}, we can show the next lemma. The equality in
Lemma \ref{lem_contour_integral_formulation}
\eqref{item_contour_integral_transform} will be estimated in Section
\ref{sec_multiscale_integration} as the main objective.

\begin{lemma}\label{lem_contour_integral_formulation} For any $L\in\N$,
 $R\in (\cF_{t,\beta}(8/\pi^2),\infty)$, $\eps\in (8/\pi^2,1)$ and sufficiently large $h\in 2\N/\beta$ there exists $U_{small}>0$ such that the following statements hold
 true.
\begin{enumerate}[(i)]
\item\label{item_perturbed_real_part}
$\Re \int e^{V_{(\la_1,\la_{-1})}(\psi)}d\mu_{\cC(w\be_p)}(\psi)>0$ 
 for any $p\in \{1,2\}$ and all $(\la_1,\la_{-1},U_c,U_o,w)\in \C^5$
     with $|\la_1|$, $|\la_{-1}|$, $|U_c|$, $|U_o|\le U_{small}$, $|\Re
     w|\le R$, $|\Im w|\le \cF_{t,\beta}(\eps)$.
\item\label{item_perturbed_analyticity} For any $p\in\{1,2\}$ the
     function 
$$(\la_1,\la_{-1},U_c,U_o,w)\mapsto \log\left(\int
     e^{V_{(\la_1,\la_{-1})}(\psi)}d\mu_{\cC(w\be_p)}(\psi)\right)$$
is analytic in
$$
\left\{(\la_1,\la_{-1},U_c,U_o,w)\in\C^5\ \Big|\ \begin{array}{l}|\la_1|, |\la_{-1}|, |U_c|,
     |U_o|< U_{small},\\
|\Re w|<R,|\Im w|<\cF_{t,\beta}(\eps)\end{array}
\right\}.
$$ 
\item\label{item_contour_integral_transform}
For any $n\in \N$ with $2\pi n/L+\cF_{t,\beta}(8/\pi^2)<R$, $U_c,U_o\in \C$ with $|U_c|,|U_o|<U_{small}$ and $p\in\{1,2\}$,
\begin{align*}
&\left(\frac{L}{2\pi}\left(e^{i\frac{2\pi}{L}\<\sum_{j=1}^2(\hat{s}(\hs_j)\hbx_j-\hat{s}(\htau_j)\hby_j),\be_p\>}-1\right)\right)^n\frac{\partial}{\partial\la}\log\left(\int
     e^{V_{(\la,\la)}(\psi)}d\mu_{\cC}(\psi)\right)\Big|_{\la=0}\\
&=\sum_{a\in \{1,-1\}}\prod_{j=1}^n\left(\frac{L}{2\pi}\int_0^{2\pi
 a/L}d\theta_{a,j}\frac{1}{2\pi
 i}\oint_{|w_{a,j}-\theta_{a,j}|=\cF_{t,\beta}(8/\pi^2)/n}dw_{a,j}\frac{1}{(w_{a,j}-\theta_{a,j})^2}\right)\\
&\quad\cdot \frac{\partial}{\partial\la_a}\log\left(\int
     e^{V_{(\la_1,\la_{-1})}(\psi)}d\mu_{\cC\left(\sum_{j=1}^nw_{a,j}\be_p\right)}(\psi)\right)\Big|_{\la_1=\la_{-1}=0},
\end{align*}
where $\oint_{|w_{a,j}-\theta_{a,j}|=\cF_{t,\beta}(8/\pi^2)/n}dw_{a,j}$ represents the
     contour integral along the contour $\{w_{a,j}\in\C\ |\
     |w_{a,j}-\theta_{a,j}|=\cF_{t,\beta}(8/\pi^2)/n\}$ oriented counter clock-wise.
\end{enumerate}
\end{lemma}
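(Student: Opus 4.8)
The plan is to dispatch (i) and (ii) as soft consequences of finite dimensionality for fixed $L$ and large $h$, and to reserve the real work for the identity in \eqref{item_contour_integral_transform}, whose engine is a quasi-periodicity of the generating functional in the shift parameter $w$. For (i) and (ii), fix $L$ and a large $h$ and note that, since $\bigwedge\cV$ is finite dimensional, $\int e^{V_{(\la_1,\la_{-1})}(\psi)}d\mu_{\cC(w\be_p)}(\psi)$ is a polynomial in $\la_1,\la_{-1},U_c,U_o$ and in the matrix entries $\cC(X,Y)(w\be_p)$, equal to $\int 1\,d\mu=1$ at $\la_1=\la_{-1}=U_c=U_o=0$. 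Admitting from Section \ref{sec_preliminaries} that for large $h$ the denominators $\cD^{\s}$ do not vanish on the compact set $\{|\Re w|\le R,\,|\Im w|\le\cF_{t,\beta}(\eps)\}$, the entries $\cB_{\rho,\eta}^{\s}(\bk+\hat{s}(\s)w\be_p,\o)$ are analytic in $w$ and uniformly bounded there, so the integral is jointly analytic in $(\la_1,\la_{-1},U_c,U_o,w)$. Writing it as $1+(\text{terms of positive degree in the couplings})$ with coefficients bounded in terms of $L,h$ and the covariance bound, I choose $U_{small}$ small enough that the integral stays within distance $<1$ of $1$ on the whole parameter range; this gives $\Re>0$, proving (i), and then (ii) follows because $\log$ with $\Arg\in(-\pi/2,\pi/2)$ is analytic on $\{\Re z>0\}$ and the composition of analytic maps is analytic.

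For \eqref{item_contour_integral_transform} the key step is the quasi-periodicity, with $D_p:=\<\sum_{j=1}^2(\hat{s}(\hs_j)\hbx_j-\hat{s}(\htau_j)\hby_j),\be_p\>$,
\[
\frac{\partial}{\partial\la_a}\log\Big(\int e^{V_{(\la_1,\la_{-1})}}d\mu_{\cC((w+\frac{2\pi}{L})\be_p)}\Big)\Big|_{\la_1=\la_{-1}=0}
=e^{ia\frac{2\pi}{L}D_p}\,\frac{\partial}{\partial\la_a}\log\Big(\int e^{V_{(\la_1,\la_{-1})}}d\mu_{\cC(w\be_p)}\Big)\Big|_{\la_1=\la_{-1}=0}.
\]
To obtain it I would introduce the gauge automorphism $T$ of $\bigwedge\cV$ acting on generators by $\opsi_{\rho\bx\s x}\mapsto e^{i\hat{s}(\s)\frac{2\pi}{L}\<\bx,\be_p\>}\opsi_{\rho\bx\s x}$ and $\psi_{\rho\bx\s x}\mapsto e^{-i\hat{s}(\s)\frac{2\pi}{L}\<\bx,\be_p\>}\psi_{\rho\bx\s x}$. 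Shifting the summation variable $\bk\mapsto\bk-\hat{s}(\s)\frac{2\pi}{L}\be_p$ in \eqref{eq_covariance_characterization}, which is legitimate because this translation preserves $\G^*$, yields $\cC(X,Y)((w+\frac{2\pi}{L})\be_p)=e^{i\hat{s}(\s)\frac{2\pi}{L}\<\bx-\by,\be_p\>}\cC(X,Y)(w\be_p)$ (using the factor $\delta_{\s,\tau}$ so that $\hat{s}(\s)=\hat{s}(\tau)$). Since the Gaussian integral of a monomial is a covariance determinant, these phases attach to its rows and columns, whence $\int f\,d\mu_{\cC((w+\frac{2\pi}{L})\be_p)}=\int T(f)\,d\mu_{\cC(w\be_p)}$ for all $f$. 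Because $T$ is an algebra homomorphism, $T(e^{V})=e^{T(V)}$, and a direct phase count shows $T$ leaves the $U$-part of $V_{(\la_1,\la_{-1})}$ invariant (the two creation and two annihilation spins at each quartic vertex form $\{\ua,\da\}$ pairs, so $\hat{s}(\ua)+\hat{s}(\da)$ cancels), while it sends the $\la_1$- and $\la_{-1}$-parts to $e^{\pm i\frac{2\pi}{L}D_p}$ times themselves; thus $T(V_{(\la_1,\la_{-1})})=V_{(e^{i2\pi D_p/L}\la_1,\,e^{-i2\pi D_p/L}\la_{-1})}$, and the chain rule in $\la_a$ produces the displayed relation.

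Finally I would evaluate the right-hand side directly. Writing $g_a(w):=\frac{\partial}{\partial\la_a}\log(\int e^{V_{(\la_1,\la_{-1})}}d\mu_{\cC(w\be_p)})|_{\la_1=\la_{-1}=0}$, which is analytic in the strip by (ii), I note that for $\theta_{a,j}\in[0,2\pi a/L]$ and radius $\cF_{t,\beta}(8/\pi^2)/n$ every point $\sum_jw_{a,j}$ occurring in the integral satisfies $|\Im\sum_jw_{a,j}|\le\cF_{t,\beta}(8/\pi^2)<\cF_{t,\beta}(\eps)$ (since $8/\pi^2<\eps$ and $\cF_{t,\beta}$ is increasing) and $|\Re\sum_jw_{a,j}|\le\frac{2\pi n}{L}+\cF_{t,\beta}(8/\pi^2)<R$, so $g_a$ stays in the domain of (ii); hence each inner contour integral is Cauchy's formula for the first derivative, and the $n$ inner integrals produce $g_a^{(n)}(\sum_j\theta_{a,j})$. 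The $n$ outer integrals telescope by the fundamental theorem of calculus into the $n$-fold forward difference with step $2\pi a/L$, giving $(\frac{L}{2\pi})^n\sum_{k=0}^n\binom{n}{k}(-1)^{n-k}g_a(k\frac{2\pi a}{L}\be_p)$. Iterating the quasi-periodicity and using $a^2=1$ gives $g_a(k\frac{2\pi a}{L}\be_p)=e^{ik\frac{2\pi}{L}D_p}g_a(0)$, so the binomial theorem collapses the sum to $(\frac{L}{2\pi}(e^{i2\pi D_p/L}-1))^ng_a(0)$; summing over $a\in\{1,-1\}$, invoking $\sum_a\partial_{\la_a}(\cdot)|_0=\partial_{\la}(\cdot)|_{\la=0}$ on the diagonal together with $\cC(0\cdot\be_p)=\cC$, reproduces exactly the left-hand side. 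I expect the main obstacle to be making the gauge-covariance argument watertight at the level of the full Grassmann integral rather than a single propagator, and confirming that the contours stay inside the analyticity region carved out by $8/\pi^2<\eps$ and $\frac{2\pi n}{L}+\cF_{t,\beta}(8/\pi^2)<R$; once the quasi-periodicity is in hand the remainder is Cauchy's formula and the binomial theorem.
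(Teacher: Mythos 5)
Your proposal is correct. Parts \eqref{item_perturbed_real_part} and \eqref{item_perturbed_analyticity} are handled exactly as in the paper: the integral is a polynomial in the couplings with constant term $1$, the covariance entries are analytic and uniformly bounded on the compact parameter set (via the non-vanishing of $\cD^{\s}$ from Section \ref{sec_preliminaries}), so a small $U_{small}$ keeps the integral near $1$ and $\log$ composes analytically. For part \eqref{item_contour_integral_transform} you invoke the same key mechanism as the paper --- the momentum shift $\bk\mapsto\bk-\hat{s}(\s)\frac{2\pi}{L}\be_p$ in \eqref{eq_covariance_characterization} together with the gauge scaling $\opsi_{\rho\bx\s x}\mapsto e^{i\hat{s}(\s)\frac{2\pi}{L}\<\bx,\be_p\>}\opsi_{\rho\bx\s x}$, $\psi_{\rho\bx\s x}\mapsto e^{-i\hat{s}(\s)\frac{2\pi}{L}\<\bx,\be_p\>}\psi_{\rho\bx\s x}$, the invariance of the quartic $U$-part, and the phases $e^{\pm i\frac{2\pi}{L}D_p}$ picked up by the $\la_{\pm1}$-parts --- but you run the computation in the opposite direction. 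The paper starts from the left-hand side and iterates a one-step identity (phase times $S_a$ equals $S_a$ at the shifted covariance, then fundamental theorem of calculus, then Cauchy's formula), summarizing the iteration as ``repeating this procedure $n$ times''; you instead establish the quasi-periodicity $g_a(w+\frac{2\pi}{L})=e^{ia\frac{2\pi}{L}D_p}g_a(w)$ once, at general $w$ in the strip, and then evaluate the right-hand side in closed form: the $n$ Cauchy integrals produce $g_a^{(n)}$, the outer integrals telescope into an $n$-fold forward difference, and the binomial theorem collapses it to $\bigl(\frac{L}{2\pi}(e^{i\frac{2\pi}{L}D_p}-1)\bigr)^n g_a(0)$. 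The ingredients are identical, so this is a reorganization rather than a new method, but your version has a concrete merit: the paper's displayed determinant identity is stated only for the unshifted covariance $\cC$, whereas the iteration it invokes actually requires the same identity at the shifted covariances $\cC(w\be_p)$ for $w$ on the contours; your general-$w$ quasi-periodicity lemma supplies precisely this missing detail. Your domain checks, $|\Im\sum_{j}w_{a,j}|\le\cF_{t,\beta}(8/\pi^2)<\cF_{t,\beta}(\eps)$ and $|\Re\sum_{j}w_{a,j}|\le 2\pi n/L+\cF_{t,\beta}(8/\pi^2)<R$, are exactly what part \eqref{item_perturbed_analyticity} needs to license the Cauchy integrals, matching the hypotheses $\eps\in(8/\pi^2,1)$ and $2\pi n/L+\cF_{t,\beta}(8/\pi^2)<R$.
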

\begin{proof}
\eqref{item_perturbed_real_part}: It follows from Lemma
 \ref{lem_covariance_properties_0} \eqref{item_covariance_analyticity_0},
 Lemma \ref{lem_covariance_properties_l}
 \eqref{item_covariance_analyticity_l} and \eqref{eq_covariance_sum_up}
that the function $w\mapsto
 \cC(X,Y)(w\be_p)$ is analytic in $\{w\in\C\ |\ |\Im w|<\cF_{t,\beta}(\eps')\}$ for
 any $\eps' \in (0,1)$, sufficiently large $h\in 2\N/\beta$
 and $X,Y\in I_{L,h}$. Thus, for any fixed large
 $h\in2\N/\beta$, $|\cC(X,Y)(w\be_p)|$ is uniformly bounded with respect to
 $X,Y\in I_{L,h}$ and $w\in\C$ with $|\Re w|\le R$, $|\Im w|\le
 \cF_{t,\beta}(\eps)$. Note that by definition $\int
 e^{V_{(\la_1,\la_{-1})}(\psi)}d\mu_{\cC(w\be_p)}(\psi)$ is a polynomial
 of $\la_1$, $\la_{-1}$, $U_c$, $U_o$, whose constant term is 1 and
 higher order terms have finite sums and products of
 $\cC(X,Y)(w\be_p)$ $(X,Y\in I_{L,h})$ in their coefficients. Thus, the
 uniform boundedness of $\cC(X,Y)(w\be_p)$ ensures that
$$
\lim_{U\searrow 0}\sup_{(\la_1,\la_{-1},U_c,U_o,w)\in \C^5 \atop
|\la_1|,|\la_{-1}|,|U_c|,|U_o|\le U, |\Re w|\le
 R,|\Im w|\le \cF_{t,\beta}(\eps)}\left|\int e^{V_{(\la_1,\la_{-1})}(\psi)}d\mu_{\cC(w\be_p)(\psi)}-1\right|=0,
$$
which implies the claim \eqref{item_perturbed_real_part}.

\eqref{item_perturbed_analyticity}: The claim
 \eqref{item_perturbed_real_part} and the analyticity of
 $\cC(X,Y)(w\be_p)$ with respect to $w$ verify the statement.

\eqref{item_contour_integral_transform}: Set
\begin{align*}
&S_1(\cC):=-\frac{1}{h}\sum_{x\in [0,\beta)_h}\int
 \opsi_{\hcX_1x}\opsi_{\hcX_2x}\psi_{\hcY_2x}\psi_{\hcY_1x}e^{V_{(0,0)}(\psi)}d\mu_{\cC}(\psi)\Big/\int
 e^{V_{(0,0)}(\psi)}d\mu_{\cC}(\psi),\\
&S_{-1}(\cC):=-\frac{1}{h}\sum_{x\in [0,\beta)_h}\int
 \opsi_{\hcY_1x}\opsi_{\hcY_2x}\psi_{\hcX_2x}\psi_{\hcX_1x}e^{V_{(0,0)}(\psi)}d\mu_{\cC}(\psi)\Big/\int
 e^{V_{(0,0)}(\psi)}d\mu_{\cC}(\psi).
\end{align*}
 Note the equality that 
\begin{align*}
&e^{ai\frac{2\pi}{L}\<\sum_{j=1}^n(\hat{s}(\s_j)\bx_j-\hat{s}(\tau_j)\by_j),\be_p\>}\det\left(\cC(\rho_j\bx_j\s_jx_j,\eta_k\by_k\tau_ky_k)\right)_{1\le
 j,k\le
 n}\\
&=\det\left(\cC(\rho_j\bx_j\s_jx_j,\eta_k\by_k\tau_ky_k)\left(a\frac{2\pi}{L}\be_p\right)\right)_{1\le
 j,k\le n}\quad (\forall a\in\{1,-1\})
\end{align*}
and the fact that $V_{(0,0)}(\psi)$ is invariant under the scaling
 $\opsi_{\rho\bx\s x}\to e^{ia
 \hat{s}(\s)\frac{2\pi}{L}\<\bx,\be_p\>}\opsi_{\rho\bx\s x}$, $\psi_{\rho\bx\s
 x}\to e^{-ia \hat{s}(\s)\frac{2\pi}{L}\<\bx,\be_p\>}\psi_{\rho\bx\s x}$
 $(a\in\{1,-1\},(\rho,\bx,\s,x)\in I_{L,h})$. Then, by remarking
  the definition of the Grassmann
 Gaussian integral and the claim \eqref{item_perturbed_analyticity} we
 can justify the following transformations.
\begin{align*}
&e^{i\frac{2\pi}{L}\<\sum_{j=1}^2(\hat{s}(\hs_j)\hbx_j-\hat{s}(\htau_j)\hby_j),\be_p\>}
\frac{\partial}{\partial\la}\log\left(\int
     e^{V_{(\la,\la)}(\psi)}d\mu_{\cC}(\psi)\right)\Big|_{\la=0}\\
&=\sum_{a\in\{1,-1\}}e^{i\frac{2\pi}{L}\<\sum_{j=1}^2(\hat{s}(\hs_j)\hbx_j-\hat{s}(\htau_j)\hby_j),\be_p\>}S_a(\cC)=\sum_{a\in\{1,-1\}}S_a\left(\cC\left(a\frac{2\pi}{L}\be_p\right)\right).\\
&\frac{L}{2\pi}\left(e^{i\frac{2\pi}{L}\<\sum_{j=1}^2(\hat{s}(\hs_j)\hbx_j-\hat{s}(\htau_j)\hby_j),\be_p\>}-1\right)\frac{\partial}{\partial\la}\log\left(\int
     e^{V_{(\la,\la)}(\psi)}d\mu_{\cC}(\psi)\right)\Big|_{\la=0}\\
&=
 \sum_{a\in\{1,-1\}}\frac{L}{2\pi}\left(S_a\left(\cC\left(a\frac{2\pi}{L}\be_p\right)\right)-S_a(\cC(\b0))\right)\\
&=
 \sum_{a\in\{1,-1\}}\frac{L}{2\pi}\int_{0}^{2\pi a/L}d\theta_a\frac{d}{d\theta_a}S_a\left(\cC\left(\theta_a\be_p\right)\right)\\
&=
 \sum_{a\in\{1,-1\}}\frac{L}{2\pi}\int_{0}^{2\pi a/L}d\theta_a\frac{1}{2\pi i}\oint_{|w_a-\theta_a|=\cF_{t,\beta}(8/\pi^2)/n}dw_a\frac{1}{(w_a-\theta_a)^2}S_a(\cC(w_a\be_p)).
\end{align*}
Repeating this procedure $n$ times results in the equality claimed in \eqref{item_contour_integral_transform}.
\end{proof}

\section{Preliminaries}\label{sec_preliminaries}
In this section we show some lemmas concerning the cut-off function
and the sliced covariance, which are the necessary tools for the
forthcoming multi-scale analysis. To begin with, let us fix a function
$\phi\in C_0^{\infty}(\R)$ with the following properties. (i)
$\phi(x)=1$ if $|x|\le 1$. (ii) $\phi(x)=0$ if $|x|\ge 2$. (iii) $\phi(x)\in (0,1)$ if $1<|x|<2$ and is
strictly increasing in $(-2,-1)$, strictly decreasing in
$(1,2)$. See, e.g, \cite[\mbox{Problem II.6}]{FKT} for a concrete
construction of such a function. From now let the notation
`$c$' stand for a generic positive constant which depends only on $\phi$ and is
independent of any other parameters.

\subsection{The cut-off function}\label{subsec_cut_off}
With a parameter $M\in \R_{>2}$ define the function $\chi\in C_0^{\infty}(\R)$
by $\chi(x):=\phi((x-M)/(M^2-M)+1)$. We can see that $\chi(x)=1$
$(\forall x\in [0,M])$, $\chi(x)=0$ $(\forall x\in[M^2,\infty))$,
$\chi(x)\in (0,1)$ $(\forall x\in (M,M^2))$, $\chi(\cdot)$ is strictly
decreasing in $(M,M^2)$ and 
\begin{equation}\label{eq_chi_bound}
\left|\left(\frac{d}{dx}\right)^m\chi(x)\right|\le c M^{-2m}\quad
(\forall x\in[0,\infty),\forall m\in \{0,\cdots,4\}).
\end{equation}
In the next subsection $\chi$ will be differentiated at most 4
times. Thus, it suffices to prepare the bound \eqref{eq_chi_bound} only up
to $m=4$. 

Set $N_h:=\lfloor \log(2h)/\log(M)\rfloor$ for $h\in 2\N/\beta$,
$N_{\beta}:=\max\{\lfloor \log(1/\beta)/\log(M)\rfloor + 1, 1\}$.
For large $h\in 2\N/\beta$ satisfying $N_h\ge
N_{\beta}+1$ we have that
\begin{align}
&\frac{1}{\beta}<M^{N_{\beta}}\le
 \max\left\{1,\frac{1}{\beta}\right\} M,\label{eq_beta_M}\\
&M^l\le 2h\quad (\forall l \in
\{N_{\beta},N_{\beta}+1,\cdots,N_h\}).\label{eq_h_M}\end{align}

Define the functions $\chi_l(\cdot):\R\to\R$ ($l\in \{N_{\beta},N_{\beta}+1,\cdots,N_h\}$) by
\begin{align*}
&\chi_{N_{\beta}}(\o):=\chi(M^{-N_{\beta}}h|1-e^{i\o/h}|),\\
&\chi_l(\o):=\chi(M^{-l}h|1-e^{i\o/h}|)-\chi(M^{-(l-1)}h|1-e^{i\o/h}|)\quad
 (\forall l\in\{N_{\beta}+1,\cdots,N_h\}).
\end{align*}
Since $h|1-e^{i\o/h}|\le 2h\le M^{N_h+1}$,
$\chi(M^{-N_h}h|1-e^{i\o/h}|)=1$ $(\forall \o\in\R)$. This
implies that
\begin{equation}\label{eq_decomposition_cutoff}
\sum_{l=N_{\beta}}^{N_h}\chi_l(\o)=1\quad (\forall \o\in\R).
\end{equation}
The support property of these functions is described as follows. For any
$\o\in\R$,
\begin{align*}
&\chi_{N_{\beta}}(\o)=\left\{\begin{array}{ll}1 &\text{ if
		      }h|1-e^{i\o/h}|\le M^{N_{\beta}+1},\\
\in (0,1)&\text{ if }M^{N_{\beta}+1}<h|1-e^{i\o/h}|<M^{N_{\beta}+2},\\
0&\text{ if }h|1-e^{i\o/h}|\ge M^{N_{\beta}+2},\end{array}
\right.\\
&\chi_{l}(\o)=\left\{\begin{array}{ll}0 &\text{ if
		      }h|1-e^{i\o/h}|\le M^{l},\\
\in (0,1]&\text{ if }M^{l}<h|1-e^{i\o/h}|<M^{l+2},\\
0&\text{ if }h|1-e^{i\o/h}|\ge M^{l+2},\end{array}
\right.(\forall l\in \{N_{\beta}+1,\cdots,N_h\}).
\end{align*}
The role of $\chi_l(\cdot)$ is a cut-off in the Matsubara frequency. The
support of $\chi_l(\cdot)$ can be estimated as follows.
\begin{lemma}\label{lem_matsubara_count}
For any $l\in \{N_{\beta},N_{\beta}+1,\cdots,N_h\}$,
$\frac{1}{\beta}\sum_{\o\in \cM_h}1_{\chi_l(\o)\neq 0}\le c M^{l+2}$.
\end{lemma}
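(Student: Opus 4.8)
The plan is to convert the sum into a count of Matsubara frequencies lying in an explicit bounded interval of $\R$, and then estimate that count by comparing the interval's length with the spacing $2\pi/\beta$ of the arithmetic progression $\cM_h$.

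First I would make the quantity $h|1-e^{i\o/h}|$, which controls the support of $\chi_l$, completely explicit. Since $|1-e^{i\theta}|^2 = 2(1-\cos\theta) = 4\sin^2(\theta/2)$, one has $h|1-e^{i\o/h}| = 2h|\sin(\o/(2h))|$. The support descriptions listed just above the lemma show that in every case $l\in\{N_{\beta},\dots,N_h\}$ the implication $\chi_l(\o)\neq 0 \Rightarrow h|1-e^{i\o/h}| < M^{l+2}$ holds: for $l=N_{\beta}$ this is the only constraint, and for $l\ge N_{\beta}+1$ it is the upper half of the two-sided constraint. Hence the support of $\chi_l$ inside $\cM_h$ is contained in $\{\o : 2h|\sin(\o/(2h))| < M^{l+2}\}$.

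Next I would trap this set inside a genuine $\o$-interval. For $|\o|<\pi h$ we have $|\o/(2h)|<\pi/2$, so the elementary inequality $\sin x \ge (2/\pi)x$ on $[0,\pi/2]$ gives $2h|\sin(\o/(2h))| \ge (2/\pi)|\o|$. Therefore $\chi_l(\o)\neq 0$ forces $(2/\pi)|\o| < M^{l+2}$, i.e. $|\o| < (\pi/2)M^{l+2}$. The decisive point is that only this one-sided lower bound on $h|1-e^{i\o/h}|$ is needed. The genuine obstacle to a sharp two-sided estimate would be the degeneracy of $\o\mapsto h|1-e^{i\o/h}|$ near $\o=\pm\pi h$, where its derivative $\cos(\o/(2h))$ vanishes and the support stretches out for the top scales; but this is irrelevant here, since I only need to confine the support to a bounded $\o$-interval, not to reproduce the scale structure.

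Finally I would count. The set $\{\o\in\pi(2\Z+1)/\beta : |\o| < (\pi/2)M^{l+2}\}$ is an arithmetic progression of spacing $2\pi/\beta$ inside an interval of length $\pi M^{l+2}$, so it has at most $\pi M^{l+2}/(2\pi/\beta) + 1 = \beta M^{l+2}/2 + 1$ elements, which bounds $\sum_{\o\in\cM_h}1_{\chi_l(\o)\neq 0}$ from above. Dividing by $\beta$ gives $\frac{1}{\beta}\sum_{\o\in\cM_h}1_{\chi_l(\o)\neq 0} \le M^{l+2}/2 + 1/\beta$. To absorb the stray term I would invoke \eqref{eq_beta_M}, which yields $1/\beta < M^{N_{\beta}} \le M^l \le M^{l+2}$ because $l\ge N_{\beta}$; hence $\frac{1}{\beta}\sum_{\o\in\cM_h}1_{\chi_l(\o)\neq 0} \le \frac{3}{2}M^{l+2}$, establishing the claim with an explicit numerical constant that is independent of $\beta$, $h$, $t$, $\ec$, $\eo$, as required of the generic constant $c$.
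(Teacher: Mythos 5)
Your proof is correct and is essentially the argument the paper leaves implicit: the key inequality $2|\theta|/\pi\le |1-e^{i\theta}|$ $(\theta\in[-\pi,\pi])$ that you derive via $|1-e^{i\theta}|=2|\sin(\theta/2)|$ and Jordan's inequality is exactly the one the paper invokes at the start of the proof of Lemma \ref{lem_covariance_properties_0}, and combining it with the stated support bounds, the spacing $2\pi/\beta$ of $\cM_h$, and \eqref{eq_beta_M} is the intended route. Your explicit constant $\frac{3}{2}$ is admissible for the generic constant $c$, so nothing is missing.
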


\subsection{Properties of the sliced covariances}\label{subsec_sliced_covariance}
By using the cut-off function $\chi_l$ we define the covariance $\cC_l$
of $l$-th scale $(l\in\{N_{\beta},N_{\beta}+1,\cdots,N_h\})$ by
$$
\cC_l(\rho\bx\s x,\eta\by\tau y)(\bp):=\frac{\delta_{\s,\tau}}{\beta
L^2}\sum_{(\bk,\o)\in\G^*\times\cM_h}e^{-i\<\bx-\by,\bk\>}e^{i(x-y)\o}\chi_l(\o)\cB_{\rho,\eta}^{\s}(\bk+\hat{s}(\s)\bp,\o)
$$
for $(\rho,\bx,\s,x),(\eta,\by,\tau,y)\in I_{L,h}$, $\bp\in\C^2$.
Let $\cC_l(\bp):=(\cC_l(X,Y)(\bp))_{X,Y\in I_{L,h}}$. We will specify
a domain where $\cC_l(\cdot)$ is well-defined later in this
subsection. On such a domain the equality \eqref{eq_decomposition_cutoff} implies
that 
\begin{equation}\label{eq_covariance_sum_up}
\cC(\bp)=\sum_{l=N_{\beta}}^{N_h}\cC_l(\bp).
\end{equation}

In this
subsection we study various properties of $\cC_l$. For this purpose set
\begin{equation}\label{eq_first_def_E}
E(t,\bk):=2t^2\sum_{j=1}^2(1+\cos k_j):\C^2\to \C,
\end{equation}
and let us estimate $E(t,\bk)$, first of all.

\begin{lemma}\label{lem_properties_dispersion}
For any $\bk\in\R^2$, $j,p,q\in \{1,2\}$, $m\in\N\cup\{0\}$ and
 $w,z\in\C$ with $|\Im w|,|\Im z|\le r$,
\begin{align}
&\left|\left(\frac{\partial}{\partial
 k_j}\right)^mE(t,\bk+w\be_p+z\be_q)\right|\le 8t^2+8t^2\sinh(2r),\label{eq_derivative_dispersion}\\
&|\Im E(t,\bk+w\be_p+z\be_q)|\le
 4t^2\sinh(2r),\label{eq_imaginary_dispersion}\\
&\Re E(t,\bk+w\be_p+z\be_q)\ge -4t^2\sinh(2r).\label{eq_real_dispersion}
\end{align}
\end{lemma}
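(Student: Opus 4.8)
The plan is to reduce everything to elementary estimates on $\cos$ of a complex argument. Writing $\be_p=(\delta_{p,1},\delta_{p,2})$ and similarly for $\be_q$, the shift $\bk+w\be_p+z\be_q$ only alters the $p$-th and $q$-th components of $\bk$ by $w$ and $z$ respectively, so each summand $1+\cos(k_j+\cdots)$ in $E(t,\bk+w\be_p+z\be_q)$ is of the form $1+\cos(k_j+\zeta)$ with $\zeta\in\{0,w,z,w+z\}$, in any case a complex number with $|\Im\zeta|\le 2r$. Thus the whole matter reduces to bounding $\cos(\theta+i\sigma)$ and its derivatives for $\theta\in\R$ and $|\sigma|\le 2r$.

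First I would record the identity $\cos(\theta+i\sigma)=\cos\theta\cosh\sigma-i\sin\theta\sinh\sigma$, together with the analogous formula for $\sin(\theta+i\sigma)$. This immediately gives $|\Im\cos(\theta+i\sigma)|=|\sin\theta|\,|\sinh\sigma|\le\sinh(2r)$ and $\Re\cos(\theta+i\sigma)=\cos\theta\cosh\sigma\ge-\cosh(2r)$, and the cruder bounds $|\cos(\theta+i\sigma)|,|\sin(\theta+i\sigma)|\le\cosh(2r)$. Since $\frac{\partial}{\partial k_j}$ acting on $E(t,\bk+w\be_p+z\be_q)$ produces (for $m\ge1$) a factor $\pm2t^2$ times $\sin$ or $\cos$ of one of the shifted arguments $k_j+\zeta$ (and only the two terms in which the $j$-th coordinate actually appears survive differentiation), summing the contributions of the relevant $j$-summands yields \eqref{eq_derivative_dispersion}: the $m=0$ case gives $|E|\le 2t^2\sum_{j=1}^2(1+\cosh(2r))\le 8t^2\cosh(2r)$, and since $\cosh(2r)=1+2\sinh^2 r\le 1+\sinh(2r)$ one gets the stated $8t^2+8t^2\sinh(2r)$; for $m\ge1$ the derivatives only improve the constant, and the same envelope $8t^2+8t^2\sinh(2r)$ dominates them. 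Inequality \eqref{eq_imaginary_dispersion} follows by summing $|\Im\cos|\le\sinh(2r)$ over the two cosines (with the coefficient $2t^2$): $|\Im E|\le 2t^2\cdot 2\sinh(2r)=4t^2\sinh(2r)$. Likewise \eqref{eq_real_dispersion} comes from $\Re(1+\cos(\theta+i\sigma))=1+\cos\theta\cosh\sigma\ge1-\cosh(2r)$ summed with coefficient $2t^2$ over the two terms, giving $\Re E\ge 2t^2\cdot 2(1-\cosh(2r))=4t^2(1-\cosh(2r))\ge-4t^2\sinh(2r)$, where the last step again uses $\cosh(2r)-1\le\sinh(2r)$.

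The argument is essentially bookkeeping, and there is no genuine obstacle; the only point demanding a little care is the constant-chasing in \eqref{eq_derivative_dispersion}, where one must verify that the single envelope $8t^2+8t^2\sinh(2r)$ simultaneously dominates the value of $|E|$ and all its derivatives up to the orders later needed, and that the convenient simplification $\cosh(2r)\le 1+\sinh(2r)$ (equivalently $2\sinh^2 r\le 2\sinh r\cosh r$, i.e. $\tanh r\le1$) is applied correctly so that the bounds are stated in terms of $\sinh(2r)$ rather than $\cosh(2r)$. I would present the $\cos(\theta+i\sigma)$ decomposition once, derive the three scalar estimates on $|\Im\cos|$, $\Re\cos$, and $|\cos|,|\sin|$, and then assemble \eqref{eq_derivative_dispersion}–\eqref{eq_real_dispersion} by summing over the at most two surviving $j$-summands with the coefficient $2t^2$.
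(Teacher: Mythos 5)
Your proposal is correct and follows essentially the same route as the paper: the paper's proof likewise writes out $E(t,\bk+w\be_p+z\be_q)$ via $\cos(\theta+i\sigma)=\cos\theta\cosh\sigma-i\sin\theta\sinh\sigma$, bounds real and imaginary parts termwise, and uses $\cosh(2r)\le 1+\sinh(2r)$ (equivalently $4t^2-4t^2\cosh(2r)\ge -4t^2\sinh(2r)$) to phrase everything in terms of $\sinh(2r)$. The only slip is your remark that ``two terms'' survive differentiation in $k_j$ — in fact only the single summand containing $k_j$ survives — but this only improves the constant, so all three bounds stand as claimed.
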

\begin{proof}
Note that
\begin{align*}
&E(t,\bk+w\be_p+z\be_q)\\
&=4t^2+2t^2\sum_{j=1}^2\cos(k_j+\Re w \delta_{p,j}+\Re z
 \delta_{q,j})\cosh(\Im w\delta_{p,j}+\Im z\delta_{q,j})\\
&\quad-i2t^2\sum_{j=1}^2\sin(k_j+\Re
 w\delta_{p,j}+\Re z \delta_{q,j})\sinh(\Im w\delta_{p,j}+\Im
 z\delta_{q,j}),
\end{align*}
which leads to $|E(t,\bk+w\be_p+z\be_q)|\le 8t^2+8t^2\sinh(2r)$. The
 upper bounds on $|(\partial/\partial k_j)^mE(\o,\bk+w\be_p+z\be_q)|$,
 $|\Im E(t,\bk+w\be_p+z\be_q)|$  can be obtained similarly. Moreover,
$\Re E(t,\bk+w\be_p+z\be_q)\ge 4t^2-4t^2\cosh(2r)\ge -4t^2\sinh(2r)$.
\end{proof}

The following lemma summarizes properties of $\cC_{N_{\beta}}$. The
$\beta$-dependency of Theorem \ref{thm_exponential_decay} in low
temperatures mainly stems from these upper bounds on $\cC_{N_{\beta}}$. 
From now we assume that
\begin{equation}\label{eq_first_condition_M}
M\ge 78E_{max}^2. 
\end{equation}

\begin{lemma}\label{lem_covariance_properties_0}
For any $\eps\in(0,1)$ there exists $N_{\eps}\in\N$ such that for any
 $h\in 2\N/\beta$ with $h\ge 2N_{\eps}/\beta$ the following statements
 hold true.
\begin{enumerate}[(i)]
\item\label{item_covariance_analyticity_0} The function $(w,z)\mapsto
     \cC_{N_{\beta}}(X,Y)(w\be_p+z\be_q)$ is analytic in
     $\{(w,z)\in\C^2\ |\ |\Im w|,$ $|\Im z|<\cF_{t,\beta}(\eps)\}$ for any $X,Y\in I_{L,h}$, $p,q\in\{1,2\}$.   
\item\label{item_covariance_L1_bound_0}
$$
\frac{1}{h}\sum_{(\bx,x)\in\G\times
     [-\beta,\beta)_h}|\cC_{N_{\beta}}(\rho\bx\s x, \eta \b0 \s
     0)(w\be_p)|\le \frac{c}{(1-\eps)\eps^2}M^{9-N_{\beta}}\max\{1,\beta\}^8
$$
for any $\rho,\eta\in \{1,2,3\}$, $\s\in\spin$, $p\in\{1,2\}$ and $w\in\C$ with $|\Im w|<
     \cF_{t,\beta}(\eps)$.
\item\label{item_covariance_determinant_bound_0}
$$
|\det(\<\bu_j,\bv_k\>_{\C^m}\cC_{N_{\beta}}(X_j,Y_k)(w\be_p))_{1\le
     j,k\le n}|\le \left(\frac{c}{1-\eps}M^6\max\{1,\beta\}^3\right)^n
$$
for any $m,n\in \N$, $\bu_j,\bv_j\in\C^m$ with
     $\|\bu_j\|_{\C^m},\|\bv_j\|_{\C^m}\le 1$, $X_j,Y_j\in I_{L,h}$
     $(j=1,\cdots,n)$, $p\in\{1,2\}$ and $w\in\C$ with $|\Im w|< \cF_{t,\beta}(\eps)$.
\end{enumerate}
\end{lemma}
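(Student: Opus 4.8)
The three assertions all hinge on a single quantitative fact: a lower bound on the denominator $\cD^{\s}$ of \eqref{eq_covariance_inside_function}, valid on the support of $\chi_{N_{\beta}}$ and after the complex momentum shift. I would establish this first. On that support the Matsubara frequency satisfies $|\o|\ge\pi/\beta$ (the smallest element of $\cM_h$), and with $\zeta:=e^{-\frac{i}{h}\o+\frac{1}{2h}(\ec+\eo)}$ the leading term $h^2(1-\zeta)^2$ is, for large $h$, close to $-\o^2$, while for real $\bk$ the energy term contributes the nonnegative quantity $E(t,\bk)\ge 0$; hence $\Re\cD^{\s}$ is essentially $-(\o^2+E)$ and is bounded below in modulus by $\o^2$ minus the error produced by complexifying $E$. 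By Lemma \ref{lem_properties_dispersion} with $r=\cF_{t,\beta}(\eps)$ that error is at most $4t^2\sinh(2r)\le\frac{\eps\pi^2}{2\max\{\beta,\beta^2\}}$, which is exactly what the definition of $\cF_{t,\beta}$ is engineered to make strictly smaller than the frequency gap $\o^2\ge\pi^2/\beta^2$. Choosing $h$ large absorbs the $O(1/h)$ corrections $O_j^{\s}$ via \eqref{eq_decay_extra_terms}, and \eqref{eq_first_condition_M} controls the $\ec,\eo$ shifts by powers of $M$, so that $|\cD^{\s}(\bk+\hat{s}(\s)(w\be_p+z\be_q),\o)|\ge c(1-\eps)/\max\{\beta,\beta^2\}$ throughout the strip $|\Im w|,|\Im z|<\cF_{t,\beta}(\eps)$. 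Claim \eqref{item_covariance_analyticity_0} is then immediate: for fixed $L,h$ the defining sum over $\G^*\times\cM_h$ is finite, and each summand is the product of entire exponentials (independent of $w,z$) with the ratio of entire functions $\cN_{\rho,\eta}^{\s}/\cD^{\s}$ (the extra factor $1/(h(1-e^{-i\o/h+\eo/h}))$ being $\bk$-independent), which is analytic wherever $\cD^{\s}\neq 0$.

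For the determinant bound \eqref{item_covariance_determinant_bound_0} I would use the Gram--Hadamard inequality. The covariance is a Gram matrix: putting the $3\times 3$ block $\cB^{\s}$ entirely on one side, I represent $\<\bu_j,\bv_k\>_{\C^m}\cC_{N_{\beta}}(X_j,Y_k)(w\be_p)$ as an inner product $\<\bu_j\otimes f_{X_j},\bv_k\otimes g_{Y_k}\>$ in $\C^m\otimes\C^3\otimes\ell^2(\G^*\times\cM_h)$, where $f$ carries $e^{-i\<\bx,\bk\>}e^{ix\o}\chi_{N_{\beta}}(\o)^{1/2}$ together with a row of $\cB^{\s}$, $g$ carries $e^{-i\<\by,\bk\>}e^{iy\o}\chi_{N_{\beta}}(\o)^{1/2}$ together with the identity block, and the phases are arranged to reproduce $\cC_{N_{\beta}}$. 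Then $\|f_{X_j}\|^2\le\frac{1}{\beta L^2}\sum_{\bk,\o}\chi_{N_{\beta}}(\o)\|\cB^{\s}(\bk+\hat{s}(\s)w\be_p,\o)\|^2$, and likewise for $g$; summing over the $L^2$ momenta and counting the frequencies by Lemma \ref{lem_matsubara_count} ($\le c\beta M^{N_{\beta}+2}$) gives $\|f_{X_j}\|\,\|g_{Y_k}\|\le cM^{N_{\beta}+2}\sup\|\cB^{\s}\|$. The operator norm $\sup\|\cB^{\s}\|$ follows from the denominator estimate with the elementary bounds $|\cN_{\rho,\eta}^{\s}|\lesssim E_{max}^2$ and, on this scale, $|h(1-e^{-i\o/h+\eo/h})|^{-1}\lesssim\max\{1,\beta\}$; the O--O entries, which carry the extra factor, dominate. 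Using $M^{N_{\beta}}\le M\max\{1,1/\beta\}$, absorbing $E_{max}$-powers into $M$ via \eqref{eq_first_condition_M}, and collecting the powers of $\max\{1,\beta\}$ yields the claimed $(\frac{c}{1-\eps}M^6\max\{1,\beta\}^3)^n$.

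For the $L^1$ bound \eqref{item_covariance_L1_bound_0} I would convert the strip analyticity of \eqref{item_covariance_analyticity_0} into exponential spatial decay by shifting the momentum contour. Writing $\cC_{N_{\beta}}(\rho\bx\s x,\eta\b0\s 0)(w\be_p)=\frac{1}{\beta}\sum_{\o}e^{ix\o}g_{\o}(\bx)$ with $g_{\o}(\bx)=\frac{1}{L^2}\sum_{\bk}e^{-i\<\bx,\bk\>}\chi_{N_{\beta}}(\o)\cB^{\s}(\bk+\hat{s}(\s)w\be_p,\o)$, analyticity of $\bk\mapsto\cB^{\s}$ in each coordinate strip of half-width $\cF_{t,\beta}(\eps)$ lets me deform $k_j\mapsto k_j+i\,\sgn(x_j)\cF_{t,\beta}(\eps')$ for $\eps'<\eps$, producing $|g_{\o}(\bx)|\le e^{-\cF_{t,\beta}(\eps')(|x_1|+|x_2|)}\sup|\chi_{N_{\beta}}\cB^{\s}|$. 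Summing the two geometric series gives $\sum_{\bx}|g_{\o}(\bx)|\lesssim\cF_{t,\beta}(\eps)^{-2}\sup|\cB^{\s}|$, and since $\sinh^{-1}u\ge cu$ on the relevant range one has $\cF_{t,\beta}(\eps)\gtrsim c\eps/(\max\{1,t^2\}\max\{\beta,\beta^2\})$, which supplies the $\eps^{-2}$ and the powers of $\max\{\beta,\beta^2\}$; the residual $(1-\eps)^{-1}$ comes from the boundary of the strip in the geometric summation. Finally $|\cC_{N_{\beta}}|\le\frac{1}{\beta}\sum_{\o}|g_{\o}(\bx)|$ is uniform in $x$, so $\frac{1}{h}\sum_{x\in[-\beta,\beta)_h}$ contributes only the factor $\sharp[-\beta,\beta)_h/(\beta h)=2$, while the frequency sum contributes $\le c\beta M^{N_{\beta}+2}$ by Lemma \ref{lem_matsubara_count}; collecting these and using $M^{N_{\beta}}\le M\max\{1,1/\beta\}$, \eqref{eq_first_condition_M} gives exactly $\frac{c}{(1-\eps)\eps^2}M^{9-N_{\beta}}\max\{1,\beta\}^8$.

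The main obstacle is the denominator lower bound in the complexified strip: one must rule out any zero arising from the near-cancellation between the frequency term $h^2(1-\zeta)^2$ and the energy term $\zeta\big(\tfrac{(\ec-\eo)^2}{4}+E(t,\bk)\big)$. The saving feature is that even on the lowest scale the smallest frequency is nonzero, so the gap $\o^2\ge\pi^2/\beta^2$ survives and must merely be kept strictly above the combined contribution of the complexification of $E$, the $\ec,\eo$ shifts, and the $O(1/h)$ terms; this is precisely what the definition of $\cF_{t,\beta}$ (permitting imaginary shifts of order $1/\max\{\beta,\beta^2\}$) together with \eqref{eq_decay_extra_terms} arranges, and it is also the mechanism through which the temperature dependence $\max\{\beta,\beta^2\}$ enters every bound. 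Tracking the distribution of the factors $1/\cD^{\s}$ and the extra O-factor through the norm and decay estimates so that the powers of $\max\{1,\beta\}$ come out as $3$ in \eqref{item_covariance_determinant_bound_0} and $8$ in \eqref{item_covariance_L1_bound_0} is the remaining bookkeeping, but no further analytic difficulty is expected there.
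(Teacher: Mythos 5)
Your overall architecture (denominator lower bound first, then a uniform bound on $\cB^{\s}$, then decay) and your Gram factorization for part (iii) are essentially the paper's, but there are two genuine gaps, one of which you flagged yourself as ``the main obstacle'' and then resolved incorrectly.

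First, the denominator bound. You treat the $\ec,\eo$ contributions as small errors to be dominated by the frequency gap $\pi^2/\beta^2$, invoking \eqref{eq_first_condition_M}. That cannot work: expanding the leading term $h^2(1-e^{-\frac{i}{h}\o+\frac{1}{2h}(\ec+\eo)})^2$ gives $\Re\cD^{\s}\approx-\o^2+\ec\eo-\Re E$ and $\Im\cD^{\s}\approx-\o(\ec+\eo)-\Im E$, and $\ec\eo$ is an order-one quantity that can exceed $\pi^2/\beta^2$ by an arbitrary factor (take $\ec=\eo$ large and $\beta$ large; nothing in the lemma bounds $E_{max}$ in terms of $1/\beta$, and \eqref{eq_first_condition_M} ties $M$ to $E_{max}$, not $E_{max}$ to $\beta$). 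So the real part alone can come arbitrarily close to zero and your claimed bound $|\cD^{\s}|\ge c(1-\eps)/\max\{\beta,\beta^2\}$ does not follow. The paper closes this by using $|\cD^{\s}|\ge\max\{|\Re\cD^{\s}|,|\Im\cD^{\s}|\}$ with a case split: if $|\ec+\eo|\le\pi/\beta$ then $\ec\eo\le\frac{1}{2}(\ec+\eo)^2\le\frac{\pi^2}{2\beta^2}$ and the real part yields $\frac{\pi^2}{2\beta^2}-4t^2\sinh(2r)$; if $|\ec+\eo|>\pi/\beta$ the imaginary part yields $\frac{\pi}{\beta}|\ec+\eo|-4t^2\sinh(2r)\ge\frac{\pi^2}{\beta^2}-4t^2\sinh(2r)$. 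Without exploiting $\Im\cD^{\s}$ in this way there is no quantitative lower bound, and all three parts of the lemma are then unsupported.

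Second, the $L^1$ bound. You propose to deform $k_j\mapsto k_j+i\,\sgn(x_j)\cF_{t,\beta}(\eps')$ inside $g_{\o}(\bx)=\frac{1}{L^2}\sum_{\bk\in\G^*}e^{-i\<\bx,\bk\>}\chi_{N_{\beta}}(\o)\cB^{\s}(\bk+\cdots,\o)$. But this is a finite lattice sum, not an integral: Cauchy's theorem does not apply, and a discrete sum is not invariant under imaginary shifts of the summation variable. Indeed the bound you claim, $|g_{\o}(\bx)|\le e^{-\cF_{t,\beta}(\eps')(|x_1|+|x_2|)}\sup|\chi_{N_{\beta}}\cB^{\s}|$, is false as stated, since $g_{\o}$ is $L$-periodic in $\bx$ and cannot decay beyond the fundamental domain. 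This is exactly why the paper proves decay through the discrete substitute \eqref{eq_contour_integral_covariance}: multiplication by $\frac{L}{2\pi}(e^{i\frac{2\pi}{L}\<\bx-\by,\be_q\>}-1)$ corresponds to shifting $\bk$ by the lattice vector $\frac{2\pi}{L}\be_q$ (under which the sum \emph{is} invariant), the difference is written as $\frac{L}{2\pi}\int_0^{2\pi/L}d\theta\,\frac{d}{d\theta}$, the derivative is expressed by Cauchy's formula on circles of radius $\cF_{t,\beta}(\eps/2)/n$ in the second complex variable --- which is precisely why part (i) is stated for two variables $(w,z)$ --- and finally $n$ is optimized using $n^n\le n!\,e^n$; the resulting decay is in $\bigl|\frac{L}{2\pi}(e^{i2\pi\<\bx-\by,\be_q\>/L}-1)\bigr|$, consistent with the torus geometry. (Your route could be repaired via Poisson summation, giving decay in the torus distance, but that is a different argument from what you wrote.) A smaller slip: your bound $|\cN^{\s}_{\rho,\eta}|\lesssim E_{max}^2$ fails for $(\rho,\eta)=(1,1)$, since $\cN_{1,1}^{\s}$ contains $h(1-e^{-\frac{i}{h}\o+\frac{1}{2h}(\ec+\eo)})\approx i\o$, which grows like $|\o|\le cM^3\max\{1,1/\beta\}$ on the support of $\chi_{N_{\beta}}$.
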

\begin{proof}
First note that $\chi_{N_{\beta}}(\o)\neq 0$ implies $2|\o|/\pi\le
 M^{N_{\beta}+2}$, since $2|\theta|/\pi\le |1-e^{i\theta}|$ $(\forall
 \theta\in [-\pi,\pi])$. This inequality
 coupled with \eqref{eq_beta_M} proves
 that if $\chi_{N_{\beta}}(\o)\neq 0$,
\begin{equation}\label{eq_bound_omega}
|\o|\le c\max\left\{1,\frac{1}{\beta}\right\}M^3.
\end{equation}

\eqref{item_covariance_analyticity_0}: From the definition
 \eqref{eq_covariance_inside_function}, \eqref{eq_decay_extra_terms} and \eqref{eq_bound_omega} we
 observe that 
\begin{align*}
\cD^{\s}(\bk+w\be_p+z\be_q,\o)=&-\o^2+\ec\eo-\Re
 E(t,\bk+w\be_p+z\be_q)\\
&+i(-\o(\ec+\eo)-\Im
 E(t,\bk+w\be_p+z\be_q))+O(h^{-1}),
\end{align*}
where $O(h^{-1})$ represents terms of order $h^{-1}$. Moreover, if $|\Im w|$,
 $|\Im z|<r$, by \eqref{eq_imaginary_dispersion} and
 \eqref{eq_real_dispersion},
\begin{align*}
&|\cD^{\s}(\bk+w\be_p+z\be_q,\o)|\\
&\ge \max\{\o^2-\ec\eo+\Re
 E(t,\bk+w\be_p+z\be_q), |\o(\ec+\eo)|-|\Im
 E(t,\bk+w\be_p+z\be_q)|\}\\
&\quad+O(h^{-1})\\
&\ge
 \max\left\{\frac{\pi^2}{\beta^2}-\frac{1}{2}(\ec+\eo)^2-4t^2\sinh(2r), 
\frac{\pi}{\beta}|\ec+\eo|-4t^2\sinh(2r)\right\}+O(h^{-1})\\
&\ge 1_{|\ec+\eo|\le\frac{\pi}{\beta}}\left(\frac{\pi^2}{2\beta^2}-4t^2\sinh(2r)\right)+1_{|\ec+\eo|>
 \frac{\pi}{\beta}}\left(\frac{\pi^2}{\beta^2}-4t^2\sinh(2r)\right)+O(h^{-1})\\
&\ge
 \frac{\pi^2}{2\beta^2}-4t^2\sinh(2r)+O(h^{-1}).
\end{align*}
If $r=\cF_{t,\beta}(\eps)$ and $h$ is large enough, 
\begin{equation}\label{eq_denominator_lowerbound_0}
|\cD^{\s}(\bk+w\be_p+z\be_q,\o)|\ge
 \frac{(1-\eps)\pi^2}{4\beta^2}>0.
\end{equation}
Therefore, the denominator of
 $\chi_{N_{\beta}}(\o)\cB_{\rho,\eta}^{\s}(\bk+w\be_p+z\be_q,w)$ does
 not vanish for any $\rho,\eta\in\{1,2,3\}$, which ensures the
 analyticity of $\cC_{N_{\beta}}(X,Y)(w\be_p+z\be_q)$ in the claimed
 domain.

\eqref{item_covariance_L1_bound_0}: Fix $w,z\in\C$ with $|\Im w|$,
 $|\Im z|< \cF_{t,\beta}(\eps)$
 and $p,q\in\{1,2\}$. We will use the following bounds. For any $(k_1,k_2)\in\R^2$, 
\begin{align}
&|\sin(k_j+w\delta_{j,p}+z\delta_{j,q})|,|\cos(k_j+w\delta_{j,p}+z\delta_{j,q})|,\notag\\
&|\sin(k_1+w\delta_{1,p}+z\delta_{1,q}-k_2-w\delta_{2,p}-z\delta_{2,q})|,\notag\\
&|\cos(k_1+w\delta_{1,p}+z\delta_{1,q}-k_2-w\delta_{2,p}-z\delta_{2,q})|\le
 c\left(1+\frac{1}{\max\{\beta,\beta^2\}}\right)\label{eq_sin_cos_bound}
\end{align}
($\forall j\in\{1,2\}$). By keeping
 \eqref{eq_decay_extra_terms}, \eqref{eq_first_condition_M}, \eqref{eq_bound_omega}
  and \eqref{eq_sin_cos_bound} in mind, one
 can deduce the following. For any $\o\in \cM_h$ with
 $\chi_{N_{\beta}}(\o)\neq 0$ and large enough $h\in 2\N/\beta$,
\begin{align*}
&|\cN_{1,1}^{\s}(\bk+w\be_p+z\be_q,\o)|\le
 cM^3\max\left\{1,\frac{1}{\beta}\right\},\\
&|\cN_{\rho,\eta}^{\s}(\bk+w\be_p+z\be_q,\o)|\le c
 M\left(1+\frac{1}{\max\{\beta,\beta^2\}}\right)\quad (\forall
 (\rho,\eta)\in\{1,2,3\}^2\backslash\{(1,1)\}),\\
&|h(1-e^{iw/h+\eo/h})|\ge |\o|+O(h^{-1})\ge \frac{c}{\beta}.
\end{align*}
It follows from these inequalities and
 \eqref{eq_denominator_lowerbound_0} that 
\begin{align*}
&|\cB_{\rho,\eta}^{\s}(\bk+w\be_p+z\be_q,\o)|\\
&\le\left\{\begin{array}{ll}\frac{c}{1-\eps}M^3\beta\max\{1,\beta\}&\text{
						if }(\rho,\eta)=(1,1),\\
\frac{c}{1-\eps}M\beta\max\{1,\beta\}&\text{
						if
						}(\rho,\eta)\in\{(1,2),(1,3),(2,1),(3,1)\},\\
c\beta+\frac{c}{1-\eps}M\beta^2\max\{1,\beta\}&\text{ if
						      }(\rho,\eta)\in\{(2,2),(2,3),(3,2),(3,3)\},\end{array}\right.
\end{align*}
which results in
\begin{equation}\label{eq_uniform_bound_momentum_0}
|\cB_{\rho,\eta}^{\s}(\bk+w\be_p+z\be_q,\o)|\le
 \frac{c}{1-\eps}M^3\beta\max\{1,\beta\}^2\quad (\forall
 \rho,\eta\in\{1,2,3\}).
\end{equation}
Then, by using Lemma \ref{lem_matsubara_count} and \eqref{eq_beta_M} we
 have for any $X,Y\in I_{L,h}$ that
\begin{equation}\label{eq_uniform_bound_0}
|\cC_{N_{\beta}}(X,Y)(w\be_p+z\be_q)|\le
 \frac{c}{1-\eps}M^{N_{\beta}+5}\beta\max\{1,\beta\}^2\le
 \frac{c}{1-\eps}M^{6}\max\{1,\beta\}^3.
\end{equation}

The rest of the proof of \eqref{item_covariance_L1_bound_0} proceeds in
 the same way as in \cite[\mbox{Subsection 5.2}]{K2}. By noting the domain
 of analyticity proved in \eqref{item_covariance_analyticity_0} and the
 periodicity of $\cB_{\rho,\eta}^{\s}(\bk,\o)$ with respect to $\bk$ one
 can derive the following equality. For $n\in\N$,
\begin{align}
&\left(\frac{L}{2\pi}\left(e^{i\frac{2\pi}{L}\<\bx-\by,\be_q\>}-1\right)\right)^n\cC_{N_{\beta}}(\rho\bx\s
 x,\eta\by\tau y )(w\be_p)\notag\\
&=\prod_{j=1}^n\left(\frac{L}{2\pi}\int_0^{2\pi/L}d\theta_j\frac{1}{2\pi
 i}\oint_{|z_j-\theta_j|=\cF_{t,\beta}(\eps/2)/n}dz_j\frac{1}{(z_j-\theta_j)^2}\right)\notag\\
&\quad\cdot\cC_{N_{\beta}}(\rho\bx\s
 x,\eta\by\tau y)\left(w\be_p+\hat{s}(\s)\sum_{j=1}^nz_j\be_q\right).\label{eq_contour_integral_covariance}
\end{align}
By taking the absolute value of both sides of
 \eqref{eq_contour_integral_covariance} and using the inequality
 $n^n\le n! e^n$ and \eqref{eq_uniform_bound_0} we obtain
$$
\left|\frac{L}{2\pi}\left(e^{i\frac{2\pi}{L}\<\bx-\by,\be_q\>}-1\right)\right|^n|\cC_{N_{\beta}}(\rho\bx\s
 x,\eta\by\tau
 y)(w\be_p)|\le\frac{c}{1-\eps}M^6\max\{1,\beta\}^3\frac{n!e^n}{\cF_{t,\beta}(\eps/2)^{n}}
$$ 
for any $n\in\N\cup \{0\}$,
which leads to
\begin{align*}
&|\cC_{N_{\beta}}(\rho\bx\s x,\eta\by\tau y)(w\be_p)|\\
&\le
 \frac{c}{1-\eps}M^6\max\{1,\beta\}^3\left(\frac{\eps\pi^2}{16\max\{1,t^2\}\max\{\beta,\beta^2\}}+1\right)^{-\frac{1}{8e}\sum_{q=1}^2\left|\frac{e^{i2\pi\<\bx-\by,\be_q\>/L}-1}{2\pi/L}\right|}.
\end{align*}
Then, by using the inequality that $|(e^{i2\pi m/L}-1)/(2\pi/L)|\ge
 2|m|/\pi$ ($\forall m\in\Z$ with $|m|\le L/2$) and
 \eqref{eq_beta_M} we can deduce that
\begin{align*}
&\frac{1}{h}\sum_{(\bx,x)\in\G\times[-\beta,\beta)_h}|\cC_{N_{\beta}}(\rho\bx\s
 x,\eta\b0\tau 0)(w\be_p)|\\
&\le
 \frac{c}{1-\eps}M^6\beta\max\{1,\beta\}^3\left(\frac{\left(\frac{\eps\pi^2}{16\max\{1,t^2\}\max\{\beta,\beta^2\}}+1\right)^{1/(4\pi
 e)}+1}{\left(\frac{\eps\pi^2}{16\max\{1,t^2\}\max\{\beta,\beta^2\}}+1\right)^{1/(4\pi
 e)}-1}\right)^2\\
&\le
 \frac{c}{1-\eps}M^6\beta\max\{1,\beta\}^3\left(1+\frac{\max\{1,t^2\}\max\{\beta,\beta^2\}}{\eps}\right)^2\\
&\le \frac{c}{(1-\eps)\eps^2}M^8\beta\max\{1,\beta\}^7\le
 \frac{c}{(1-\eps)\eps^2}M^{9-N_{\beta}}\max\{1,\beta\}^8.
\end{align*}

\eqref{item_covariance_determinant_bound_0}: Define the complex Hilbert
 space $\cH$ by $\cH:=\C^m\otimes L^2(\{1,2,3\}\times \G^*\times
 \spin\times \cM_h)$ with the inner product
$$
\<\bu\otimes f,\bv\otimes g\>_{\cH}:=\<\bu,\bv\>_{\C^m}\frac{1}{\beta
 L^2}\sum_{(\rho,\bk,\s,\o)\atop\in
\{1,2,3\}\times\G^*\times\spin\times\cM_h}f(\rho,\bk,\s,\o)\overline{g(\rho,\bk,\s,\o)}.
$$
Moreover, define the vectors $f_{l,X}$, $g_{l,X}\in
 L^2(\{1,2,3\}\times\G^*\times\spin\times\cM_h)$
 $(X\in I_{L,h}$, $l\in\{N_{\beta},\cdots,N_h\})$ by
\begin{align}
&f_{l,\rho\bx\s
 x}(\eta,\bk,\tau,\o):=\delta_{\s,\tau}e^{-i\<\bx,\bk\>}e^{ix\o}\chi_l(\o)^{1/2}\cB_{\rho,\eta}^{\s}(\bk+\hat{s}(\s)w\be_p,\o),\label{eq_def_one_vector}\\
&g_{l,\rho\bx\s
 x}(\eta,\bk,\tau,\o):=\delta_{\rho,\eta}\delta_{\s,\tau}e^{-i\<\bx,\bk\>}e^{ix\o}\chi_l(\o)^{1/2}.\label{eq_def_the_other_vector}
\end{align}
The vectors $f_{l,X}$, $g_{l,X}$ for $l\ge N_{\beta}+1$ will be used in
 the proof of the next lemma. 
We see that $\<\bu,\bv\>_{\C^m}\cC_{N_{\beta}}(X,Y)(w\be_p)=\<\bu\otimes
 f_{N_{\beta},X},\bv\otimes
 g_{N_{\beta},Y}\>_{\cH}$. By Lemma \ref{lem_matsubara_count},
 \eqref{eq_beta_M} and \eqref{eq_uniform_bound_momentum_0}, 
\begin{align*}
&\|\bu\otimes f_{N_{\beta},X}\|_{\cH}\le
 (M^3\max\{1,\beta^{-1}\})^{1/2}\frac{c}{1-\eps}M^3\beta\max\{1,\beta\}^2,\\
& \|\bv\otimes g_{N_{\beta},X}\|_{\cH}\le
 c(M^3\max\{1,\beta^{-1}\})^{1/2},
\end{align*}
if $\|\bu\|_{\C^m}$, $\|\bv\|_{\C^m}\le 1$. Therefore, Gram's inequality
 guarantees that if $\|\bu_j\|_{\C^m}$, $\|\bv_j\|_{\C^m}$ $\le 1$
 $(\forall j\in \{1,\cdots,n\})$,
\begin{align*} 
|\det(\<\bu_j,\bv_k\>_{\C^m}\cC_{N_{\beta}}(X_j,Y_k)(w\be_p))_{1\le j,k\le
 n}|&\le \prod_{j=1}^n\|\bu_j\otimes
 f_{N_{\beta},X_j}\|_{\cH}\|\bv_j\otimes
 g_{N_{\beta},Y_j}\|_{\cH}\\
&\le\left(\frac{c}{1-\eps}M^6\max\{1,\beta\}^3\right)^n.
\end{align*}
\end{proof}

The following lemma gives upper bounds on $\cC_l$
$(l\in\{N_{\beta}+1,\cdots,N_h\})$, which are essentially independent of
$\beta$ in low temperatures.
\begin{lemma}\label{lem_covariance_properties_l}
For any $\eps\in (0,1)$ there exists $N_{\eps}\in\N$ such that for any
 $h\in2\N/\beta$ with $h\ge 2N_{\eps}/\beta$ and
 $l\in\{N_{\beta}+1,\cdots,N_h\}$ the following statements hold true.
\begin{enumerate}[(i)]
\item\label{item_covariance_analyticity_l}
The function $w\mapsto \cC_l(X,Y)(w\be_p)$ is analytic in $\{w\in\C\ |\
     |\Im w|<\cF_{t,\beta}(\eps)\}$ for any $X,Y\in I_{L,h}$,
      $p\in\{1,2\}$.
\item\label{item_covariance_L1_bound_l} 
$$
\frac{1}{h}\sum_{(\bx,x)\in\G\times
     [-\beta,\beta)_h}|\cC_{l}(\rho\bx\s x, \eta \b0 \s
     0)(w\be_p)|\le cM^{8-l}
$$
for any $\rho,\eta\in \{1,2,3\}$, $\s\in\spin$, $p\in\{1,2\}$ and $w\in\C$ with $|\Im w|< \cF_{t,\beta}(\eps)$.
\item\label{item_covariance_determinant_bound_l}
$$
|\det(\<\bu_j,\bv_k\>_{\C^m}\cC_{l}(X_j,Y_k)(w\be_p))_{1\le
     j,k\le n}|\le (cM^4)^n
$$
for any $m,n\in \N$, $\bu_j,\bv_j\in\C^m$ with
     $\|\bu_j\|_{\C^m},\|\bv_j\|_{\C^m}\le 1$, $X_j,Y_j\in I_{L,h}$
     $(j=1,\cdots,n)$, $p\in\{1,2\}$ and $w\in\C$ with 
$|\Im w|<\cF_{t,\beta}(\eps)$.
\item\label{item_covariance_position_decay}
$$
|\cC_l(\rho\hbx\s x,\eta \hby\tau y)(w\be_p)|\le cM^{3+N_{\beta}-l}
$$
for any $\hbx,\hby\in \Z^2$ with $1\le \|\hbx-\hby\|_{\R^2}\le L/2$,
     $(\rho,\s,x)$, $(\eta,\tau,y)\in\{1,2,3\}\times \spin \times
     [0,\beta)_h$, $p\in\{1,2\}$ and $w\in\C$ with 
$|\Im w|< \cF_{t,\beta}(\eps)$.
\item\label{item_covariance_tadpole_decay}
$$
|\cC_l(\rho\b0\s 0,\eta\b0\tau 0)(w\be_p)|\le cM^3(M^{l-N_h}+M^{N_{\beta}-l})
$$
for any $(\rho,\s),(\eta,\tau)\in \{1,2,3\}\times\spin$, $p\in\{1,2\}$ and $w\in\C$ with $|\Im w|<\cF_{t,\beta}(\eps)$.
\end{enumerate}
\end{lemma}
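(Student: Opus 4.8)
All five assertions run parallel to the corresponding parts of Lemma~\ref{lem_covariance_properties_0}, the new ingredient being that for $l\ge N_{\beta}+1$ the cut-off $\chi_l$ confines the Matsubara frequency to the large-frequency regime. I would first record that $\chi_l(\o)\neq 0$ forces $M^l<h|1-e^{i\o/h}|<M^{l+2}$, and that together with $\frac{2}{\pi}|\o|\le h|1-e^{i\o/h}|\le|\o|$ this gives the localization $M^l<|\o|<\frac{\pi}{2}M^{l+2}$. Feeding this into the expansion $\cD^{\s}(\bk+w\be_p,\o)=-\o^2+\ec\eo-\Re E(t,\bk+w\be_p)+i(-\o(\ec+\eo)-\Im E(t,\bk+w\be_p))+O(h^{-1})$ already used in Lemma~\ref{lem_covariance_properties_0}, and controlling the lower order terms via \eqref{eq_first_condition_M}, Lemma~\ref{lem_properties_dispersion} and the identity $\sinh(2\cF_{t,\beta}(\eps))=\frac{\eps\pi^2}{8\max\{1,t^2\}\max\{\beta,\beta^2\}}$, I would prove the uniform lower bound $|\cD^{\s}(\bk+w\be_p+z\be_q,\o)|\ge\frac12\o^2\ge\frac12 M^{2l}$ on the support of $\chi_l$ for $|\Im w|,|\Im z|<\cF_{t,\beta}(\eps)$ and large $h$ (here $\o^2\ge M^{2l}\ge M^{2N_{\beta}+2}$ dominates $E_{max}^2$ and the $\beta^{-1}$-type terms thanks to \eqref{eq_first_condition_M}). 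The non-vanishing of $\cD^{\s}$ gives \eqref{item_covariance_analyticity_l} at once and, jointly in $(w,z)$, the analyticity that the contour arguments below require.

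Granting the denominator bound, I would estimate the entries $\cB_{\rho,\eta}^{\s}$ on the support of $\chi_l$. Using \eqref{eq_covariance_inside_function}, \eqref{eq_decay_extra_terms}, the inequality $t^2\le E_{max}^2\le M/78$ from \eqref{eq_first_condition_M}, the trigonometric bounds \eqref{eq_sin_cos_bound} and $1/\max\{\beta,\beta^2\}\le M^{N_{\beta}}$, one finds $|\cN_{1,1}^{\s}|\le c|\o|$, $|\cN_{\rho,\eta}^{\s}|\le cM^{1+N_{\beta}}$ for $(\rho,\eta)\neq(1,1)$, and $|h(1-e^{-i\o/h+\eo/h})|\ge\frac{c}{\pi}|\o|$; dividing by $|\cD^{\s}|\ge\frac12\o^2$ and using $l\ge N_{\beta}+1$ then yields the crude uniform bound $|\cB_{\rho,\eta}^{\s}(\bk+\hat{s}(\s)w\be_p,\o)|\le cM^{2-l}$. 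With this, \eqref{item_covariance_determinant_bound_l} follows exactly as in Lemma~\ref{lem_covariance_properties_0}\eqref{item_covariance_determinant_bound_0}: in the Hilbert space $\cH$, with the vectors $f_{l,X},g_{l,X}$ of \eqref{eq_def_one_vector}--\eqref{eq_def_the_other_vector}, one has $\<\bu,\bv\>_{\C^m}\cC_l(X,Y)(w\be_p)=\<\bu\otimes f_{l,X},\bv\otimes g_{l,Y}\>_{\cH}$, and Lemma~\ref{lem_matsubara_count} gives $\|g_{l,X}\|_{\cH}\le cM^{(l+2)/2}$ and $\|f_{l,X}\|_{\cH}\le cM^{3-l/2}$, so Gram's inequality bounds the determinant by $(cM^4)^n$.

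The spatial decay \eqref{item_covariance_position_decay} I would obtain from the observation that the momentum-independent part of $\cB_{\rho,\eta}^{\s}$ Fourier-transforms to a factor $\delta_{\hbx,\hby}$ and hence contributes nothing when $\hbx\neq\hby$; at $\hbx\neq\hby$ only the genuinely $\bk$-dependent remainder survives, and that remainder carries an extra power of $1/|\o|\sim M^{-l}$ (it enters either through $E(t,\bk)$ in $\cD^{\s}$ or through the off-diagonal numerators $\propto t$, $t^2$), so that summation against $\chi_l$ via Lemma~\ref{lem_matsubara_count} gives the stated $cM^{3+N_{\beta}-l}$, uniformly in $\|\hbx-\hby\|$. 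The $L^1$ bound \eqref{item_covariance_L1_bound_l} I would prove by the contour-shift mechanism of \eqref{eq_contour_integral_covariance}; the point that keeps this estimate free of $\beta$ is that for $l\ge N_{\beta}+1$ the large factor $\o^2\sim M^{2l}$ pushes the zeros of $\cD^{\s}$ far into the complex momentum plane, so the momentum contour can be displaced enough to make the spatial sum converge uniformly in $\beta$; combined with the pointwise bound and Lemma~\ref{lem_matsubara_count} this produces $cM^{8-l}$.

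The delicate point, and what I expect to be the main obstacle, is the tadpole estimate \eqref{item_covariance_tadpole_decay}, where $\hbx=\hby=\b0$ and $x=y=0$, so that neither spatial nor temporal oscillation is available and the naive estimate is only $O(M^4)$, independent of $l$. Here one must exploit a parity cancellation: since $\chi_l$ and $\cM_h$ are invariant under $\o\mapsto-\o$, every part of $\frac{1}{\beta L^2}\sum_{\bk,\o}\chi_l(\o)\cB_{\rho,\eta}^{\s}(\bk+\hat{s}(\s)w\be_p,\o)$ that is odd in $\o$ sums to zero. I would therefore split $\cB_{\rho,\eta}^{\s}$ into its even and odd parts in $\o$; the odd part, which carries the leading $\mp i/\o$ behaviour, drops out, and I would bound only the surviving even part. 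Upon rationalizing $\cN_{\rho,\eta}^{\s}/\cD^{\s}$, that even part receives one contribution from the discretization mismatch $h(1-e^{-i\o/h})-i\o=O(\o^2/h)$, which after division by $\o^2$ gives a per-frequency term $O(1/h)=O(M^{-N_h})$ by \eqref{eq_h_M} and hence, after summing, the factor $M^{l-N_h}$, and a second contribution from the odd$\times$odd cross term, of size $O(E_{max}M^{-2l})$, which summed against $\chi_l$ produces the factor $M^{N_{\beta}-l}$; the overall prefactors then assemble into $cM^3(M^{l-N_h}+M^{N_{\beta}-l})$. Carrying this parity bookkeeping honestly through all nine matrix entries, and checking that the imaginary momentum shift $\bk\mapsto\bk+\hat{s}(\s)w\be_p$ does not disturb the $\o$-parity, is where I expect the real work to lie.
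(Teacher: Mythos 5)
Your parts (iii) and (v) are correct and essentially reproduce the paper's own argument: (iii) is the same Gram-inequality computation with the vectors \eqref{eq_def_one_vector}, \eqref{eq_def_the_other_vector}, and your parity cancellation in (v) is exactly the paper's symmetrization of $\frac{1}{\beta}\sum_{\o}\chi_l(\o)/\bigl(h(1-e^{-i\o/h+(\ec+\eo)/(2h)})\bigr)$ over $\o\mapsto -\o$, whose even part consists of the discretization term $\frac{1}{2h}$ per frequency plus a residue of size $O(E_{max}M^{-2l})$. Your route to (iv) --- splitting $\cB_{\rho,\eta}^{\s}$ into its $\bk$-independent part, which Fourier-transforms to $\delta_{\hbx,\hby}$ and vanishes for $\hbx\neq\hby$, plus a remainder of size $O(M^{N_{\beta}+1-2l})$ per frequency --- is genuinely different from the paper (which instead derives the unified decay bound \eqref{eq_covariance_decay_l} by four-fold discrete differentiation in $k_j$) and it does work, granted a correct denominator bound. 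However, there are two genuine gaps.

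First, your lower bound on the denominator --- hence part (i), and with it the crude bound $|\cB_{\rho,\eta}^{\s}|\le cM^{2-l}$ that feeds (iii), (iv), (v) --- rests on the expansion $\cD^{\s}=-\o^2+\ec\eo-\Re E+i(\cdots)+O(h^{-1})$. That expansion is legitimate in the proof of Lemma \ref{lem_covariance_properties_0} only because on the support of $\chi_{N_{\beta}}$ one has $|\o|\le c\max\{1,\beta^{-1}\}M^3$, bounded independently of $h$. For general $l\le N_h$ the frequencies reach $|\o|\sim h$, and there $h(1-e^{-i\o/h})-i\o$ is of size $\o^2/h\sim h$, the same order as the leading term, not $O(h^{-1})$: for instance at $\o/h$ near $\pi$ the quantity $h^2(1-e^{-i\o/h+(\ec+\eo)/(2h)})^2$ is approximately $4h^2$ (real, positive) while $-\o^2\approx-\pi^2h^2$. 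So the identity you invoke is false precisely in the ultraviolet regime this lemma is about, and the claimed constant in $|\cD^{\s}|\ge\frac12\o^2$ is not obtainable this way. The conclusion $|\cD^{\s}|\ge cM^{2l}$ is true, but must be proved as in \eqref{eq_upper_lower_l}--\eqref{eq_denominator_lower_bound_l}, working with the exact discretized quantity $h|1-e^{i\o/h+(\ec+\eo)/(2h)}|$, which on the support of $\chi_l$ lies between $\frac12M^l-E_{max}$ and $2M^{l+2}+E_{max}$; this is the very reason the scale decomposition is phrased in terms of $h|1-e^{i\o/h}|$ rather than $|\o|$.

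Second, and more seriously, your plan for the $L^1$ bound (ii) cannot yield $cM^{8-l}$. A momentum contour shift produces decay only in the spatial variable; the time sum $\frac{1}{h}\sum_{x\in[-\beta,\beta)_h}$ then contributes a factor $2\beta$, and the frequency sum contributes at best $cM^{l+2}\cdot cM^{2-l}=cM^4$ (the $M^{-l}$ in the pointwise bound on $\cB_{\rho,\eta}^{\s}$ is exactly consumed by the $\sim M^{l+2}$ frequencies counted in Lemma \ref{lem_matsubara_count}). Your route therefore gives at best $c\beta M^4$: it is $\beta$-dependent, contradicting the claimed bound, and has no decay in $l$ at all, so for $l$ near $N_h$ it is off by an arbitrarily large factor. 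The missing idea is decay in the time direction $x-y$, which the paper manufactures by four-fold discrete integration by parts in $\o$, using the derivative bound \eqref{eq_covariance_bound_momentum_matsubara}, $|(\partial/\partial\o)^4(\chi_l(\o)\cB_{\rho,\eta}^{\s})|\le cM^{18-5l}$; this puts the term $M^{4l-16}\bigl|\frac{\beta}{2\pi}(e^{i2\pi(x-y)/\beta}-1)\bigr|^4$ into the denominator of \eqref{eq_covariance_decay_l}, and it is the time sum of that factor which simultaneously removes $\beta$ and supplies the gain $M^{4-l}$, giving $M^4\cdot M^{4-l}=M^{8-l}$.
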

\begin{proof}
\eqref{item_covariance_analyticity_l}: For any $\o\in\R$ with
 $\chi_l(\o)\neq 0$ and sufficiently large $h$,
\begin{equation}\label{eq_upper_lower_l}
\frac{1}{2}M^l-E_{max}\le |h(1-e^{i\o/h+(\ec+\eo)/(2h)})|\le
 2M^{l+2}+E_{max}.
\end{equation}
The condition \eqref{eq_first_condition_M} implies that
$10+\frac{1}{2}E_{max}+9E_{max}^2\le \frac{39}{2}E_{max}^2\le\frac{1}{4}M$,
or
\begin{equation}\label{eq_result_condition_M}
9E_{max}^2+10M^{l-1}\le
 \frac{1}{2}\left(\frac{1}{2}M^l-E_{max}\right)\le\frac{1}{2}\left(\frac{1}{2}M^l-E_{max}\right)^2.
\end{equation}
Note that by \eqref{eq_beta_M} and \eqref{eq_derivative_dispersion},
\begin{equation}\label{eq_X_bound}
\left|\left(\frac{\partial}{\partial
       k_j}\right)^m\left(\frac{(\ec-\eo)^2}{4}+E(t,\bk+w\be_p)\right)\right|\le
9E_{max}^2+\frac{\pi^2}{\max\{\beta,\beta^2\}}\le
9E_{max}^2+\pi^2M^{l-1}
\end{equation}
for any $m\in\{0,\cdots,4\}$. Then, by using
 \eqref{eq_upper_lower_l}, \eqref{eq_result_condition_M} and
 \eqref{eq_X_bound} we have for any $\bk\in \R^2$ that
\begin{align}
&|\cD^{\s}(\bk+w\be_p,\o)|\ge
 \left(\frac{1}{2}M^l-E_{max}\right)^2-e^{E_{max}/h}(9E_{max}^2+\pi^2M^{l-1})+O(h^{-1})\notag\\
&\ge \left(\frac{1}{2}M^l-E_{max}\right)^2-9E_{max}^2-10M^{l-1}\ge \frac{1}{2}\left(\frac{1}{2}M^l-E_{max}\right)^2\ge
 \frac{1}{16}M^{2l}.\label{eq_denominator_lower_bound_l}
\end{align}
Thus, the denominator of $\chi_l(\o)\cB_{\rho,\eta}^{\s}(\bk+w\be_p,\o)$
is non-zero for any $\rho,\eta\in\{1,2,3\}$, which proves the claim
 \eqref{item_covariance_analyticity_l}.

\eqref{item_covariance_L1_bound_l},\eqref{item_covariance_position_decay}:
 Take $\o\in\R$ with $\chi_l(\o)\neq 0$, $p\in\{1,2\}$,
 $\bk\in\R^2$, $\s\in\spin$ and $w\in\C$ with $|\Im w|<\cF_{t,\beta}(\eps)$.
Estimating $|\chi_l(\o)(\partial/\partial
 k_j)^m\cB_{\rho,\eta}^{\s}(\bk+w\be_p,\o)|$, $|(\partial/\partial\o)^m
 (\chi_l(\o)\cB_{\rho,\eta}^{\s}(\bk+w\be_p,\o))|$ $(m=0,\cdots,4)$
 provides sufficient information to bound the sum of $\cC_l(w\be_p)$ over $\G\times[0,\beta)_h$.  
By using the inequalities \eqref{eq_beta_M}, \eqref{eq_h_M} and 
\eqref{eq_upper_lower_l} we obtain
\begin{align*}
&\left|\left(\frac{\partial}{\partial
       \o}\right)^m\cD^{\s}(\bk+w\be_p,\o)\right|\le c M^{4+(2-m)l}\quad
(\forall m\in\{0,\cdots,4\}),\\
&\left|\left(\frac{\partial}{\partial
       k_j}\right)^n\cD^{\s}(\bk+w\be_p,\o)\right|\le c M^{N_{\beta}}\quad
(\forall n\in\{1,\cdots,4\},j\in\{1,2\}),
\end{align*}
which, combined with \eqref{eq_denominator_lower_bound_l}, yields
\begin{equation}
\begin{split}
&\left|\left(\frac{\partial}{\partial
 \o}\right)^m\frac{1}{\cD^{\s}(\bk+w\be_p,\o)}\right|\le c
 M^{4m-(2+m)l}\quad (\forall m\in \{0,\cdots,4\}),\\
&\left|\left(\frac{\partial}{\partial
 k_j}\right)^n\frac{1}{\cD^{\s}(\bk+w\be_p,\o)}\right|\le c
 M^{N_{\beta}-4l}\quad (\forall n\in \{1,\cdots,4\},j\in\{1,2\}).\label{eq_denominator_derivative}
\end{split}
\end{equation}
One can similarly derive the following inequalities. For any $m\in
 \{0,\cdots,4\}$, $n\in \{1,\cdots,4\}$, $j\in\{1,2\}$ and
 $(\rho,\eta)\in \{1,2,3\}^2\backslash \{(1,1)\}$,
\begin{align}
&\left|\left(\frac{\partial}{\partial
 \o}\right)^m\cN_{1,1}^{\s}(\bk+w\be_p,\o)\right|\le c
 M^{2+(1-m)l},\quad \left|\left(\frac{\partial}{\partial
 k_j}\right)^n\cN_{1,1}^{\s}(\bk+w\be_p,\o)\right|\le c,\notag\\ 
&\left|\left(\frac{\partial}{\partial
 \o}\right)^m\cN_{\rho,\eta}^{\s}(\bk+w\be_p,\o)\right|\le c
 M^{N_{\beta}+1-ml},\quad \left|\left(\frac{\partial}{\partial
 k_j}\right)^n\cN_{\rho,\eta}^{\s}(\bk+w\be_p,\o)\right|\le c
 M^{N_{\beta}+1}.\label{eq_numerator_derivative_pre}
\end{align}
These imply that for any $m\in\{0,\dots,4\}$, $n\in\{1,\cdots,4\}$,
 $j\in\{1,2\}$, $\rho,\eta\in\{1,2,3\}$,
\begin{equation}\label{eq_numerator_derivative}
\begin{split}
\left|\left(\frac{\partial}{\partial
 \o}\right)^m\cN_{\rho,\eta}^{\s}(\bk+w\be_p,\o)\right|\le c
 M^{2+(1-m)l},\quad \left|\left(\frac{\partial}{\partial
 k_j}\right)^n\cN_{\rho,\eta}^{\s}(\bk+w\be_p,\o)\right|\le c
 M^{N_{\beta}+1}.
\end{split}
\end{equation}
As in \eqref{eq_upper_lower_l}, $|h(1-e^{i\o/h+\eo/h})|\ge
 \frac{1}{2}M^l-E_{max}\ge c M^l$. Thus, 
\begin{equation}\label{eq_lower_another}
\left|\left(\frac{\partial}{\partial \o}\right)^m\frac{1}{h(1-e^{i\o/h+\eo/h})
}\right|\le cM^{-(m+1)l}\quad (\forall m\in\{0,\cdots,4\}).
\end{equation}
One can also check that  
\begin{equation}\label{eq_bound_cutoff}
\left|\left(\frac{\partial}{\partial \o}\right)^m\chi_l(\o)\right|\le
 cM^{m(1-l)}\quad (\forall m\in\{0,\cdots,4\}).
\end{equation}
Then by using
 \eqref{eq_denominator_derivative}, \eqref{eq_numerator_derivative},
 \eqref{eq_lower_another}, \eqref{eq_bound_cutoff} and Leibniz' formula,
 we have for any $\rho,\eta\in\{1,2,3\}$, $j\in\{1,2\}$ that 
\begin{align}
&1_{\chi_l(\o)\neq 0}\left|\cB_{\rho,\eta}^{\s}(\bk+w\be_p,\o)\right|\le
 cM^{2-l},\label{eq_covariance_bound_momentum}\\
&\left|\left(\frac{\partial}{\partial
 \o}\right)^4(\chi_l(\o)\cB_{\rho,\eta}^{\s}(\bk+w\be_p,\o))\right|\le c
 M^{18-5l},\label{eq_covariance_bound_momentum_matsubara}\\
&\left|\chi_l(\o)\left(\frac{\partial}{\partial
 k_j}\right)^4\cB_{\rho,\eta}^{\s}(\bk+w\be_p,\o)\right|\le c
 M^{1+N_{\beta}-2l}.\label{eq_covariance_bound_momentum_derivative}
\end{align}

It follows from \eqref{eq_covariance_bound_momentum} and Lemma
 \ref{lem_matsubara_count} that
\begin{equation}\label{eq_covariance_bound_position}
|\cC_l(X,Y)(w\be_p)|\le cM^4\quad (\forall X,Y\in I_{L,h}).
\end{equation}

For a function $f:\C\to \C$, let
 $d_{\beta}f(\o):=\frac{\beta}{2\pi}(f(\o+2\pi/\beta)-f(\o))$. By
 remarking the periodicity that $\chi_l(\o+2\pi h
 m)\cB^{\s}_{\rho,\eta}(\bk+\hat{s}(\s)w\be_p,\o+2\pi
 hm)=\chi_l(\o)\cB_{\rho,\eta}^{\s}(\bk+\hat{s}(\s)w\be_p,\o)$ $(\forall m\in \Z)$, we observe that
\begin{align*}
&\left(\frac{\beta}{2\pi}\left(e^{-i\frac{2\pi}{\beta}(x-y)}-1\right)\right)^4\cC_l(\rho\bx\s
 x,\eta\by\tau y)(w\be_p)\\
&=\frac{\delta_{\s,\tau}}{\beta
 L^2}\sum_{(\bk,\o)\in\G^*\times
 \cM_h}e^{-i\<\bx-\by,\bk\>}e^{i(x-y)\o}d_{\beta}^4\left(\chi_l(\o)\cB_{\rho,\eta}^{\s}(\bk+\hat{s}(\s)w\be_p,\o)\right)\\
&=\frac{\delta_{\s,\tau}}{\beta
 L^2}\sum_{(\bk,\o)\in\G^*\times
 \cM_h}e^{-i\<\bx-\by,\bk\>}e^{i(x-y)\o}\prod_{m=1}^4\left(\frac{\beta}{2\pi}\int_{0}^{2\pi/\beta}dv_m\right)\\
&\quad\cdot\left(\frac{\partial}{\partial\o}\right)^4\left(\chi_l\left(\o+\sum_{m=1}^4v_m
\right)\cB_{\rho,\eta}^{\s}\left(\bk+\hat{s}(\s)w\be_p,\o+\sum_{m=1}^4v_m\right)\right).
\end{align*}
Then, the bound \eqref{eq_covariance_bound_momentum_matsubara} and Lemma
 \ref{lem_matsubara_count} lead to
\begin{equation}\label{eq_bound_position_temperature}
\left|\frac{\beta}{2\pi}\left(e^{-i\frac{2\pi}{\beta}(x-y)}-1\right)\right|^4|\cC_l(X,Y)(w\be_p)|\le
 c M^{20-4l}.
\end{equation}

Similarly by using the periodicity that
 $\cB_{\rho,\eta}^{\s}(\bk+\hat{s}(\s)w\be_p+2\pi
 n\be_j,\o)=\cB_{\rho,\eta}^{\s}(\bk+\hat{s}(\s)w\be_p,\o)$ $(\forall n\in
 \Z)$ we obtain
\begin{align*}
&\left(\frac{L}{2\pi}\left(e^{i\frac{2\pi}{L}\<\bx-\by,\be_j\>}-1\right)\right)^4\cC_l(\rho\bx\s
 x,\eta\by\tau y)(w\be_p)\\
&=\frac{\delta_{\s,\tau}}{\beta
 L^2}\sum_{(\bk,\o)\in\G^*\times
 \cM_h}e^{-i\<\bx-\by,\bk\>}e^{i(x-y)\o}\\
&\quad\cdot\prod_{n=1}^4\left(\frac{L}{2\pi}\int_{0}^{2\pi/L}du_n\right)\chi_l(\o)\left(\frac{\partial}{\partial
 k_j}\right)^4\cB_{\rho,\eta}^{\s}\left(\bk+\hat{s}(\s)w\be_p+\sum_{n=1}^4u_n\be_j,\o\right),
\end{align*}
which, combined with \eqref{eq_covariance_bound_momentum_derivative} and
 Lemma \ref{lem_matsubara_count}, yields
\begin{equation}\label{eq_bound_position_position}
\left|\frac{L}{2\pi}\left(e^{i\frac{2\pi}{L}\<\bx-\by,\be_j\>}-1\right)\right|^4|\cC_l(X,Y)(w\be_p)|\le
 c M^{3+N_{\beta}-l}\quad (\forall j\in\{1,2\}).
\end{equation}

The inequalities \eqref{eq_covariance_bound_position},
 \eqref{eq_bound_position_temperature} and
 \eqref{eq_bound_position_position} result in
\begin{align}
&|\cC_l(\rho\bx\s x,\eta \by \tau
 y)(w\be_p)|\notag\\
&\le\frac{cM^4}{1+M^{l-N_{\beta}+1}\sum_{j=1}^2\left|\frac{L}{2\pi}(e^{i2\pi\<\bx-\by,\be_j\>/L}-1)\right|^4+M^{4l-16}\left|\frac{\beta}{2\pi}(e^{i2\pi(x-y)/\beta}-1)\right|^4}\label{eq_covariance_decay_l}
\end{align}
for all $(\rho,\bx,\s,x)$, $(\eta,\by,\tau,y)\in I_{L,h}$. The decay
 bound \eqref{eq_covariance_decay_l} implies the claim
 \eqref{item_covariance_L1_bound_l} and the claim
 \eqref{item_covariance_position_decay}.

\eqref{item_covariance_determinant_bound_l}: The proof of
 \eqref{item_covariance_determinant_bound_l} is parallel to that of Lemma
 \ref{lem_covariance_properties_0}
 \eqref{item_covariance_determinant_bound_0}. Recall
 \eqref{eq_def_one_vector} and \eqref{eq_def_the_other_vector}. By using
 Lemma \ref{lem_matsubara_count} and
 \eqref{eq_covariance_bound_momentum}
one can show that for any $\bu$, $\bv\in\C^m$ with $\|\bu\|_{\C^m}$,
 $\|\bv\|_{\C^m}\le 1$, $\|\bu\otimes f_{l,X}\|_{\cH}\le
 c (M^{l+2})^{1/2}M^{2-l}$, $\|\bv\otimes g_{l,X}\|_{\cH}\le
 c(M^{l+2})^{1/2}$. Thus, we can apply Gram's inequality to conclude
 that 
\begin{align*}
&\left|\det(\<\bu_j,\bv_k\>_{\C^m}\cC_l(X_j,Y_k)(w\be_p))_{1\le j,k\le
 n}\right|\\
&\le \prod_{j=1}^n\|\bu_j\otimes f_{l,X_j}\|_{\cH}\|\bv_j\otimes
 g_{l,X_j}\|_{\cH}\le (cM^{l+2}\cdot M^{2-l})^n\le (cM^4)^n.
\end{align*}

\eqref{item_covariance_tadpole_decay}: Take $\o\in\cM_h$ with
 $\chi_l(\o)\neq 0$. Since
 $$|\cD^{\s}(\bk+w\be_p,\o)-h^2(1-e^{-i\o/h+(\ec+\eo)/(2h)})^2|\le
 9E_{max}^2+10M^{l-1}$$ 
by \eqref{eq_X_bound}, the inequalities
 \eqref{eq_upper_lower_l} and \eqref{eq_result_condition_M} justify that
\begin{align*}
&\frac{1}{\cD^{\s}(\bk+w\be_p,\o)}=\frac{1}{h^2\left(1-e^{-i\o/h+(\ec+\eo)/(2h)}\right)^2}+\frac{1}{h^3\left(1-e^{-i\o/h+(\ec+\eo)/(2h)}\right)^3}\\
&\qquad\qquad\qquad\qquad
 \cdot\sum_{m=1}^{\infty}\frac{\left(h^2\left(1-e^{-i\o/h+(\ec+\eo)/(2h)}\right)^2-\cD^{\s}(\bk+w\be_p,\o)\right)^m}{\left(h\left(1-e^{-i\o/h+(\ec+\eo)/(2h)}\right)\right)^{2m-1}},\\
&\sum_{m=1}^{\infty}\left|\frac{\left(h^2\left(1-e^{-i\o/h+(\ec+\eo)/(2h)}\right)^2-\cD^{\s}(\bk+w\be_p,\o)\right)^m}{\left(h\left(1-e^{-i\o/h+(\ec+\eo)/(2h)}\right)\right)^{2m-1}}\right|\le
 1.
\end{align*}
This particularly implies that
\begin{align}
&\left|\frac{1}{\beta
 L^2}\sum_{(\bk,\o)\in\G^*\times\cM_h}\chi_l(\o)\cB_{1,1}^{\s}(\bk+w\be_p,\o)\right|\notag\\
&\le \left|\frac{1}{\beta
 L^2}\sum_{(\bk,\o)\in\G^*\times\cM_h}\frac{\chi_l(\o)h\left(1-e^{-i\o/h+(\ec+\eo)/(2h)}\right)}{\cD^{\s}(\bk+w\be_p,\o)}\right|\notag\\
&\quad+\frac{1}{\beta
 L^2}\sum_{(\bk,\o)\in\G^*\times\cM_h}\frac{\chi_l(\o)\left|\cN_{1,1}^{\s}(\bk+w\be_p,\o)-h\left(1-e^{-i\o/h+(\ec+\eo)/(2h)}\right)\right|}{\left|\cD^{\s}(\bk+w\be_p,\o)\right|}\notag\\
&\le
 \left|\frac{1}{\beta}\sum_{\o\in\cM_h}\frac{\chi_l(\o)}{h\left(1-e^{-i\o/h+(\ec+\eo)/(2h)}\right)}\right|+c M^{3-l},\label{eq_tadpole_1_1_pre}
\end{align}
where Lemma \ref{lem_matsubara_count}, \eqref{eq_upper_lower_l} and
 \eqref{eq_denominator_lower_bound_l} were also used. Note that
\begin{align*}
&\frac{1}{\beta}\sum_{\o\in\cM_h}\frac{\chi_l(\o)}{h\left(1-e^{-i\o/h+(\ec+\eo)/(2h)}\right)}\\
&=\frac{1}{2\beta
 h}\sum_{\o\in
 \cM_h}\chi_l(\o)+\frac{1}{2\beta}\sum_{\o\in\cM_h}\chi_l(\o)\frac{h\left(1-e^{(\ec+\eo)/h}\right)}{h^2\left(1-e^{-i\o/h+(\ec+\eo)/(2h)}\right)
\left(1-e^{i\o/h+(\ec+\eo)/(2h)}\right)}.
\end{align*}
Then again by using Lemma \ref{lem_matsubara_count}, \eqref{eq_h_M} and
 \eqref{eq_upper_lower_l} we have
\begin{equation}\label{eq_tadpole_typical}
\left|\frac{1}{\beta}\sum_{\o\in\cM_h}\frac{\chi_l(\o)}{h\left(1-e^{-i\o/h+(\ec+\eo)/(2h)}\right)}\right|\le
 c M^{l-N_h+2}+cM^{3-l}.
\end{equation}
Substituting \eqref{eq_tadpole_typical} into \eqref{eq_tadpole_1_1_pre}
 gives
\begin{equation}\label{eq_tadpole_1_1}
\left|\frac{1}{\beta L^2}\sum_{(\bk,\o)\in\G^*\times
 \cM_h}\chi_l(\o)\cB_{1,1}^{\s}(\bk+w\be_p,\o)\right|\le c
M^{l-N_h+2}+cM^{3-l}.
\end{equation}

It follows from \eqref{eq_numerator_derivative_pre} that 
\begin{equation}\label{eq_tadpole_others_pre}
\frac{1}{\beta L^2}\sum_{(\bk,\o)\in\G^*\times
 \cM_h}\frac{\chi_l(\o)|\cN_{\rho,\eta}^{\s}(\bk+w\be_p,\o)|}{|\cD^{\s}(\bk+w\be_p,\o)|}\le
 c M^{3+N_{\beta}-l}\quad(\forall (\rho,\eta)\in
 \{1,2,3\}^2\backslash\{(1,1)\}).
\end{equation}
The procedure to derive \eqref{eq_tadpole_typical} similarly shows that
\begin{equation}\label{eq_tadpole_typical_another}
\left|\frac{1}{\beta}\sum_{\o\in\cM_h}\frac{\chi_l(\o)}{h\left(1-e^{-i\o/h+\eo/h)}\right)}\right|\le
c M^{l-N_h+2}+cM^{3-l}.
\end{equation}
The bounds \eqref{eq_tadpole_others_pre} and
 \eqref{eq_tadpole_typical_another} yield that for any $(\rho,\eta)\in \{1,2,3\}^2\backslash\{(1,1)\}$, 
\begin{equation}\label{eq_tadpole_others}
\left|\frac{1}{\beta L^2}\sum_{(\bk,\o)\in\G^*\times
 \cM_h}\chi_l(\o)\cB_{\rho,\eta}^{\s}(\bk+w\be_p,\o)\right|\le c
M^{l-N_h+2}+cM^{3+N_{\beta}-l}.
\end{equation}
By \eqref{eq_tadpole_1_1} and \eqref{eq_tadpole_others} we can confirm
 the inequality claimed in \eqref{item_covariance_tadpole_decay}.
\end{proof}

\section{Multi-scale integration}\label{sec_multiscale_integration}
In this section we will find an $h$-,$L$-independent upper bound on 
\begin{equation}\label{eq_schwinger_functional}
\left(\frac{L}{2\pi}\left(e^{i\frac{2\pi}{L}\<\sum_{j=1}^2(\hat{s}(\hs_j)\hbx_j-\hat{s}(\htau_j)\hby_j),\be_p\>}-1\right)\right)^n\frac{\partial}{\partial\la}\log\left(\int
     e^{V_{(\la,\la)}(\psi)}d\mu_{\cC}(\psi)\right)\Big|_{\la=0}
\end{equation}
($n\in\N\cup\{0\}$, $p\in\{1,2\}$) by estimating the right-hand
side of Lemma \ref{lem_contour_integral_formulation}
\eqref{item_contour_integral_transform} by means of a multi-scale
integration over the Matsubara frequency $\cM_h$. By using the upper 
bound on \eqref{eq_schwinger_functional} we will complete the proof of
Theorem \ref{thm_exponential_decay} in the end of this section.

\subsection{Notations for the multi-scale expansion}\label{subsec_multi_notations}
Let us decide some notational rules to systematically handle Grassmann
polynomials during the multi-scale expansion, in addition to those
already introduced in Subsection \ref{subsec_grassmann_integral}.

For $\bX^m=(X_1^m,X_2^m,\cdots,X^m_m)\in I_{L,h}^m$ $(m\in\N)$ let
$(\opsi)_{\bX^m}:=\opsi_{X^m_1}\opsi_{X^m_2}\cdots\opsi_{X^m_m}$, 
$(\psi)_{\bX^m}:=\psi_{X^m_1}\psi_{X^m_2}\cdots\psi_{X^m_m}\in\bigwedge^m\cV$. 
Define the extended index set $\tilde{I}_{L,h}$ by
$\tilde{I}_{L,h}:=I_{L,h}\times\{1,-1\}$. The index set
$\tilde{I}_{L,h}$ is used in the following way. For $(X,a)\in
\tilde{I}_{L,h}$, $\psi_{(X,a)}:=\opsi_X$ if $a=1$,
$\psi_{(X,a)}:=\psi_X$ if $a=-1$. For
$\tilde{\bX}^m=(\tilde{X}_1^m,\tilde{X}_2^m,\cdots,\tilde{X}_m^m)$ $\in
\tilde{I}_{L,h}^m$ let
$(\psi)_{\tilde{\bX}^m}:=\psi_{\tilde{X}_1^m}\psi_{\tilde{X}_2^m}\cdots\psi_{\tilde{X}_m^m}\in \bigwedge^m\cV$. 

For $\bX^m\in I_{L,h}^m$, $\bX^n=(X_1^n,X_2^n,\cdots,X_n^n)\in
I_{L,h}^n$ with $m\le n$, we write $\bX^m\subset \bX^n$ if there exist
$j_1,j_2,\cdots,j_m\in\{1,2,\cdots,n\}$ such that $j_1<j_2<\cdots<j_m$
and $\bX^m=(X_{j_1}^n,X_{j_2}^n,\cdots,X_{j_m}^n)$. Moreover in this
case we define $\bX^n\backslash \bX^m\in I_{L,h}^{n-m}$ by
$\bX^n\backslash \bX^m:=(X_{k_1}^n,X_{k_2}^n,\cdots,X_{k_{n-m}}^n)$,
where $1\le k_1<k_2<\cdots<k_{n-m}\le n$ and $k_q\notin
\{j_1,j_2,\cdots,j_m\}$ $(\forall q\in\{1,2,\cdots,n-m\})$.

For $\tilde{\bX}^m\in \tilde{I}_{L,h}^m$, $\tilde{\bX}^n\in
\tilde{I}_{L,h}^n$ with $m\le n$ the notations $\tilde{\bX}^m\subset
\tilde{\bX}^n$ and $\tilde{\bX}^n\backslash\tilde{\bX}^m$ are defined in
the same way as above. For $\bX^m=(X_1^m,X_2^m,\cdots,X_m^m)\in I_{L,h}^m$ and $a\in
\{1,-1\}$ let $\tilde{\bX}(a)^m:=((X_1^m,a),(X_2^m,a),\cdots,(X_m^m,a))\in \tilde{I}_{
L,h}^m$.

For a function $f_m:I_{L,h}^m\times I_{L,h}^m\to \C$ $(
m\in\N)$ let 
\begin{align*}
&\|f_m\|_1 :=\left(\frac{1}{h}\right)^{2m}\sum_{\bX^m,\bY^m\in I_{L,h}^m}|f_m(\bX^m,\bY^m)|,\\
&\|f_m\|_{1,\infty}:=\max\Bigg\{\\
&\max_{j\in \{0,\cdots,m-1\},\atop X\in
 I_{L,h}}\Bigg\{\left(\frac{1}{h}\right)^{2m-1}\sum_{\bX^j\in I_{L,h}^j}\sum_{\bX^{m-1-j}\in I_{L,h}^{m-1-j}}\sum_{\bY^m\in I_{L,h}^m}|f_m((\bX^j,X,\bX^{m-1-j}),\bY^m)|\Bigg\},\\
&\max_{j\in \{0,\cdots,m-1\},\atop Y\in
 I_{L,h}}\Bigg\{\left(\frac{1}{h}\right)^{2m-1}\sum_{\bY^j\in I_{L,h}^j}\sum_{\bY^{m-1-j}\in I_{L,h}^{m-1-j}}\sum_{\bX^m\in I_{L,h}^m}|f_m(\bX^m,(\bY^j,Y,\bY^{m-1-j}))|\Bigg\}\Bigg\}.
\end{align*}
We see that $\|\cdot\|_{1}$, $\|\cdot\|_{1,\infty}$ are norms in the
complex vector space of all functions on $I_{L,h}^m\times
I_{L,h}^m$. For notational consistency we also set $\|f_0\|_{1}$,
$\|f_0\|_{1,\infty}:=|f_0|$ for any complex number $f_0$. 

Let us call a function $f_m:I_{L,h}^m\times
I_{L,h}^m\to \C$ bi-anti-symmetric if 
\begin{align*}
&f_m((X_{\nu(1)},X_{\nu(2)},\cdots,X_{\nu(m)}),(Y_{\xi(1)},Y_{\xi(2)},\cdots,Y_{\xi(m)}))\\
&=\sgn(\nu)\sgn(\xi)f_m((X_1,X_2,\cdots,X_m),(Y_1,Y_2,\cdots,Y_m))
\end{align*}
for any $(X_1,X_2,\cdots,X_m)$, $(Y_1,Y_2,\cdots,Y_m)\in I_{L,h}^m$ and
$\nu$, $\xi\in\S_m$. Recalling the numbering that $I_{L,h}=\{X_{o,j}\}_{j=1}^{N_{L,h}}$, let 
$$(I_{L,h})_o^m:=\{(X_{o,j_1},X_{o,j_2},\cdots,X_{o,j_m})\in I_{L,h}^m\ |\
j_1<j_2<\cdots<j_m\}\quad (\forall m\in \N).$$
It holds for any bi-anti-symmetric function
$f_m(\cdot,\cdot):I_{L,h}^m\times I_{L,h}^m\to \C$ that
\begin{equation}\label{eq_norm_ordered_base}
\|f_m\|_{1}=\left(\frac{1}{h}\right)^{2m}(m!)^2\sum_{\bX^m,\bY^m\in
 (I_{L,h})_o^m}|f_m(\bX^m,\bY^m)|.
\end{equation}
Bi-anti-symmetric functions appear as kernels of Grassmann polynomials.
Remark that $f(\psi)\in \bigoplus_{n=0}^{N_{L,h}}\cP_n\bigwedge \cV$
can be uniquely written as
$$f(\psi)=\sum_{m=0}^{N_{L,h}}\left(\frac{1}{h}\right)^{2m}\sum_{\bX^m,\bY^m\in
I_{L,h}^m}f_m(\bX^m,\bY^m)(\opsi)_{\bX^m}(\psi)_{\bY^m}$$
with bi-anti-symmetric kernels $f_m(\cdot,\cdot):I_{L,h}^m\times
I_{L,h}^m\to\C$ $(m\in\{0,\cdots,N_{L,h}\})$. 
Moreover, if 
\begin{align*}
&\left(\frac{1}{h}\right)^{2m}\sum_{\bX^m,\bY^m\in
I_{L,h}^m}f_m(\bX^m,\bY^m)(\opsi)_{\bX^m}(\psi)_{\bY^m}\\
&=\left(\frac{1}{h}\right)^{2m}\sum_{\bX^m,\bY^m\in
I_{L,h}^m}g_m(\bX^m,\bY^m)(\opsi)_{\bX^m}(\psi)_{\bY^m}
\end{align*}
and $f_m(\cdot,\cdot)$ is bi-anti-symmetric, then the inequalities 
\begin{align}\label{eq_anti_symmetric_estimate}
\|f_m\|_1\le \|g_m\|_1\text{ and }\|f_m\|_{1,\infty}\le \|g_m\|_{1,\infty}
\end{align}
hold.

Assume that $f_{l,m}(\cdot,\cdot):I_{L,h}^m\times I_{L,h}^m\to\C$ is
bi-anti-symmetric $(\forall l,m\in\N\cup\{0\})$ and
$\lim_{l\to\infty}f_{l,m}(\bX^m,\bY^m)$ exists in $\C$ $(\forall
m\in\N\cup\{0\}, \bX^m,\bY^m\in I_{L,h}^m)$. Set
$$
f_l(\psi):=\sum_{m=0}^{N_{L,h}}\left(\frac{1}{h}\right)^{2m}\sum_{ \bX^m,\bY^m\in I_{L,h}^m}f_{l,m}(\bX^m,\bY^m)(\opsi)_{\bX^m}(\psi)_{\bY^m}.
$$
In this case we define $\lim_{l\to\infty}f_l(\psi) \in
\bigoplus_{n=0}^{N_{L,h}}\cP_n\bigwedge \cV$ by 
$$
\lim_{l\to\infty}f_l(\psi):=\sum_{m=0}^{N_{L,h}}\left(\frac{1}{h}\right)^{2m}\sum_{
\bX^m,\bY^m\in I_{L,h}^m}\lim_{l\to \infty}f_{l,m}(\bX^m,\bY^m)(\opsi)_{\bX^m}(\psi)_{\bY^m}.
$$

We call $f_z(\psi)\in
\bigoplus_{n=0}^{N_{L,h}}\cP_n\bigwedge \cV$ analytic with respect to
$z$ in a domain $\cO(\subset\C)$ if so is every bi-anti-symmetric kernel
of $f_z(\psi)$. Under this condition we define $(d/dz) f_z(\psi)\in
\bigoplus_{n=0}^{N_{L,h}}\cP_n\bigwedge \cV$ by replacing each
bi-anti-symmetric kernel of $f_z(\psi)$ by its derivative. Moreover,
the following Taylor expansion holds true. For
any $\hat{z}\in\cO$,
$$
f_z(\psi)=\sum_{n=0}^{\infty}\frac{1}{n!}\left(\frac{d}{dz}\right)^nf_z(\psi)\Big|_{z=\hat{z}}(z-\hat{z})^n
$$
in a neighbor of $\hat{z}$. 

\subsection{A multi-scale integration over the Matsubara frequency}
\label{subsec_sketch_multiscale}
Here let us describe the multi-scale integration process. From now until
the proof of Theorem \ref{thm_exponential_decay} in Subsection
\ref{subsec_final_integration} we fix arbitrary $R\in
(\cF_{t,\beta}(8/\pi^2),\infty)$, $\eps\in (9/\pi^2,1)$, $p\in\{1,2\}$,
 $L\in\N$ satisfying
$\max_{j,k\in\{1,2\}}\|\hbx_j-\hby_k\|_{\R^2}\le L/2$ and sufficiently
large $h\in 2\N/\beta$. There exists $U_{small}>0$ such that all the statements of Lemma
\ref{lem_contour_integral_formulation}, Lemma
\ref{lem_covariance_properties_0} and Lemma
\ref{lem_covariance_properties_l} hold true for these fixed parameters.
Set
\begin{align*}
&D_{small}:=\{(z_1,z_2,z_3,z_4)\in\C^4\ |\ |z_j|\le U_{small}\quad
 (\forall j\in \{1,2,3,4\})\},\\
&D_R:=\{z\in\C\ |\ |\Re z|\le R,\ |\Im z|\le
 \cF_{t,\beta}(9/\pi^2)\}.
\end{align*}
By taking $U_{small}$ smaller if necessary we may assume that \\
$\Re \int
e^{V_{(\la_1,\la_{-1})}(\psi)}d\mu_{\sum_{j=l}^{N_h}\cC_j(w\be_p)}(\psi)>0$
for all $(\la_1,\la_{-1},U_c,U_o)\in D_{small}$, $w\in D_R$ and
$l\in\{N_{\beta}, \cdots,N_h\}$. This property allows us to
define $G^{\ge l}(\psi)\in \bigwedge \cV$
$(l\in\{N_{\beta},\cdots,N_{h}+1\})$ by
\begin{align*}
&G^{\ge l}(\psi):=\log\left(\int
e^{V_{(\la_1,\la_{-1})}(\psi+\psi^0)}d\mu_{\sum_{j=l}^{N_h}\cC_j(w\be_p)}(\psi^0)\right)\quad
 (l\in\{N_{\beta},\cdots,N_{h}\}),\\
&G^{\ge N_h+1}(\psi):=V_{(\la_1,\la_{-1})}(\psi)
\end{align*}
for any $(\la_1,\la_{-1},U_c,U_o)\in D_{small}$, $w\in D_R$. The
definition of logarithm of Grassmann polynomials is provided in
Definition \ref{def_log_grassmann} in Appendix \ref{app_log_grassmann}. 

By noting the equality that
$$
G^{\ge l}(\psi)=\log\left(\int\left(\int
e^{V_{(\la_1,\la_{-1})}(\psi+\psi^1+\psi^0)}d\mu_{\sum_{j=l+1}^{N_h}\cC_j(w\be_p)}(\psi^0)\right)d\mu_{\cC_l(w\be_p)}(\psi^1)\right) 
$$
(see, e.g, \cite[\mbox{Proposition I.21}]{FKT} to justify this equality),
Lemma \ref{lem_exponential_log_equality} proved in Appendix
\ref{app_log_grassmann} ensures that for any $(\la_1,\la_{-1},U_c,U_o)\in D_{small}$, $w\in D_R$, $l\in
\{N_{\beta},\cdots,N_h\}$,
\begin{equation}\label{eq_recursive_relation_grassmann}
G^{\ge l}(\psi)=\log \left(\int
e^{G^{\ge l+1}(\psi+\psi^1)}d\mu_{\cC_l(w\be_p)}(\psi^1)\right).
\end{equation}

Since 
$$
\lim_{U_{small}\searrow 0}\sup_{(\la_1,\la_{-1},U_c,U_o)\in
D_{small},\atop w\in D_R, z\in \C\text{ with }|z|\le 2}
\left|\int 
e^{z G^{\ge l+1}(\psi^0)}d\mu_{\cC_l(w\be_p)}(\psi^0)-1\right|=0,
$$
one can see from Definition \ref{def_log_grassmann} that
$z\mapsto \log\left(\int 
e^{z G^{\ge l+1}(\psi+\psi^0)}d\mu_{\cC_l(w\be_p)}(\psi^0)\right)$
is analytic in $\{z\in\C\ |\ |z|<2\}$ for any
$(\la_1,\la_{-1},U_c,U_o)\in D_{small}$, $w\in D_R$ if $U_{small}$ is
small enough. Thus, the Taylor expansion around $z=0$ reads
\begin{equation}\label{eq_taylor_grassmann}
G^{\ge
 l}(\psi)=\sum_{n=1}^{\infty}\frac{1}{n!}\left(\frac{d}{dz}\right)^n\log
\left(\int 
e^{z G^{\ge
l+1}(\psi+\psi^0)}d\mu_{\cC_l(w\be_p)}(\psi^0)\right)\Big|_{z=0}
\end{equation}
for any $(\la_1,\la_{-1},U_c,U_o)\in D_{small}$, $w\in D_R$, $l\in
\{N_{\beta},\cdots,N_h\}$.

Each term of \eqref{eq_taylor_grassmann} can be characterized
further. It follows from Definition \ref{def_log_grassmann} and
\eqref{eq_def_exponential_grassmann} that
$$
\frac{d}{dz}\log\left(\int 
e^{z G^{\ge
l+1}(\psi+\psi^0)}d\mu_{\cC_l(w\be_p)}(\psi^0)\right)\Big|_{z=0}=\int
G^{\ge l+1}(\psi+\psi^0)d\mu_{\cC_l(w\be_p)}(\psi^0).
$$
The higher order derivatives can be expanded by means of the tree
formula. We especially apply the version clearly proved in 
\cite[\mbox{Theorem 3}]{SW}. For $n\in\N_{\ge 2}$,
\begin{equation}\label{eq_expanding_trees}
 \frac{1}{n!}\left(\frac{d}{dz}\right)^n\log
\left(\int 
e^{z G^{\ge
l+1}(\psi+\psi^0)}d\mu_{\cC_l(w\be_p)}(\psi^0)\right)\Big|_{z=0}=T_{ree}(n,\cC_l(w\be_p),G^{\ge
l+1}),
\end{equation}
where for $n\in \N_{\ge 2}$, a matrix $Q=(Q(X,Y))_{X,Y\in I_{L,h}}$ and
$f(\psi)\in\bigwedge \cV$,
\begin{align}
T_{ree}(n,Q,f):=&\frac{1}{n!}\sum_{T\in\T_n}\prod_{\{q,r\}\in
 T}(\D_{q,r}(Q)+\D_{r,q}(Q))\int_{[0,1]^{n-1}}d\bs
\sum_{\xi\in \S_n(T)}\varphi(T,\xi,\bs)\notag\\
&\cdot e^{\sum_{u,v=1}^nM_{at}(T,\xi,\bs)_{u,v}\D_{u,v}(Q)}\prod_{j=1}^nf(\psi^j+\psi)\Big|_{\psi^j=\b0\atop
 \forall j\in \{1,\cdots,n\}}.\label{eq_tree_expansion}
\end{align}
The new notations in \eqref{eq_tree_expansion} are defined as follows. $\T_n$ is the
set of all trees over the vertices $\{1,2,\cdots,n\}$, for 
$q,r\in\{1,\cdots,n\}$,
$$
\D_{q,r}(Q):=-\sum_{X,Y\in I_{L,h}}Q(X,Y)\frac{\partial}{\partial \opsi_X^q}\frac{\partial}{\partial \psi_Y^r}:\bigwedge\left(\bigoplus_{j=1}^n\cV_j\right)\to\bigwedge\left(\bigoplus_{j=1}^n\cV_j\right),
$$
$\S_n(T)$ is a $T$-dependent subset of $\S_n$,
the function $\varphi(T,\xi,\cdot):[0,1]^{n-1}\to\R_{\ge 0}$ depends on
$T\in\T_n$, $\xi\in \S_n(T)$ and satisfies
\begin{equation}\label{eq_nice_property_phi}
\int_{[0,1]^{n-1}}d\bs\sum_{\xi\in
 \S_n(T)}\varphi(T,\xi,\bs)=1\quad(\forall T\in \T_n),
\end{equation}
and $(M_{at}(T,\xi,\bs)_{u,v})_{1\le u,v\le n}$ is a
$(T,\xi,\bs)$-dependent real symmetric
non-negative matrix satisfying $M_{at}(T,\xi,\bs)_{u,u}=1$ $(\forall u\in
\{1,\cdots,n\})$. 

Our strategy is to introduce a counterpart of $G^{\ge l}$ via the tree
formula inductively without assuming that $(\la_1,\la_{-1},U_c,U_o)\in
D_{small}$ and prove that the counterpart is well-defined for larger
$(\la_1,\la_{-1},U_c,U_o)$. Consequently by the identity theorem
for analytic functions we will be able to find an upper bound on
\eqref{eq_schwinger_functional} with the enlarged coupling constants
$U_c$, $U_o$ in the end of this section.

\subsection{Estimation by induction}
Let us start the concrete analysis. In the following we fix arbitrary $w\in D_R$ unless
otherwise stated. Define $J^{\ge l}(\psi)$, $F^{\ge l}(\psi)$, $T^{\ge
l}(\psi)\in\bigoplus_{n=0}^{N_{L,h}}\cP_n \bigwedge\cV$ $(l\in
\{N_{\beta},\cdots,N_h+1\})$ inductively as follows.
\begin{equation*}
F^{\ge N_h+1}(\psi):=V_{(\la_1,\la_{-1})}(\psi),\
 T^{\ge N_h+1}(\psi):=0,\ J^{\ge N_h+1}(\psi):=F^{\ge N_h+1}(\psi)+
 T^{\ge N_h+1}(\psi).
\end{equation*}
For $l\in \{N_{\beta},\cdots,N_h\}$,
\begin{align*}
&F^{\ge l}(\psi):=\int J^{\ge
 l+1}(\psi+\psi^0)d\mu_{\cC_l(w\be_p)}(\psi^0),\\
&T^{\ge l}_n(\psi):=T_{ree}(n,\cC_l(w\be_p),J^{\ge l+1})\ (\forall
 n\in \N_{\ge 2}),\ T^{\ge l}(\psi):=\sum_{n=2}^{\infty}T_n^{\ge
 l}(\psi),\\
&J^{\ge l}(\psi):=F^{\ge l}(\psi)+T^{\ge l}(\psi).
\end{align*}
We will later make sure that $\sum_{n=2}^{\infty}T_n^{\ge l}(\psi)$ is
well-defined in $\bigoplus_{n=0}^{N_{L,h}}\cP_n \bigwedge\cV$ if the
input $J^{\ge l+1}$ satisfies a certain smallness condition. 
For $m\in \N\cup \{0\}$, $l\in \{N_{\beta},\cdots,N_h+1\}$ let 
\begin{align*}
&F_m^{\ge l}(\psi):=\cP_m F^{\ge l}(\psi)=\left(\frac{1}{h}\right)^{2m}\sum_{\bX^m,\bY^m\in
 I_{L,h}^m}F_m^{\ge l}(\bX^m,\bY^m)(\opsi)_{\bX^m}(\psi)_{\bY^m},\\
&T_{n,m}^{\ge l}(\psi):=\cP_mT_{n}^{\ge l}(\psi)=\left(\frac{1}{h}\right)^{2m}\sum_{\bX^m,\bY^m\in
 I_{L,h}^m}T_{n,m}^{\ge
 l}(\bX^m,\bY^m)(\opsi)_{\bX^m}(\psi)_{\bY^m}\quad (\forall n\in \N_{\ge
 2}),\\
&T_m^{\ge l}(\psi):=\cP_m T^{\ge l}(\psi)=\left(\frac{1}{h}\right)^{2m}\sum_{\bX^m,\bY^m\in
 I_{L,h}^m}T_m^{\ge l}(\bX^m,\bY^m)(\opsi)_{\bX^m}(\psi)_{\bY^m},
\end{align*}
where $F_m^{\ge l}(\cdot,\cdot)$, $T_{n,m}^{\ge l}(\cdot,\cdot)$,
$T_m^{\ge l}(\cdot,\cdot):I_{L,h}^m\times I_{L,h}^m\to \C$ are
bi-anti-symmetric. 

It will be convenient to set $J^{\ge l}_{free,m}(\psi):=F^{\ge l}_m(\psi)$,
$J^{\ge l}_{tree,m}(\psi):=T^{\ge l}_m(\psi)$ and write 
$$
J_{b,m}^{\ge l}(\psi)=\left(\frac{1}{h}\right)^{2m}\sum_{\bX^m,\bY^m\in I_{L,h}^m}J_{b,m}^{\ge
l}(\bX^m,\bY^m)(\opsi)_{\bX^m}(\psi)_{\bY^m}
$$   
with the bi-anti-symmetric kernel $J_{b,m}^{\ge l}(\cdot,\cdot)$ for 
$b\in\{free,tree\}$. Moreover, set
\begin{equation}\label{eq_final_coefficient}
c_0:=\frac{\max\{c,1\}}{(1-\eps)\eps^2}M^9\max\{1,\beta\}^8,
\end{equation}
where the constant $c$ is taken to be the largest one among those
appearing in the upper bounds of Lemma \ref{lem_covariance_properties_0}
and Lemma \ref{lem_covariance_properties_l}. We observe that $c_0\ge 1$ and
\begin{align}
&\|\cC_l(w\be_p)\|_{1,\infty}\le c_0M^{-l}\quad (\forall l\in
 \{N_{\beta},\cdots,N_h\}),\label{eq_simple_decay_bound}\\
&|\det(\<\bu_j,\bv_k\>_{\C^m}\cC_l(X_j,Y_k)(w\be_p))_{1\le j,k\le
 n}|\le c_0^n\label{eq_simple_determinant_bound}\\
&\qquad(\forall l\in \{N_{\beta},\cdots,N_h\},
 m,n\in\N,\bu_j,\bv_j\in\C^m\text{ with
 }\|\bu_j\|_{\C^m},\|\bv_j\|_{\C^m}\le 1,\notag\\
&\qquad X_j,Y_j\in I_{L,h}
 (j=1,\cdots,n)),\notag\\
&|\cC_l(\rho\hbx\s 0,\eta\hby\tau 0)(w\be_p)|\le
 c_0(M^{l-N_h}+M^{N_{\beta}-l})\label{eq_simple_tadpole_bound}\\
&\qquad(\forall l\in \{N_{\beta}+1,\cdots,N_h\},\hbx,\hby\in\Z^2\text{ with
 }0\le \|\hbx-\hby\|_{\R^2}\le L/2,\rho,\eta\in
 \{1,2,3\},\notag\\
&\qquad\s,\tau\in\spin).\notag
\end{align}

Let us introduce
a parameter $\alpha\in \R_{>0}$. As the main objective in this
subsection we will prove the following.
\begin{proposition}\label{prop_inductive_bound}
Assume that
\begin{equation}\label{eq_smallness_assumption}
\begin{split}
&M\ge \max\{78 E_{max}^2,2^8\},\quad\alpha\ge 2^{10}M^2,\\ 
&|\la_1|,|\la_{-1}|,|U_c|,|U_o|<2^{-4}\alpha^{-2}c_0^{-2}M^{N_{\beta}}.
\end{split}
\end{equation}
Then for any $l\in \{N_{\beta}+1,\cdots,N_h+1\}$ the following
 inequalities hold.
\begin{equation}\label{eq_inductive_bound_each_scale}
\begin{split}
&M^{-N_{\beta}}\alpha c_0\sum_{b\in\{free,tree\}}\|J_{b,1}^{\ge
 l}\|_{1,\infty}<1,\\
&M^{-N_{\beta}}\sum_{m=1}^{N_{L,h}}\alpha^mc_0^mM^{(l-N_{\beta})(m-2)}\sum_{b\in\{free,tree\}}\|J_{b,m}^{\ge
 l}\|_{1,\infty}<1.
\end{split}
\end{equation}
\end{proposition}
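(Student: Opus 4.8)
The plan is to prove \eqref{eq_inductive_bound_each_scale} by a descending induction on $l$, from the initial scale $l=N_h+1$ down to $l=N_\beta+1$, estimating at each step the two components of the decomposition $J^{\ge l}=F^{\ge l}+T^{\ge l}$ separately and then recombining them against the scale weights $\alpha^mc_0^mM^{(l-N_\beta)(m-2)}$ appearing in \eqref{eq_inductive_bound_each_scale}. For the base case $l=N_h+1$ one has $J^{\ge N_h+1}=V_{(\la_1,\la_{-1})}$, so $T^{\ge N_h+1}=0$ and $J_{free,m}^{\ge N_h+1}$ is supported on $m=2$ only. A direct computation from \eqref{eq_original_bi_anti_symmetric}, using that $U_{(\la_1,\la_{-1})}$ is a fixed finite combination of indicators whose spatial support is a single point, bounds $\|J_{free,2}^{\ge N_h+1}\|_{1,\infty}$ by a fixed multiple of $\max\{|U_c|,|U_o|,|\la_1|,|\la_{-1}|\}$. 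Since $J_{free,1}^{\ge N_h+1}=0$ the first inequality of \eqref{eq_inductive_bound_each_scale} holds trivially, while the second reduces to checking that, up to a fixed constant, $M^{-N_\beta}\alpha^2c_0^2\max\{|U_c|,|U_o|,|\la_1|,|\la_{-1}|\}<1$; this follows from the smallness condition in \eqref{eq_smallness_assumption}, whose factor $2^{-4}$ absorbs the constant.

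For the inductive step I assume \eqref{eq_inductive_bound_each_scale} at scale $l+1$ and estimate the free part $F^{\ge l}(\psi)=\int J^{\ge l+1}(\psi+\psi^0)d\mu_{\cC_l(w\be_p)}(\psi^0)$ by Wick's rule: the kernel $F_m^{\ge l}$ collects the contractions of $j$ pairs of a degree-$2(m+j)$ kernel of $J^{\ge l+1}$, summed over the choice of contracted legs. A contraction whose two internal legs range freely is estimated, after summing the internal variable, by $\|\cC_l(w\be_p)\|_{1,\infty}\le c_0M^{-l}$ from \eqref{eq_simple_decay_bound}, whereas the equal-point self-contractions that generate the quadratic ($m=1$) channel are estimated by the sharper bound \eqref{eq_simple_tadpole_bound}; the binomial combinatorics of selecting the contracted legs is absorbed into powers of $\alpha$.

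The tree part $T^{\ge l}=\sum_{n\ge2}T_{ree}(n,\cC_l(w\be_p),J^{\ge l+1})$ is estimated directly from \eqref{eq_tree_expansion}. For a fixed $T\in\T_n$ the $n-1$ tree lines $\D_{q,r}+\D_{r,q}$ contribute $\|\cC_l(w\be_p)\|_{1,\infty}^{\,n-1}\le(c_0M^{-l})^{n-1}$ after the attendant internal sums, the remaining ("loop") contractions carried by $e^{\sum_{u,v}M_{at}\D_{u,v}}$ are controlled together with the leftover field integrations by the Gram-type determinant bound \eqref{eq_simple_determinant_bound}, which costs one factor $c_0$ per contracted line, and the interpolation weights are removed by $\int_{[0,1]^{n-1}}d\bs\sum_\xi\varphi(T,\xi,\bs)=1$ from \eqref{eq_nice_property_phi}. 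Collecting, $\|T_{n,m}^{\ge l}\|_{1,\infty}$ is dominated by the standard tree-counting factor times $(c_0M^{-l})^{n-1}$ times a product over the $n$ input vertices of the $\|\cdot\|_{1,\infty}$-norms of $J^{\ge l+1}$; the induction hypothesis at scale $l+1$ then renders the series over $n\ge2$ a convergent geometric-type sum.

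It remains to multiply the two bounds by the level-$l$ weights and sum over $m$. The decisive algebra is the change of weight from scale $l+1$ to scale $l$: for $m\ge3$ the weight improves by $M^{2-m}\le M^{-1}$, for $m=2$ it is unchanged, and for $m=1$ it worsens by $M$, which is exactly why the quadratic part is tracked by the separate first inequality of \eqref{eq_inductive_bound_each_scale} and why \eqref{eq_simple_tadpole_bound} is indispensable: the equal-time covariance is summable over scales, $\sum_{l'}(M^{l'-N_h}+M^{N_\beta-l'})=O(1)$, so the relevant channel does not accumulate. The conditions $M\ge\max\{78E_{max}^2,2^8\}$ and $\alpha\ge2^{10}M^2$ in \eqref{eq_smallness_assumption} supply precisely the slack needed to dominate the tree counting and the field combinatorics. \emph{The principal obstacle is to make this scale-to-scale flow strictly contractive:} one must verify that the single decay factor $M^{-l}$ from each covariance line simultaneously beats the proliferation of trees (tamed by $1/n!$ and the determinant bound), the $M$-fold worsening of the $m=1$ weight (tamed by the tadpole bound), and the marginal $m=2$ transfer, so that the full weighted sum stays strictly below $1$.
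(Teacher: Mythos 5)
Your overall skeleton (descending induction on $l$, the base case at $l=N_h+1$, the Gram/determinant bound for loop contractions, $\|\cC_l(w\be_p)\|_{1,\infty}\le c_0M^{-l}$ for tree lines, and the weight algebra in $m$) matches the paper, but there is a genuine gap at precisely the point you flag as decisive: the quadratic channel. You propose to bound the quartic-to-quadratic contraction by the tadpole bound \eqref{eq_simple_tadpole_bound}, calling these ``equal-point self-contractions''. That bound controls $\cC_l$ only at \emph{equal times} (and bounded spatial separation); it therefore applies only when the quartic kernel being contracted is supported on a single time slice. This is true for the original interaction $J_2^{\ge N_h+1}$, i.e. for $V_{(\la_1,\la_{-1})}$, but false for the flowed kernel $J_2^{\ge l+1}$, whose arguments are spread over all of $I_{L,h}$ after even one integration step. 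For a general quartic kernel the only available estimate is one factor of $c_0$ per contraction, i.e. $\|\cP_1\int J_2^{\ge l+1}(\psi+\psi^0)d\mu_{\cC_l(w\be_p)}(\psi^0)\|_{1,\infty}\le 4c_0\|J_2^{\ge l+1}\|_{1,\infty}$, with \emph{no} decay in $l$: the bound $\|\cC_l\|_{1,\infty}\le c_0M^{-l}$ cannot be used here, because both arguments of the contracted covariance are tied to arguments of $J_2^{\ge l+1}$, and $\|J_2^{\ge l+1}\|_{1,\infty}$ supplies summability in only three of its four variables. Since your induction hypothesis gives only $\|J_2^{\ge l+1}\|_{1,\infty}<\alpha^{-2}c_0^{-2}M^{N_{\beta}}$, each scale then adds a contribution of order $\alpha^{-2}c_0^{-1}M^{N_{\beta}}$ to the quadratic kernel, and over the $N_h-N_{\beta}\to\infty$ scales (as $h\to\infty$) this accumulation destroys any $h$-independent bound; the first inequality of \eqref{eq_inductive_bound_each_scale} fails once $N_h-l\gtrsim\alpha$.

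The paper closes this hole by a piece of bookkeeping absent from your proposal: it tracks the \emph{deviation} $F_2^{\ge l}-J_2^{\ge N_h+1}$ of the quartic part from the original interaction (flow equations \eqref{eq_flow_equation_1}, \eqref{eq_flow_equation_2}), proves it is geometrically small in scale, $\|F_2^{\ge l}-J_2^{\ge N_h+1}\|_{1,\infty}=O(M^{-(l+1-N_{\beta})})$ as in \eqref{eq_flow_bound_2}, and splits the quadratic-generating contraction accordingly: the $J_2^{\ge N_h+1}$ part is a genuine tadpole, estimated by \eqref{eq_simple_tadpole_bound} and summable over scales, while the deviation is contracted with the crude $c_0$ bound but is itself scale-summable. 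Relatedly, your induction hypothesis (the proposition's inequalities at scale $l+1$ only) is too weak to close even the marginal $m=2$ channel: since $J_2^{\ge l}=J_2^{\ge l+1}+O(M^{-(l+1-N_{\beta})})$, one must know how much slack below the threshold remains at scale $l+1$, which the paper obtains by assuming the hypothesis at \emph{all} scales above $l$ and telescoping back to the initial datum $\|F_2^{\ge N_h+1}\|_{1,\infty}\le\frac{3}{2}U_{max}$ of \eqref{eq_initial_free_bound}. Without introducing the deviation kernel (or an equivalently strengthened induction hypothesis carrying the accumulated bounds \eqref{eq_flow_bound_2} and \eqref{eq_bound_free_degree_1}), the inductive step as you set it up cannot be completed.
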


The core part of the proof of Proposition \ref{prop_inductive_bound} is
the estimation of $\|T_{n,m}^{\ge l}\|_{1,\infty}$, which needs the
next lemma.
\begin{lemma}\label{lem_determinant_bound_application}
For any $\tilde{\bX}^{m_j}\in \tilde{I}_{L,h}^{m_j}$
 $(j=1,\cdots,n)$, $T\in \T_n$, $\xi\in \S_n(T)$, $\bs\in [0,1]^{n-1}$
and $l\in \{N_{\beta},\cdots,N_h\}$,
\begin{equation*}
\left|e^{\sum_{q,r=1}^nM_{at}(T,\xi,\bs)_{q,r}\D_{q,r}(\cC_l(w\be_p))}\prod_{j=1}^n(\psi^j)_{\tilde{\bX}^{m_j}}\Big|_{\psi^j=\b0\atop
 \forall j\in\{1,\cdots,n\}}\right|\le
c_0^{\frac{1}{2}\sum_{j=1}^nm_j}.
\end{equation*}
\end{lemma}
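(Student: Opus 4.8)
The plan is to recognize the left-hand side as a single determinant and then to control that determinant by the Gram-type bound \eqref{eq_simple_determinant_bound}. First I would note that each $\D_{q,r}(\cC_l(w\be_p))$ is a product of two Grassmann derivatives, hence a Grassmann-even operator, and that these operators mutually commute; thus the exponential $e^{\sum_{q,r}M_{at}(T,\xi,\bs)_{q,r}\D_{q,r}(\cC_l(w\be_p))}$ is well defined and expands into a sum of contraction operators. Evaluating at $\psi^j=\b0$ keeps only the fully contracted terms, so the result vanishes unless the monomial $\prod_{j=1}^n(\psi^j)_{\tilde{\bX}^{m_j}}$ contains equally many $\opsi$-type and $\psi$-type factors; in the unbalanced case (in particular whenever $\sum_j m_j$ is odd) the claimed bound holds trivially. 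In the balanced case, set $P:=\frac12\sum_{j=1}^n m_j$, let $(q_a,X_a)$ $(a=1,\dots,P)$ enumerate the $\opsi$-type factors and $(r_b,Y_b)$ $(b=1,\dots,P)$ the $\psi$-type factors of the monomial. The cleanest way to perform this step is to observe that $e^{\sum_{q,r}M_{at}(T,\xi,\bs)_{q,r}\D_{q,r}(\cC_l(w\be_p))}\,\cdot\,\big|_{\psi=\b0}$ is exactly Grassmann Gaussian integration with respect to the covariance $\hat{\cC}((q,X),(r,Y)):=M_{at}(T,\xi,\bs)_{q,r}\cC_l(X,Y)(w\be_p)$ on the enlarged index set $\{1,\dots,n\}\times I_{L,h}$ (cf. the definition in Subsection \ref{subsec_grassmann_integral} and \cite[\mbox{Proposition I.21}]{FKT}). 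The defining formula of the Grassmann Gaussian integral then identifies the left-hand side, up to an overall sign, with
\begin{equation*}
\det\left(M_{at}(T,\xi,\bs)_{q_a,r_b}\cC_l(X_a,Y_b)(w\be_p)\right)_{1\le a,b\le P}.
\end{equation*}

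Next I would exploit the structure of $M_{at}(T,\xi,\bs)$. Since it is real symmetric, positive semi-definite, and has unit diagonal, it is a Gram matrix: there exist vectors $\bu_1,\dots,\bu_n\in\C^n$ with $\|\bu_q\|_{\C^n}=1$ and $M_{at}(T,\xi,\bs)_{q,r}=\<\bu_q,\bu_r\>_{\C^n}$. Putting $\bu_a:=\bu_{q_a}$ and $\bv_b:=\bu_{r_b}$, which are unit vectors, rewrites the determinant above as
\begin{equation*}
\det\left(\<\bu_a,\bv_b\>_{\C^n}\cC_l(X_a,Y_b)(w\be_p)\right)_{1\le a,b\le P},
\end{equation*}
which is precisely the quantity estimated in \eqref{eq_simple_determinant_bound}. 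Applying that bound to this $P\times P$ determinant yields the upper bound $c_0^{P}=c_0^{\frac12\sum_{j=1}^n m_j}$, which is the assertion.

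The routine ingredients are the sign bookkeeping in the Wick (fermionic) expansion and the verification that the balance of $\opsi$- and $\psi$-factors is exactly the nonvanishing criterion for the contracted monomial; because the statement only concerns the absolute value, all signs may be discarded and cause no difficulty. The one point genuinely requiring care is the first step—correctly identifying the commuting-derivative exponential, evaluated at zero, with the determinant of the enlarged covariance $\hat{\cC}$—but this is just the familiar determinant representation of the Grassmann Gaussian integral already encoded in Subsection \ref{subsec_grassmann_integral}. Once that identification is in place, the heart of the argument is simply feeding the Gram representation of $M_{at}(T,\xi,\bs)$ into the volume-independent determinant bound \eqref{eq_simple_determinant_bound}, which carries all the analytic content.
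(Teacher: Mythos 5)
Your proposal is correct and is essentially the argument the paper invokes: the paper's proof simply cites \eqref{eq_simple_determinant_bound}, the properties of $M_{at}(T,\xi,\bs)$, and the argument of \cite[\mbox{Lemma 4.5}]{K1}, which is precisely your chain of steps (identify the contracted exponential with a Grassmann Gaussian expectation, hence a determinant of the interpolated covariance $M_{at}(T,\xi,\bs)_{q,r}\cC_l(X,Y)(w\be_p)$; use that $M_{at}$ is real symmetric, positive semi-definite with unit diagonal to write it as a Gram matrix of unit vectors; then apply the determinant bound). Your handling of the unbalanced/odd case and the discarding of signs via absolute values is also the standard bookkeeping implicit in that reference, so there is no gap.
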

\begin{proof}
This can be proved by using \eqref{eq_simple_determinant_bound} and the
 properties of $M_{at}(T,\xi,\bs)$ and by repeating the same argument as in 
\cite[\mbox{Lemma 4.5}]{K1}.
\end{proof}

\begin{lemma}\label{lem_bound_tree_term_partial}
For any $m\in \{0,\cdots,N_{L,h}\}$, $n\in \N_{\ge 2}$ and
 $l\in\{N_{\beta},\cdots,N_h\}$,
\begin{align*}
\|T_{n,m}^{\ge l}\|_{1,\infty}&\le (1_{m=0}N_{L,h}/h+1_{m\ge
 1})2^{-3m}c_0^{-m}\frac{1}{n(n-1)}M^{-l(n-1)}\\
&\quad \cdot
 \prod_{j=1}^n\left(\sum_{m_j=1}^{N_{L,h}}2^{5m_j}c_0^{m_j}\|J_{m_j}^{\ge
 l+1}\|_{1,\infty}\right)1_{\sum_{j=1}^nm_j-n+1\ge m}.
\end{align*}
\end{lemma}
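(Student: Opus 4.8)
The plan is to substitute the tree formula \eqref{eq_tree_expansion} for $T_n^{\ge l}(\psi)=T_{ree}(n,\cC_l(w\be_p),J^{\ge l+1})$, apply the projection $\cP_m$, and read off the bi-anti-symmetric kernel $T_{n,m}^{\ge l}(\bX^m,\bY^m)$. First I would expand each factor $J^{\ge l+1}(\psi^j+\psi)$ into its homogeneous pieces $J_{m_j}^{\ge l+1}$ and use the binomial splitting of $\psi^j+\psi$ to sort the $2m_j$ Grassmann variables carried by vertex $j$ into \emph{internal} variables (bearing the copy index $j$, hence eligible to be hit by the operators $\D_{q,r}(\cC_l(w\be_p))$) and \emph{external} variables (the surviving $\psi$, which assemble into the legs $(\opsi)_{\bX^m}(\psi)_{\bY^m}$). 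Since one sets $\psi^j=\b0$ at the end, every internal variable must be contracted: each of the $n-1$ tree lines removes one internal $\opsi$ and one internal $\psi$, and each additional contraction generated by the exponential removes one further pair. This yields the degree balance $m=\sum_{j=1}^n m_j-(n-1)-(\#\text{loop lines})$, so $\sum_{j=1}^n m_j-n+1\ge m$, which is exactly the indicator $1_{\sum_{j=1}^nm_j-n+1\ge m}$ in the claimed estimate.

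Next I would bound the $\|\cdot\|_{1,\infty}$ norm of this kernel. The norm holds one external variable fixed and sums the remaining ones against $1/h$ weights, organized so that the summations run inward along the tree from an anchored root. Each tree line $\{q,r\}$ carries a factor $\cC_l(w\be_p)$ whose two arguments, summed over $I_{L,h}$ against the $1/h$ weights, are controlled by $\|\cC_l(w\be_p)\|_{1,\infty}\le c_0M^{-l}$ from \eqref{eq_simple_decay_bound}; the $n-1$ lines thus produce $c_0^{n-1}M^{-l(n-1)}$. The residual contractions effected by $\exp\bigl(\sum_{u,v}M_{at}(T,\xi,\bs)_{u,v}\D_{u,v}(\cC_l(w\be_p))\bigr)$ act only on the internal loop variables (the external $\psi$ pass through untouched, having no copy index), and Lemma \ref{lem_determinant_bound_application} bounds this factor in absolute value by $c_0^{\frac12(\#\text{loop variables})}=c_0^{\#\text{loop lines}}$. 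Combined with the tree lines this gives the power $c_0^{(n-1)+(\#\text{loop lines})}=c_0^{\sum_jm_j-m}$, matching the $c_0^{-m}\prod_jc_0^{m_j}$ on the right-hand side. The $\bs$-integral and the $\xi$-sum disappear through the normalization \eqref{eq_nice_property_phi}, $\int_{[0,1]^{n-1}}d\bs\sum_{\xi\in\S_n(T)}\varphi(T,\xi,\bs)=1$.

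What remains is the combinatorial accounting, which is where the clean prefactors arise and which I expect to be the main obstacle. I would sum over $T\in\T_n$ with a prescribed incidence-degree sequence $d_1,\dots,d_n$ (with $\sum_jd_j=2(n-1)$) via the generalized Cayley count $\frac{(n-2)!}{\prod_j(d_j-1)!}$; together with the overall $\frac{1}{n!}$ this produces $\frac{(n-2)!}{n!}=\frac{1}{n(n-1)}$ times per-vertex factors $\frac{1}{(d_j-1)!}$. At each vertex the choice of which of the $2m_j$ variables attach to the $d_j$ incident lines and which become external, together with the re-antisymmetrization of the resulting kernels via \eqref{eq_norm_ordered_base} and \eqref{eq_anti_symmetric_estimate}, contributes only factors of size $2^{O(m_j)}$; the task is to check that, after also accounting for the choice of the anchored root variable, all of these are dominated by the stated $2^{5m_j}$ per vertex and $2^{-3m}$ overall. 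The prefactor $1_{m=0}N_{L,h}/h+1_{m\ge1}$ records the two regimes of the norm: for $m\ge1$ one external leg is held fixed while the others are summed, whereas for $m=0$ there is no external leg to anchor, so the root position must itself be summed, producing the volume factor $N_{L,h}/h$. The analytic inputs—the decay bound \eqref{eq_simple_decay_bound} and the determinant bound of Lemma \ref{lem_determinant_bound_application}—are already available, so the genuine work lies entirely in tracking the tree-degree multinomials, the field-selection binomials, and the passage between $\|\cdot\|_1$ and $\|\cdot\|_{1,\infty}$ for bi-anti-symmetric kernels; this proceeds closely parallel to the corresponding tree estimate in \cite{K1}.
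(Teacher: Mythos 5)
Your proposal follows essentially the same route as the paper's proof: the same internal/external field splitting with binomial factors $\binom{m_j}{k_j}\binom{m_j}{l_j}$, the anchored-root organization of the tree sums using $\|\cC_l(w\be_p)\|_{1,\infty}\le c_0M^{-l}$ per line, Lemma \ref{lem_determinant_bound_application} for the residual loop contractions, the normalization \eqref{eq_nice_property_phi}, and Cayley's formula producing $\frac{1}{n(n-1)}$ together with per-vertex degree factors. The bookkeeping you defer (field-selection binomials summing to $2^{2m_j}$, degree sums bounded via $2m_j-k_j-l_j\le 2^{m_j-(k_j+l_j)/2+1}$ to give $2^{3\sum_jm_j-3m}$, and $\|J_{m_{q_0}}^{\ge l+1}\|_1\le (N_{L,h}/h)\|J_{m_{q_0}}^{\ge l+1}\|_{1,\infty}$ for the $m=0$ case) is exactly how the paper completes the estimate.
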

\begin{proof} For $T\in \T_n$ and a matrix
 $Q=(Q(X,Y))_{X,Y\in I_{L,h}}$ define the operator $O_{pe}(T,Q)$ on
 $\bigwedge \left(\left(\bigoplus_{j=1}^n\cV_j\right)\bigoplus
 \cV\right)$ by 
\begin{align}
O_{pe}(T,Q):=&\int_{[0,1]^{n-1}}d\bs\sum_{\xi\in
 \S_n(T)}\varphi(T,\xi,\bs)e^{\sum_{q,r=1}^nM_{at}(T,\xi,\bs)_{q,r}\D_{q,r}(Q)}\notag\\
&\cdot\prod_{\{q,r\}\in T}\left(\D_{q,r}(Q)+\D_{r,q}(Q)\right).\label{eq_tree_formula_operator}
\end{align}
It follows from the definition that 
\begin{align*}
T^{\ge l}_{n,m}(\psi)&=\prod_{j=1}^n\left(\sum_{m_j=1}^{N_{L,h}}\left(\frac{1}{h}\right)^{2m_j}\sum_{\bX^{m_j},\bY^{m_j}\in
 I_{L,h}^{m_j}}J^{\ge l+1}_{m_j}(\bX^{m_j},\bY^{m_j})\right)1_{\sum_{j=1}^nm_j-n+1\ge m}\\
&\quad
 \cdot\frac{1}{n!}\sum_{T\in\T_n}\cP_m\left(O_{pe}(T,\cC_l(w\be_p))\prod_{q=1}^n(\opsi^q+\opsi)_{\bX^{m_q}}(\psi^q+\psi)_{\bY^{m_q}}\Big|_{\psi^j=\b0\atop\forall
 j\in\{1,\cdots,n\}}\right).
\end{align*}
The constraint $1_{\sum_{j=1}^nm_j-n+1\ge m}$ is due to the fact that
 the operator \\
$\prod_{\{q,r\}\in
 T}\left(\D_{q,r}(\cC_l(w\be_p))+\D_{r,q}(\cC_l(w\be_p))\right)$ erases
 $n-1$ fields from $\prod_{j=1}^n(\opsi^j)_{\bX^{m_j}}$ and from
 $\prod_{j=1}^n(\psi^j)_{\bY^{m_j}}$, respectively. By using anti-symmetry,
\begin{align*}
&T_{n,m}^{\ge
 l}(\psi)\\
&=\prod_{j=1}^n\Bigg(\sum_{m_j=1}^{N_{L,h}}\left(\frac{1}{h}\right)^{2m_j}\sum_{\bX^{m_j},\bY^{m_j}\in
 I_{L,h}^{m_j}}\sum_{k_j,l_j=0}^{m_j}\sum_{\bW^{k_j}\subset\bX^{m_j}}\sum_{\bZ^{l_j}\subset\bY^{m_j}}\\
&\quad\cdot J_{m_j}^{\ge
 l+1}((\bW^{k_j},\bX^{m_j}\backslash
 \bW^{k_j}),(\bZ^{l_j},\bY^{m_j}\backslash \bZ^{l_j}))\Bigg)1_{\sum_{j=1}^nm_j-n+1\ge
 m}1_{\sum_{j=1}^nk_j=\sum_{j=1}^nl_j=m}\\
&\quad\cdot\frac{1}{n!}\sum_{T\in\T_n}\left(O_{pe}(T,\cC_l(w\be_p))\prod_{q=1}^n(\opsi)_{\bW^{k_q}}(\opsi^q)_{\bX^{m_q}\backslash
 \bW^{k_q}}(\psi)_{\bZ^{l_q}}(\psi^q)_{\bY^{m_q}\backslash
 \bZ^{l_q}}\Big|_{\psi^j=\b0\atop\forall
 j\in\{1,\cdots,n\}}\right)\\
&=\prod_{j=1}^n\left(\sum_{m_j=1}^{N_{L,h}}\sum_{k_j,l_j=0}^{m_j}\left(\begin{array}{c}m_j\\
  k_j\end{array}\right) \left(\begin{array}{c}m_j\\
			      l_j\end{array}\right)\right)1_{\sum_{j=1}^nm_j-n+1\ge m}1_{\sum_{j=1}^nk_j=\sum_{j=1}^nl_j=m}\\
&\quad\cdot\frac{1}{n!}\sum_{T\in\T_n}\prod_{q=1}^n\left(\left(\frac{1}{h}\right)^{k_q+l_q}\sum_{\bW^{k_q}\in
 I_{L,h}^{k_q}}\sum_{\bZ^{l_q}\in I_{L,h}^{l_q}}\right)\\
&\quad\cdot J_T^{(m_1,\cdots,m_n),(k_1,\cdots,k_n),(l_1,\cdots,l_n)}((\bW^{k_1},\cdots,\bW^{k_n}),(\bZ^{l_1},\cdots, \bZ^{l_n}))\prod_{r=1}^n(\opsi)_{\bW^{k_r}}\prod_{s=1}^n(\psi)_{\bZ^{l_s}},
\end{align*}
where
\begin{align}
&J_T^{(m_1,\cdots,m_n),(k_1,\cdots,k_n),(l_1,\cdots,l_n)}((\bW^{k_1},\cdots,\bW^{k_n}),(\bZ^{l_1},\cdots,
 \bZ^{l_n})):=\notag\\
&\prod_{j=1}^n\left(\left(\frac{1}{h}\right)^{2m_j-k_j-l_j}\sum_{\bX^{m_j-k_j}\in
 I_{L,h}^{m_j-k_j}}\sum_{\bY^{m_j-l_j}\in I_{L,h}^{m_j-l_j}}J_{m_j}^{\ge
 l+1}((\bW^{k_j},\bX^{m_j-k_j}),(\bZ^{l_j},\bY^{m_j-l_j}))\right)\notag\\
&\quad \cdot \eps_{(m_1,\cdots,m_n),(k_1,\cdots,k_n),(l_1,\cdots,l_n)}O_{pe}(T,\cC_l(w\be_p))\prod_{q=1}^n(\opsi^q)_{\bX^{m_q-k_q}}(\psi^q)_{\bY^{m_q-l_q}}\Big|_{\psi^j=\b0\atop\forall
 j\in\{1,\cdots,n\}},\label{eq_definition_sub_function}
\end{align}
with the factor
 $\eps_{(m_1,\cdots,m_n),(k_1,\cdots,k_n),(l_1,\cdots,l_n)}\in\{1,-1\}$
 depending only on $(m_1,\cdots,m_n)$, $(k_1,\cdots,k_n),(l_1,\cdots,l_n)$.
By \eqref{eq_anti_symmetric_estimate} and the triangle inequality of the norm $\|\cdot\|_{1,\infty}$,
\begin{align}
\|T^{\ge l}_{n,m}\|_{1,\infty}&\le
\prod_{j=1}^n\left(\sum_{m_j=1}^{N_{L,h}}\sum_{k_j,l_j=0}^{m_j}\left(\begin{array}{c}m_j\\
  k_j\end{array}\right) \left(\begin{array}{c}m_j\\
			      l_j\end{array}\right)\right)1_{\sum_{j=1}^nm_j-n+1\ge m}1_{\sum_{j=1}^nk_j=\sum_{j=1}^nl_j=m}\notag\\
&\quad\cdot\frac{1}{n!}\sum_{T\in\T_n}\|J_T^{(m_1,\cdots,m_n),(k_1,\cdots,k_n),(l_1,\cdots,l_n)}\|_{1,\infty}.\label{eq_application_triangle_inequality}
\end{align}

Let us find an upper bound on
 $\|J_T^{(m_1,\cdots,m_n),(k_1,\cdots,k_n),(l_1,\cdots,l_n)}\|_{1,\infty}$.
 Let $d_j$ denote the incidence number of the vertex $j$ in $T$. If $d_j> 2m_j-k_j-l_j$ for some
 $j\in\{1,\cdots,n\}$,
 $\|J_T^{(m_1,\cdots,m_n),(k_1,\cdots,k_n),(l_1,\cdots,l_n)}\|_{1,\infty}=0$,
 since in this case 
$$\prod_{\{q,r\}\in
 T}\left(\D_{q,r}(\cC_l(w\be_p))+\D_{r,q}(\cC_l(w\be_p))\right)
\prod_{s=1}^n(\opsi^s)_{\bX^{m_s-k_s}}(\psi^s)_{\bY^{m_s-l_s}}=0
$$
for any $\bX^{m_s-k_s}\in I_{L,h}^{m_s-k_s}$, $\bY^{m_s-l_s}\in
 I_{L,h}^{m_s-l_s}$ ($s=1,\cdots,n$). 

Assume that $d_j\le 2m_j-k_j-l_j$ $(\forall j\in \{1,\cdots,n\})$. First
 consider the case that $m\neq 0$. Let $q_0\in\{1,\cdots,n\}$ be a
 vertex with $k_{q_0}\neq 0$. For $q,r\in\{1,\cdots,n\}$ let
 $\dis_T(q,r)(\in \N\cup\{0\})$ denote the distance between the vertex
 $q$ and the vertex $r$ along the unique path connecting $q$ with $r$ in
 $T$. Define $L_r^q(T)\subset T$ by 
\begin{equation}\label{eq_tree_subset}
L_r^q(T):=\{\{r,s\}\in T\ |\ \dis_T(q,s)=\dis_T(q,r)+1\}.
\end{equation}
Note that if $d_r=1$ and $r\neq q_0$, then $L_r^{q_0}(T)=
 \emptyset$. If $d_r\neq 1$ or $r=q_0$, we can number each line of
 $L_r^{q_0}(T)$ so that 
\begin{align*}
&L_{q_0}^{q_0}(T)=\{\{q_0,s_1^{q_0}\},\{q_0,s_2^{q_0}\},\cdots,\{q_0,s_{d_{q_0}}^{q_0}\}\},\\
&L_{r}^{q_0}(T)=\{\{r,s_1^r\},\{r,s_2^r\},\cdots,\{r,s_{d_r-1}^r\}\}\quad(\forall
 r\in \{1,\cdots,n\}\backslash\{q_0\}\text{ with }d_r\neq 1).
\end{align*}
For any $\{q_0,s\}\in L^{q_0}_{q_0}(T)$ there uniquely exists
 $j\in \{1,2,\cdots,d_{q_0}\}$ such that
 $\{q_0,s\}=\{q_0,s_j^{q_0}\}$. For $\nu\in \S_{d_{q_0}}$ set
 $\nu(\{q_0,s\}):=\nu(j)$. Similarly for
 $r\in\{1,\cdots,n\}\backslash\{q_0\}$ with $d_r\neq 1$, $\{r,s\}\in L^{q_0}_r(T)$
 and $\nu\in \S_{d_r-1}$ let $\nu(\{r,s\})\in
 \{1,2,\cdots,d_r-1\}$ be defined by $\nu(\{r,s\}):=\nu(j)$, where
 $s=s_j^r$.

Moreover, define $\tilde{\cC}:\tilde{I}_{L,h}\times\tilde{I}_{L,h}\to
 \C$ by 
$$
\tilde{\cC}((X,u),(Y,v)):=\left\{\begin{array}{ll}0&\text{ if }u=v,\\
                                -\cC_l(X,Y)(w\be_p)&\text{ if
				 }u=1,v=-1,\\
 \cC_l(Y,X)(w\be_p)&\text{ if
				 }u=-1,v=1.\end{array}\right.
$$
By considering $q_0$ as the root of $T$ we see that
\begin{align}
&\prod_{\{q,r\}\in
 T}\left(\D_{q,r}(\cC_l(w\be_p))+\D_{r,q}(\cC_l(w\be_p))\right)\prod_{j=1}^n(\opsi^j)_{\bX^{m_j-k_j}}(\psi^j)_{\bY^{m_j-l_j}}\notag\\
&=\prod_{\{q,r\}\in T}\left(\sum_{\tilde{X},\tilde{Y}\in
 \tilde{I}_{L,h}}\tilde{\cC}(\tilde{X},\tilde{Y})\frac{\partial}{\partial\psi^q_{\tilde{X}}}\frac{\partial}{\partial\psi^r_{\tilde{Y}}}\right)\prod_{j=1}^n(\psi^j)_{\tilde{\bX}(1)^{m_j-k_j}}(\psi^j)_{\tilde{\bY}(-1)^{m_j-l_j}}\notag\\
&=\prod_{j=1\atop j\neq
 q_0}^n\Bigg(\sum_{\tilde{X}_j\in\tilde{I}_{L,h}\text{ with }\atop
 \tilde{X}_j\subset
 (\tilde{\bX}(1)^{m_j-k_j},\tilde{\bY}(-1)^{m_j-l_j})}\Bigg)\notag\\
&\quad\cdot\sum_{\tilde{\bX}^{d_{q_0}}\in\tilde{I}_{L,h}^{d_{q_0}}\text{ with }\atop
 \tilde{\bX}^{d_{q_0}}\subset
 (\tilde{\bX}(1)^{m_{q_0}-k_{q_0}},\tilde{\bY}(-1)^{m_{q_0}-l_{q_0}})}\sum_{\nu_{q_0}\in
 \S_{d_{q_0}}}\prod_{\{q_0,r\}\in
 L^{q_0}_{q_0}(T)}\tilde{\cC}(\tilde{X}_{\nu_{q_0}(\{q_0,r\})}^{d_{q_0}},\tilde{X}_r)\notag\\
&\quad\cdot\prod_{q=1\atop q\neq q_0\text{ and }d_q\neq 1}^n\Bigg(\sum_{\tilde{\bX}^{d_{q}-1}\in\tilde{I}_{L,h}^{d_{q}-1}\text{ with }\atop
 \tilde{\bX}^{d_{q}-1}\subset
 (\tilde{\bX}(1)^{m_{q}-k_{q}},\tilde{\bY}(-1)^{m_{q}-l_{q}})\backslash
 \tilde{X}_q}\sum_{\nu_{q}\in
 \S_{d_{q}-1}}\prod_{\{q,r\}\in
 L^{q_0}_{q}(T)}\tilde{\cC}(\tilde{X}_{\nu_{q}(\{q,r\})}^{d_{q}-1},\tilde{X}_r)
\Bigg)\notag\\
&\quad\cdot\eps_{\pm}\cdot(\psi^{q_0})_{(\tilde{\bX}(1)^{m_{q_0}-k_{q_0}},\tilde{\bY}(-1)^{m_{q_0}-l_{q_0}})\backslash\tilde{\bX}^{d_{q_0}}}\notag\\
&\quad\cdot\prod_{s=1\atop s\neq
 q_0}^n\big(1_{d_s\neq 1}(\psi^{s})_{((\tilde{\bX}(1)^{m_{s}-k_{s}},\tilde{\bY}(-1)^{m_{s}-l_{s}})\backslash
 \tilde{X}_s)\backslash \tilde{\bX}^{d_{s}-1}}
+1_{d_s=1}(\psi^{s})_{(\tilde{\bX}(1)^{m_{s}-k_{s}},\tilde{\bY}(-1)^{m_{s}-l_{s}})\backslash
 \tilde{X}_s}\big),\label{eq_expansion_tree_operator}
\end{align}
where $\eps_{\pm}=1$ or $-1$.
In the following let $\prod_{u=1\atop\text{ordered}}^vg_u$ denote
 $g_1g_2\cdots g_v$ for $v\in \N$. One finds this notation useful when
 each term $g_u$ depends on $g_1,g_2,\cdots,g_{u-1}$. Moreover, set 
$d(T,q_0):=\max_{1\le j\le n}\dis_T(q_0,j)$.  By
 substituting \eqref{eq_expansion_tree_operator} into
 \eqref{eq_definition_sub_function} and using
 \eqref{eq_nice_property_phi}, \eqref{eq_simple_decay_bound} and Lemma
 \ref{lem_determinant_bound_application} we have for any $W_{fixed}\in
 I_{L,h}$ that 
\begin{align}
&\left(\frac{1}{h}\right)^{2m-1}\sum_{\bW^{k_{q_0}-1}\in
 I_{L,h}^{k_{q_0}-1}}\sum_{\bZ^{l_{q_0}}\in
 I_{L,h}^{l_{q_0}}}\prod_{j=1\atop j\neq
 q_0}^n\left(\sum_{\bW^{k_j}\in I_{L,h}^{k_j}}\sum_{\bZ^{l_j}\in
 I_{L,h}^{l_j}}\right)\notag\\
&\ \cdot|J_T^{(m_1,\cdots,m_n),(k_1,\cdots,k_n),(l_1,\cdots,l_n)}((\bW^{k_1},\cdots,\bW^{k_{q_0-1}},(W_{fixed},\bW^{k_{q_0}-1}),\bW^{k_{q_0+1}},\cdots,\bW^{k_n}),\notag\\
&\quad(\bZ^{l_1},\cdots,\bZ^{l_n}))|\notag\\
&\le \left(\frac{1}{h}\right)^{2m_{q_0}-1}\sum_{\bW^{k_{q_0}-1}\in
 I_{L,h}^{k_{q_0}-1}}\sum_{\bZ^{l_{q_0}}\in
 I_{L,h}^{l_{q_0}}}\sum_{\bX^{m_{q_0}-k_{q_0}}\in
 I_{L,h}^{m_{q_0}-k_{q_0}}} \sum_{\bY^{m_{q_0}-l_{q_0}}\in
 I_{L,h}^{m_{q_0}-l_{q_0}}}\notag\\
&\quad\cdot |J_{m_{q_0}}^{\ge
 l+1}((W_{fixed},\bW^{k_{q_0}-1},\bX^{m_{q_0}-k_{q_0}}),(\bZ^{l_{q_0}},\bY^{m_{q_0}-l_{q_0}}))|\notag\\
&\quad\cdot \sum_{\tilde{\bX}^{d_{q_0}}\in
 \tilde{I}_{L,h}^{d_{q_0}}\text{ with }\atop
 \tilde{\bX}^{d_{q_0}}\subset
 (\tilde{\bX}(1)^{m_{q_0}-k_{q_0}},\tilde{\bY}(-1)^{m_{q_0}-l_{q_0}})}\sum_{\nu_{q_0}\in
 \S_{d_{q_0}}}\prod_{\{q_0,r\}\in L^{q_0}_{q_0}(T)}\notag\\
&\quad\cdot\Bigg(
\left(\frac{1}{h}\right)^{2m_r}\sum_{\bW^{k_r}\in I_{L,h}^{k_r}}\sum_{\bZ^{l_r}\in
 I_{L,h}^{l_r}}\sum_{\bX^{m_r-k_r}\in
 I_{L,h}^{m_r-k_r}}\sum_{\bY^{m_r-l_r}\in
 I_{L,h}^{m_r-l_r}}\sum_{\tilde{X}_r\in\tilde{I}_{L,h}\text{ with }\atop
\tilde{X}_r\subset
 (\tilde{\bX}(1)^{m_r-k_r},\tilde{\bY}(-1)^{m_r-l_r})}\notag\\
&\quad\cdot |J_{m_r}^{\ge
 l+1}((\bW^{k_r},\bX^{m_r-k_r}),(\bZ^{l_r},\bY^{m_r-l_r}))||\tilde{\cC}(\tilde{X}^{d_{q_0}}_{\nu_{q_0}(\{q_0,r\})},\tilde{X}_r)|\Bigg)\notag\\
&\quad \cdot \prod_{u=1\atop \text{ordered}}^{d(T,q_0)-1}\Bigg(\prod_{j\in\{1,\cdots,n\}\text{ with
 }\atop\dis_T(q_0,j)=u\text{ and }d_j\neq 1}\Bigg(\sum_{\tilde{\bX}^{d_j-1}\in
 \tilde{I}_{L,h}^{d_j-1}\text{ with }\atop \tilde{\bX}^{d_j-1}\subset
 (\tilde{\bX}(1)^{m_j-k_j},\tilde{\bY}(-1)^{m_j-l_j})\backslash
 \tilde{X}_j}\sum_{\nu_j\in \S_{d_j-1}}\prod_{\{j,r\}\in L_j^{q_0}(T)}\notag\\
&\quad\cdot\Bigg(\left(\frac{1}{h}\right)^{2m_r}\sum_{\bW^{k_r}\in I_{L,h}^{k_r}}\sum_{\bZ^{l_r}\in
 I_{L,h}^{l_r}}\sum_{\bX^{m_r-k_r}\in
 I_{L,h}^{m_r-k_r}}\sum_{\bY^{m_r-l_r}\in
 I_{L,h}^{m_r-l_r}}\sum_{\tilde{X}_r\in\tilde{I}_{L,h}\text{ with }\atop
 \tilde{X}_r\subset
 (\tilde{\bX}(1)^{m_r-k_r},\tilde{\bY}(-1)^{m_r-l_r})}\notag\\
&\quad\cdot |J_{m_r}^{\ge
 l+1}((\bW^{k_r},\bX^{m_r-k_r}),(\bZ^{l_r},\bY^{m_r-l_r}))||\tilde{\cC}(\tilde{X}_{\nu_j(\{j,r\})}^{d_j-1},\tilde{X}_r)|\Bigg)\Bigg)\Bigg)\notag\\
&\quad\cdot \Bigg|\int_{[0,1]^{n-1}}d\bs\sum_{\xi\in
 \S_n(T)}\varphi(T,\xi,\bs)e^{\sum_{q,r=1}^nM_{at}(T,\xi,\bs)_{q,r}\D_{q,r}(\cC_l(w\be_p))}\notag\\
&\quad\cdot (\psi^{q_0})_{(\tilde{\bX}(1)^{m_{q_0}-k_{q_0}},\tilde{\bY}(-1)^{m_{q_0}-l_{q_0}})\backslash
 \tilde{\bX}^{d_{q_0}}}\prod_{s=1\atop s\neq q_0}^n\big(1_{d_s\neq 1}(\psi^{s})_{((\tilde{\bX}(1)^{m_{s}-k_{s}},\tilde{\bY}(-1)^{m_{s}-l_{s}})\backslash
 \tilde{X}_s)\backslash \tilde{\bX}^{d_{s}-1}}\notag\\
&\quad+1_{d_s=1}(\psi^{s})_{(\tilde{\bX}(1)^{m_{s}-k_{s}},\tilde{\bY}(-1)^{m_{s}-l_{s}})\backslash
 \tilde{X}_s}\big)\Big|_{\psi^j=\b0\atop\forall
 j\in\{1,\cdots,n\}}\Bigg|\notag\\
&\le \|J_{m_{q_0}}^{\ge
 l+1}\|_{1,\infty}\left(\begin{array}{c}2m_{q_0}-k_{q_0}-l_{q_0}\\
			d_{q_0}\end{array}\right)d_{q_0}!\notag\\
&\quad\cdot\prod_{\{q_0,r\}\in L^{q_0}_{q_0}(T)}\Big((2m_r-k_r-l_r)\|J_{m_r}^{\ge l+1}\|_{1,\infty}c_0M^{-l}\Big)\notag\\
&\quad\cdot \prod_{u=1}^{d(T,q_0)-1}\Bigg(\prod_{j\in\{1,\cdots,n\}\text{ with }\atop
 \dis_T(q_0,j)=u\text{ and }d_j\neq 1}\Bigg(\left(\begin{array}{c}2m_j-k_j-l_j-1\\
			       d_j-1\end{array}\right)(d_j-1)!\notag\\
&\quad\cdot\prod_{\{j,r\}\in L_j^{q_0}(T)}((2m_r-k_r-l_r)\|J^{\ge l+1}_{m_r}\|_{1,\infty}c_0M^{-l})\Bigg)\Bigg)c_0^{\frac{1}{2}\sum_{q=1}^n(2m_q-k_q-l_q-d_q)}\notag\\
&=(c_0M^{-l})^{n-1}\notag\\
&\quad\cdot\prod_{j=1}^n\left(c_0^{\frac{1}{2}(2m_j-k_j-l_j-d_j)}\|J_{m_j}^{\ge
 l+1}\|_{1,\infty}(2m_j-k_j-l_j)\left(\begin{array}{c}2m_j-k_j-l_j-1\\
				      d_j-1\end{array}\right)(d_j-1)!\right).
\label{eq_bound_sub_function}
\end{align}
By arbitrariness of $q_0$ and the fixed variable $W_{fixed}$,
 $\|J_T^{(m_1,\cdots,m_n),(k_1,\cdots,k_n),(l_1,\cdots,l_n)}\|_{1,\infty}$
 can be bounded by the right-hand side of
 \eqref{eq_bound_sub_function}.

In the case that $m=0$ we fix any $q_0\in\{1,\cdots,n\}$ and repeat the
 same calculation as above by setting $k_j$, $l_j$ to be 0 for all $j\in
 \{1,\cdots,n\}$. The only difference in the consequence is
 that $\|J_{m_{q_0}}^{\ge l+1}\|_1$ comes in place of
 $\|J_{m_{q_0}}^{\ge l+1}\|_{1,\infty}$. Since $\|J_{m_{q_0}}^{\ge
 l+1}\|_1\le (N_{L,h}/h)\|J_{m_{q_0}}^{\ge l+1}\|_{1,\infty}$, we only
 need to multiply the right-hand side of \eqref{eq_bound_sub_function}
by the extra factor $N_{L,h}/h$ in this case. 

By substituting these results into
 \eqref{eq_application_triangle_inequality}, replacing the sum over
 trees by the sum over possible incidence numbers and using Cayley's
 theorem on the number of trees with fixed incidence numbers, we can
 deduce that
\begin{align}
&\|T_{n,m}^{\ge l}\|_{1,\infty}\le \prod_{j=1}^n\left(\sum_{m_j=1}^{N_{L,h}}
\sum_{k_j,l_j=0}^{m_j}\left(\begin{array}{c}m_j\\ k_j\end{array}
\right)\left(\begin{array}{c}m_j\\
	     l_j\end{array}\right)\right)1_{\sum_{j=1}^nm_j-n+1\ge m}
1_{\sum_{j=1}^nk_j=\sum_{j=1}^nl_j=m}\notag\\
&\cdot\frac{1}{n!}\prod_{q=1}^n\left(\sum_{d_q=1}^{2m_q-k_q-l_q}\right)
1_{\sum_{q=1}^nd_q=2(n-1)}\frac{(n-2)!}{\prod_{s=1}^n(d_s-1)!}(1_{m=0}N_{L,h}/h+1_{m\ge
 1 })(c_0M^{-l})^{n-1}\notag\\
&\cdot\prod_{r=1}^n\left(c_0^{\frac{1}{2}(2m_r-k_r-l_r-d_r)}\|J_{m_r}^{\ge l+1}\|_{1,\infty}(2m_r-k_r-l_r)\left(\begin{array}{c}
									   2m_r-k_r-l_r-1\\ d_r-1\end{array}\right)(d_r-1)!\right)\notag\\
&=(1_{m=0}N_{L,h}/h+1_{m\ge
 1 })c_0^{-m}\frac{1}{n(n-1)}M^{-l(n-1)}\notag\\
&\cdot\prod_{j=1}^n\left(\sum_{m_j=1}^{N_{L,h}}
\sum_{k_j,l_j=0}^{m_j}\left(\begin{array}{c}m_j\\ k_j\end{array}
\right)\left(\begin{array}{c}m_j\\
	     l_j\end{array}\right)c_0^{m_j}\|J_{m_j}^{\ge
 l+1}\|_{1,\infty}\right)
1_{\sum_{j=1}^nm_j-n+1\ge m}1_{\sum_{j=1}^nk_j=\sum_{j=1}^nl_j=m}\notag
\\
&\cdot\prod_{q=1}^n\left((2m_q-k_q-l_q)\sum_{d_q=1}^{2m_q-k_q-l_q}\left(\begin{array}{c}2m_q-k_q-l_q-1\\
								  d_q-1\end{array}\right)\right)1_{\sum_{q=1}^nd_q=2(n-1)}.\label{eq_application_cayley}
\end{align}
By using the inequality that $2m_q-k_q-l_q\le 2^{m_q-(k_q+l_q)/2+1}$
 one has 
\begin{equation}\label{eq_sum_incidence}
\prod_{q=1}^n\left((2m_q-k_q-l_q)\sum_{d_q=1}^{2m_q-k_q-l_q}\left(\begin{array}{c}2m_q-k_q-l_q-1\\
								  d_q-1\end{array}\right)\right)\le 2^{3\sum_{q=1}^nm_q-3m}.
\end{equation}
By combining \eqref{eq_sum_incidence} with
 \eqref{eq_application_cayley}, dropping the constraints
 $1_{\sum_{j=1}^nk_j=\sum_{j=1}^nl_j=m}$,\\
$1_{\sum_{q=1}^nd_q=2(n-1)}$
 and summing over $k_j, l_j$ $(j=1,\cdots,n)$ we obtain
 the claimed upper bound.
\end{proof}

The following lemma will not be used until Subsection
\ref{subsec_final_integration}. Since its proof is close to the proof of
Lemma \ref{lem_bound_tree_term_partial}, let us show at this
point.
\begin{lemma}\label{lem_bound_tree_term_partial_derivative}
For any $m,m'\in\{0,\cdots,N_{L,h}\}$, $n\in\N_{\ge 2}$, $b\in
 \{free,tree\}$, $\bX^{m'},\bY^{m'}$ $\in I_{L,h}^{m'}$ and $l\in\{N_{\beta},\cdots,N_h\}$,
\begin{align*}
\left\|\frac{\partial T_{n,m}^{\ge l}}{\partial J_{b,m'}^{\ge
 l+1}(\bX^{m'},\bY^{m'})}\right\|_{1}&\le 1_{m'\ge
 1}(m'!)^2h^{-2m'}2^{5m'-3m}c_0^{m'-m}M^{-l(n-1)}\\
&\quad\cdot\prod_{j=1}^{n-1}\left(\sum_{m_j=1}^{N_{L,h}}2^{5m_j}c_0^{m_j}\|J_{m_j}^{\ge
 l+1}\|_{1,\infty}\right)1_{\sum_{j=1}^{n-1}m_j+m'-n+1\ge m}.
\end{align*}
\end{lemma}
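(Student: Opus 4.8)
The plan is to reduce the statement to the estimate already obtained in Lemma \ref{lem_bound_tree_term_partial} by observing that the partial derivative merely amputates one vertex of the tree expansion. First I would note that, since $J_{m'}^{\ge l+1}(\cdot,\cdot)=J_{free,m'}^{\ge l+1}(\cdot,\cdot)+J_{tree,m'}^{\ge l+1}(\cdot,\cdot)$ enters the tree formula \eqref{eq_tree_expansion} only through the total kernel $J_{m'}^{\ge l+1}$, we have $\partial/\partial J_{b,m'}^{\ge l+1}(\bX^{m'},\bY^{m'})=\partial/\partial J_{m'}^{\ge l+1}(\bX^{m'},\bY^{m'})$ on $T_{n,m}^{\ge l}$ for either $b\in\{free,tree\}$; this is why the claimed bound is independent of $b$. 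The quantity $T_n^{\ge l}=T_{ree}(n,\cC_l(w\be_p),J^{\ge l+1})$ is homogeneous of degree $n$ in the input, the $n$ copies appearing as $\prod_{j=1}^n J^{\ge l+1}(\psi^j+\psi)$ in \eqref{eq_tree_expansion}. Hence by the product rule the derivative is a sum over a distinguished vertex $q_0\in\{1,\cdots,n\}$ in which the factor attached to $q_0$ is replaced by the single degree-$m'$ monomial whose kernel is fixed at $(\bX^{m'},\bY^{m'})$, while the remaining $n-1$ vertices still carry the full $J^{\ge l+1}$. If $m'=0$ the distinguished vertex would carry no field and could not be joined to the tree for $n\ge 2$, so the contribution vanishes; this accounts for the factor $1_{m'\ge 1}$.

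Next I would repeat the estimation \eqref{eq_definition_sub_function}--\eqref{eq_bound_sub_function} of Lemma \ref{lem_bound_tree_term_partial}, now taking the amputated vertex $q_0$ as the root of $T$. The only structural change is at the root: instead of extracting the norm $\|J_{m_{q_0}}^{\ge l+1}\|_{1,\infty}$, the root now contributes the \emph{fixed} kernel value $J_{m'}^{\ge l+1}(\bX^{m'},\bY^{m'})$. Because the output is measured in $\|\cdot\|_1$ rather than $\|\cdot\|_{1,\infty}$, the summation over the external legs $\bX^m,\bY^m$ effectively runs only over the legs of the remaining vertices, while the root's legs stay pinned to components of $(\bX^{m'},\bY^{m'})$; the relation \eqref{eq_norm_ordered_base}, together with the $(1/h)^{2m'}$ normalization of the degree-$m'$ monomial in $J_{m'}^{\ge l+1}(\psi)$, then produces the explicit root factor $(m'!)^2h^{-2m'}$ in place of $\|J_{m_{q_0}}^{\ge l+1}\|_{1,\infty}$. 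All other ingredients are unchanged: the $n-1$ tree lines are bounded by $\|\cC_l(w\be_p)\|_{1,\infty}\le c_0M^{-l}$ via \eqref{eq_simple_decay_bound}, the leftover contractions by Lemma \ref{lem_determinant_bound_application}, and the binomial and incidence sums are controlled as before using Cayley's theorem and \eqref{eq_sum_incidence}. The degree constraint $1_{\sum_{j=1}^n m_j-n+1\ge m}$ becomes $1_{\sum_{j=1}^{n-1}m_j+m'-n+1\ge m}$ once $m_{q_0}=m'$ is held fixed.

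Finally I would collect the powers. The per-vertex factor $2^{5m_j}c_0^{m_j}\|J_{m_j}^{\ge l+1}\|_{1,\infty}$ survives for the $n-1$ non-root vertices, the overall factors $2^{-3m}c_0^{-m}$ and $M^{-l(n-1)}$ are reproduced exactly as in Lemma \ref{lem_bound_tree_term_partial}, and the root contributes $2^{5m'}c_0^{m'}(m'!)^2h^{-2m'}$; multiplying these yields precisely $(m'!)^2h^{-2m'}2^{5m'-3m}c_0^{m'-m}M^{-l(n-1)}\prod_{j=1}^{n-1}(\sum_{m_j}2^{5m_j}c_0^{m_j}\|J_{m_j}^{\ge l+1}\|_{1,\infty})$. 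The sum over the $n$ choices of the root $q_0$ contributes a factor $n$, which combines with the $1/n!$ of \eqref{eq_tree_expansion} and the Cayley count $(n-2)!$ into $n(n-2)!/n!=1/(n-1)\le 1$, so it may simply be discarded in the upper bound; this is why, unlike in Lemma \ref{lem_bound_tree_term_partial}, no $1/(n(n-1))$ prefactor remains. I expect the main obstacle to be the bookkeeping at the root: verifying that holding one kernel value fixed while summing the external legs in $\|\cdot\|_1$ indeed yields exactly the combinatorial weight $(m'!)^2h^{-2m'}$ predicted by \eqref{eq_norm_ordered_base}, and checking that the anti-symmetrization signs and the determinant bound of Lemma \ref{lem_determinant_bound_application} go through unchanged when the root vertex is amputated rather than normed.
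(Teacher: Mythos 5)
Your proposal is correct and follows essentially the same route as the paper: differentiate the degree-$n$ homogeneous tree expression by the product rule (sum over the amputated vertex $j_0$, with the $m'=0$ contribution killed because the tree lines attached to that vertex have no fields to contract), take the amputated vertex as the root, extract the factor $(m'!)^2h^{-2m'}$ from bi-anti-symmetry of the kernel and the $h^{-2m'}$ normalization, and then rerun the estimate of Lemma \ref{lem_bound_tree_term_partial} with the root's legs pinned to $(\bX^{m'},\bY^{m'})$ (producing $\binom{m'}{k_{j_0}}\binom{m'}{l_{j_0}}$ in place of a norm), which is exactly the paper's $J_{T,j_0}$ construction. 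Your accounting of the constants, including absorbing the sum over the $n$ root choices into $n\cdot\frac{1}{n(n-1)}=\frac{1}{n-1}\le 1$, matches the paper's final step.
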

\begin{proof}
By using anti-symmetry,
\begin{align}
&\frac{\partial T_{n,m}^{\ge l}(\psi)}{\partial J_{b,m'}^{\ge
 l+1}(\bX^{m'},\bY^{m'})}\notag\\
&=h^{-2m'}(m'!)^2\sum_{j_0=1}^n
\prod_{j=1\atop j\neq j_0}^n\left(\sum_{m_j=1}^{N_{L,h}}\left(\frac{1}{h}\right)^{2m_j}\sum_{\bX^{m_j},\bY^{m_j}\in
 I_{L,h}^{m_j}}J^{\ge
 l+1}_{m_j}(\bX^{m_j},\bY^{m_j})\right)\notag\\
&\quad\cdot 1_{\sum_{j=1,j\neq j_0}^nm_j+m'-n+1\ge
 m}\frac{1}{n!}\sum_{T\in\T_n}\cP_m\Bigg(O_{pe}(T,\cC_l(w\be_p))\notag\\
&\quad\cdot(\opsi^{j_0}+\opsi)_{\bX^{m'}}(\psi^{j_0}+\psi)_{\bY^{m'}}\prod_{q=1\atop q\neq j_0}^n(\opsi^q+\opsi)_{\bX^{m_q}}(\psi^q+\psi)_{\bY^{m_q}}\Big|_{\psi^j=\b0\atop\forall
 j\in\{1,\cdots,n\}}\Bigg).\label{eq_summation_derivative_pre}
\end{align}
We can see from \eqref{eq_summation_derivative_pre} that $\partial T_{n,m}^{\ge l}(\psi)/\partial J_{b,m'}^{\ge
 l+1}(\bX^{m'},\bY^{m'})=0$ if $m'=0$, since \\
$\prod_{\{q,r\}\in T}(\D_{q,r}(\cC_l(w\be_p))+\D_{r,q}(\cC_l(w\be_p)))\prod_{j=1,j\neq j_0}^n(\opsi^j+\opsi)_{\bX^{m_j}}(\psi^j+\psi)_{\bY^{m_j}}=0$.
The equality \eqref{eq_summation_derivative_pre} leads to 
\begin{align}
&\frac{\partial T_{n,m}^{\ge l}(\psi)}{\partial J_{b,m'}^{\ge
 l+1}(\bX^{m'},\bY^{m'})}\notag\\
&=h^{-2m'}(m'!)^2\sum_{j_0=1}^n\sum_{m_{j_0}=1}^{N_{L,h}}\sum_{k_{j_0},l_{j_0}=0}^{m_{j_0}}1_{m_{j_0}=m'}\prod_{j=1\atop j\neq j_0}^n\left(\sum_{m_j=1}^{N_{L,h}}\sum_{k_j,l_j=0}^{m_j}\left(\begin{array}{c}m_j\\
  k_j\end{array}\right) \left(\begin{array}{c}m_j\\
			      l_j\end{array}\right)\right)\notag\\
&\quad\cdot 1_{\sum_{j=1}^nm_j-n+1\ge m}1_{\sum_{j=1}^nk_j=\sum_{j=1}^nl_j=m}\frac{1}{n!}\sum_{T\in\T_n}\prod_{q=1}^n\left(\left(\frac{1}{h}\right)^{k_q+l_q}\sum_{\bW^{k_q}\in
 I_{L,h}^{k_q}}\sum_{\bZ^{l_q}\in I_{L,h}^{l_q}}\right)\notag\\
&\quad\cdot J_{T,j_0}^{(m_1,\cdots,m_n),(k_1,\cdots,k_n),(l_1,\cdots,l_n)}((\bW^{k_1},\cdots,\bW^{k_n}),(\bZ^{l_1},\cdots, \bZ^{l_n}))\prod_{r=1}^n(\opsi)_{\bW^{k_r}}\prod_{s=1}^n(\psi)_{\bZ^{l_s}},\notag
\end{align}
where
\begin{align*}
&J_{T,j_0}^{(m_1,\cdots,m_n),(k_1,\cdots,k_n),(l_1,\cdots,l_n)}((\bW^{k_1},\cdots,\bW^{k_n}),(\bZ^{l_1},\cdots,
 \bZ^{l_n}))\\
&:=h^{k_{j_0}+l_{j_0}}1_{\bW^{k_{j_0}}\subset
 \bX^{m'}}1_{\bZ^{l_{j_0}}\subset \bY^{m'}}\prod_{j=1\atop j\neq
 j_0}^n\Bigg(\left(\frac{1}{h}\right)^{2m_j-k_j-l_j}\sum_{\bX^{m_j-k_j}\in
 I_{L,h}^{m_j-k_j}}\sum_{\bY^{m_j-l_j}\in I_{L,h}^{m_j-l_j}}\\
&\quad\cdot J_{m_j}^{\ge
 l+1}((\bW^{k_j},\bX^{m_j-k_j}),(\bZ^{l_j},\bY^{m_j-l_j}))\Bigg)\eps_{\bW^{k_{j_0}},\bX^{m'},\bZ^{l_{j_0}},\bY^{m'},j_0,(m_1,\cdots,m_n),(k_1,\cdots,k_n),(l_1,\cdots,l_n)}\\
&\quad\cdot
 O_{pe}(T,\cC_l(w\be_p))(\opsi^{j_0})_{\bX^{m'}\backslash \bW^{k_{j_0}}}(\psi^{j_0})_{\bY^{m'}\backslash\bZ^{l_{j_0}}}\prod_{q=1\atop q\neq j_0}^n(\opsi^q)_{\bX^{m_q-k_q}}(\psi^q)_{\bY^{m_q-l_q}}\Big|_{\psi^j=\b0\atop\forall
 j\in\{1,\cdots,n\}},
\end{align*}
with the factor
 $\eps_{\bW^{k_{j_0}},\bX^{m'},\bZ^{l_{j_0}},\bY^{m'},j_0,(m_1,\cdots,m_n),(k_1,\cdots,k_n),(l_1,\cdots,l_n)}\in\{1,-1\}$
 depending only on $\bW^{k_{j_0}},\bX^{m'},\bZ^{l_{j_0}},\bY^{m'}$, $j_0,(m_1,\cdots,m_n),(k_1,\cdots,k_n),(l_1,\cdots,l_n)$.
It follows from \eqref{eq_anti_symmetric_estimate} and the triangle
 inequality of the norm $\|\cdot\|_1$ that
\begin{align}
&\left\|\frac{\partial T_{n,m}^{\ge l}}{\partial J_{b,m'}^{\ge
 l+1}(\bX^{m'},\bY^{m'})}\right\|_{1}\notag\\
&\le
h^{-2m'}(m'!)^2\sum_{j_0=1}^n\sum_{m_{j_0}=1}^{N_{L,h}}\sum_{k_{j_0},l_{j_0}=0}^{m_{j_0}}1_{m_{j_0}=m'}\prod_{j=1\atop j\neq j_0}^n\left(\sum_{m_j=1}^{N_{L,h}}\sum_{k_j,l_j=0}^{m_j}\left(\begin{array}{c}m_j\\
  k_j\end{array}\right) \left(\begin{array}{c}m_j\\
			      l_j\end{array}\right)\right)\notag\\
&\quad\cdot 1_{\sum_{j=1}^nm_j-n+1\ge
 m}1_{\sum_{j=1}^nk_j=\sum_{j=1}^nl_j=m}\frac{1}{n!}\sum_{T\in\T_n}\|J_{T,j_0}^{(m_1,\cdots,m_n),(k_1,\cdots,k_n),(l_1,\cdots,l_n)}\|_{1}.\label{eq_application_triangle_inequality_derivative}
\end{align}
The estimation of
 $\|J_{T,j_0}^{(m_1,\cdots,m_n),(k_1,\cdots,k_n),(l_1,\cdots,l_n)}\|_{1}$
 is parallel to that of\\
 $\|J_{T}^{(m_1,\cdots,m_n),(k_1,\cdots,k_n),(l_1,\cdots,l_n)}\|_{1,\infty}$
 in the proof of Lemma \ref{lem_bound_tree_term_partial}. Here we
 consider $j_0$ as the root of $T$, while this role was played by the
 vertex $q_0$ in the previous lemma.
By noting that 
$$
\left(\frac{1}{h}\right)^{k_{j_0}+l_{j_0}}\sum_{\bW^{k_{j_0}}\in
 I_{L,h}^{k_{j_0}}}\sum_{\bZ^{l_{j_0}}\in
 I_{L,h}^{l_{j_0}}}(h^{k_{j_0}+l_{j_0}}1_{\bW^{k_{j_0}}\subset
 \bX^{m'}}1_{\bZ^{l_{j_0}}\subset
		  \bY^{m'}})=\left(\begin{array}{c}m_{j_0}\\ k_{j_0}\end{array}
\right)\left(\begin{array}{c}m_{j_0}\\ l_{j_0}\end{array}
\right)
$$
and letting $d_1,\cdots,d_n$ be the incidence numbers of $T$ we have
\begin{align}
&\|J_{T,j_0}^{(m_1,\cdots,m_n),(k_1,\cdots,k_n),(l_1,\cdots,l_n)}\|_{1}\notag\\
&\le 1_{d_j\le 2m_j-k_j-l_j\ (\forall j\in \{1,\cdots,n\})}
\left(\begin{array}{c}m_{j_0}\\ k_{j_0}\end{array}
\right)\left(\begin{array}{c}m_{j_0}\\ l_{j_0}\end{array}
\right)(c_0M^{-l})^{n-1}\prod_{q=1\atop q\neq j_0}^n\|J_{m_q}^{\ge
 l+1}\|_{1,\infty}\notag\\
&\quad\cdot\prod_{j=1}^n\left(c_0^{\frac{1}{2}(2m_j-k_j-l_j-d_j)}(2m_j-k_j-l_j)\left(\begin{array}{c}2m_j-k_j-l_j-1\\
				      d_j-1\end{array}\right)(d_j-1)!\right)
.\label{eq_bound_sub_function_derivative}
\end{align}
By returning the right-hand side of
 \eqref{eq_bound_sub_function_derivative} to \eqref{eq_application_triangle_inequality_derivative} and replacing the
 sum over trees by the sum over possible incidence numbers we obtain
\begin{align*}
&\left\|\frac{\partial T_{n,m}^{\ge l}}{\partial J_{b,m'}^{\ge
 l+1}(\bX^{m'},\bY^{m'})}\right\|_{1}
\le 1_{m'\ge 1}h^{-2m'}(m'!)^2\sum_{j_0=1}^n
c_0^{-m}\frac{1}{n(n-1)}M^{-l(n-1)}\\
&\quad\cdot\prod_{j=1}^n\left(\sum_{m_j=1}^{N_{L,h}}
\sum_{k_j,l_j=0}^{m_j}\left(\begin{array}{c}m_j\\ k_j\end{array}
\right)\left(\begin{array}{c}m_j\\
	     l_j\end{array}\right)c_0^{m_j}\right)\\
&\quad\cdot1_{m_{j_0}=m'}1_{\sum_{j=1}^nm_j-n+1\ge m}1_{\sum_{j=1}^nk_j=\sum_{j=1}^nl_j=m}\prod_{r=1\atop r\neq j_0}^n\|J_{m_r}^{\ge
 l+1}\|_{1,\infty}\\
&\quad\cdot\prod_{q=1}^n\left((2m_q-k_q-l_q)\sum_{d_q=1}^{2m_q-k_q-l_q}\left(\begin{array}{c}2m_q-k_q-l_q-1\\
								  d_q-1\end{array}\right)\right)1_{\sum_{q=1}^nd_q=2(n-1)}.
\end{align*}
Then, the same calculation as in the last part of the proof of Lemma
 \ref{lem_bound_tree_term_partial} yields the claimed upper bound.
\end{proof}

For compactness of the argument we assume the condition
\eqref{eq_smallness_assumption} throughout this section. The following
lemma itself, however, can be proved under a weaker condition.
\begin{lemma}\label{lem_inductive_bound_each_tree_integration}
Fix any $l\in \{N_{\beta}, \cdots,N_h\}$. Assume that
 \eqref{eq_smallness_assumption} and
 \eqref{eq_inductive_bound_each_scale} for $l+1$ hold. Then, $T^{\ge
 l}(\psi)$ is well-defined in $\bigoplus_{n=0}^{N_{L,h}}\cP_n \bigwedge\cV$. Moreover, the following
 inequalities hold true.
\begin{align}
&\|T_m^{\ge l}\|_{1,\infty}\le (1_{m=0}N_{L,h}/h+1_{m\ge
     1})2^{-3m}c_0^{-m}M^{N_{\beta}}(2^6\alpha^{-1})^2M^{-(l-N_{\beta})}\label{eq_item_tree_term_norm}\\
&\quad (\forall m\in \{0,\cdots,N_{L,h}\}).\notag\\
&M^{-N_{\beta}}\sum_{m=1}^{N_{L,h}}\alpha^mc_0^mM^{(l-N_{\beta})(m-2)}\|T_m^{\ge l}\|_{1,\infty}\le 2^8\alpha^{-1}M^2.\label{eq_item_tree_term_sum}\\
&
 M^{-N_{\beta}}\sum_{m=3}^{N_{L,h}}\alpha^mc_0^mM^{(l-N_{\beta})(m-2)}\|F_m^{\ge l}\|_{1,\infty}\le
     2^7M^{-1}.\label{eq_item_free_term_norm_sum}\\
&M^{-N_{\beta}}\sum_{m=3}^{N_{L,h}}\left(\begin{array}{c}m\\j\end{array}\right)^2c_0^m\|J_m^{\ge
 l+1}\|_{1,\infty}\le
     (2^2\alpha^{-1})^3M^{-(l+1-N_{\beta})}\quad (\forall j\in\{1,2,3\}).\label{eq_item_full_term_sum}
\end{align}
\end{lemma}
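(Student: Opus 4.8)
The plan is to establish the four inequalities in the order \eqref{eq_item_full_term_sum}, then \eqref{eq_item_tree_term_norm} (which simultaneously gives the well-definedness of $T^{\ge l}$), then \eqref{eq_item_tree_term_sum}, and finally \eqref{eq_item_free_term_norm_sum}. The single engine behind all of them is the inductive hypothesis \eqref{eq_inductive_bound_each_scale} at scale $l+1$, and the key observation is that \emph{both} of its lines are needed: the second line yields the per-degree estimate $c_0^m\|J_m^{\ge l+1}\|_{1,\infty}<M^{N_{\beta}}\alpha^{-m}M^{-(l+1-N_{\beta})(m-2)}$ valid for every $m$, whereas the first line supplies the sharper, $l$-independent estimate $c_0\|J_1^{\ge l+1}\|_{1,\infty}<M^{N_{\beta}}\alpha^{-1}$ on the two-point kernel. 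With these in hand, \eqref{eq_item_full_term_sum} is immediate: I substitute the second line, bound $\binom{m}{j}^2$ by its value at the leading index $m=3$, and sum the resulting geometric series, using $\alpha\ge 2^{10}M^2$ and $M\ge 2^8$ from \eqref{eq_smallness_assumption} to see that the $m=3$ term already has the claimed order $(2^2\alpha^{-1})^3M^{-(l+1-N_{\beta})}$ and dominates the tail.

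For the tree bounds I start from Lemma \ref{lem_bound_tree_term_partial} and introduce $P:=\sum_{m'=1}^{N_{L,h}}2^{5m'}c_0^{m'}\|J_{m'}^{\ge l+1}\|_{1,\infty}$. The decisive point is that applying the sharp first-line bound to the dominant $m'=1$ contribution keeps $P$ free of any $M^l$ growth, so that $P\le 2^6M^{N_{\beta}}\alpha^{-1}$. To prove \eqref{eq_item_tree_term_norm} I simply discard the occupation constraint $1_{\sum_j m_j-n+1\ge m}$ (replacing it by $1$), so the product over vertices is at most $P^n$; the remaining sum is $\sum_{n\ge 2}\frac{1}{n(n-1)}M^{-l(n-1)}P^n=P\sum_{n\ge 2}(n(n-1))^{-1}\rho^{\,n-1}$ with $\rho:=PM^{-l}\le 2^6\alpha^{-1}M^{-(l-N_{\beta})}\le 2^{-4}<1$. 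This series converges and is bounded by $P\rho=(2^6\alpha^{-1})^2M^{N_{\beta}}M^{-(l-N_{\beta})}$, which is exactly the form asserted in \eqref{eq_item_tree_term_norm}; the same geometric convergence shows that $\sum_{n\ge 2}\|T_{n,m}^{\ge l}\|_{1,\infty}<\infty$ for each $m$, so $T^{\ge l}(\psi)$ is well-defined in the finite-dimensional algebra.

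The summed estimate \eqref{eq_item_tree_term_sum} genuinely requires the constraint, since the per-degree bound \eqref{eq_item_tree_term_norm} carries only the decay $2^{-3m}c_0^{-m}$, which the weight $\alpha^mc_0^mM^{(l-N_{\beta})(m-2)}$ overwhelms for large $m$. I therefore keep $1_{\sum_j m_j-n+1\ge m}$, interchange the sums over $m$ and $n$, and carry out the $m$-sum (running up to $\sum_j m_j-n+1$) first. Because the ratio $A:=\alpha 2^{-3}M^{l-N_{\beta}}\gg 1$, this geometric sum is dominated by its top term $A^{\sum_j m_j-n+1}$, which reassembles into $A^{1-n}\tilde P^{\,n}$ with $\tilde P:=\sum_{m'}(2^5A)^{m'}c_0^{m'}\|J_{m'}^{\ge l+1}\|_{1,\infty}$. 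The factor $A^{1-n}$ is precisely the origin of the global $\alpha^{-1}$: estimating $\tilde P$ (now dominated by its $m'=2$ term, $\tilde P\le 2^6M^{2l-N_{\beta}}$) one finds $\sigma:=M^{-l}A^{-1}\tilde P\le 2^9\alpha^{-1}\le\tfrac12$, so the $n=2$ term governs the remaining geometric series and delivers \eqref{eq_item_tree_term_sum} with room to spare under $\alpha\ge 2^{10}M^2$, $M\ge 2^8$.

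Finally, for \eqref{eq_item_free_term_norm_sum} I expand $F_m^{\ge l}$ out of $F^{\ge l}(\psi)=\int J^{\ge l+1}(\psi+\psi^0)\,d\mu_{\cC_l(w\be_p)}(\psi^0)$: its degree-$m$ part keeps $m$ external legs on each side of a degree-$m'$ kernel ($m'\ge m$, with combinatorial weight $\binom{m'}{m}^2$) and Wick-contracts the remaining $m'-m$ pairs into a determinant of $\cC_l(w\be_p)$. Bounding that determinant pointwise by \eqref{eq_simple_determinant_bound} ($\le c_0^{\,m'-m}$) and absorbing the internal position sums into $\|J_{m'}^{\ge l+1}\|_{1,\infty}$ gives $\|F_m^{\ge l}\|_{1,\infty}\le\sum_{m'\ge m}\binom{m'}{m}^2c_0^{\,m'-m}\|J_{m'}^{\ge l+1}\|_{1,\infty}$. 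Inserting this into the weighted sum and interchanging summations, the no-loop contribution $m'=m$ yields $\sum_{m\ge 3}M^{-(m-2)}\le 2M^{-1}$ — here the scale-$l$ weight $M^{(l-N_{\beta})(m-2)}$ beats the scale-$(l+1)$ decay $M^{-(l+1-N_{\beta})(m-2)}$ of the hypothesis by exactly $M^{-(m-2)}$ — while the loop terms $m'>m$ (dominated by $m=m'-1$) are smaller by a further factor $\alpha^{-1}M^{-1}$ and hence negligible, giving \eqref{eq_item_free_term_norm_sum}. The main obstacle throughout is the bookkeeping of the competing powers of $M$ and $\alpha$ in these weighted multi-sums; the two mechanisms that must not be blurred are (i) using the first line of \eqref{eq_inductive_bound_each_scale} to keep the two-point kernel $l$-uniform, without which the bubble chains hidden in $P$ would blow up like $M^l$, and (ii) retaining the occupation constraint of Lemma \ref{lem_bound_tree_term_partial} long enough to expose the $A^{1-n}$ factor that produces the overall $\alpha^{-1}$ smallness of the tree sum.
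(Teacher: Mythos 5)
Your handling of \eqref{eq_item_tree_term_norm}, \eqref{eq_item_tree_term_sum}, \eqref{eq_item_free_term_norm_sum}, and of the well-definedness of $T^{\ge l}$, is correct and is essentially the paper's own argument: both rest on Lemma \ref{lem_bound_tree_term_partial}, on the bound $\sum_{m'}2^{5m'}c_0^{m'}\|J_{m'}^{\ge l+1}\|_{1,\infty}\le 2^6\alpha^{-1}M^{N_{\beta}}$ obtained by using the first line of \eqref{eq_inductive_bound_each_scale} for $m'=1$ and the second line for $m'\ge 2$, and, for \eqref{eq_item_tree_term_sum}, on retaining the occupation constraint and performing the $m$-sum first (your reassembly into $A^{1-n}\tilde{P}^n$ with $\tilde{P}\le 2^6M^{2l-N_{\beta}}$ checks out and yields $2^{16}\alpha^{-1}\le 2^8\alpha^{-1}M^2$ under \eqref{eq_smallness_assumption}). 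Your diagonal/off-diagonal split for \eqref{eq_item_free_term_norm_sum} differs from the paper's summation order only in bookkeeping and in fact gives a sharper constant.

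The one step that fails as written is \eqref{eq_item_full_term_sum}. You propose to ``bound $\binom{m}{j}^2$ by its value at the leading index $m=3$''; this is false, since $\binom{m}{j}$ increases with $m$ for fixed $j$. Moreover, even after the correct replacement $\binom{m}{j}^2\le 2^{2m}$, your plan — substitute the termwise consequence of the second line of \eqref{eq_inductive_bound_each_scale} and then sum the resulting geometric series — only produces a bound of the form $(2^2\alpha^{-1})^3M^{-(l+1-N_{\beta})}\bigl(1-2^2\alpha^{-1}M^{-(l+1-N_{\beta})}\bigr)^{-1}$, which is strictly larger than the constant asserted in \eqref{eq_item_full_term_sum}: a termwise substitution plus geometric summation cannot recover the exact constant, because the hypothesis controls the whole weighted sum, not each term with room to spare. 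The repair is what the paper does: for $m\ge 3$ write
\begin{equation*}
\binom{m}{j}^2c_0^m\le 2^{2m}c_0^m=(2^2\alpha^{-1})^m\alpha^mc_0^m\le(2^2\alpha^{-1})^3M^{-(l+1-N_{\beta})}\alpha^mc_0^mM^{(l+1-N_{\beta})(m-2)},
\end{equation*}
using $2^2\alpha^{-1}\le 1$ and $M^{(l+1-N_{\beta})(m-2)}\ge M^{l+1-N_{\beta}}$, multiply by $\|J_m^{\ge l+1}\|_{1,\infty}$, and only then sum over $m$ and apply the second line of \eqref{eq_inductive_bound_each_scale} at scale $l+1$ to the entire weighted sum; this yields exactly $(2^2\alpha^{-1})^3M^{-(l+1-N_{\beta})}$. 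In practice the lost factor would be harmless in the downstream uses of \eqref{eq_item_full_term_sum} in Proposition \ref{prop_inductive_bound}, but the lemma as stated requires the exact constant, so this step needs the above rearrangement.
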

\begin{proof}
Proof of \eqref{eq_item_tree_term_norm}: The assumptions ensure that
 $$M^{-N_{\beta}}2^5c_0\|J_1^{\ge l+1}\|_{1,\infty}\le 2^5 \alpha^{-1},\quad
 M^{-N_{\beta}}\sum_{m=2}^{N_{L,h}}2^{5m}c_0^m\|J_m^{\ge
 l+1}\|_{1,\infty}\le (2^5\alpha^{-1})^2,$$ 
which result in
\begin{equation}\label{eq_useful_inside}
M^{-N_{\beta}}\sum_{m=1}^{N_{L,h}}2^{5m}c_0^m\|J_m^{\ge
 l+1}\|_{1,\infty}\le 2^6\alpha^{-1}.
\end{equation}
By substituting \eqref{eq_useful_inside} into the upper bound obtained
 in Lemma \ref{lem_bound_tree_term_partial} we have
\begin{align*}
&\sum_{n=2}^{\infty}
\|T_{n,m}^{\ge l}\|_{1,\infty}\\
&\le (1_{m=0}N_{L,h}/h+1_{m\ge
 1})2^{-3m}c_0^{-m}M^{N_{\beta}}\sum_{n=2}^{\infty}\frac{1}{n(n-1)}M^{-(l-N_{\beta})(n-1)}(2^6\alpha^{-1})^n\\
&\le (1_{m=0}N_{L,h}/h+1_{m\ge
 1})2^{-3m}c_0^{-m}M^{N_{\beta}}M^{-(l-N_{\beta})}(2^6\alpha^{-1})^2,
\end{align*}
where we used that $\sum_{n=2}^{\infty}\frac{1}{n(n-1)}=1$. This implies
 the well-definedness of $T^{\ge l}(\psi)$ and \eqref{eq_item_tree_term_norm}.

Proof of \eqref{eq_item_tree_term_sum}: It follows from Lemma
 \ref{lem_bound_tree_term_partial} and the inequalities \\
$2^{-3}\alpha M^{l-N_{\beta}}(2^{-3}\alpha M^{l-N_{\beta}}-1)^{-1}\le 2$ and
 $2^{2m}M^{-m}\le 2^2M^{-1}$ $(\forall m\in \N)$ that
\begin{align*}
&M^{-N_{\beta}}\sum_{m=1}^{N_{L,h}}\alpha^mc_0^mM^{(l-N_{\beta})(m-2)}\|T_m^{\ge
 l}\|_{1,\infty}\\
&\le
 M^{-N_{\beta}}M^{-2(l-N_{\beta})}\sum_{n=2}^{\infty}\frac{1}{n(n-1)}M^{-l(n-1)}\\
&\quad\cdot\prod_{j=1}^n\left(\sum_{m_j=1}^{N_{L,h}}2^{5m_j}c_0^{m_j}\|J_{m_j}^{\ge
 l+1}\|_{1,\infty}\right)\sum_{m=1}^{\sum_{q=1}^nm_q-n+1}2^{-3m}\alpha^mM^{(l-N_{\beta})m}\\
&\le 2\sum_{n=2}^{\infty}\frac{1}{n(n-1)}M^{-(l-N_{\beta})(n+1)}\\
&\quad\cdot\prod_{j=1}^n\left(M^{-N_{\beta}}\sum_{m_j=1}^{N_{L,h}}2^{5m_j}c_0^{m_j}\|J_{m_j}^{\ge
 l+1}\|_{1,\infty}\right)(2^{-3}\alpha
 M^{l-N_{\beta}})^{\sum_{q=1}^nm_q-n+1}\\
&=
 2^{-2}\alpha\sum_{n=2}^{\infty}\frac{1}{n(n-1)}\left(2^{3}\alpha^{-1}M^{-N_{\beta}}\sum_{m=1}^{N_{L,h}}2^{2m}\alpha^mc_0^mM^{(l-N_{\beta})(m-2)}\|J_m^{\ge
 l+1}\|_{1,\infty}\right)^n\\
&\le 2^{-2}\alpha\sum_{n=2}^{\infty}\frac{1}{n(n-1)}\left(2^{5}\alpha^{-1}M^{1-N_{\beta}}\sum_{m=1}^{N_{L,h}}\alpha^mc_0^mM^{(l+1-N_{\beta})(m-2)}\|J_m^{\ge
 l+1}\|_{1,\infty}\right)^n\\
&\le 2^{-2}\alpha
 \sum_{n=2}^{\infty}\frac{1}{n(n-1)}(2^5\alpha^{-1}M)^n\le
2^{8}\alpha^{-1}M^2.\end{align*}

Proof of \eqref{eq_item_free_term_norm_sum}: One can see from the definition of
 $F_m^{\ge l}(\psi)$ and \eqref{eq_simple_determinant_bound} that 
$$
\|F_m^{\ge l}\|_{1,\infty}\le
			   \sum_{j=m}^{N_{L,h}}\left(\begin{array}{c}j\\m\end{array}\right)^2c_0^{j-m}\|J_j^{\ge l+1}\|_{1,\infty}\le \sum_{j=m}^{N_{L,h}}2^{2j}c_0^{j-m}\|J_j^{\ge l+1}\|_{1,\infty}.
$$
Substituting this inequality and using the inequalities $\alpha
 M^{l-N_{\beta}}(\alpha M^{l-N_{\beta}}-1)^{-1}\le 2$ and
 $2^{2j}M^{-j}\le 2^6M^{-3}$ $(\forall j\in \N_{\ge 3})$ yield that
\begin{align*}
&M^{-N_{\beta}}\sum_{m=3}^{N_{L,h}}\alpha^mc_0^mM^{(l-N_{\beta})(m-2)}\|F_m^{\ge
 l}\|_{1,\infty}\\
&\le M^{-N_{\beta}}\sum_{j=3}^{N_{L,h}}\sum_{m=3}^{j}2^{2j}\alpha^mc_0^jM^{(l-N_{\beta})(m-2)}\|J_j^{\ge
 l+1}\|_{1,\infty}\\
&\le 2 M^{-N_{\beta}}\sum_{j=3}^{N_{L,h}}2^{2j}\alpha^jc_0^jM^{(l-N_{\beta})(j-2)}\|J_j^{\ge l+1}\|_{1,\infty}\\
&\le
 2^7M^{-1-N_{\beta}}\sum_{j=3}^{N_{L,h}}\alpha^jc_0^jM^{(l+1-N_{\beta})(j-2)}\|J_j^{\ge
 l+1}\|_{1,\infty}\le 2^7 M^{-1}.
\end{align*}

Proof of \eqref{eq_item_full_term_sum}: By using the inequalities that
 $2^{2m}\alpha^{-m}\le (2^2\alpha^{-1})^3$ and\\
 $M^{l+1-N_{\beta}} \le M^{(l+1-N_{\beta})(m-2)}$ $(\forall m\in \N_{\ge
 3})$,
\begin{align*}
&M^{-N_{\beta}}\sum_{m=3}^{N_{L,h}}\left(\begin{array}{c}m\\j\end{array}\right)^2c_0^m\|J_m^{\ge
 l+1}\|_{1,\infty}\le
 M^{-N_{\beta}}\sum_{m=3}^{N_{L,h}}2^{2m}c_0^m\|J_m^{\ge
 l+1}\|_{1,\infty}\\
&\le (2^2\alpha^{-1})^3M^{-(l+1-N_{\beta})}M^{-N_{\beta}}\sum_{m=3}^{N_{L,h}}\alpha^mc_0^mM^{(l+1-N_{\beta})(m-2)}\|J_m^{\ge
 l+1}\|_{1,\infty}\\
&\le (2^2\alpha^{-1})^3M^{-(l+1-N_{\beta})}.
\end{align*}
\end{proof}

Proposition \ref{prop_inductive_bound} can be proved by repeatedly
using the inequalities of Lemma \ref{lem_inductive_bound_each_tree_integration}.
\begin{proof}[Proof of Proposition \ref{prop_inductive_bound}]
The proof is made by induction on $l\in
 \{N_{\beta}+1,\cdots,N_h+1\}$. Set
 $U_{max}:=\max\{|\la_1|,|\la_{-1}|,|U_c|,|U_o|\}$. 
The bi-anti-symmetric kernel $F_2^{\ge
 N_h+1}(\cdot,\cdot)$ can be written as follows.
\begin{equation*}
\begin{split}
&F_2^{\ge
 N_h+1}((\rho_1,\bx_1,\s_1,x_1),(\rho_2,\bx_2,\s_2,x_2),(\eta_1,\by_1,\tau_1,y_1),(\eta_2,\by_2,\tau_2,y_2))\\
&=-h^31_{x_1=x_2=y_1=y_2}U_{(\la_1,\la_{-1})}((\rho_1,\bx_1,\s_1),(\rho_2,\bx_2,\s_2),(\eta_1,\by_1,\tau_1),(\eta_2,\by_2,\tau_2)),
\end{split}
\end{equation*}
where $U_{(\la_1,\la_{-1})}(\cdot,\cdot,\cdot,\cdot)$ is defined in
 \eqref{eq_original_bi_anti_symmetric}. This implies that 
\begin{equation}\label{eq_initial_free_bound}
\|F_2^{\ge
 N_h+1}\|_{1,\infty}\le \frac{3}{2}U_{max},
\end{equation}
and thus by \eqref{eq_smallness_assumption},
\begin{equation*}
M^{-N_{\beta}}\sum_{m=1}^{N_{L,h}}\alpha^mc_0^mM^{(N_h+1-N_{\beta})(m-2)}\sum_{b\in\{free,tree\}}\|J_{b,m}^{\ge
 N_h+1}\|_{1,\infty}\le M^{-N_{\beta}}\alpha^2
 c_0^2\frac{3}{2}U_{max}<1.
\end{equation*}
Hence, \eqref{eq_inductive_bound_each_scale} holds for $l=N_h+1$.

Take any $l\in \{N_{\beta}+1,\cdots,N_h\}$ and assume that
\eqref{eq_inductive_bound_each_scale} holds true for all
$\hat{l}\in\{l+1,\cdots,N_h+1\}$. Remark the following equalities.
\begin{align}
&F_1^{\ge l}(\psi)=F_1^{\ge l+1}(\psi)+\cP_1\int J_2^{\ge
 N_h+1}(\psi+\psi^0)d\mu_{\cC_l(w\be_p)}(\psi^0)\notag\\
&\quad+\cP_1\int (F_2^{\ge
 l+1}(\psi+\psi^0)-J_2^{\ge N_h+1}(\psi+\psi^0))d\mu_{\cC_l(w\be_p)}(\psi^0)+T^{\ge l+1}_1(\psi)\notag\\
&\quad+\cP_1\int T_2^{\ge
 l+1}(\psi+\psi^0)d\mu_{\cC_l(w\be_p)}(\psi^0)+\cP_1\int\sum_{j=3}^{N_{L,h}}J_j^{\ge
 l+1}(\psi+\psi^0)d\mu_{\cC_l(w\be_p)}(\psi^0),\label{eq_flow_equation_1}\\
&F_2^{\ge l}(\psi)=F_2^{\ge l+1}(\psi)+T_2^{\ge
 l+1}(\psi)+\cP_2\int\sum_{j=3}^{N_{L,h}}J_j^{\ge
 l+1}(\psi+\psi^0)d\mu_{\cC_l(w\be_p)}(\psi^0).\label{eq_flow_equation_2}
\end{align}
Since 
\begin{align*}
&\cP_1\int J_2^{\ge
 N_h+1}(\psi+\psi^0)d\mu_{\cC_l(w\be_p)}(\psi^0)\\
&=-\frac{1}{h}\sum_{(\rho,\bx,\s,x)\in
 I_{L,h}}(1_{\rho =1}U_c+1_{\rho\in \{2,3\}}U_o)\cC_l(\rho \b0 \s 0,\rho
 \b0 \s 0)(w\be_p)\opsi_{\rho\bx\s x}\psi_{\rho \bx \s x}\\
&\quad-\frac{\la_1}{h}\sum_{x\in
 [0,\beta)_h}\big(\cC_l(\hcX_10,\hcY_10)(w\be_p)\opsi_{\hcX_2x}\psi_{\hcY_2x}
-\cC_l(\hcX_10,\hcY_20)(w\be_p)\opsi_{\hcX_2x}\psi_{\hcY_1x}\\
&\quad-\cC_l(\hcX_20,\hcY_10)(w\be_p)\opsi_{\hcX_1x}\psi_{\hcY_2x}
+\cC_l(\hcX_20,\hcY_20)(w\be_p)\opsi_{\hcX_1x}\psi_{\hcY_1x}\big)\\
&\quad-\frac{\la_{-1}}{h}\sum_{x\in
 [0,\beta)_h}\big(\cC_l(\hcY_10,\hcX_10)(w\be_p)\opsi_{\hcY_2x}\psi_{\hcX_2x}-\cC_l(\hcY_10,\hcX_20)(w\be_p)\opsi_{\hcY_2x}\psi_{\hcX_1x}\\
&\quad -\cC_l(\hcY_20,\hcX_10)(w\be_p)\opsi_{\hcY_1x}\psi_{\hcX_2x}
+\cC_l(\hcY_20,\hcX_20)(w\be_p)\opsi_{\hcY_1x}\psi_{\hcX_1x}\big),
\end{align*}
the equation \eqref{eq_flow_equation_1}, coupled with
 \eqref{eq_simple_tadpole_bound}, leads to 
\begin{align}
\|F_1^{\ge l}\|_{1,\infty}\le& \|F_1^{\ge
 l+1}\|_{1,\infty}+9U_{max}c_0(M^{l-N_h}+M^{N_{\beta}-l})+2^2c_0\|F_2^{\ge l+1}-J_2^{\ge N_h+1}\|_{1,\infty}\notag\\
&+\|T_1^{\ge
 l+1}\|_{1,\infty}+2^2c_0\|T_2^{\ge
 l+1}\|_{1,\infty}+\sum_{j=3}^{N_{L,h}}j^2c_0^{j-1}\|J_j^{\ge
 l+1}\|_{1,\infty}.\label{eq_flow_inequality_1}
\end{align}
By the induction hypothesis we can apply \eqref{eq_item_tree_term_norm},
 \eqref{eq_item_full_term_sum} to derive the following from \eqref{eq_flow_equation_2}.
\begin{align}
&\|F_2^{\ge l}-J_2^{\ge N_h+1}\|_{1,\infty}\notag\\
&\le \|F_2^{\ge l+1}-J_2^{\ge
 N_h+1}\|_{1,\infty}+\|T_2^{\ge
 l+1}\|_{1,\infty}+\sum_{j=3}^{N_{L,h}}\left(\begin{array}{c} j \\
					     2\end{array}\right)^2c_0^{j-2}\|J_j^{\ge l+1}\|_{1,\infty}\notag\\
&\le \|F_2^{\ge l+1}-J_2^{\ge
 N_h+1}\|_{1,\infty}+
 2^{-6}c_0^{-2}M^{N_{\beta}}(2^6\alpha^{-1})^2M^{-(l+1-N_{\beta})}\notag\\
&\quad+c_0^{-2}M^{N_{\beta}}(2^2\alpha^{-1})^3M^{-(l+1-N_{\beta})}\notag\\
&\le
 (2^{-6}c_0^{-2}(2^6\alpha^{-1})^2+c_0^{-2}(2^2\alpha^{-1})^3)M^{N_{\beta}}\sum_{j=l}^{N_h}M^{-(j+1-N_{\beta})}\notag\\
&\le 2^{-4}c_0^{-2}(2^6\alpha^{-1})^2M^{N_{\beta}}M^{-(l+1-N_{\beta})},\label{eq_flow_bound_2}
\end{align}
where we have also used that $M(M-1)^{-1}\le 2$. Similarly we have
\begin{equation}\label{eq_flow_bound_2_previous}
\|F_2^{\ge l+1}-J_2^{\ge N_h+1}\|_{1,\infty}\le
 2^{-4}c_0^{-2}(2^6\alpha^{-1})^2M^{N_{\beta}}M^{-(l+2-N_{\beta})}.
\end{equation}
Then, by inserting \eqref{eq_item_tree_term_norm},
 \eqref{eq_item_full_term_sum} and \eqref{eq_flow_bound_2_previous} into
 \eqref{eq_flow_inequality_1},
\begin{align}
&\|F_1^{\ge l}\|_{1,\infty}\notag\\
&\le \|F_1^{\ge
 l+1}\|_{1,\infty}+9U_{max}c_0(M^{l-N_h}+M^{N_{\beta}-l})
+2^{-2}c_0^{-1}(2^6\alpha^{-1})^2M^{N_{\beta}}M^{-(l+2-N_{\beta})}\notag\\
&\quad+2^{-3}c_0^{-1}(2^6\alpha^{-1})^2M^{N_{\beta}}M^{-(l+1-N_{\beta})}
+2^{-4}c_0^{-1}(2^6\alpha^{-1})^2M^{N_{\beta}}M^{-(l+1-N_{\beta})}
\notag\\
&\quad+c_0^{-1}(2^2\alpha^{-1})^3M^{N_{\beta}}M^{-(l+1-N_{\beta})}\notag\\
&\le \|F_1^{\ge
 l+1}\|_{1,\infty}+9U_{max}c_0(M^{l-N_h}+M^{N_{\beta}-l})+2^{-1}c_0^{-1}(2^6\alpha^{-1})^2M^{N_{\beta}}M^{-(l+1-N_{\beta})}\notag\\
&\le
 9U_{max}c_0\sum_{j=l}^{N_h}(M^{j-N_h}+M^{N_{\beta}-j})+2^{-1}c_0^{-1}(2^6\alpha^{-1})^2M^{N_{\beta}}\sum_{j=l}^{N_h}M^{-(j+1-N_{\beta})}\notag\\
&\le 36
 U_{max}c_0+c_0^{-1}(2^6\alpha^{-1})^2M^{N_{\beta}}M^{-(l+1-N_{\beta})}.\label{eq_bound_free_degree_1}
\end{align}
It follows from \eqref{eq_item_tree_term_norm} and
 \eqref{eq_bound_free_degree_1} that 
\begin{align}
&M^{-N_{\beta}}\alpha c_0\sum_{b\in\{free,tree\}}\|J_{b,1}^{\ge
 l}\|_{1,\infty}\notag\\
&\le 36U_{max}\alpha
 c_0^2M^{-N_{\beta}}+2^{12}\alpha^{-1}M^{-(l+1-N_{\beta})}+2^9\alpha^{-1}M^{-(l-N_{\beta})}\notag\\
&\le  36U_{max}\alpha
 c_0^2M^{-N_{\beta}}+2^{12}\alpha^{-1}M^{-2}+2^9\alpha^{-1}M^{-1}.\label{eq_bound_degree_1}
\end{align}
Moreover, by using \eqref{eq_item_tree_term_sum}, 
 \eqref{eq_item_free_term_norm_sum}, \eqref{eq_initial_free_bound},
\eqref{eq_flow_bound_2} and 
 \eqref{eq_bound_free_degree_1}, 
\begin{align}
&M^{-N_{\beta}}\sum_{m=1}^{N_{L,h}}\alpha^mc_0^mM^{(l-N_{\beta})(m-2)}\sum_{b\in\{free,tree\}}\|J_{b,m}^{\ge
 l}\|_{1,\infty}\notag\\
&\le M^{-N_{\beta}}\alpha c_0 M^{-(l-N_{\beta})}\|F_{1}^{\ge
 l}\|_{1,\infty}+ M^{-N_{\beta}}\alpha^2 c_0^2 \|F_{2}^{\ge
 l}\|_{1,\infty}\notag\\
&\quad+M^{-N_{\beta}}\sum_{m=3}^{N_{L,h}}\alpha^mc_0^mM^{(l-N_{\beta})(m-2)}\|F_{m}^{\ge
 l}\|_{1,\infty}
+M^{-N_{\beta}}\sum_{m=1}^{N_{L,h}}\alpha^mc_0^mM^{(l-N_{\beta})(m-2)}\|T_{m}^{\ge
 l}\|_{1,\infty}\notag\\
&\le 36 U_{max}\alpha
 c_0^2M^{-l}+2^{12}\alpha^{-1}M^{2N_{\beta}-2l-1}+
2^8M^{N_{\beta}-l-1}+\frac{3}{2}U_{max}\alpha^2c_0^2M^{-N_{\beta}}
+2^7M^{-1}\notag\\
&\quad +2^8\alpha^{-1}M^2\notag\\
&\le 2U_{max}\alpha^2c_0^2M^{-N_{\beta}}+2^{12}\alpha^{-1}M^{-3}+2^8M^{-2}+2^7M^{-1}+2^8\alpha^{-1}M^2.\label{eq_bound_degree_all}
\end{align}
One can check that the right-hand sides of \eqref{eq_bound_degree_1}
 and \eqref{eq_bound_degree_all} are less than 1 under the assumption
 \eqref{eq_smallness_assumption} and conclude the proof.
\end{proof}

\subsection{An upper bound on the final integration}
\label{subsec_final_integration}
 Later in this subsection we will see that \eqref{eq_schwinger_functional}
 is equal to the multi-contour integral of $\partial J_0^{\ge
 N_{\beta}}/\partial \la_a|_{(\la_1,\la_{-1})=(0,0)}$ $(a=1,-1)$ if the coupling
 constants $U_c$, $U_o$ obey the sufficient condition for Proposition
 \ref{prop_inductive_bound} to hold.  Keeping this fact in mind, let us
 try to find an $h$-,$L$-independent upper bound on $|\partial J_0^{\ge
 N_{\beta}}/\partial \la_a|_{(\la_1,\la_{-1})=(0,0)}|$ by using the
 results obtained in the previous subsection. This will enable us to
 bound \eqref{eq_schwinger_functional}, too. We need the following
 lemma.
\begin{lemma}\label{lem_bound_derivative}
Assume \eqref{eq_smallness_assumption}. For any $b\in\{free,tree\}$,
 $m'\in \{0,\cdots,N_{L,h}\}$, $\bX^{m'},\bY^{m'}$ $\in I_{L,h}^{m'}$ and
 $l\in \{N_{\beta}+1,\cdots,N_h\}$ the following inequalities hold.
\begin{align}
&\sum_{m=0}^{N_{L,h}}\left\|\frac{\partial T_m^{\ge l}}{\partial
 J_{b,m'}^{\ge l+1}(\bX^{m'},\bY^{m'})}\right\|_{1}(Ac_0)^{m}
\le 1_{m'\ge 1}h^{-2m'}(m'!)^2(2^2Ac_0)^{m'}2^8
\alpha^{-1}M^{-(l-N_{\beta})}A\label{eq_item_derivative_T_J}\\
&\quad(\forall A\in [2^4,M^{l+1-N_{\beta}}]).\notag\\
&\sum_{d\in\{free,tree\}}\left|\frac{\partial J_{d,0}^{\ge
			 N_{\beta}}}{\partial J_{b,m'}^{\ge
			 N_{\beta}+1}(\bX^{m'},\bY^{m'})}\right|\le
h^{-2m'}(m'!)^2(2^5c_0)^{m'}.\label{eq_item_derivative_J_J_0}\\
&\sum_{m=0}^{N_{L,h}}\sum_{d\in\{free,tree\}}\left\|\frac{\partial
					     J_{d,m}^{\ge l}}{\partial 
 J_{b,m'}^{\ge l+1}(\bX^{m'},\bY^{m'})}\right\|_{1}(Ac_0)^{m}
\le h^{-2m'}(m'!)^2(2^2Ac_0)^{m'}\label{eq_item_derivative_J_J}\\
&\quad(\forall A\in [2^4,M^{l+1-N_{\beta}}]).\notag\\
&\sum_{m=0}^{N_{L,h}}\left\|\frac{\partial F_{m}^{\ge l}}{\partial 
 F_{\hm}^{\ge l+1}(\hbX^{\hm},\hbY^{\hm})}\right\|_{1}m!(Ac_0)^{m}
\le
 h^{-2\hm}(\hm!)^3((M^{l-N_h}+M^{N_{\beta}-l})c_0+Ac_0)^{\hm}\label{eq_item_derivative_F_F}\\
&\quad(\forall \hbX^{\hm},\hbY^{\hm}\in
     I_{L,h}^{\hm}\text{ with }\hbX^{\hm}\subset
 ((\hcX_1,s),(\hcX_2,s)),\hbY^{\hm}\subset
 ((\hcY_1,s),(\hcY_2,s))\notag\\
&\qquad\text{ for some }s\in[0,\beta)_h, \ \forall A\ge 0).\notag\\
&\frac{1}{\beta}\sum_{m=0}^{N_{L,h}}\left\|\frac{\partial F_{m}^{\ge
				    N_{h}+1}}{\partial
				    \la_{a}}\right\|_{1}m!(Ac_0)^m\le
2(Ac_0)^2\quad (\forall a\in
 \{1,-1\}).\label{eq_item_derivative_F_lambda}
\end{align}
\end{lemma}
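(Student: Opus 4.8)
The plan is to treat all five bounds as bookkeeping estimates built from the two combinatorial identities already in hand --- Lemma~\ref{lem_bound_tree_term_partial} and especially Lemma~\ref{lem_bound_tree_term_partial_derivative} --- together with the covariance bounds \eqref{eq_simple_decay_bound}--\eqref{eq_simple_tadpole_bound} and the inductive bounds \eqref{eq_inductive_bound_each_scale} of Proposition~\ref{prop_inductive_bound} applied at scale $l+1$. I would prove \eqref{eq_item_derivative_T_J} first, since the remaining four reduce either to it or to its ``free'' (Gaussian-contraction) analogue.

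For \eqref{eq_item_derivative_T_J} I would write $T_m^{\ge l}=\sum_{n\ge2}T_{n,m}^{\ge l}$, move the derivative inside, apply the triangle inequality for $\|\cdot\|_1$, and insert Lemma~\ref{lem_bound_tree_term_partial_derivative}. The $m$-sum is geometric: since $2^{-3}A\ge2$, the constraint $1_{\sum_jm_j+m'-n+1\ge m}$ lets me replace $\sum_{m=0}^{K}(2^{-3}A)^m$ by $2(2^{-3}A)^{K}$, which distributes one factor $(2^{-3}A)^{m_j}$ onto each internal vertex and factors $(2^2Ac_0)^{m'}$ out of the external kernel. The $n$-sum is then geometric with ratio $r=M^{-l}2^3A^{-1}S$, where $S=\sum_m(2^2Ac_0)^m\|J_m^{\ge l+1}\|_{1,\infty}$. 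The main obstacle is that bounding $S$ with only the second line of \eqref{eq_inductive_bound_each_scale} gives $S\lesssim A\alpha^{-1}M^{l+1}$, hence an $r$ independent of $A$, which cannot reproduce the $A$-linear right-hand side $2^8\alpha^{-1}M^{-(l-N_\beta)}A$. The remedy is to split $S$: the $m=1$ term is bounded by the sharper first line of \eqref{eq_inductive_bound_each_scale}, giving $c_0\|J_1^{\ge l+1}\|_{1,\infty}<M^{N_\beta}\alpha^{-1}$, and the $m\ge2$ terms by the second line, yielding $S<2^2A\alpha^{-1}M^{N_\beta}(1+2^3A\alpha^{-1})$ and thus $r<2^5\alpha^{-1}M^{N_\beta-l}+2^8A\alpha^{-2}M^{N_\beta-l}$. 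Summing $\sum_{k\ge1}r^k\le2r$ and invoking $A\ge2^4$, $A\le M^{l+1-N_\beta}$, $\alpha\ge2^{10}M^2$, the elementary inequality $2^7+2^{10}A\alpha^{-1}\le2^8A$ closes the estimate.

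For \eqref{eq_item_derivative_J_J_0} and \eqref{eq_item_derivative_J_J} I would split $J_{d,m}=F_m+T_m$. The tree part is controlled by the $m=0$, $l=N_\beta$ case of Lemma~\ref{lem_bound_tree_term_partial_derivative} with \eqref{eq_useful_inside} for \eqref{eq_item_derivative_J_J_0}, and by \eqref{eq_item_derivative_T_J} for \eqref{eq_item_derivative_J_J}; in both it carries an extra small factor $O(\alpha^{-1})$. The free part uses that $F^{\ge l}(\psi)=\int J^{\ge l+1}(\psi+\psi^0)d\mu_{\cC_l(w\be_p)}(\psi^0)$ is linear in the kernels of $J^{\ge l+1}$, so $\partial F_m^{\ge l}/\partial J_{b,m'}^{\ge l+1}(\bX^{m'},\bY^{m'})$ keeps $m$ of the $m'$ external legs and contracts the other $m'-m$ pairs through $\cC_l(w\be_p)$, the contraction determinant being bounded by $c_0^{\,m'-m}$ via \eqref{eq_simple_determinant_bound}. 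Then $\sum_m\binom{m'}{m}^2A^mc_0^{m'}\le(2^2Ac_0)^{m'}$ (using $\binom{2m'}{m'}\le4^{m'}$ and $A\ge1$) reproduces the target, the slack in $\binom{2m'}{m'}<4^{m'}$ absorbing the tiny tree part; the case $m'=0$ of \eqref{eq_item_derivative_J_J_0} is separate, where $\partial F_0^{\ge N_\beta}/\partial J_{b,0}^{\ge N_\beta+1}=1$ and the tree derivative vanishes.

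For \eqref{eq_item_derivative_F_F} the structural point is that all external indices of $F_{\hm}^{\ge l+1}(\hbX^{\hm},\hbY^{\hm})$ lie among the four insertion points at the common time $s$, so every covariance entry appearing in a contraction is of the equal-time form $\cC_l(\rho\hbx\s0,\eta\hby\tau0)(w\be_p)$ and is therefore bounded, regardless of spatial separation, by the tadpole bound \eqref{eq_simple_tadpole_bound}, i.e.\ by $c_0(M^{l-N_h}+M^{N_\beta-l})$. Hence the contraction determinant for $\partial F_m^{\ge l}/\partial F_{\hm}^{\ge l+1}(\hbX^{\hm},\hbY^{\hm})$, which keeps $m$ legs and contracts $\hm-m$ pairs, is bounded via Leibniz' expansion by $(\hm-m)!\,(c_0(M^{l-N_h}+M^{N_\beta-l}))^{\hm-m}$; combining the weight $(Ac_0)^m$ on the kept legs with the count $\binom{\hm}{m}$ reconstitutes the binomial expansion of $((M^{l-N_h}+M^{N_\beta-l})c_0+Ac_0)^{\hm}$, the leftover factorials collecting into $(\hm!)^3$. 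Finally \eqref{eq_item_derivative_F_lambda} is a direct computation: since $F^{\ge N_h+1}=V_{(\la_1,\la_{-1})}$ is quadratic only $m=2$ survives, and the explicit form \eqref{eq_original_bi_anti_symmetric} of the $\la_a$-part gives $\frac1\beta\|\partial F_2^{\ge N_h+1}/\partial\la_a\|_1\le1$, whence the weighted sum equals $\frac1\beta\|\partial F_2^{\ge N_h+1}/\partial\la_a\|_1\,2!\,(Ac_0)^2\le2(Ac_0)^2$. I expect \eqref{eq_item_derivative_T_J} (the $A$-linear matching) and \eqref{eq_item_derivative_F_F} (the equal-time reduction to the tadpole bound) to be the two steps requiring genuine care; once these mechanisms are set up the rest is routine.
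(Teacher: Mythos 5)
Your proposal is correct and follows essentially the same route as the paper: \eqref{eq_item_derivative_T_J} via Lemma \ref{lem_bound_tree_term_partial_derivative} with the geometric $m$-sum and the crucial $m_j=1$ versus $m_j\ge 2$ split of \eqref{eq_inductive_bound_each_scale} (the paper's \eqref{eq_sum_inside_input}), the free-part derivatives via linearity of the Gaussian integral and \eqref{eq_simple_determinant_bound} with the $\binom{2m'}{m'}\le 2^{2m'}$ slack absorbing the tree contribution, the equal-time/tadpole-plus-Leibniz mechanism for \eqref{eq_item_derivative_F_F}, and the direct computation for \eqref{eq_item_derivative_F_lambda}. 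The only deviations are harmless bookkeeping choices in closing the geometric sum over $n$.
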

\begin{proof}
Proof of \eqref{eq_item_derivative_T_J}: By Lemma
 \ref{lem_bound_tree_term_partial_derivative} and the inequality that
 $2^{-3}A(2^{-3}A-1)^{-1}\le 2$,
\begin{align}
&\sum_{m=0}^{N_{L,h}}\left\|\frac{\partial T_m^{\ge l}}{\partial
 J_{b,m'}^{\ge l+1}(\bX^{m'},\bY^{m'})}\right\|_{1}(Ac_0)^{m}\notag\\
&\le 1_{m'\ge
 1}h^{-2m'}(m'!)^2(2^5c_0)^{m'}\sum_{n=2}^{\infty}M^{-l(n-1)}\prod_{j=1}^{n-1}\left(\sum_{m_j=1}^{N_{L,h}}2^{5m_j}c_0^{m_j}\|J_{m_j}^{\ge
 l+1}\|_{1,\infty}\right)\notag\\
&\quad\cdot\sum_{m=0}^{\sum_{j=1}^{n-1}m_j+m'-n+1}(2^{-3}A)^m\notag\\
&\le 1_{m'\ge
 1}h^{-2m'}(m'!)^22(2^2Ac_0)^{m'}\sum_{n=2}^{\infty}
\left(2^{3}A^{-1}M^{-l}\sum_{m=1}^{N_{L,h}}2^{2m}A^m c_0^m\|J_m^{\ge
 l+1}\|_{1,\infty}\right)^{n-1}.\label{eq_tree_sum_general}
\end{align}
By Proposition \ref{prop_inductive_bound} and the assumption that $A\le M^{l+1-N_{\beta}}$ we have
\begin{align}
&2^3A^{-1}M^{-l}\sum_{m=1}^{N_{L,h}}2^{2m}A^mc_0^m\|J_m^{\ge
 l+1}\|_{1,\infty}\notag\\
&=2^3A^{-1}M^{-(l-N_{\beta})}\left(2^2M^{-N_{\beta}}Ac_0\|J_1^{\ge
 l+1}\|_{1,\infty}+A^2M^{-N_{\beta}}\sum_{m=2}^{N_{L,h}}2^{2m}c_0^mA^{m-2}\|J_m^{\ge
 l+1}\|_{1,\infty}\right)\notag\\
&\le 2^3A^{-1}M^{-(l-N_{\beta})}(2^2A\alpha^{-1}+2^4A^2\alpha^{-2})=M^{-(l-N_{\beta})}(1+2^2A\alpha^{-1})2^5\alpha^{-1}.\label{eq_sum_inside_input}
\end{align}
By giving \eqref{eq_sum_inside_input} back to
 \eqref{eq_tree_sum_general} and remarking that
 $M^{-(l-N_{\beta})}(1+2^2A\alpha^{-1})\le 1$ and $1+2^2A\alpha^{-1}\le
 2A$, we obtain 
\begin{align*}
&\sum_{m=0}^{N_{L,h}}\left\|\frac{\partial T_m^{\ge l}}{\partial
 J_{b,m'}^{\ge l+1}(\bX^{m'},\bY^{m'})}\right\|_{1}(Ac_0)^{m}\notag\\
&\le 1_{m'\ge 1}h^{-2m'}(m'!)^2(2^2Ac_0)^{m'}M^{-(l-N_{\beta})}2^2A\sum_{n=2}^{\infty}(2^{5}\alpha^{-1})^{n-1},
\end{align*}
which gives the bound \eqref{eq_item_derivative_T_J}.

Proof of \eqref{eq_item_derivative_J_J_0}: By Lemma
 \ref{lem_bound_tree_term_partial_derivative} and
 \eqref{eq_useful_inside},
\begin{align}
\left|\frac{\partial T_{0}^{\ge
			 N_{\beta}}}{\partial J_{b,m'}^{\ge
			 N_{\beta}+1}(\bX^{m'},\bY^{m'})}\right|&\le
1_{m'\ge 1}h^{-2m'}(m'!)^2(2^5c_0)^{m'}\sum_{n=2}^{\infty}(2^6\alpha^{-1})^{n-1}\notag\\
&\le 1_{m'\ge 1}h^{-2m'}(m'!)^2(2^5c_0)^{m'}2^7\alpha^{-1}.\label{eq_tree_bound_initial}
\end{align}

Let us characterize the derivative of $F_m^{\ge l}(\psi)$ with respect to
 $J_{b,m'}^{\ge l+1}(\bX^{m'},\bY^{m'})$, as it will be useful in the rest of the proof of \eqref{eq_item_derivative_J_J_0} as well as
 in the proofs of \eqref{eq_item_derivative_J_J},
 \eqref{eq_item_derivative_F_F}. If $m\le m'$,
\begin{align*}
&\frac{\partial F_{m}^{\ge
			 l}(\psi)}{\partial J_{b,m'}^{\ge
			l+1}(\bX^{m'},\bY^{m'})}=h^{-2m}\sum_{\bX^m,\bY^m\in I_{L,h}^m}\Big(h^{2m-2m'}(m'!)^2
1_{\bX^m\subset \bX^{m'}}1_{\bY^m\subset \bY^{m'}}\\
&\quad\cdot\eps_{\bX^m,\bX^{m'},\bY^m,\bY^{m'}}\int
 (\opsi^0)_{\bX^{m'}\backslash \bX^m}(\psi^0)_{\bY^{m'}\backslash
 \bY^m}d\mu_{\cC_l(w\be_p)}(\psi^0)\Big)(\opsi)_{\bX^m}(\psi)_{\bY^m},
\end{align*}
where the factor $\eps_{\bX^m,\bX^{m'},\bY^m,\bY^{m'}}\in\{1,-1\}$
 depends only on $\bX^m,\bX^{m'},\bY^m,\bY^{m'}$. This equality and
 \eqref{eq_anti_symmetric_estimate} imply that
\begin{align}
&\left\|\frac{\partial F_{m}^{\ge
			 l}}{\partial J_{b,m'}^{\ge
			l+1}(\bX^{m'},\bY^{m'})}\right\|_{1}
\notag\\
&\le h^{-2m'}(m'!)^2\sum_{\bX^m\in I_{L,h}^m\atop \bX^m\subset
 \bX^{m'}}\sum_{\bY^m\in I_{L,h}^m\atop \bY^m\subset
 \bY^{m'}}\left|\int
 (\opsi^0)_{\bX^{m'}\backslash \bX^m}(\psi^0)_{\bY^{m'}\backslash
 \bY^m}d\mu_{\cC_l(w\be_p)}(\psi^0)\right|.\label{eq_free_derivative_preparation}
\end{align}

By using \eqref{eq_simple_determinant_bound} one can derive from \eqref{eq_free_derivative_preparation} that
$$
\left|\frac{\partial F_{0}^{\ge
			 N_{\beta}}}{\partial J_{b,m'}^{\ge
			 N_{\beta}+1}(\bX^{m'},\bY^{m'})}\right|\le h^{-2m'}(m'!)^2c_0^{m'},
$$
which, coupled with \eqref{eq_tree_bound_initial}, yields the bound
 \eqref{eq_item_derivative_J_J_0}.

Proof of \eqref{eq_item_derivative_J_J}: By using
 \eqref{eq_free_derivative_preparation} and the inequality that
$$\sum_{m=0}^{m'}\left(\begin{array}{c}m'\\m\end{array}\right)^2\le
 (1_{m'=0}+1_{m'\ge 1}2^{-1})2^{2m'},$$
\begin{align}
&\sum_{m=0}^{N_{L,h}}\left\|\frac{\partial F_{m}^{\ge
			 l}}{\partial J_{b,m'}^{\ge
			 l+1}(\bX^{m'},\bY^{m'})}\right\|_{1}(Ac_0)^m\le
				     h^{-2m'}(m'!)^2\sum_{m=0}^{m'}\left(\begin{array}{c}m'\\m\end{array}\right)^2c_0^{m'-m}(Ac_0)^m\notag\\
&\le
						h^{-2m'}(m'!)^2(Ac_0)^{m'}\sum_{m=0}^{m'}\left(\begin{array}{c}m'\\m\end{array}\right)^2\le (1_{m'=0}+1_{m'\ge 1}2^{-1})h^{-2m'}(m'!)^2(2^2Ac_0)^{m'}.\label{eq_free_sum_general}
\end{align}
By combining \eqref{eq_free_sum_general} with \eqref{eq_item_derivative_T_J} one can obtain \eqref{eq_item_derivative_J_J}.

Proof of \eqref{eq_item_derivative_F_F}: By applying \eqref{eq_simple_tadpole_bound} to \eqref{eq_free_derivative_preparation} we have 
\begin{align*}
&\sum_{m=0}^{N_{L,h}}\left\|\frac{\partial F_{m}^{\ge
			 l}}{\partial F_{\hm}^{\ge
			 l+1}(\hbX^{\hm},\hbY^{\hm})}\right\|_{1}m!(Ac_0)^m\\
&\le
				     h^{-2\hm}(\hm
			   !)^2\sum_{m=0}^{\hm}\left(\begin{array}{c}\hm\\m\end{array}\right)^2(\hm-m)!m!((M^{l-N_h}+M^{N_{\beta}-l})c_0)^{\hm-m}(Ac_0)^m\\
&= h^{-2 \hm}(\hm!)^3 ((M^{l-N_h}+M^{N_{\beta}-l})c_0+Ac_0)^{\hm}.
\end{align*}

Proof of \eqref{eq_item_derivative_F_lambda}: The claimed inequality
 follows from \eqref{eq_anti_symmetric_estimate} and the equality 
\begin{equation*}
\frac{\partial F_m^{\ge N_h+1}(\psi)}{\partial
 \la_a}=-1_{m=2}\frac{1}{h}\sum_{x\in[0,\beta)_h}(1_{a=1}\opsi_{\hcX_1x}\opsi_{\hcX_2x}\psi_{\hcY_2x}\psi_{\hcY_1x}+1_{a=-1}\opsi_{\hcY_1x}\opsi_{\hcY_2x}\psi_{\hcX_2x}\psi_{\hcX_1x}).
\end{equation*}
\end{proof}

\begin{corollary}\label{cor_analyticity_variables}
Assume \eqref{eq_smallness_assumption}. Take any
 $l\in\{N_{\beta},\cdots,N_h\}$, $d\in \{free,tree\}$, $m\in
 \{0,1,\cdots,N_{L,h}\}$ and $\bX^m,\bY^m\in I_{L,h}^m$. Moreover, assume that
$m=0$ if $l=N_{\beta}$. The following statements hold true.
\begin{enumerate}[(i)]
\item\label{item_independent_variables} $J_{d,m}^{\ge l}(\bX^m,\bY^m)$
     is analytic with respect to the independent variables\\
     $J_{b,m'}^{\ge l+1}(\bX^{m'},\bY^{m'})$ $(b\in\{free,tree\},m'\in
     \{0,\cdots,N_{L,h}\},\bX^{m'},\bY^{m'}\in (I_{L,h})_o^{m'})$ in
     the domain characterized by \eqref{eq_inductive_bound_each_scale}
     for $l+1$.
\item\label{item_complex_variables} The function $(\la_1,\la_{-1},U_c,U_o,w)\mapsto J_{d,m}^{\ge
     l}(\bX^m,\bY^m)$
is analytic in the domain
\begin{equation}\label{eq_enlarged_domain}
\{(\la_1,\la_{-1},U_c,U_o,w)\in\C^5\ |\
 |\la_1|,|\la_{-1}|,|U_c|,|U_o|<2^{-4}\alpha^{-2}c_0^{-2}M^{N_{\beta}},w\in D_R^i\}.
\end{equation}
\end{enumerate}
\end{corollary}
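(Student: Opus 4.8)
The plan is to exploit the recursive structure $J^{\ge l}=F^{\ge l}+T^{\ge l}$: I would prove \eqref{item_independent_variables} as a statement about finitely many complex variables with $w$ held fixed, and then bootstrap to \eqref{item_complex_variables} by a downward induction on $l$ from $N_h+1$ to $N_{\beta}$, feeding \eqref{item_independent_variables} in at each step.

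For \eqref{item_independent_variables}, with $w$ fixed every coefficient $\cC_l(X,Y)(w\be_p)$ is a constant, so by the definition $F^{\ge l}(\psi)=\int J^{\ge l+1}(\psi+\psi^0)d\mu_{\cC_l(w\be_p)}(\psi^0)$ the kernel $J_{free,m}^{\ge l}(\bX^m,\bY^m)=F_m^{\ge l}(\bX^m,\bY^m)$ is a linear combination of the independent variables $J_{b,m'}^{\ge l+1}(\bX^{m'},\bY^{m'})$, hence a polynomial of degree one and entire. Likewise each $T_{n,m}^{\ge l}(\bX^m,\bY^m)$ is a polynomial of degree $n$ in these variables. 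The only genuine issue is convergence of the tree series $T_m^{\ge l}=\sum_{n\ge 2}T_{n,m}^{\ge l}$. Here I would invoke Lemma \ref{lem_bound_tree_term_partial} together with the estimate \eqref{eq_useful_inside}, which gives $M^{-l}\sum_{m'=1}^{N_{L,h}}2^{5m'}c_0^{m'}\|J_{m'}^{\ge l+1}\|_{1,\infty}\le 2^6\alpha^{-1}M^{N_{\beta}-l}\le 2^{-4}$ on the domain \eqref{eq_inductive_bound_each_scale} for $l+1$; since $\|T_{n,m}^{\ge l}\|_{1,\infty}$ carries the factor $M^{-l(n-1)}\prod_{j=1}^n(\sum_{m_j}2^{5m_j}c_0^{m_j}\|J_{m_j}^{\ge l+1}\|_{1,\infty})$, the $n$-sum is dominated by a geometric series whose ratio stays below some $q<1$ uniformly on compacta of that open domain. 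As $|T_{n,m}^{\ge l}(\bX^m,\bY^m)|\le h^{2m-1}\|T_{n,m}^{\ge l}\|_{1,\infty}$ and $h$ is fixed, the kernel series converges locally uniformly, and Weierstrass' convergence theorem yields analyticity of $T_m^{\ge l}=J_{tree,m}^{\ge l}$.

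For \eqref{item_complex_variables} the base case $l=N_h+1$ is immediate, since the kernels of $V_{(\la_1,\la_{-1})}$ are affine in $\la_1,\la_{-1},U_c,U_o$ and independent of $w$. For the inductive step I first note that each $\cC_l(X,Y)(w\be_p)$ is analytic in $w$ on $D_R^i$: indeed $D_R$ imposes $|\Im w|\le\cF_{t,\beta}(9/\pi^2)<\cF_{t,\beta}(\eps)$ because $\cF_{t,\beta}$ is increasing, so Lemma \ref{lem_covariance_properties_0}\eqref{item_covariance_analyticity_0} and Lemma \ref{lem_covariance_properties_l}\eqref{item_covariance_analyticity_l} apply. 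By the induction hypothesis each $J_{b,m'}^{\ge l+1}(\bX^{m'},\bY^{m'})$ is analytic in $(\la_1,\la_{-1},U_c,U_o,w)$ on \eqref{eq_enlarged_domain}. Crucially, the constraint defining \eqref{eq_enlarged_domain} is exactly the smallness hypothesis \eqref{eq_smallness_assumption}, so Proposition \ref{prop_inductive_bound} guarantees that \eqref{eq_inductive_bound_each_scale} for $l+1$ holds throughout \eqref{eq_enlarged_domain}; hence the values of the independent variables lie, with room to spare, inside the region where \eqref{item_independent_variables} holds. Since each $F_m^{\ge l}(\bX^m,\bY^m)$ and $T_{n,m}^{\ge l}(\bX^m,\bY^m)$ is a polynomial --- jointly entire --- in the finitely many arguments $\{\cC_l(X,Y)(w\be_p)\}$ and $\{J_{b,m'}^{\ge l+1}(\bX^{m'},\bY^{m'})\}$, composing with the analytic maps just described gives functions of $(\la_1,\la_{-1},U_c,U_o,w)$ analytic on \eqref{eq_enlarged_domain}, and the same geometric bound makes the $n$-sum converge locally uniformly. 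A second application of Weierstrass' theorem then delivers analyticity of $J_{d,m}^{\ge l}(\bX^m,\bY^m)$. The restriction to $m=0$ when $l=N_{\beta}$ is harmless, as \eqref{eq_simple_decay_bound} and \eqref{eq_simple_determinant_bound} cover $l=N_{\beta}$ and only that single kernel is needed there.

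The step I expect to demand the most care is the passage from the $\|\cdot\|_{1,\infty}$ estimates of Lemma \ref{lem_bound_tree_term_partial} to genuine local uniform convergence of the kernel-valued series: one must verify that the geometric ratio is strictly below $1$ uniformly on compact subsets and translate the norm bound into control of each individual kernel value (using finiteness of $h$). Once this is secured, both claims follow mechanically from stability of analyticity under finite sums, products, compositions, and locally uniform limits.
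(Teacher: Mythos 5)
Your proposal is correct, and for part \eqref{item_complex_variables} it follows essentially the paper's own route: a downward induction on the scale, using analyticity of $w\mapsto\cC_l(X,Y)(w\be_p)$ on $D_R^i$, the fact that the kernels of $F^{\ge l}$ and $T_{n}^{\ge l}$ are polynomials in the covariance entries and the scale-$(l+1)$ kernels, and uniform convergence of the tree series over $D_R$ (the paper cites the proof of \eqref{eq_item_tree_term_norm} for exactly this step). The genuine divergence is in part \eqref{item_independent_variables}: the paper deduces it from the derivative bounds \eqref{eq_item_derivative_J_J_0} and \eqref{eq_item_derivative_J_J} of Lemma \ref{lem_bound_derivative} (which rest on Lemma \ref{lem_bound_tree_term_partial_derivative}), whereas you argue directly that $F_m^{\ge l}$ is affine and each $T_{n,m}^{\ge l}$ is a degree-$n$ polynomial in the independent variables, and then obtain analyticity of the sum from Lemma \ref{lem_bound_tree_term_partial} combined with \eqref{eq_useful_inside} (giving the geometric ratio $2^6\alpha^{-1}M^{N_\beta-l}\le 2^{-4}$, which is in fact uniform on the whole domain, not merely on compacta) and Weierstrass' theorem; your pointwise bridge $|T_{n,m}^{\ge l}(\bX^m,\bY^m)|\le h^{2m-1}\|T_{n,m}^{\ge l}\|_{1,\infty}$ is valid, since the sums defining $\|\cdot\|_{1,\infty}$ dominate each single term. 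Your route is more elementary and self-contained --- it needs no derivative machinery for analyticity --- while the paper's choice reuses bounds it must establish anyway for the chain-rule estimate in Proposition \ref{prop_bound_schwinger}; moreover your uniform-convergence argument is the same technique the paper itself deploys for the $w$-analyticity in \eqref{item_complex_variables}, so nothing new needs to be proved. Two minor remarks: merging the coupling-constant and $w$ analyticity into a single five-variable induction is a harmless reorganization of the paper's two-step argument (couplings via \eqref{item_independent_variables} plus composition down the scales, then $w$ by a separate induction); and your closing justification of the restriction to $m=0$ at $l=N_\beta$ slightly misattributes its origin --- the restriction reflects that \eqref{eq_item_derivative_J_J} is stated only for $l\in\{N_\beta+1,\cdots,N_h\}$, which is what limits the paper's proof, whereas your argument (via Lemma \ref{lem_bound_tree_term_partial}, valid down to $l=N_\beta$) actually covers all $m$ at $l=N_\beta$; so the restriction is indeed harmless for you, just not for the reason you give.
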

\begin{remark}
Since the inequality \eqref{eq_inductive_bound_each_scale} for $l+1$ is
 independent of $J_{b,0}^{\ge l+1}$ $(b\in\{free,tree\})$, the claim
 \eqref{item_independent_variables} implies that $J_{d,m}^{\ge
 l}(\bX^m,\bY^m)$ is entirely analytic with respect to the variables
 $J_{b,0}^{\ge l+1}$ $(b\in\{free,tree\})$.
\end{remark}

\begin{proof}[Proof of Corollary \ref{cor_analyticity_variables}]
The inequalities \eqref{eq_item_derivative_J_J_0},
 \eqref{eq_item_derivative_J_J} imply the claim
 \eqref{item_independent_variables}. It is trivial that
 $(\la_1,\la_{-1},U_c,U_o,w)\mapsto J_{b,m'}^{\ge
 N_h+1}(\bX^{m'},\bY^{m'})$ is analytic in \eqref{eq_enlarged_domain}
 for all $b\in \{free,tree\}$, $m'\in \{0,\cdots,N_{L,h}\}$,
 $\bX^{m'},\bY^{m'}\in I_{L,h}^{m'}$. Then the analyticity of\\
 $J_{d,m}^{\ge l}(\bX^m,\bY^m)$ with respect to
 $(\la_1,\la_{-1},U_c,U_o)$ follows from the claim
 \eqref{item_independent_variables}, Proposition
 \ref{prop_inductive_bound} and the analyticity of composition of
 analytic functions. Assume that for some $l'\in\{l,\cdots,N_h\}$,
 $J^{\ge l'+1}_{b,m'}(\bX^{m'},\bY^{m'})$ $(b\in\{free,tree\},m'\in
 \{0,\cdots,N_{L,h}\},\bX^{m'},\bY^{m'}\in I_{L,h}^{m'})$ are analytic
 with respect to $w\in D_R^i$. For any $q\in\{0,\cdots,N_{L,h}\}$,
 $\bX^q,\bY^q\in I_{L,h}^q$ the analyticity of $F_q^{\ge
 l'}(\bX^{q},\bY^{q})$, $T_{n,q}^{\ge
 l'}(\bX^{q},\bY^{q})$ $(n\in\N_{\ge 2})$ is clear since these consist
 of finite sums and products of $J_{m'}^{\ge l'+1}(\bX^{m'},\bY^{m'})$
 and $\cC_{l'}(w\be_p)$, which are analytic in $D_R^i$. Moreover, the
 proof of the inequality \eqref{eq_item_tree_term_norm} shows that $\sum_{n=2}^jT_{n,q}^{\ge
 l'}(\bX^{q},\bY^{q})$ converges to $T_{q}^{\ge
 l'}(\bX^{q},\bY^{q})$ uniformly with respect to $w\in D_R$ as $j\to
 \infty$. This implies that $T_q^{\ge l'}(\bX^q,\bY^q)$ is analytic in
 $D_R^i$. Thus, the induction concludes that
 $w\mapsto J_{d,m}^{\ge l}(\bX^m,\bY^m)$ is analytic in $D_R^i$.
\end{proof}
 
\begin{proposition}\label{prop_bound_schwinger}
Assume \eqref{eq_smallness_assumption}.
The following inequality holds for any $a\in\{1,-1\}$ and
 $(\la_1,\la_{-1},U_c,U_o,w)$ contained in the
 domain \eqref{eq_enlarged_domain}.
$$
\frac{1}{\beta}\left|\frac{\partial J_0^{\ge N_{\beta}}}{\partial
 \la_a}\right|\le 2^{12}c_0^2.
$$
\end{proposition}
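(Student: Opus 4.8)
The plan is to compute $\partial J_0^{\ge N_\beta}/\partial\la_a$ by the chain rule down the scale hierarchy and to exploit that $\la_a$ enters the construction \emph{only} at the top scale, through $F^{\ge N_h+1}=V_{(\la_1,\la_{-1})}$. By \eqref{eq_item_derivative_F_lambda} the source $\partial F_m^{\ge N_h+1}/\partial\la_a$ is the bare quartic vertex, nonzero only at $m=2$, supported on the external points $(\hcX_1,\hcX_2),(\hcY_1,\hcY_2)$ at equal time, with $\|\partial F_2^{\ge N_h+1}/\partial\la_a\|_1\le\beta$. First I would peel off the final integration with \eqref{eq_item_derivative_J_J_0}, which after resumming over the ordered kernels via \eqref{eq_norm_ordered_base} gives $\frac1\beta|\partial J_0^{\ge N_\beta}/\partial\la_a|\le\frac1\beta\sum_{b,m'}(2^5c_0)^{m'}\|\partial J_{b,m'}^{\ge N_\beta+1}/\partial\la_a\|_1$. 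Thus everything reduces to an $h$-,$L$-independent bound on the weighted norm $\sum_{b,m}(Ac_0)^m\|\partial J_{b,m}^{\ge l}/\partial\la_a\|_1$ of the $\la_a$-sensitivity at each scale $l$, with $A$ of order $2^5$. Analyticity on the enlarged domain \eqref{eq_enlarged_domain}, already granted by Corollary \ref{cor_analyticity_variables}, makes the differentiations legitimate.

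Next I would set up a downward induction on $l$ for this weighted norm, propagating the sensitivity one slice at a time via $\partial J^{\ge l}/\partial\la_a=\sum_{b',m'}(\partial J^{\ge l}/\partial J^{\ge l+1}_{b',m'})(\partial J^{\ge l+1}_{b',m'}/\partial\la_a)$ and the bounds of Lemma \ref{lem_bound_derivative}. The decisive point is that the degree-$2$ part of the sensitivity stays supported on the external points and is carried from scale to scale \emph{through the free diagonal}: in \eqref{eq_item_derivative_F_F} the degree-preserving contribution appears as the term $Ac_0$ with coefficient one, i.e.\ essentially the identity, so it transmits the bare vertex without loss, while every genuine contraction that lowers the degree carries instead the small tadpole weight $(M^{l-N_h}+M^{N_\beta-l})c_0$ coming from \eqref{eq_simple_tadpole_bound}. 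The tree excursions and the support-spreading they cause are controlled by \eqref{eq_item_derivative_T_J}, \eqref{eq_item_derivative_J_J}, whose prefactor $2^8\alpha^{-1}M^{-(l-N_\beta)}A$ is $\lesssim M^{-1}$ under \eqref{eq_smallness_assumption} (using $\alpha\ge2^{10}M^2$ and $A\le M^{l+1-N_\beta}$). Consequently the degree-$2$ sensitivity remains bounded by a fixed multiple of $\beta$ uniformly in $l$, the degree-$\ge3$ parts stay $O(\alpha^{-1})$, and the induction closes.

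Finally I would assemble the pieces at the bottom scale. The dominant contribution to \eqref{eq_item_derivative_J_J_0} is the degree-$2$ input: it supplies $(2^5c_0)^2$ from contracting both pairs against $\cC_{N_\beta}$ (bounded by $c_0$ through \eqref{eq_simple_determinant_bound}) and is weighted against $\sum_b\|\partial J^{\ge N_\beta+1}_{b,2}/\partial\la_a\|_1$, which the induction keeps $\le C\beta$; this already produces a term of order $2^{10}c_0^2$. The degree-$0$ and degree-$1$ inputs, generated along the way by the geometrically summable tadpole contractions $\sum_{l}(M^{l-N_h}+M^{N_\beta-l})\le 2+2M^{-1}$, add only further $O(c_0^2)$ contributions, as do the $\alpha^{-1}$-suppressed tree corrections. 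The factor $\beta$ cancels the explicit $1/\beta$, and collecting the numerical constants yields $\frac1\beta|\partial J_0^{\ge N_\beta}/\partial\la_a|\le 2^{12}c_0^2$.

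The main obstacle is precisely the scale accumulation: since $N_h-N_\beta\to\infty$ as $h\to\infty$, any argument that loses a constant factor strictly larger than one per scale is fatal, and the crude lumped estimate \eqref{eq_item_derivative_J_J} alone would cost a factor $2^{2}$ per slice and blow up. The heart of the proof is therefore to show that the degree-$2$ sensitivity is transported without loss by the free identity diagonal of \eqref{eq_item_derivative_F_F}, so that only the finitely many actual contractions—each a geometrically summable tadpole—and the uniformly $\alpha^{-1}$-suppressed tree excursions contribute, thereby keeping the whole estimate independent of $h$ and $L$.
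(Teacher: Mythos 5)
Your proposal is correct and takes essentially the same route as the paper: the paper's proof is precisely your chain-rule argument built on the five inequalities of Lemma \ref{lem_bound_derivative}, except that instead of a downward induction it expands the chain-rule sum \eqref{eq_chain_rule} and partitions it by the lowest scale $\hat{l}$ at which a tree derivative occurs, bounding the purely free chains by \eqref{eq_item_derivative_J_J_0}, \eqref{eq_item_derivative_F_F}, \eqref{eq_item_derivative_F_lambda} as in \eqref{eq_free_chain_rule} and the chains containing a tree derivative as in \eqref{eq_free_tree_chain_rule}. The compensation mechanism you single out is exactly the paper's — the free diagonal of \eqref{eq_item_derivative_F_F} transports the external-point-supported quartic sensitivity with only additively accumulating, geometrically summable tadpole corrections, while the tree factor $2^8\alpha^{-1}M^{-(\hat{l}-1-N_{\beta})}A$ from \eqref{eq_item_derivative_T_J} beats the $2^2$-per-scale cost of \eqref{eq_item_derivative_J_J} below it — with the one bookkeeping caveat (which you yourself flag) that the induction closes only with the scale-dependent weights $2^{2(\hat{l}-2-N_{\beta})}2^5c_0$ appearing in \eqref{eq_free_tree_chain_rule}, not with a fixed weight $A\sim 2^5$ at every scale.
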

\begin{proof}
Let us assume that $a=1$. The proof for $a=-1$ is essentially the
 same. By Corollary \ref{cor_analyticity_variables} we can apply the
 chain rule to derive the following.
\begin{align}
\frac{1}{\beta}\left|\frac{\partial J_0^{\ge N_{\beta}}}{\partial
 \la_1}\right|&\le
 \frac{1}{\beta}\prod_{l=N_{\beta}}^{N_{h}+1}\left(\sum_{m_l=0}^{N_{L,h}}\sum_{b_l\in\{free,tree\}}\sum_{\bX^{m_l},\bY^{m_l}\in
 (I_{L,h})_o^{m_l}}\right)1_{m_{N_{\beta}}=0}\notag\\
&\quad\cdot\left|\frac{\partial J^{\ge
 N_h+1}_{b_{N_h+1},m_{N_h+1}}(\bX^{m_{N_h+1}},\bY^{m_{N_h+1}})}{\partial
 \la_1}\right|\prod_{j=N_{\beta}}^{N_h}\left|\frac{\partial
 J_{b_j,m_j}^{\ge j}(\bX^{m_j},\bY^{m_j})}{\partial J^{\ge
 j+1}_{b_{j+1},m_{j+1}}(\bX^{m_{j+1}},\bY^{m_{j+1}})}\right|\notag\\
&=\sum_{\hat{l}=N_{\beta}}^{N_h+1}C_R(\hat{l}),\label{eq_chain_rule}
\end{align}
where
\begin{align*}
&C_R(\hat{l}):=\\
&\frac{1}{\beta}\prod_{l=N_{\beta}}^{N_h+1}\left(\sum_{m_l=0}^{N_{L,h}}
\sum_{b_l\in\{free,tree\}}\sum_{\bX^{m_l},\bY^{m_l}\in
 (I_{L,h})_o^{m_l}}\right)1_{m_{N_{\beta}}=0}1_{b_l=free(\forall
 l\in\{\hat{l},\hat{l}+1,\cdots,N_h+1\}),b_{\hat{l}-1}=tree}\\
&\cdot \left|\frac{\partial J^{\ge
 N_h+1}_{b_{N_h+1},m_{N_h+1}}(\bX^{m_{N_h+1}},\bY^{m_{N_h+1}})}{\partial
 \la_1}\right|\prod_{j=N_{\beta}}^{N_h}\left|\frac{\partial
 J_{b_j,m_j}^{\ge j}(\bX^{m_j},\bY^{m_j})}{\partial J^{\ge
 j+1}_{b_{j+1},m_{j+1}}(\bX^{m_{j+1}},\bY^{m_{j+1}})}\right|.
\end{align*}

Let us decompose $\sum_{\hat{l}=N_{\beta}}^{N_{h}+1}C_R(\hat{l})$ into
 $\sum_{\hat{l}=N_{\beta}}^{N_{\beta}+1}C_R(\hat{l})$ and
 $\sum_{\hat{l}=N_{\beta}+2}^{N_{h}+1}C_R(\hat{l})$ and estimate each part
 separately. In the following calculation we use the equality
 \eqref{eq_norm_ordered_base} repeatedly. By using
 \eqref{eq_item_derivative_J_J_0}, \eqref{eq_item_derivative_F_F},
 \eqref{eq_item_derivative_F_lambda} in this order,
\begin{align}
&\sum_{\hat{l}=N_{\beta}}^{N_{\beta}+1}C_R(\hat{l})=\frac{1}{\beta}\prod_{l=N_{\beta}+1}^{N_h+1}\left(\sum_{m_l=0}^{N_{L,h}}\sum_{\bX^{m_l},\bY^{m_l}\in
 (I_{L,h})^{m_l}_o}\right)\left|\frac{\partial F^{\ge
 N_h+1}_{m_{N_h+1}}(\bX^{m_{N_h+1}},\bY^{m_{N_h+1}})}{\partial
 \la_1}\right|\notag\\
&\quad\cdot 1_{\exists x\in[0,\beta)_h,\exists \nu,\xi\in \S_2\text{
 s.t.
 }\bX^{m_{N_h+1}}=((\hcX_{\nu(1)},x),(\hcX_{\nu(2)},x)),\bY^{m_{N_h+1}}=((\hcY_{\xi(1)},x),(\hcY_{\xi(2)},x))}\notag\\
&\quad\cdot\prod_{j=N_{\beta}+1}^{N_h}\left(\left|\frac{\partial
 F_{m_j}^{\ge j}(\bX^{m_j},\bY^{m_j})}{\partial F^{\ge
 j+1}_{m_{j+1}}(\bX^{m_{j+1}},\bY^{m_{j+1}})}\right|1_{\bX^{m_j}\subset
 \bX^{m_{j+1}},\bY^{m_j}\subset
 \bY^{m_{j+1}}}\right)\notag\\
&\quad\cdot\sum_{d\in \{free,tree\}}\left|\frac{\partial
 J^{\ge N_{\beta}}_{d,0}}{\partial F^{\ge
 N_{\beta}+1}_{m_{N_{\beta}+1}}(\bX^{m_{N_{\beta}+1}},\bY^{m_{N_{\beta}+1}})}\right|\notag\\
&\le \frac{1}{\beta}\prod_{l=N_{\beta}+1}^{N_h+1}\left(\sum_{m_l=0}^{N_{L,h}}\sum_{\bX^{m_l},\bY^{m_l}\in
 (I_{L,h})^{m_l}_o}\right)\left|\frac{\partial F^{\ge
 N_h+1}_{m_{N_h+1}}(\bX^{m_{N_h+1}},\bY^{m_{N_h+1}})}{\partial
 \la_1}\right|\notag\\
&\quad\cdot 1_{\exists x\in[0,\beta)_h,\exists \nu,\xi\in \S_2\text{
 s.t.
 }\bX^{m_{N_h+1}}=((\hcX_{\nu(1)},x),(\hcX_{\nu(2)},x)),\bY^{m_{N_h+1}}=((\hcY_{\xi(1)},x),(\hcY_{\xi(2)},x))}\notag\\
&\quad\cdot\prod_{j=N_{\beta}+1}^{N_h}\left(\left|\frac{\partial
 F_{m_j}^{\ge j}(\bX^{m_j},\bY^{m_j})}{\partial F^{\ge
 j+1}_{m_{j+1}}(\bX^{m_{j+1}},\bY^{m_{j+1}})}\right|1_{\bX^{m_j}\subset
 \bX^{m_{j+1}},\bY^{m_j}\subset
 \bY^{m_{j+1}}}\right)\notag\\
&\quad\cdot h^{-2m_{N_{\beta}+1}}(m_{N_{\beta}+1}!)^2
 (2^5c_0)^{m_{N_{\beta}+1}}\notag\\
&\le
 \frac{1}{\beta}\sum_{m_{N_h+1}=0}^{N_{L,h}}\sum_{\bX^{m_{N_h+1}},\bY^{m_{N_h+1}}\in
 (I_{L,h})_0^{m_{N_h+1}}}\left|\frac{\partial F^{\ge
 N_h+1}_{m_{N_h+1}}(\bX^{m_{N_h+1}},\bY^{m_{N_h+1}})}{\partial
 \la_1}\right|\notag\\
&\quad\cdot
 h^{-2m_{N_{h}+1}}(m_{N_{h}+1}!)^3\left(\sum_{l=N_{\beta}+1}^{N_h}(M^{l-N_h}+M^{N_{\beta}-l})c_0+2^5c_0\right)^{m_{N_h+1}}\notag\\
&\le 2
 \left(\sum_{l=N_{\beta}+1}^{N_h}(M^{l-N_h}+M^{N_{\beta}-l})c_0+2^5c_0\right)^2.\label{eq_free_chain_rule}
\end{align}

By applying \eqref{eq_item_derivative_J_J_0},
 \eqref{eq_item_derivative_J_J}, \eqref{eq_item_derivative_T_J},
 \eqref{eq_item_derivative_F_F}, \eqref{eq_item_derivative_F_lambda} in
 this order and recalling the condition
 \eqref{eq_smallness_assumption} we observe that
\begin{align}
&\sum_{\hat{l}=N_{\beta}+2}^{N_{h}+1}C_R(\hat{l})\le
 \sum_{\hat{l}=N_{\beta}+2}^{N_{h}+1}\frac{1}{\beta}
\prod_{l=\hat{l}-1}^{N_h+1}\left(\sum_{m_l=0}^{N_{L,h}}\sum_{\bX^{m_l},\bY^{m_l}\in
 (I_{L,h})^{m_l}_o}\right)\left|\frac{\partial F^{\ge
 N_h+1}_{m_{N_h+1}}(\bX^{m_{N_h+1}},\bY^{m_{N_h+1}})}{\partial
 \la_1}\right|\notag\\
&\quad\cdot\prod_{j=\hat{l}}^{N_h}\left|\frac{\partial
 F_{m_j}^{\ge j}(\bX^{m_j},\bY^{m_j})}{\partial F^{\ge
 j+1}_{m_{j+1}}(\bX^{m_{j+1}},\bY^{m_{j+1}})}\right|\cdot
\left|\frac{\partial
 T^{\ge \hat{l}-1}_{m_{\hat{l}-1}}(\bX^{m_{\hat{l}-1}},\bY^{m_{\hat{l}-1}})}{\partial F^{\ge
 \hat{l}}_{m_{\hat{l}}}(\bX^{m_{\hat{l}}},\bY^{m_{\hat{l}}})}\right|\notag\\
&\quad\cdot
 h^{-2m_{\hat{l}-1}}(m_{\hat{l}-1}!)^2(2^{2(\hat{l}-2-N_{\beta})}2^5c_0)^{m_{\hat{l}-1}}\notag\\
&\le \sum_{\hat{l}=N_{\beta}+2}^{N_{h}+1}\frac{1}{\beta}\prod_{l=\hat{l}}^{N_h+1}
\left(\sum_{m_l=0}^{N_{L,h}}\sum_{\bX^{m_l},\bY^{m_l}\in
 (I_{L,h})^{m_l}_o}\right)\left|\frac{\partial F^{\ge
 N_h+1}_{m_{N_h+1}}(\bX^{m_{N_h+1}},\bY^{m_{N_h+1}})}{\partial
 \la_1}\right|\notag\\
&\quad\cdot\prod_{j=\hat{l}}^{N_h}\left|\frac{\partial
 F_{m_j}^{\ge j}(\bX^{m_j},\bY^{m_j})}{\partial F^{\ge
 j+1}_{m_{j+1}}(\bX^{m_{j+1}},\bY^{m_{j+1}})}\right|\notag\\
&\quad\cdot
 h^{-2m_{\hat{l}}}(m_{\hat{l}}!)^2(2^{2(\hat{l}-1-N_{\beta})}2^5c_0)^{m_{\hat{l}}}2^{13}\alpha^{-1}M^{-(\hat{l}-1-N_{\beta})}2^{2(\hat{l}-2-N_{\beta})}\notag\\
&\le
2
 \sum_{\hat{l}=N_{\beta}+2}^{N_h+1}\left(\sum_{l=\hat{l}}^{N_h}(M^{l-N_h}+M^{N_{\beta}-l})c_0+2^{2(\hat{l}-1-N_{\beta})}2^5c_0\right)^2(2^2M^{-1})^{\hat{l}-1-N_{\beta}}.\label{eq_free_tree_chain_rule}
\end{align}

Finally by putting
 \eqref{eq_chain_rule},\eqref{eq_free_chain_rule},\eqref{eq_free_tree_chain_rule}
 together,
\begin{align*}
&\frac{1}{\beta}\left|\frac{\partial J_0^{\ge N_{\beta}}}{\partial \la_1}\right|\le
 2\sum_{\hat{l}=N_{\beta}+1}^{N_h+1}\left(\sum_{l=\hat{l}}^{N_h}(M^{l-N_h}+M^{N_{\beta}-l})c_0+2^{2(\hat{l}-1-N_{\beta})}2^5c_0\right)^2
(2^2M^{-1})^{\hat{l}-1-N_{\beta}}\\
&\le
 2(4c_0+2^5c_0)^2\sum_{\hat{l}=N_{\beta}+1}^{N_h+1}(2^6M^{-1})^{\hat{l}-1-N_{\beta}}\le 2(4+2^5)^2c_0^2\left(1-\frac{1}{4}\right)^{-1}\le 2^{12}c_0^2.
\end{align*}
\end{proof}

Here we can give the proof of Theorem \ref{thm_exponential_decay}.
\begin{proof}[Proof of Theorem \ref{thm_exponential_decay}]
Assume that $M=\max\{78E_{max}^2,2^8\}$ and $\alpha=2^{10}M^2$. Then, if
 $|\la_1|,|\la_{-1}|,|U_c|,|U_o|<2^{-4}\alpha^{-2}c_0^{-2}M^{N_{\beta}}$,
 the condition \eqref{eq_smallness_assumption} holds.
 
By Lemma \ref{lem_grassmann_formulation} \eqref{item_real_part}, for any
 sufficiently large $h\in 2\N/\beta$ there exists a domain $\cO_h\subset
 \C$ containing the interval $[-2^{-4}\alpha^{-2}c_0^{-2}M^{N_{\beta}},$
 $2^{-4}\alpha^{-2}c_0^{-2}M^{N_{\beta}}]$
 inside such that $(U_c,U_o)\mapsto (\partial/\partial \la)\log(\int
 e^{V_{(\la,\la)}(\psi)}d\mu_{\cC}(\psi))|_{\la =0}$ is analytic in
 $\cO_h\times\cO_h$. Let us fix such a large $h\in 2\N/\beta$. 

By the construction of $G_0^{\ge
 N_{\beta}}$ and $J_0^{\ge N_{\beta}}$ and Corollary
 \ref{cor_analyticity_variables} \eqref{item_complex_variables} there
 exists $U_{small}>0$ such
 that $J_0^{\ge N_{\beta}}=G_0^{\ge N_{\beta}}$ holds and
 $(\la_1,\la_{-1},U_c,U_o,w)\mapsto J_0^{\ge N_{\beta}}$ is analytic in
 $D_{small}^i\times D^i_{R}$. In order to indicate the dependency on the
 variable $w$, let us write $J_0^{\ge N_{\beta}}(w\be_p)$, $G_0^{\ge
 N_{\beta}}(w\be_p)$ instead of $J_0^{\ge N_{\beta}}$, $G_0^{\ge
 N_{\beta}}$. Then for
 any $n\in \N$ with $2\pi n/L + \cF_{t,\beta}(8/\pi^2) < R$ and $(\la_1,\la_{-1},U_c,U_o)\in
 D_{small}^i$,
\begin{align*}
&\sum_{a\in \{1,-1\}}\prod_{j=1}^n\left(\frac{L}{2\pi}\int_0^{2\pi
 a/L}d\theta_{a,j}\frac{1}{2\pi
 i}\oint_{|w_{a,j}-\theta_{a,j}|=\cF_{t,\beta}(8/\pi^2)/n}dw_{a,j}\frac{1}{(w_{a,j}-\theta_{a,j})^2}\right)\\
&\quad\cdot\left(\frac{\partial}{\partial \la_a}G_0^{\ge
 N_{\beta}}\left(\sum_{j=1}^nw_{a,j}\be_p\right)-
\frac{\partial}{\partial \la_a}J_0^{\ge
 N_{\beta}}\left(\sum_{j=1}^nw_{a,j}\be_p\right)\right)=0.
\end{align*}

On the other hand, Corollary \ref{cor_analyticity_variables}
 \eqref{item_complex_variables} implies that 
\begin{align*}
(U_c,U_o)\mapsto &\sum_{a\in \{1,-1\}}\prod_{j=1}^n\left(\frac{L}{2\pi}\int_0^{2\pi
 a/L}d\theta_{a,j}\frac{1}{2\pi
 i}\oint_{|w_{a,j}-\theta_{a,j}|=\cF_{t,\beta}(8/\pi^2)/n}dw_{a,j}\frac{1}{(w_{a,j}-\theta_{a,j})^2}\right)
\\
&\cdot\frac{\partial}{\partial \la_a}J_0^{\ge
 N_{\beta}}\left(\sum_{j=1}^nw_{a,j}\be_p\right)\Big|_{\la_1=\la_{-1}=0}
\end{align*}
is analytic in $\{(U_c,U_o)\in \C^2\ |\
 |U_c|,|U_o|<2^{-4}\alpha^{-2}c_0^{-2}M^{N_{\beta}}\}$. Therefore, by
 Lemma \ref{lem_contour_integral_formulation}
 \eqref{item_contour_integral_transform}, the identity theorem for
 analytic functions ensures that 
\begin{align}
&\left(\frac{L}{2\pi}\left(e^{i\frac{2\pi}{L}\<\sum_{j=1}^2(\hat{s}(\hs_j)\hbx_j-\hat{s}(\htau_j)\hby_j),\be_p\>}-1\right)\right)^n\frac{1}{\beta}\frac{\partial}{\partial
 \la }\log\left(\int
 e^{V_{(\la,\la)}(\psi)}d\mu_{\cC}(\psi)\right)\Big|_{\la=0}\notag\\
&=\sum_{a\in \{1,-1\}}\prod_{j=1}^n\left(\frac{L}{2\pi}\int_0^{2\pi
 a/L}d\theta_{a,j}\frac{1}{2\pi
 i}\oint_{|w_{a,j}-\theta_{a,j}|=\cF_{t,\beta}(8/\pi^2)/n}dw_{a,j}
\frac{1}{(w_{a,j}-\theta_{a,j})^2}\right)\notag
\\
&\quad\cdot\frac{1}{\beta}\frac{\partial}{\partial \la_a}J_0^{\ge
 N_{\beta}}\left(\sum_{j=1}^nw_{a,j}\be_p\right)\Big|_{\la_1=\la_{-1}=0}\label{eq_consequence_identity}
\end{align}
for all $U_c,U_o\in\R$ with
 $|U_c|,|U_o|<2^{-4}\alpha^{-2}c_0^{-2}M^{N_{\beta}}$. Then by using
 Proposition \ref{prop_bound_schwinger} and $n^n\le n!e^n$ we can
 estimate \eqref{eq_consequence_identity} as follows.
\begin{align}
&\left|\frac{L}{2\pi}\left(e^{i\frac{2\pi}{L}\<\sum_{j=1}^2(\hat{s}(\hs_j)\hbx_j-\hat{s}(\htau_j)\hby_j),\be_p\>}-1\right)\right|^n
\left|\frac{1}{\beta}\frac{\partial}{\partial
 \la }\log\left(\int
 e^{V_{(\la,\la)}(\psi)}d\mu_{\cC}(\psi)\right)\Big|_{\la=0}\right|\notag\\
&\le 2^{13}c_0^2n!e^n\cF_{t,\beta}(8/\pi^2)^{-n}.\label{eq_bound_still_discrete}
\end{align}
Note that the inequality \eqref{eq_bound_still_discrete} for $n=0$ can
 be derived in the same way. By Lemma \ref{lem_grassmann_formulation}
 \eqref{item_grassmann_formulation} we can send $h\to \infty$ in
 \eqref{eq_bound_still_discrete} so that 
\begin{align}
&\left|\frac{L}{2\pi}\left(e^{i\frac{2\pi}{L}\<\sum_{j=1}^2(\hat{s}(\hs_j)\hbx_j-\hat{s}(\htau_j)\hby_j),\be_p\>}-1\right)\right|^n|\<\psi_{\hcX_1}^*\psi_{\hcX_2}^*\psi_{\hcY_2}\psi_{\hcY_1}+\text{
 h.c}\>_L|\notag\\
&\le 2^{13}c_0^2n!e^n\cF_{t,\beta}(8/\pi^2)^{-n}.
\label{eq_bound_not_discrete}
\end{align}
As we have fixed the parameters arbitrarily in the beginning of
 Subsection \ref{subsec_sketch_multiscale}, we can claim \eqref{eq_bound_not_discrete} for all
 $n\in\N\cup\{0\}$, $p\in \{1,2\}$, $U_c,U_o\in\R$ with $|U_c|,|U_o|<
 2^{-4}\alpha^{-2}c_0^{-2}M^{N_{\beta}}$ and sufficiently large
 $L\in\N$.

Set 
\begin{equation*}
f_1(E_{max}):=2^5\alpha^2\left(\frac{\max\{c,1\}}{(1-\eps)\eps^2}M^9\right)^2,\quad 
f_2(E_{max}):=2^{14}\left(\frac{\max\{c,1\}}{(1-\eps)\eps^2}M^9\right)^2.
\end{equation*}
By remarking \eqref{eq_beta_M} and \eqref{eq_final_coefficient} we can
 confirm that 
\begin{equation*}
f_1(E_{max})\max\{1,\beta^{16}\}\beta>2^4\alpha^2c_0^2M^{-N_{\beta}},\quad
 f_2(E_{max})\max\{1,\beta^{16}\}=2^{14}c_0^2,
\end{equation*}
$f_1(E_{max})$, $f_2(E_{max})$ are non-decreasing with respect to
 $E_{max}\in\R_{\ge 1}$ and $f_1(E_{max})=O(E_{max}^{44})$,
 $f_2(E_{max})=O(E_{max}^{36})$ as $E_{max}\to \infty$. 

It is straightforward to derive the following inequality from \eqref{eq_bound_not_discrete}.   
\begin{align*}
&|\<\psi_{\hcX_1}^*\psi_{\hcX_2}^*\psi_{\hcY_2}\psi_{\hcY_1}+\text{ h.c
 }\>_{L}|\le f_2(E_{max})\max\{1,\beta^{16}\}\\
&\quad\cdot
 \left(\frac{1}{\max\{1,t^2\}\max\{\beta,\beta^2\}}+1\right)^{-\frac{1}{8e}\sum_{p=1}^2\left|\frac{e^{i2\pi\<\sum_{j=1}^2(\hat{s}(\hs_j)\hbx_j-\hat{s}(\htau_j)\hby_j),\be_p\>/L}-1}{2\pi/L}\right|}
\end{align*}
for any $U_c, U_o\in\R$ with $|U_c|,|U_o|\le
 (f_1(E_{max})\max\{1,\beta^{16}\}\beta)^{-1}$ and sufficiently large
 $L\in \N$. Finally by Lemma \ref{lem_thermodynamic_limit} proved in
 Appendix \ref{app_thermodynamic_limit} we can take the limit
 $L\to \infty$ and complete the proof.
\end{proof}

\appendix
\section{Derivation of the covariance}\label{app_covariance}
In this part of Appendix we derive the representation of the covariance
\eqref{eq_covariance_characterization},
\eqref{eq_covariance_inside_function}.  Define the $3\times 3$ matrix
$M_{t,\bk}^{\s}=(M_{t,\bk}^{\s}(\rho,\eta))_{1\le \rho,\eta\le 3}$
$(\bk=(k_1,k_2)\in\G^*,t\in\R,\s\in\spin)$ by 
$$
M_{t,\bk}^{\s}:=\left(\begin{array}{ccc}\ec & t(1+e^{-ik_1}) &
		 t(1+e^{-ik_2}) \\
t(1+e^{ik_1}) & \eo & 0 \\
t(1+e^{ik_2}) & 0 & \eo\end{array}\right).
$$
We see that $H_0=\sum_{(\rho,\bx),(\eta,\by)\atop \in \{1,2,3\}\times\G}\sum_{\s\in
 \spin}\frac{1}{L^2}\sum_{\bk\in\G^*}e^{i\<\bx-\by,\bk\>}M_{t,\bk}^{\s}(\rho,\eta)\psi^*_{\rho\bx\s}\psi_{\eta\by\s}$.
For $\bk\in\G^*$, $t\in\R$,
$\s\in\spin$ and $\rho\in\{2,3\}$ set 
\begin{equation}\label{eq_eigen_values}
\begin{split}
&A_1^{\s}(t,\bk):=\left\{\begin{array}{ll}\ec&\text{ if
		   }\bk=(\pi,\pi)\text{ in }\G^*\text{ or }t=0,\\
       \eo&\text{ otherwise,}\end{array}
\right.\\
&A_{\rho}^{\s}(t,\bk):=\left\{\begin{array}{l}\eo\quad\text{  if
		   }\bk=(\pi,\pi)\text{ in }\G^*\text{ or }t=0,\\
       \frac{1}{2}(\ec+\eo)+(-1)^{\rho}\frac{t}{2}\sqrt{\frac{1}{t^2}(\ec-\eo)^2+8\sum_{j=1}^2(1+\cos k_j)}
\quad\text{  otherwise.}\end{array}
\right.
\end{split}
\end{equation}
Recall \eqref{eq_first_def_E}, i.e, $E(t,\bk)=2t^2\sum_{j=1}^2(1+\cos
k_j)$ $(t\in\R,\bk\in\G^*)$, and
define the $3\times 3$ matrix
$\cU_{t,\bk}^{\s}=(\cU_{t,\bk}^{\s}(\rho,\eta))_{1\le \rho,\eta\le 3}$ by
$\cU_{t,\bk}^{\s}(\rho,\eta):=\delta_{\rho,\eta}$ if
$\bk=(\pi,\pi)$ in $\G^*$ or $t=0$,
\begin{equation*}
\cU_{t,\bk}^{\s}:=\left(\begin{array}{ccc} 0 &
		    \frac{A_2^{\s}(t,\bk)-\eo}{((A_2^{\s}(t,\bk)-\eo)^2+E(t,\bk))^{1/2}}
		     &
		     \frac{A_3^{\s}(t,\bk)-\eo}{((A_3^{\s}(t,\bk)-\eo)^2+E(t,\bk))^{1/2}} \\
\frac{t(1+e^{-ik_2})}{E(t,\bk)^{1/2}} &
 \frac{t(1+e^{ik_1})}{((A_2^{\s}(t,\bk)-\eo)^2+E(t,\bk))^{1/2}} &
 \frac{t(1+e^{ik_1})}{((A_3^{\s}(t,\bk)-\eo)^2+E(t,\bk))^{1/2}} \\
\frac{-t(1+e^{-ik_1})}{E(t,\bk)^{1/2}} &
 \frac{t(1+e^{ik_2})}{((A_2^{\s}(t,\bk)-\eo)^2+E(t,\bk))^{1/2}} &
 \frac{t(1+e^{ik_2})}{((A_3^{\s}(t,\bk)-\eo)^2+E(t,\bk))^{1/2}} 
\end{array}
\right)
\end{equation*}
otherwise. One can check that $\cU_{t,\bk}^{\s}$ is unitary and 
\begin{equation}\label{eq_hopping_diagonal}
(\cU_{t,\bk}^{\s})^*M_{t,\bk}^{\s}\cU_{t,\bk}^{\s}=\left(\begin{array}{ccc}A_1^{\s}(t,\bk)
						      & 0 & 0 \\
0 & A_2^{\s}(t,\bk) & 0\\
0 & 0 & A_3^{\s}(t,\bk)\end{array}\right).
\end{equation}

By using $\cU_{t,\bk}^{\s}$ let us define the matrix $W_t=(W_t(\rho\bx
\s,\eta\by\tau))_{(\rho,\bx,\s),(\eta,\by,\tau)\in
\{1,2,3\}\times\G\times\spin}$ by
$W_t(\rho\bx\s,\eta\by\tau):=\frac{\delta_{\s,\tau}}{L^2}\sum_{\bk\in
\G^*}e^{-i\<\bx-\by,\bk\>}\overline{\cU_{t,\bk}^{\s}(\rho,\eta)}$. One
can also verify that\\
$(W_t^*W_t)(\rho\bx\s,\eta\by\tau)=1_{(\rho,\bx,\s)=(\eta,\by,\tau)}$.
With the matrix $W_t$ define the operator
$G(W_t):F_f(L^2(\{1,2,3\}\times\G\times \spin))\to
F_f(L^2(\{1,2,3\}\times\G\times \spin))$ by 
\begin{align*}
&G(W_t)\O:=\O,\\
&G(W_t)\psi_{\rho_1\bx_1\s_1}^*\psi_{\rho_2\bx_2\s_2}^*\cdots\psi_{\rho_n\bx_n\s_n}^*\O:=(W_t\psi^*)_{\rho_1\bx_1\s_1}(W_t\psi^*)_{\rho_2\bx_2\s_2}\cdots(W_t\psi^*)_{\rho_n\bx_n\s_n}\O\\
&\quad(n\in\N,(\rho_j,\bx_j,\s_j)\in\{1,2,3\}\times\G\times\spin\
 (j=1,\cdots,n)),
\end{align*}
and by linearity. Here the notation $\O$ represents the vacuum of
$F_f(L^2(\{1,2,3\}\times\G\times \spin))$ and
$(W_t\psi^*)_{\rho\bx\s}:=\sum_{(\eta, \by,\tau)\atop
\in\{1,2,3\}\times\G\times\spin }W_t(\rho\bx\s,\eta\by\tau)\psi_{\eta\by\tau}^*
$. The operator $G(W_t)$ is unitary. By letting
$(\overline{W_t}\psi)_{\rho\bx\s}$ denote $\sum_{(\eta, \by,\tau)\atop
\in\{1,2,3\}\times\G\times\spin
}\overline{W_t(\rho\bx\s,\eta\by\tau)}\psi_{\eta\by\tau}$, we observe
that $G(W_t)H_0\phi=\tilde{H}_0G(W_t)\phi$ for any $\phi\in
F_f(L^2(\{1,2,3\}\times\G\times \spin))$, where
\begin{equation*}
\tilde{H}_0:=\sum_{(\rho,\bx),(\eta, \by)\atop
\in\{1,2,3\}\times\G}\sum_{\s\in \spin}\frac{1}{L^2}\sum_{\bk\in\G^*}e^{i\<\bx-\by,\bk\>}M_{t,\bk}^{\s}(\rho,\eta)(W_t\psi^*)_{\rho\bx\s}(\overline{W_t}\psi)_{\eta\by\s}.
\end{equation*}
By using \eqref{eq_hopping_diagonal} we have 
\begin{equation*}
\tilde{H}_0=\sum_{\rho\in\{1,2,3\}}\sum_{\bx,\by\in\G}\sum_{\s\in
 \spin}\left(\frac{1}{L^2}\sum_{\bk\in\G^*}e^{i\<\bx-\by,\bk\>}A_{\rho}^{\s}(t,\bk)\right)\psi_{\rho\bx\s}^{*}\psi_{\rho\by\s}.
\end{equation*}

For $(\rho,\bx,\s,x)$,
$(\eta,\by,\tau,y)\in\{1,2,3\}\times\G\times\spin\times[0,\beta)$ let
\begin{align*}
&\tilde{\psi}_{\rho\bx\s}^*(x):=e^{x\tilde{H}_0}\psi_{\rho\bx\s}^*e^{-x\tilde{H}_0},\quad \tilde{\psi}_{\eta\by\tau}(y):=e^{y\tilde{H}_0}\psi_{\eta\by\tau}e^{-y\tilde{H}_0},\\ 
&T(\tilde{\psi}_{\rho\bx\s}^*(x)\tilde{\psi}_{\eta\by\tau}(y)):=1_{x\ge y
}\tilde{\psi}_{\rho\bx\s}^*(x)\tilde{\psi}_{\eta\by\tau}(y)-1_{x<y}\tilde{\psi}_{\eta\by\tau}(y)\tilde{\psi}_{\rho\bx\s}^*(x).
\end{align*}
The unitary property of $G(W_t)$ implies that
\begin{align}
\cC(\rho\bx\s x,\eta\by\tau
 y)=&\sum_{(\rho',\bx',\s'),(\eta',\by',\tau')\atop\in \{1,2,3\}\times
 \G\times\spin}W_t(\rho\bx\s,\rho'\bx'\s')\overline{W_t(\eta\by\tau,\eta'\by'\tau')}\notag\\
&\cdot\frac{\Tr(e^{-\beta
 \tilde{H}_0}T(\tilde{\psi}_{\rho'\bx'\s'}^*(x)\tilde{\psi}_{\eta'\by'\tau'}(y)))}{\Tr
 e^{-\beta\tilde{H}_0}}.\label{eq_covariance_pre_form}
\end{align}
Since $\tilde{H}_0$ is diagonal with respect to $\rho\in\{1,2,3\}$, the
characterization of \\
$\Tr(e^{-\beta
 \tilde{H}_0}T(\tilde{\psi}_{\rho'\bx'\s'}^*(x)\tilde{\psi}_{\eta'\by'\tau'}(y)))/\Tr
 e^{-\beta\tilde{H}_0}$ can be carried out by a standard argument.
See, e.g, \cite[\mbox{Appendix B}]{K1} for the derivation of the
covariance governed by a free Hamiltonian defined on $F_f(L^2(\G\times
\spin))$. As the result we obtain
\begin{align}
&\frac{\Tr(e^{-\beta
 \tilde{H}_0}T(\tilde{\psi}_{\rho'\bx'\s'}^*(x)\tilde{\psi}_{\eta'\by'\tau'}(y)))}{\Tr
 e^{-\beta\tilde{H}_0}}\notag\\
&=\frac{\delta_{\rho',\eta'}\delta_{\s',\tau'}}{L^2}\sum_{\bk\in\G^*}e^{-i\<\bx'-\by',\bk\>}e^{(x-y)A_{\rho'}^{\s'}(t,\bk)}\left(\frac{1_{x\ge y}}{1+e^{\beta A_{\rho'}^{\s'}(t,\bk)}}-\frac{1_{x<y}}{1+e^{-\beta A_{\rho'}^{\s'}(t,\bk)}}\right).\label{eq_covariance_ordered_temperature}
\end{align}
Substituting \eqref{eq_covariance_ordered_temperature} into
\eqref{eq_covariance_pre_form} yields that for $(\rho,\bx,\s,x)$,
$(\eta,\by,\tau,y)\in\{1,2,3\}\times\G\times\spin\times[0,\beta)$,
\begin{align}
\cC(\rho\bx\s x,\eta\by \tau y)
=&\frac{\delta_{\s,\tau}}{L^2}\sum_{\g\in\{1,2,3\}}\sum_{\bk\in\G^*}e^{-i\<\bx-\by,\bk\>}e^{(x-y)A_{\g}^{\s}(t,\bk)}\notag\\
&\cdot\left(\frac{1_{x\ge y}}{1+e^{\beta A_{\g}^{\s}(t,\bk)}}-\frac{1_{x<y
 }}{1+e^{-\beta A_{\g}^{\s}(t,\bk)}}\right)\overline{\cU_{t,\bk}^{\s}(\rho,\g)}\cU_{t,\bk}^{\s}(\eta,\g).\label{eq_covariance_ordered_combination}
\end{align}
Moreover by applying \cite[\mbox{Lemma C.3}]{K1} to the right-hand side of
\eqref{eq_covariance_ordered_combination} one reaches the equality that
for $(\rho,\bx,\s,x)$, $(\eta,\by,\tau,y)\in I_{L,h}$,
\begin{equation*}
\cC(\rho\bx\s x,\eta\by \tau y)
=\frac{\delta_{\s,\tau}}{\beta L^2}\sum_{(\bk,\o)\in\G^*\times\cM_h}
e^{-i\<\bx-\by,\bk\>}e^{i(x-y)\o}\sum_{\g\in\{1,2,3\}}\frac{\overline{\cU_{t,\bk}^{\s}(\rho,\g)}\cU_{t,\bk}^{\s}(\eta,\g)}{h(1-e^{-i\o/h+A_{\g}^{\s}(t,\bk)/h})}.
\end{equation*}

We need to show that for any $(\bk,\o)\in\G^*\times
 \cM_h$, $\rho,\eta\in\{1,2,3\}$, $t\in\R$, $\s\in\spin$,
\begin{equation}\label{eq_required_equality}
\sum_{\g\in\{1,2,3\}}\frac{\overline{\cU_{t,\bk}^{\s}(\rho,\g)}\cU_{t,\bk}^{\s}(\eta,\g)}{h(1-e^{-i\o/h+A_{\g}^{\s}(t,\bk)/h})}=\cB_{\rho,\eta}^{\s}(\bk,\o),
\end{equation}
where $\cB_{\rho,\eta}^{\s}(\bk,\o)$ is written in
\eqref{eq_covariance_inside_function}. The equality
\eqref{eq_required_equality} can be confirmed by direct calculation. To
assist the readers' verification, we present some intermediate results
appearing in the calculation. The functions $O_j^{\s}(\cdot):\C^2\to
\C$ $(j\in\{1,\cdots,5\},\s\in\spin)$ in \eqref{eq_covariance_inside_function}
are in fact given
as follows.
\begin{align}
&O_1^{\s}(\bk):=-2\sum_{n=2}^{\infty}\frac{1}{(2n)!h^{2n-2}}\left(\frac{(\ec-\eo)^2}{4}+E(t,\bk)\right)^n,\notag\\
&O_2^{\s}(\bk):=-\sum_{n=1}^{\infty}\frac{1}{(2n)!h^{2n-1}}\left(\frac{(\ec-\eo)^2}{4}+E(t,\bk)\right)^n\notag\\
&\qquad\qquad+\frac{\ec-\eo}{2}\sum_{n=1}^{\infty}\frac{1}{(2n+1)!h^{2n}}\left(\frac{(\ec-\eo)^2}{4}+E(t,\bk)\right)^n,\notag\\
&O_3^{\s}(\bk):=\sum_{n=1}^{\infty}\frac{1}{(2n+1)!h^{2n}}\left(\frac{(\ec-\eo)^2}{4}+E(t,\bk)\right)^n,\notag\\
&O_4^{\s}(\bk):=\sum_{n=2}^{\infty}\frac{1}{(2n)!h^{2n-2}}\sum_{m=1}^n\left(\begin{array}{c}n\\
									   m\end{array}\right)E(t,\bk)^{m-1}\left(\frac{(\ec-\eo)^2}{4}\right)^{n-m},\notag\\
&O_5^{\s}(\bk):=-\frac{\ec-\eo}{2}\sum_{n=1}^{\infty}\frac{1}{(2n+1)!h^{2n-1}}\sum_{m=1}^n\left(\begin{array}{c}n\\
									   m\end{array}\right)E(t,\bk)^{m-1}\left(\frac{(\ec-\eo)^2}{4}\right)^{n-m}.\label{eq_vanishing_functions}
\end{align}
From \eqref{eq_vanishing_functions} one can see that
\eqref{eq_decay_extra_terms} holds.

First assume that $\bk=(\pi,\pi)$ in $\G^*$ or $t=0$. In this case
$E(t,\bk)=0$ and thus $\cD^{\s}(\bk,\o)$ and
$\cN^{\s}_{\rho,\eta}(\bk,\o)$ given in
\eqref{eq_covariance_inside_function} are simplified as follows. 
\begin{align*}
&\cD^{\s}(\bk,\o)=h^2(1-e^{-i\o/h+\ec/h})(1-e^{-i\o/h+\eo/h}),\quad
 \cN_{1,1}^{\s}(\bk,\o)=h(1-e^{-i\o/h+\eo/h}),\\
&\cN_{\rho,\eta}^{\s}(\bk,\o)=0\quad (\forall
 (\rho,\eta)\in\{1,2,3\}^2\backslash\{(1,1)\}).
\end{align*}
By using these, the equality \eqref{eq_required_equality} can be
confirmed in this case.

Next consider the case that $\bk\neq (\pi,\pi)$ in $\G^*$ and $t\neq
0$. To organize the calculation, set
$f(w,A):=h(1-e^{-i\o/h+A/h})$. Remark that for $\rho\in \{2,3\}$, 
\begin{align*}
&f(\o,A_{\rho}^{\s}(t,\bk))=h-he^{-\frac{i}{h}\o+\frac{\ec+\eo}{2h}}\\
&\quad\cdot\Bigg(\sum_{n=0}^{\infty}\frac{1}{(2n)!h^{2n}}\Bigg(\frac{t}{2}\Bigg(\frac{(\ec-\eo)^2}{t^2}+8\sum_{j=1}^2(1+\cos
 k_j)\Bigg)^{1/2}\Bigg)^{2n}\\
&\quad +(-1)^{\rho}\sum_{n=0}^{\infty}\frac{1}{(2n+1)!h^{2n+1}}\Bigg(\frac{t}{2}\Bigg(\frac{(\ec-\eo)^2}{t^2}+8\sum_{j=1}^2(1+\cos
 k_j)\Bigg)^{1/2}\Bigg)^{2n+1}\Bigg),\\
&f(\o,A_2^{\s}(t,\bk))f(\o,A_3^{\s}(t,\bk))=\cD^{\s}(\bk,\o),\\
&((A_2^{\s}(t,\bk)-\eo)^2+E(t,\bk))((A_3^{\s}(t,\bk)-\eo)^2+E(t,\bk))\\
&=4E(t,\bk)^2+E(t,\bk)(\ec-\eo)^2.
\end{align*}
By using these equalities we observe that 
\begin{align}
&(\text{The left-hand side of }\eqref{eq_required_equality}\text{ for }(\rho,\eta)=(1,1))\notag\\
&=\big(f(\o,A_3^{\s}(t,\bk))(A_2^{\s}(t,\bk)-\eo)^{2}((A_3^{\s}(t,\bk)-\eo)^{2}+E(t,\bk))\notag\\
&\quad+f(\o,A_2^{\s}(t,\bk))(A_3^{\s}(t,\bk)-\eo)^{2}((A_2^{\s}(t,\bk)-\eo)^{2}+E(t,\bk))\big)\notag\\
&\quad\cdot\big/\big((4E(t,\bk)^2+E(t,\bk)(\ec-\eo)^2)\cD^{\s}(\bk,\o)\big)\notag\\
&=\frac{(4E(t,\bk)^2+E(t,\bk)(\ec-\eo)^2)\cN^{\s}_{1,1}(\bk,\o)}{(4E(t,\bk)^2+E(t,\bk)(\ec-\eo)^2)\cD^{\s}(\bk,\o)}=\cB_{1,1}^{\s}(\bk,\o),\notag\\
&(\text{The left-hand side of }\eqref{eq_required_equality}\text{ for }(\rho,\eta)=(1,2))\notag\\
&=t(1+e^{ik_1})\big(f(\o,A_3^{\s}(t,\bk))(A_2^{\s}(t,\bk)-\eo)((A_3^{\s}(t,\bk)-\eo)^{2}+E(t,\bk))\notag\\
&\quad+f(\o,A_2^{\s}(t,\bk))(A_3^{\s}(t,\bk)-\eo)((A_2^{\s}(t,\bk)-\eo)^{2}+E(t,\bk))\big)\notag\\
&\quad\cdot\big/\big((4E(t,\bk)^2+E(t,\bk)(\ec-\eo)^2)\cD^{\s}(\bk,\o)\big)\notag\\
&=\frac{(4E(t,\bk)^2+E(t,\bk)(\ec-\eo)^2)\cN^{\s}_{1,2}(\bk,\o)}{(4E(t,\bk)^2+E(t,\bk)(\ec-\eo)^2)\cD^{\s}(\bk,\o)}=\cB_{1,2}^{\s}(\bk,\o),\notag\\
&(\text{The left-hand side of }\eqref{eq_required_equality}\text{ for }(\rho,\eta)=(2,2))=\frac{1}{f(\o,A_1^{\s}(t,\bk))}+2t^2(1+\cos
 k_1)\notag\\
&\quad\cdot\left(\frac{-1}{E(t,\bk)f(\o,A_1^{\s}(t,\bk))}+\sum_{j=2}^3\frac{1}{((A_j^{\s}(t,\bk)-\eo)^2+E(t,\bk))f(\o,A_j^{\s}(t,\bk))}\right).\label{eq_intermediate_1_1}
\end{align}
Note that
\begin{align}
&\frac{-1}{E(t,\bk)f(\o,A_1^{\s}(t,\bk))}+\sum_{j=2}^3\frac{1}{((A_j^{\s}(t,\bk)-\eo)^2+E(t,\bk))f(\o,A_j^{\s}(t,\bk))}\notag\\
&=\frac{-1}{E(t,\bk)f(\o,A_1^{\s}(t,\bk))}+\Bigg(h-he^{-\frac{i}{h}\o+\frac{\ec+\eo}{2h}}\sum_{n=0}^{\infty}\frac{1}{(2n)!h^{2n}}\left(\frac{(\ec-\eo)^2}{4}+E(t,\bk)\right)^n\notag\\
&\quad -\frac{\ec-\eo}{2}e^{-\frac{i}{h}\o+\frac{\ec+\eo}{2h}}\sum_{n=0}^{\infty}\frac{1}{(2n+1)!h^{2n}}\left(\frac{(\ec-\eo)^2}{4}+E(t,\bk)\right)^n\Bigg)\big/(E(t,\bk)\cD^{\s}(t,\bk))\notag\\
&=\Bigg(e^{-\frac{i}{h}\o+\frac{\ec+\eo}{2h}}\notag\\
&\quad\cdot\Bigg(\sum_{n=0}^{\infty}\left(\frac{1}{(2n)!h^{2n-2}}-\frac{\ec-\eo}{2(2n+1)!h^{2n-1}}\right)\left(\frac{(\ec-\eo)^2}{4}+E(t,\bk)\right)^n-h^2e^{-\frac{\ec-\eo}{2h}}\Bigg)\notag\\
&\quad+
 e^{-\frac{2i}{h}\o+\frac{\ec+3\eo}{2h}}\notag\\
&\quad\cdot\Bigg(\sum_{n=0}^{\infty}\left(\frac{1}{(2n)!h^{2n-2}}+\frac{\ec-\eo}{2(2n+1)!h^{2n-1}}\right)\left(\frac{(\ec-\eo)^2}{4}+E(t,\bk)\right)^n-h^2e^{\frac{\ec-\eo}{2h}}\Bigg)\Bigg)\notag\\&\quad\cdot \big/ (E(t,\bk)f(\o,A_1^{\s}(t,\bk))\cD^{\s}(t,\bk))\notag\\
&=\Big(\frac{1}{2}e^{-\frac{i}{h}\o+\frac{\ec+\eo}{2h}}+\frac{1}{2}e^{-\frac{2i}{h}\o+\frac{\ec+3\eo}{2h}}+(e^{-\frac{i}{h}\o+\frac{\ec+\eo}{2h}}+e^{-\frac{2i}{h}\o+\frac{\ec+3\eo}{2h}})O_4^{\s}(\bk)\notag\\
&\quad+
(e^{-\frac{i}{h}\o+\frac{\ec+\eo}{2h}}-e^{-\frac{2i}{h}\o+\frac{\ec+3\eo}{2h}})O_5^{\s}(\bk)\Big)\notag\\
&\quad\cdot\big/(f(\o,A_1^{\s}(t,\bk))\cD^{\s}(t,\bk)).\label{eq_intermediate_next_1_1}
\end{align}
By inserting \eqref{eq_intermediate_next_1_1} into
\eqref{eq_intermediate_1_1} we obtain \eqref{eq_required_equality} for
$(\rho,\eta)=(2,2)$. Moreover, by using \eqref{eq_intermediate_next_1_1},
\begin{align*}
&(\text{The left-hand side of
 }\eqref{eq_required_equality}\text{ for }(\rho,\eta)=(2,3))=t^2(1+e^{-ik_1})(1+e^{ik_2})\\
&\quad\cdot\left(\frac{-1}{E(t,\bk)f(\o,A_1^{\s}(t,\bk))}+\sum_{j=2}^3\frac{1}{((A_j^{\s}(t,\bk)-\eo)^2+E(t,\bk))f(\o,A_j^{\s}(t,\bk))}\right)\\
&=\frac{\cN_{2,3}^{\s}(\bk,\o)}{f(\o,A_1^{\s}(t,\bk))\cD^{\s}(t,\bk)}=\cB_{2,3}^{\s}(\bk,\o).
\end{align*}

By using the results for $(\rho,\eta)=(1,1),(1,2),(2,2),(2,3)$ and
symmetries, \eqref{eq_required_equality} for
$(\rho,\eta)=(1,3),(2,1),(3,1),(3,2),(3,3)$ can be immediately
proved. Thus, the representations \eqref{eq_covariance_characterization},
\eqref{eq_covariance_inside_function} have been derived.

\section{Convergence of the Grassmann integral formulation}
\label{app_formulation_convergence}
In this section we sketch how to prove Lemma
\ref{lem_grassmann_formulation}. With a parameter $\la\in\C$ let us
introduce the modified Hamiltonian $H_{\la}$ by
$H_{\la}:=H+\la(\psi_{\hcX_1}^*\psi_{\hcX_2}^*\psi_{\hcY_2}\psi_{\hcY_1}+\text{
h.c})$. 
It follows that $H_{\la}=H_0+\sum_{X_1,X_2,Y_1,Y_2\atop \in
			  \{1,2,3\}\times\G\times\spin}U_{(\la,\la)}(X_1,X_2,Y_1,Y_2)\psi_{X_1}^*\psi_{X_2}^*\psi_{Y_1}\psi_{Y_2}$, where $U_{(\la,\la)}$ is introduced in
\eqref{eq_original_bi_anti_symmetric}. The partition function $\Tr
e^{-\beta H_{\la}}/\Tr e^{-\beta H_0}$ can be expanded as a perturbation
series by straightforwardly following \cite[\mbox{Appendix B}]{K1}.
\begin{align}
&\frac{\Tr e^{-\beta H_{\la}}}{\Tr e^{-\beta
 H_0}}\notag\\
&=1+\sum_{n=1}^{\infty}\frac{1}{n!}\prod_{m=1}^n\Bigg(\sum_{X_{2m-1},X_{2m},Y_{2m-1},Y_{2m}\atop\in
 \{1,2,3\}\times \G\times \spin
 }\int_{0}^{\beta}ds_{2m-1}U_{(\la,\la)}(X_{2m-1},X_{2m},Y_{2m-1},Y_{2m})\Bigg)\notag\\
&\quad\cdot \det(\cC(X_ps_p,Y_qs_q))_{1\le p,q\le
 2n}\Big|_{s_{2j}=s_{2j-1}\atop \forall j\in \{1,\cdots,n\}}.\label{eq_perturbation_partition}
\end{align}

Let the function $P(\la,U_c,U_o)$ $(:\C^3\to \C)$ be defined by the
right-hand side of \eqref{eq_perturbation_partition}. Moreover, by
replacing the integral over $[0,\beta)$ in the right-hand side of
\eqref{eq_perturbation_partition} by the Riemann sum we
can define the discrete analogue of $P$.
\begin{align*}
&P_{h}(\la,U_c,U_o):=\\
&1+\sum_{n=1}^{N_{L,h}/2}\frac{1}{n!}\prod_{m=1}^n\Bigg(\sum_{X_{2m-1},X_{2m},Y_{2m-1},Y_{2m}\atop\in
 \{1,2,3\}\times \G\times \spin
 }\frac{1}{h}\sum_{s_{2m-1}\in [0,\beta)_h}U_{(\la,\la)}(X_{2m-1},X_{2m},Y_{2m-1},Y_{2m})\Bigg)\\
&\quad\cdot \det(\cC(X_ps_p,Y_qs_q))_{1\le p,q\le
 2n}\Big|_{s_{2j}=s_{2j-1}\atop \forall j\in \{1,\cdots,n\}}.
\end{align*}
The function $P_h$ uniformly converges to $P$ in the following
sense. For any $U>0$,
\begin{equation}\label{eq_convergence_partition}
\lim_{h\to \infty\atop h\in 2\N/\beta}\sup_{(\la,U_c,U_o)\in\C^3\atop
 |\la|,|U_c|,|U_o|\le U}|P_h(\la,U_c,U_o)-P(\la,U_c,U_o)|=0.
\end{equation}
To prove the convergence property \eqref{eq_convergence_partition}
 we need to use the determinant bound of the following form.
\begin{equation}\label{eq_volume_dependent_bound}
|\det (\cC(\rho_p\bx_p\s_px_p,\eta_q\by_q\tau_qy_q))_{1\le p,q\le
 n}|\le C_1(L)\cdot C_2(L)^n,
\end{equation}
where the constants $C_1(L)$, $C_2(L)>0$ may depend on $L$, but are independent of $n$
and how to choose $(\rho_p,\bx_p,\s_p,x_p)$,
$(\eta_p,\by_p,\tau_p,y_p)\in\{1,2,3\}\times\G\times\spin\times[0,\beta)$
 $(p=1,\cdots,n)$. 
The bound \eqref{eq_volume_dependent_bound} can be verified as follows.
We can choose the operators $A_1,A_2,\cdots,A_{2n}$ from
$\{e^{x_pH_0}\psi_{\rho_p\bx_p\s_p}^*e^{-x_pH_0},e^{y_pH_0}\psi_{\eta_p\by_p\tau_p}e^{-y_pH_0}\}_{p=1}^n$
so that
\begin{align*}
|\det (\cC(\rho_p\bx_p\s_px_p,\eta_q\by_q\tau_qy_q))_{1\le p,q\le n}|&=\big|\Tr (e^{-\beta
 H_0}A_1A_2\cdots A_{2n})\big|\big/\Tr e^{-\beta H_0}\\
&\le \frac{2^{6L^2}}{\Tr e^{-\beta
 H_0}}\left(e^{\beta\|H_0\|}\right)^{2n+1},
\end{align*}
where $\|H_0\|$ denotes the operator norm of $H_0$. 

Let us recall that in \cite[\mbox{Lemma 3.4}]{K2} Pedra-Salmhofer's determinant
bound \cite[\mbox{Theorem 2.4}]{PS} was applied to prove the essentially
same statements as Lemma \ref{lem_grassmann_formulation}. Though we do
not have a volume-independent determinant bound like \\
\cite[\mbox{Theorem
2.4}]{PS} on our covariance $\cC$ at hand, the crude bound
\eqref{eq_volume_dependent_bound} sufficiently works to show
\eqref{eq_convergence_partition} in the argument parallel to the proof
of \cite[\mbox{Lemma 3.4}]{K2}.

The following equality directly follows from the definition of the
Grassmann Gaussian integral and $P_h$.
\begin{equation}\label{eq_grassmann_partition}
\int e^{V_{(\la,\la)}(\psi)}d\mu_{\cC}(\psi)=P_h(\la,U_c,U_o)\quad
 (\forall (\la,U_c,U_o)\in\C^3).
\end{equation}
Since $\inf_{(\la,U_c,U_o)\in\R^3, |\la|,|U_c|,|U_o|\le U}P(\la,U_c,U_o)>0$, the uniform convergence property
\eqref{eq_convergence_partition} and the equality \eqref{eq_grassmann_partition}
ensure the claim \eqref{item_real_part} of Lemma
\ref{lem_grassmann_formulation}.

By using \cite[\mbox{Lemma 2.3}]{K1} and
\eqref{eq_convergence_partition} we have for any $U_c$, $U_o\in\R$ and $\delta >0$,
\begin{align}
\<\psi_{\hcX_1}^*\psi_{\hcX_2}^*\psi_{\hcY_2}\psi_{\hcY_1}+\text{h.c}\>_L&=-\frac{1}{\beta}\frac{\partial}{\partial\la}\log
 P(\la,U_c,U_o)\Big|_{\la=0}\notag\\
&=-\frac{1}{\beta}\frac{1}{P(0,U_c,U_o)}\frac{1}{2\pi
 i}\oint_{|\la|=\delta}d\la \frac{P(\la,U_c, U_o)}{\la^2}\notag\\
&=-\frac{1}{\beta}\lim_{h\to \infty\atop h\in
 2\N/\beta}\frac{1}{P_h(0,U_c,U_o)}\frac{1}{2\pi
 i}\oint_{|\la|=\delta}d\la \frac{P_h(\la,U_c, U_o)}{\la^2}\notag\\
&=-\frac{1}{\beta}\lim_{h\to \infty\atop h\in
 2\N/\beta}\frac{\partial}{\partial\la}\log P_h(\la,U_c,U_o)\Big|_{\la =0}.\label{eq_cauchy_limit}
\end{align}
Substituting \eqref{eq_grassmann_partition} into the right-hand side of
\eqref{eq_cauchy_limit} yields the claim
\eqref{item_grassmann_formulation} of Lemma \ref{lem_grassmann_formulation}.

\section{Logarithm of Grassmann polynomials}
\label{app_log_grassmann}
The aim of this section is to extend the notion of logarithm of
Grassmann polynomials summarized in \cite{FKT} to be available for
Grassmann polynomials with complex constant terms. In the following
let $f_0,g_0\in \C$ denote the constant term of $f,g\in\bigwedge
\cV$, respectively.
\begin{definition}\label{def_log_grassmann}
For $f\in\bigwedge \cV$ with $\Re f_0>0$, $\log f\in \bigwedge \cV$
 is defined by 
$$
\log f:=\log(f_0)+\sum_{n=1}^{2N_{L,h}}\frac{(-1)^{n-1}}{n}\left(\frac{f-f_0}{f_0}\right)^n,
$$
where $\log z:=\log |z|+i\Arg z$, $\Arg z\in (-\pi/2,\pi/2)$ for 
 $z\in\C$ with $\Re z>0$. 
\end{definition}
Recall that for $f\in\bigwedge \cV$, $e^f\in\bigwedge \cV$ is
defined by 
\begin{equation}\label{eq_def_exponential_grassmann}
e^f:=e^{f_0}\sum_{n=0}^{2N_{L,h}}\frac{1}{n!}(f-f_0)^n.
\end{equation}
It was proved in \cite[\mbox{Problem I.2}]{FKT} that for any $f,g\in
\bigwedge \cV$ satisfying $fg=gf$,
\begin{equation}\label{eq_exponential_decomposition}
e^f\cdot e^g=e^g\cdot e^f=e^{f+g}.
\end{equation}
The following equality was also shown in \cite[\mbox{Problem I.4
b)}]{FKT}. For any $f\in \bigwedge \cV$ with $f_0\in \R_{>0}$,
\begin{equation}\label{eq_exponential_log_previous}
e^{\log f}=f.
\end{equation}
The multi-scale analysis in this paper needs an extension of
\eqref{eq_exponential_log_previous}.
\begin{lemma}\label{lem_exponential_log_equality}
For any $f\in \bigwedge \cV$ with $\Re f_0\in \R_{>0}$, $e^{\log f}=f$.
\end{lemma}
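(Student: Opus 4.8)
The plan is to reduce the claim to the already-established identity $e^{\log f}=f$ for $f$ with $f_0\in\R_{>0}$ (equation \eqref{eq_exponential_log_previous}) by factoring out the complex constant term. The key observation is that $\bigwedge\cV$ decomposes every element as its scalar constant term plus a nilpotent part, and scalars commute with everything, so I can peel off the phase of $f_0$ cleanly.

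First I would write $f=f_0(1+\nu)$ where $\nu:=(f-f_0)/f_0$ is the nilpotent part, satisfying $\nu^{2N_{L,h}+1}=0$ since every monomial of positive degree in $\bigwedge\cV$ is nilpotent and $\dim$ forces vanishing beyond degree $2N_{L,h}$. By Definition \ref{def_log_grassmann}, $\log f=\log(f_0)+L(\nu)$ where $L(\nu):=\sum_{n=1}^{2N_{L,h}}\frac{(-1)^{n-1}}{n}\nu^n$ has zero constant term. Since $\log(f_0)$ is a scalar it commutes with $L(\nu)$, so by the exponential addition formula \eqref{eq_exponential_decomposition} I get $e^{\log f}=e^{\log(f_0)}\cdot e^{L(\nu)}$. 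The scalar factor: for a pure scalar $z\in\C$ viewed as an element of $\bigwedge\cV$, the definition \eqref{eq_def_exponential_grassmann} gives $e^{z}=e^{z}$ (the Grassmann exponential of a scalar is just its complex exponential, since the nilpotent part is zero), and $e^{\log(f_0)}=f_0$ because $\log(f_0)=\log|f_0|+i\Arg f_0$ is precisely the branch with $\Re f_0>0$ so that $e^{\log|f_0|+i\Arg f_0}=f_0$.

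The remaining point is that $e^{L(\nu)}=1+\nu$, i.e.\ that the formal logarithm and exponential are inverse on the nilpotent part. This is exactly the content of \eqref{eq_exponential_log_previous} applied to the element $1+\nu$, whose constant term is $1\in\R_{>0}$: indeed $\log(1+\nu)=L(\nu)$ by Definition \ref{def_log_grassmann} (since the constant term is $1$, the scalar summand $\log(1)=0$ vanishes and $((1+\nu)-1)/1=\nu$), and therefore $e^{\log(1+\nu)}=1+\nu$ by \eqref{eq_exponential_log_previous}. Combining, $e^{\log f}=f_0\cdot(1+\nu)=f$, which is the assertion.

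I expect the only genuine subtlety to be the bookkeeping around the scalar factor, namely verifying that the Grassmann exponential of the pure scalar $\log(f_0)$ coincides with the ordinary complex exponential and returns $f_0$ under the chosen branch of the logarithm; this is where the hypothesis $\Re f_0>0$ is used, guaranteeing $\Arg f_0\in(-\pi/2,\pi/2)$ and hence $e^{\log f_0}=f_0$. Everything else is a direct invocation of the commuting-exponential law \eqref{eq_exponential_decomposition} and the real-constant-term case \eqref{eq_exponential_log_previous}, so no new estimates or expansions are required.
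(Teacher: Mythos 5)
Your proof is correct and follows essentially the same route as the paper: both arguments reduce the claim to the real-positive-constant case \eqref{eq_exponential_log_previous} by a scalar normalization of $f$ and then recombine the factors via \eqref{eq_exponential_decomposition}. The only difference is cosmetic: you factor out $f_0$, normalizing the constant term to $1$ and using the branch fact $e^{\log f_0}=f_0$, whereas the paper multiplies by $\overline{f_0}$, normalizing the constant term to $|f_0|^2$ and using the identity $\log(f_0)+\log(\overline{f_0})=\log(|f_0|^2)$.
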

\begin{proof}
Take $f\in \bigwedge \cV$ with $\Re f_0>0$. Since
 $\log(|f_0|^2)=\log(f_0)+\log(\overline{f_0})$, 
\begin{equation}\label{eq_log_decomposition_preparation}
\log(\overline{f_0}\cdot
 f)=\log(|f_0|^2)+\sum_{n=1}^{2N_{L,h}}\frac{(-1)^{n-1}}{n}\left(\frac{\overline{f_0}\cdot
 f-|f_0|^2}{|f_0|^2}\right)^n=\log(\overline{f_0})+\log f.
\end{equation}
It follows from \eqref{eq_exponential_log_previous} that
\begin{equation}\label{eq_exp_log_preparation}
e^{\log(\overline{f_0}\cdot f)}=\overline{f_0}\cdot f.
\end{equation}
By using \eqref{eq_exponential_decomposition},
 \eqref{eq_log_decomposition_preparation} and
 \eqref{eq_exp_log_preparation} we observe that
\begin{equation*}
e^{\log f}=e^{-\log(\overline{f_0})+\log(\overline{f_0}\cdot
 f)}=e^{-\log (\overline{f_0})}\cdot e^{\log (\overline{f_0}\cdot f)}=\frac{1}{\overline{f_0}}\cdot \overline{f_0}\cdot f=f.
\end{equation*}
\end{proof}

\section{Existence of the thermodynamic limit}
\label{app_thermodynamic_limit}
Here we show that the correlation function
$\<\psi_{\hcX_1}^*\psi_{\hcX_2}^*\psi_{\hcY_2}\psi_{\hcY_1}+\text{h.c}\>_L$
converges to a finite value as $L\to \infty$ if $|U_c|$, $|U_o|$ are
smaller than certain value. The idea of the proof is
similar to \cite[\mbox{Appendix B}]{K2} and based on the perturbative
expansion of logarithm of the Grassmann Gaussian
integral. We also use the following lemma.
\begin{lemma}\label{lem_covariance_thermodynamic_limit}
\begin{enumerate}[(i)]
\item\label{item_full_covariance_decay}
For any $(\rho,\bx,\s,x)$,
     $(\eta,\by,\tau,y)\in\{1,2,3\}\times\Z^2\times\spin\times
     [0,\beta)$ with $x\neq y$,
$$
|\cC(\rho\bx\s x,\eta\by\tau
     y)|\le\frac{c(E_{max},\beta)}{1+\sum_{p=1}^2\left(\frac{L}{2\pi}\right)^3\left|e^{i2\pi\<\bx-\by,\be_p\>/L}-1\right|^3},
$$ 
where the constant $c(E_{max},\beta)>0$ depends only on $E_{max}$ and $\beta$.
\item\label{item_covariance_limit}
For any $(\rho,\bx,\s,x)$,
     $(\eta,\by,\tau,y)\in\{1,2,3\}\times\Z^2\times\spin\times
     [0,\beta)$, \\
$\lim_{L\to\infty,L\in\N}\cC(\rho\bx\s x,\eta\by\tau
     y)$ exists.
\end{enumerate}
\end{lemma}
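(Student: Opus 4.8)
The plan is to base the argument on the closed momentum-space representation \eqref{eq_covariance_ordered_combination} of the continuous covariance, which for $x\neq y$ reads $\cC(\rho\bx\s x,\eta\by\tau y)=\frac{\delta_{\s,\tau}}{L^2}\sum_{\bk\in\G^*}e^{-i\<\bx-\by,\bk\>}\Phi_{\rho,\eta}^{\s}(\bk)$, where $\Phi_{\rho,\eta}^{\s}(\bk)$ collects the $\g$-sum of the Fermi weights $e^{(x-y)A_{\g}^{\s}(t,\bk)}(1_{x\ge y}(1+e^{\beta A_{\g}^{\s}})^{-1}-1_{x<y}(1+e^{-\beta A_{\g}^{\s}})^{-1})$ multiplied by $\overline{\cU_{t,\bk}^{\s}(\rho,\g)}\cU_{t,\bk}^{\s}(\eta,\g)$. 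Both claims reduce to controlling $\Phi_{\rho,\eta}^{\s}$ as a function of $\bk$.

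First I would establish the analogue of the analyticity proved for $\cB_{\rho,\eta}^{\s}$ in Appendix \ref{app_covariance}: that $\Phi_{\rho,\eta}^{\s}(\cdot)$ extends to a function analytic in $\{\bk\in\C^2\mid |\Im k_1|,|\Im k_2|<r_0\}$ for some $r_0=r_0(E_{max},\beta)>0$ and bounded there by a constant $c(E_{max},\beta)$. The key point, and the main obstacle, is the cancellation of the non-analytic square root $\sqrt{(\ec-\eo)^2/t^2+8\sum_{j=1}^2(1+\cos k_j)}$ carried separately by $A_{\g}^{\s}$ and by the entries of $\cU_{t,\bk}^{\s}$: under the branch flip $\sqrt{E}\to-\sqrt{E}$ the labels $\g=2,3$ are interchanged, so the $\g$-summed combination is invariant and hence a single-valued analytic function of $E(t,\bk)$, a trigonometric polynomial in $\bk$. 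This is precisely the mechanism that turned \eqref{eq_covariance_ordered_combination} into the manifestly analytic ratio $\cN/\cD$ in Appendix \ref{app_covariance}. One must additionally check, shrinking $r_0$ if necessary, that in the strip the Fermi denominators $1+e^{\pm\beta A_{\g}^{\s}(t,\bk)}$ stay away from zero (their poles sit at $\Im(\beta A_{\g}^{\s})\in\pi(2\Z+1)$) and that $e^{(x-y)A_{\g}^{\s}}$ times the Fermi weight stays uniformly bounded for $x-y\in(-\beta,\beta)$; this is the standard bound on a free propagator continued into a complex momentum strip, and it is here that the $\beta$- and $E_{max}$-dependence of $r_0$ and $c(E_{max},\beta)$ is fixed.

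Granting this, part \eqref{item_full_covariance_decay} follows by the contour-shift / summation-by-parts device already used in \eqref{eq_contour_integral_covariance}. Using the $2\pi$-periodicity of $\Phi_{\rho,\eta}^{\s}$ in each $k_p$ and its analyticity in the strip, for each $p\in\{1,2\}$ I would write $\big(\frac{L}{2\pi}(e^{i2\pi\<\bx-\by,\be_p\>/L}-1)\big)^3\cC$ as a threefold contour integral of $\Phi_{\rho,\eta}^{\s}$ over circles of radius $r_0/3$ and bound it by $c(E_{max},\beta)r_0^{-3}$ via Cauchy's estimate. Combining the resulting cubic-decay bound in each direction with the trivial bound $|\cC|\le c(E_{max},\beta)$ yields the estimate with denominator $1+\sum_{p=1}^2(L/2\pi)^3|e^{i2\pi\<\bx-\by,\be_p\>/L}-1|^3$. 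Only three derivatives are taken, so only polynomial decay is obtained; this is forced because $r_0$ degrades as $\beta\to\infty$, but cubic decay already secures summability over $\Z^2$.

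For part \eqref{item_covariance_limit} I would view $\frac{1}{L^2}\sum_{\bk\in\G^*}$ as a Riemann sum for $\frac{1}{(2\pi)^2}\int_{[-\pi,\pi]^2}d\bk$. Since $\bk\mapsto e^{-i\<\bx-\by,\bk\>}\Phi_{\rho,\eta}^{\s}(\bk)$ is continuous and bounded on $[-\pi,\pi]^2$ by the first step, and the single exceptional momentum $(\pi,\pi)$ (where $E(t,\bk)=0$ and the special definition of $A_{\g}^{\s}$, $\cU_{t,\bk}^{\s}$ is used) lies in $\G^*$ only when $L$ is even and then contributes $O(L^{-2})\to 0$, the Riemann sums converge. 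Hence $\lim_{L\to\infty}\cC(\rho\bx\s x,\eta\by\tau y)=\frac{\delta_{\s,\tau}}{(2\pi)^2}\int_{[-\pi,\pi]^2}e^{-i\<\bx-\by,\bk\>}\Phi_{\rho,\eta}^{\s}(\bk)\,d\bk$ exists and is finite.
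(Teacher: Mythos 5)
Your part (ii) is exactly the paper's argument: interpret $L^{-2}\sum_{\bk\in\G^*}$ as the integral of a step function, use continuity of the summand on $(-\pi,\pi)^2$ together with the uniform bound $|g_{L}|\le 3$, and apply dominated convergence. Part (i), however, takes a genuinely different route. The paper never continues the Fermi-weight representation \eqref{eq_covariance_ordered_combination} into complex momenta; instead, for $x\neq y$ it re-expands that formula over the full Matsubara set $\pi(2\Z+1)/\beta$, obtaining kernels $\cB_{\rho,\eta}^{\s,\infty}(\bk,\o)$ that are explicit rational expressions in $e^{ik_1},e^{ik_2},\o$ whose denominator obeys $|\cD^{\s,\infty}(\bk,\o)|\ge \frac{1}{2}\o^2$. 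The cubic decay then comes from a purely real-variable device: three discrete differences in $\bx-\by$ are traded for three $k_p$-derivatives under the sum, each term is bounded by $c(E_{max},\beta)/\o^2$, and the $\o$-sum converges. This is also why the hypothesis $x\neq y$ appears in the lemma (it legitimizes the Matsubara expansion); your route, if completed, would not need that hypothesis.

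The gap in your part (i) is the pivotal analyticity claim. Invariance under the branch flip $\sqrt{u}\to-\sqrt{u}$ shows that the $\g$-summed combination is a single-valued function of $u$, but single-valuedness is not analyticity: it does not exclude singularities coming from vanishing denominators. The entries of $\cU_{t,\bk}^{\s}$ carry the normalizing factors $E(t,\bk)^{1/2}$ and $((A_\g^{\s}(t,\bk)-\eo)^2+E(t,\bk))^{1/2}$ in their denominators, and these vanish at the real point $\bk=(\pi,\pi)$ itself and on the nearby complex zero set of $E(t,\bk)$; there the individual $\g$-terms are not even continuous, and only their sum is regular. Moreover, when $\ec=\eo$ (which is allowed, and your constant must be uniform over $|\ec|,|\eo|\le E_{max}$) the eigenvalues $A_2^{\s},A_3^{\s}$ themselves have a branch point at $(\pi,\pi)$ on the real domain. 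Establishing the required cancellation is precisely the content of the computation in Appendix \ref{app_covariance} (where the factor $4E^2+E(\ec-\eo)^2$ cancels between numerator and denominator), but that computation is carried out for the Matsubara kernel, not for your Fermi kernel, so you cannot simply cite it; you would have to redo it. A cleaner repair: observe that your $\Phi_{\rho,\eta}^{\s}(\bk)$ is the $(\eta,\rho)$ entry of $f(M_{t,\bk}^{\s})$ with $f(z)=e^{(x-y)z}\bigl(1_{x\ge y}(1+e^{\beta z})^{-1}-1_{x<y}(1+e^{-\beta z})^{-1}\bigr)$, and use holomorphic functional calculus, $f(M_{t,\bk}^{\s})=\frac{1}{2\pi i}\oint_{\G_0}f(z)(z-M_{t,\bk}^{\s})^{-1}dz$, where $\G_0$ is a rectangle around the real spectrum of height less than $\pi/\beta$, so that the poles of $f$ at $z\in i\pi(2\Z+1)/\beta$ are avoided and $|1+e^{\pm\beta z}|\ge 1$ on $\G_0$. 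Since the entries of $M_{t,\bk}^{\s}$ are entire in $\bk$ and a Neumann-series estimate keeps the resolvent analytic and bounded for $|\Im k_1|,|\Im k_2|\le r_0(E_{max},\beta)$, this yields the analyticity and the uniform bound in the strip without ever diagonalizing; it also sidesteps the minor issue that $\overline{\cU_{t,\bk}^{\s}(\rho,\g)}$ is not an analytic function of $\bk$, so that what one continues must be the real-$\bk$ formula for the product. With that step repaired, your Cauchy-estimate derivation of the cubic bound and your part (ii) go through.
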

\begin{proof}
\eqref{item_full_covariance_decay}: Take any $(\rho,\bx,\s,x)$,
     $(\eta,\by,\tau,y)\in\{1,2,3\}\times\Z^2\times\spin\times
     [0,\beta)$. By using the notations introduced in Appendix
 \ref{app_covariance}, set
\begin{equation*}
\begin{split}
g_{L,(\rho,\s,x),(\eta,\tau,y)}(\bk)
:=&\delta_{\s,\tau}\sum_{\g\in\{1,2,3\}}e^{(x-y)A_{\g}^{\s}(t,\bk)}\\
&\cdot\left(\frac{1_{x\ge y}}{1+e^{\beta A_{\g}^{\s}(t,\bk)}}-\frac{1_{x<y}}{1+e^{-\beta A_{\g}^{\s}(t,\bk)}}\right)\overline{\cU_{t,\bk}^{\s}(\rho,\g)}\cU_{t,\bk}^{\s}(\eta,\g).
\end{split}
\end{equation*}
By \eqref{eq_covariance_ordered_combination},
$\cC(\rho\bx\s x,\eta\by\tau
 y)=\frac{1}{L^2}\sum_{\bk\in\G^*}e^{-i\<\bx-\by,\bk\>}g_{L,(\rho,\s,x),(\eta,\tau,y)}(\bk)$. Since $|\cU_{t,\bk}^{\s}(\rho,\g)|$,\\
$|\cU_{t,\bk}^{\s}(\eta,\g)|\le 1$, $|g_{L,(\rho,\s,x),(\eta,\tau,y)}(\bk)|\le 3$. This implies that $|\cC(\rho\bx\s x,\eta\by\tau y)|\le 3$.

Let us additionally assume that $x\neq y$. In this case we can
 expand $g_{L,(\rho,\s,x),(\eta,\tau,y)}(\bk)$ as a sum over $\pi
 (2\Z+1)/\beta$ so that  
\begin{equation*}
\cC(\rho\bx\s x,\eta\by \tau y)
=\frac{\delta_{\s,\tau}}{\beta L^2}\sum_{\bk\in\G^*}\sum_{\o\in
\pi(2\Z+1)/\beta}e^{-i\<\bx-\by,\bk\>}e^{i(x-y)\o}\cB_{\rho,\eta}^{\s,\infty}(\bk,\o),
\end{equation*}
where 
$$
\cB_{\rho,\eta}^{\s,\infty}(\bk,\o):=\sum_{\g\in\{1,2,3\}}\frac{\overline{\cU_{t,\bk}^{\s}(\rho,\g)}\cU_{t,\bk}^{\s}(\eta,\g)}{i\o-A_{\g}^{\s}(t,\bk)}.
$$
We can see from \eqref{eq_required_equality} that
 $\cB_{\rho,\eta}^{\s,\infty}(\bk,\o)=\lim_{h\to \infty,h\in
 2\N/\beta}\cB_{\rho,\eta}^{\s}(\bk,\o)$. Thus by setting
$$
\cD^{\s,\infty}(\bk,\o):=\left(i\o-\frac{1}{2}(\ec+\eo)\right)^2-\frac{1}{4}(\ec-\eo)^2-2t^2\sum_{j=1}^2(1+\cos
 k_j),$$
it follows from \eqref{eq_covariance_inside_function} that for any
 $\bk=(k_1,k_2)\in\G^*$ and $\o\in \pi(2\Z+1)/\beta$,
\begin{align*}
&\cB_{1,1}^{\s,\infty}(\bk,\o)=\frac{i\o-\eo}{\cD^{\s,\infty}(\bk,\o)},\
\cB_{1,2}^{\s,\infty}(\bk,\o)=\frac{t(1+e^{ik_1})}{\cD^{\s,\infty}(\bk,\o)},\
 \cB_{1,3}^{\s,\infty}(\bk,\o)=\cB_{1,2}^{\s,\infty}((k_2,k_1),\o),\\
&\cB_{2,1}^{\s,\infty}(\bk,\o)=\cB_{1,2}^{\s,\infty}(-\bk,\o),\ \cB_{2,2}^{\s,\infty}(\bk,\o)=\frac{1}{i\o-\eo}+\frac{2t^2(1+\cos
 k_1)}{(i\o-\eo)\cD^{\s,\infty}(\bk,\o)},\\
&\cB_{2,3}^{\s,\infty}(\bk,\o)=\frac{t^2(1+e^{-ik_1})(1+e^{ik_2})}{(i\o-\eo)\cD^{\s,\infty}(\bk,\o)},\
 \cB_{3,1}^{\s,\infty}(\bk,\o)=
 \cB_{1,2}^{\s,\infty}(-(k_2,k_1),\o),\\
&\cB_{3,2}^{\s,\infty}(\bk,\o)=\cB_{2,3}^{\s,\infty}(-\bk,\o),\ \cB_{3,3}^{\s,\infty}(\bk,\o)=\cB_{2,2}^{\s,\infty}((k_2,k_1),\o).
\end{align*}

Periodicity with respect to $\bk\in\G^*$ guarantees that for $p\in\{1,2\}$,
\begin{align}
&\left(\frac{L}{2\pi}\left(e^{i\frac{2\pi}{L}\<\bx-\by,\be_p\>}-1\right)\right)^3\cC(\rho\bx\s
 x,\eta\by\tau y)=\frac{\delta_{\s,\tau}}{\beta
 L^2}\sum_{\bk\in\G^*}\sum_{\o\in\pi(2\Z+1)/\beta}e^{-i\<\bx-\by,\bk\>}e^{i(x-y)\o}\notag\\
&\quad\cdot\prod_{j=1}^3\left(\frac{L}{2\pi}\int_0^{2\pi/L}d\theta_j\right)\left(\frac{\partial}{\partial
 k_p}\right)^3\cB_{\rho,\eta}^{\s,\infty}\left(\bk+\sum_{j=1}^3\theta_j\be_p,\o\right).\label{eq_periodicity_infty}
\end{align}
Note that for any $\bk\in\R^2$, $\o\in \pi(2\Z+1)/\beta$,
\begin{align}
|\cD^{\s,\infty}(\bk,\o)|&\ge \max\{|\Re \cD^{\s,\infty}(\bk,\o)|,|\Im
 \cD^{\s,\infty}(\bk,\o)|\}\notag\\
&\ge \max\{\o^2-\ec\eo,|\o(\ec+\eo)|\}\ge \frac{1}{2}\o^2.\label{eq_D_infty_lb}
\end{align}
By using \eqref{eq_D_infty_lb} we can estimate the equality \eqref{eq_periodicity_infty} and deduce that
\begin{equation}\label{eq_full_covariance_decay_bound}
\left|\frac{L}{2\pi}\left(e^{i\frac{2\pi}{L}\<\bx-\by,\be_p\>}-1\right)\right|^3|\cC(\rho\bx\s
 x,\eta\by\tau y)|\le
 \frac{1}{\beta}\sum_{\o\in \pi(2\Z+1)/\beta}\frac{c(E_{max},\beta)}{\o^2}\le c(E_{max},\beta).
\end{equation}
By coupling \eqref{eq_full_covariance_decay_bound} with the bound
 $|\cC(\rho\bx\s x,\eta\by \tau y)|\le 3$ we obtain the inequality in \eqref{item_full_covariance_decay}.

\eqref{item_covariance_limit}: Remark that for any $(\rho,\bx,\s,x)$,
 $(\eta,\by,\tau,y)\in\{1,2,3\}\times\G\times \spin\times [0,\beta)$,
\begin{equation*}
\cC(\rho\bx\s x,\eta \by\tau
 y)=\frac{1}{(2\pi)^2}\int_{[-\pi,\pi)}dp_1\int_{[-\pi,\pi)}dp_2\tilde{g}_{L,(\rho,\bx,\s,x),(\eta,\by,\tau,y)}(p_1,p_2),
\end{equation*}
where
 $\tilde{g}_{L,(\rho,\bx,\s,x),(\eta,\by,\tau,y)}(p_1,p_2):=e^{-i\<\bx-\by,\bk\>}{g}_{L,(\rho,\s,x),(\eta,\tau,y)}(k_1,k_2)$
 with $k_j\in$
 $\{-\pi,-\pi+2\pi/L,\cdots,\pi-2\pi/L\}$ satisfying that
 $p_j\in [k_j,k_j+2\pi/L)$ $(j=1,2)$. Since $\bk\mapsto g_{L,(\rho,\s,x),(\eta,\tau,y)}(\bk)$ is continuous in $(-\pi,\pi)^2$
 by definition, $\lim_{L\to
 \infty,L\in\N}$$\tilde{g}_{L,(\rho,\bx,\s,x),(\eta,\by,\tau,y)}(\bp)$ exists
 for any $\bp\in (-\pi,\pi)^2$. As we have seen above,
 $|\tilde{g}_{L,(\rho,\bx,\s,x),(\eta,\by,\tau,y)}(\bp)|=$\\
 $|{g}_{L,(\rho,\s,x),(\eta,\tau,y)}(\bk)|\le
 3$. Therefore, the dominated convergence theorem concludes that 
\begin{equation*}
\lim_{L\to \infty\atop L\in\N}\cC(\rho\bx\s x,\eta\by\tau
 y)=\frac{1}{(2\pi)^2}\int_{[-\pi,\pi)^2}d\bp \lim_{L\to \infty\atop L\in\N}\tilde{g}_{L,(\rho,\bx,\s,x),(\eta,\by,\tau,y)}(\bp).
\end{equation*}
\end{proof}

\begin{lemma}\label{lem_thermodynamic_limit}
Assume that $U_c, U_o\in\R$ and \eqref{eq_smallness_assumption} holds with $c_0$ defined in
 \eqref{eq_final_coefficient}. Then,
 $\<\psi_{\hcX_1}^*\psi_{\hcX_2}^*\psi_{\hcY_2}\psi_{\hcY_1}+\text{h.c}\>_L$
 converges to a finite value as $L\to \infty$ ($L\in\N$).
\end{lemma}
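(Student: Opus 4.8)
The plan is to follow the strategy of \cite[\mbox{Appendix B}]{K2}, establishing the existence of the thermodynamic limit directly at the level of the Grassmann integral formulation without recourse to the multi-scale machinery, which was needed for the decay bound but not for mere convergence. First I would recall from Lemma \ref{lem_grassmann_formulation} \eqref{item_grassmann_formulation} that
$$
\<\psi_{\hcX_1}^*\psi_{\hcX_2}^*\psi_{\hcY_2}\psi_{\hcY_1}+\text{h.c}\>_L=-\frac{1}{\beta}\lim_{h\to\infty\atop h\in 2\N/\beta}\frac{\partial}{\partial\la}\log\left(\int e^{V_{(\la,\la)}(\psi)}d\mu_{\cC}(\psi)\right)\Big|_{\la=0},
$$
and that the Grassmann integral equals the perturbation series $P_h(\la,U_c,U_o)$ of \eqref{eq_grassmann_partition}. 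By Cauchy's formula as in \eqref{eq_cauchy_limit}, the correlation function is expressible through a contour integral in $\la$ of $\log P_h$, so it suffices to show that the coefficients of the perturbative expansion of $\frac{\partial}{\partial\la}\log P_h|_{\la=0}$ converge as $L\to\infty$ and are summable uniformly in $L$.

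Next I would expand $\log P_h$ by the tree formula \eqref{eq_expanding_trees}--\eqref{eq_tree_expansion}, exactly the same connected expansion used throughout Section \ref{sec_multiscale_integration}, but now applied to the \emph{full} covariance $\cC$ (at $w=0$, $\bp=\b0$) rather than the sliced covariances $\cC_l$. The two ingredients one needs are a uniform determinant bound and a summable-decay bound on $\cC$. For the determinant bound I would invoke the crude volume-dependent estimate: after discretization each connected term is controlled by Gram's inequality using $\|\cC\|_{1,\infty}$ and a determinant bound of the form $c_0^n$; crucially, Lemma \ref{lem_covariance_thermodynamic_limit} \eqref{item_full_covariance_decay} supplies the spatial decay
$$
|\cC(\rho\bx\s x,\eta\by\tau y)|\le\frac{c(E_{max},\beta)}{1+\sum_{p=1}^2\left(\frac{L}{2\pi}\right)^3\left|e^{i2\pi\<\bx-\by,\be_p\>/L}-1\right|^3},
$$
which (using $|(e^{i2\pi m/L}-1)/(2\pi/L)|\ge 2|m|/\pi$) gives an $L$-uniform $\ell^1$ bound on $\cC(\cdot,\b0)$ over the spatial variable, yielding the $L$-independent convergence of the tree expansion for small enough $|U_c|,|U_o|$. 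The smallness hypothesis \eqref{eq_smallness_assumption}, together with $c_0\ge 1$, furnishes precisely such a small coupling window.

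Having established that the sum over connected diagrams converges uniformly in $L$ (and in $h$, so that the $h\to\infty$ limit may be interchanged), the final step is to pass to $L\to\infty$ term by term. Each fixed tree diagram is a finite algebraic combination of factors $\cC(\rho\bx\s x,\eta\by\tau y)$ summed against $\frac{1}{h}\sum_{x}$ over the discretized time variables and over the bounded spatial differences appearing in $U_{(\la,\la)}$; since every individual covariance entry converges by Lemma \ref{lem_covariance_thermodynamic_limit} \eqref{item_covariance_limit} and the number of relevant spatial configurations is controlled by the $\ell^1$ decay, the dominated convergence theorem lets me take the limit inside each diagram. Combining the uniform summability with the termwise convergence then yields convergence of $\frac{\partial}{\partial\la}\log P_h|_{\la=0}$ as $L\to\infty$, and hence of the correlation function. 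The main obstacle I anticipate is the bookkeeping needed to interchange the three limits ($h\to\infty$, $L\to\infty$, and the infinite sum over $n$): the cleanest route is to prove uniform convergence of the tree series in both $h$ and $L$ simultaneously, so that all interchanges are justified by a single Weierstrass-type argument rather than by juggling the limits one at a time.
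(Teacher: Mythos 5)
Your final step---passing to $L\to\infty$ inside each fixed-order term via Lemma \ref{lem_covariance_thermodynamic_limit} and dominated convergence---is sound, and it is indeed how the paper concludes. The genuine gap is in your first step, where you claim the tree expansion of $\log P_h$ built on the \emph{full} covariance $\cC$ is controlled by ``Gram's inequality using $\|\cC\|_{1,\infty}$ and a determinant bound of the form $c_0^n$''. The $\|\cdot\|_{1,\infty}$ part is fine (summing \eqref{eq_simple_decay_bound} over scales gives an $L$-, $h$-uniform bound), but no determinant bound of the form $c_0^n$ for $\cC$ exists in this paper: the constant $c_0$ of \eqref{eq_final_coefficient} bounds determinants of the \emph{sliced} covariances $\cC_l$ only (Lemma \ref{lem_covariance_properties_0} \eqref{item_covariance_determinant_bound_0}, Lemma \ref{lem_covariance_properties_l} \eqref{item_covariance_determinant_bound_l}, summarized in \eqref{eq_simple_determinant_bound}), and these do not assemble into an $h$-uniform bound for $\cC=\sum_{l=N_{\beta}}^{N_h}\cC_l$ since the number of scales $N_h-N_{\beta}$ diverges as $h\to\infty$. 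The only bound available for $\cC$ itself is the crude one \eqref{eq_volume_dependent_bound}, whose constants blow up with $L$. The paper emphasizes (introduction and Appendix \ref{app_formulation_convergence}) that the Pedra--Salmhofer determinant bound does not apply to this multi-band covariance; this is exactly the Matsubara ultraviolet problem that Section \ref{sec_multiscale_integration} exists to solve. Hence your premise that the multi-scale machinery is ``needed for the decay bound but not for mere convergence'' is mistaken: without a determinant bound on $\cC$ that is uniform in both $h$ and $L$, your single Weierstrass-type argument for the tree series cannot be run on any fixed coupling window, and the hypothesis \eqref{eq_smallness_assumption} carries no force outside the multi-scale framework in which $c_0$ and $\alpha$ acquire their meaning. (This is also why the analogous direct argument \emph{did} work in \cite[Appendix B]{K2}: there the single-band Pedra--Salmhofer bound was available.)

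The paper obtains the uniform-in-$(L,h)$ summability over the perturbation order $n$ by a different mechanism, which you would have to import to repair the proposal. Via the identity \eqref{eq_consequence_identity} (for $n=0$) it replaces $\frac{\partial}{\partial\la}\log(\int e^{V_{(\la,\la)}(\psi)}d\mu_{\cC}(\psi))|_{\la=0}$ by $\frac{\partial}{\partial\la_a}J_0^{\ge N_{\beta}}(\b0)$, which by Corollary \ref{cor_analyticity_variables} is analytic in the couplings on a polydisc strictly containing the physical window and is bounded there by $2^{12}\beta c_0^2$ thanks to Proposition \ref{prop_bound_schwinger}; Cauchy's estimate then yields the geometric bound $2^{13}\beta c_0^2(1+\eps)^{-n}$ on the $n$-th Taylor coefficient, uniformly in $L$ and $h$. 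Only after this does the paper identify the $n$-th coefficient with the tree quantity $\frac{\partial}{\partial\la}\cP_0T_{ree}(n+1,\cC,V_{(\la,\la)})|_{\la=0}$ built on the full covariance, where, for \emph{fixed} $n$, crude $n$-dependent bounds (using $|\cC(X,Y)|\le 3$ and Lemma \ref{lem_covariance_thermodynamic_limit} \eqref{item_full_covariance_decay}) are enough for the dominated-convergence step you correctly describe. In short: termwise convergence is done as you propose, but the uniform tail bound must come from the multi-scale estimates, not from a single-scale expansion.
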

\begin{proof}
Fix $U_c,U_o\in\R$ with $|U_c|,|U_o|<
 2^{-4}\alpha^{-2}c_0^{-2}M^{N_{\beta}}$. It follows from Lemma
 \ref{lem_grassmann_formulation} \eqref{item_grassmann_formulation} and
 \eqref{eq_consequence_identity} for $n=0$ that
$$
\<\psi_{\hcX_1}^*\psi_{\hcX_2}^*\psi_{\hcY_2}\psi_{\hcY_1}+\text{h.c}\>_L=-\frac{1}{\beta}\lim_{h\to
 \infty\atop h\in
 2\N/\beta}\sum_{a\in\{1,-1\}}\frac{\partial}{\partial\la_a}J_0^{\ge N_{\beta}}(\b0)\Big|_{(\la_1,\la_{-1})=(0,0)}.
$$
Thus, it suffices to prove the convergence of 
\begin{equation}\label{eq_target_convergence}
\lim_{L\to \infty\atop L\in\N}\lim_{h\to \infty\atop h\in 2\N/\beta}\sum_{a\in\{1,-1\}}\frac{\partial}{\partial\la_a}J_0^{\ge N_{\beta}}(\b0)\Big|_{(\la_1,\la_{-1})=(0,0)}.
\end{equation}

In order to make clear the dependency on $U_c$, $U_o$ we write 
$$\frac{\partial}{\partial\la_a}J_0^{\ge
 N_{\beta}}(\b0)\Big|_{(\la_1,\la_{-1})=(0,0)}(U_c,U_o)$$ 
in place of $(\partial/\partial\la_a)J_0^{\ge
 N_{\beta}}(\b0)|_{(\la_1,\la_{-1})=(0,0)}$. We can take $\eps>0$ such that
 $(1+\eps)|U_c|$,
 $(1+\eps)|U_o|<2^{-4}\alpha^{-2}c_0^{-2}M^{N_{\beta}}$. By Corollary
 \ref{cor_analyticity_variables} \eqref{item_complex_variables} there is
 a domain $D_o\subset\C$ containing the disk $\{z\in \C\ |\ |z|\le
 1+\eps\}$ inside such that
$$
z\mapsto \sum_{a\in\{1,-1\}}\frac{\partial}{\partial\la_a}J_0^{\ge N_{\beta}}(\b0)\Big|_{(\la_1,\la_{-1})=(0,0)}(zU_c,zU_o)
$$
is analytic in $D_o$. Thus,
\begin{align*}
&\sum_{a\in\{1,-1\}}\frac{\partial}{\partial\la_a}J_0^{\ge
 N_{\beta}}(\b0)\Big|_{(\la_1,\la_{-1})=(0,0)}(U_c,U_o)\\
&=\sum_{n=0}^{\infty}\frac{1}{n!}\left(\frac{d}{dz}\right)^n\left(
\sum_{a\in\{1,-1\}}\frac{\partial}{\partial\la_a}J_0^{\ge
 N_{\beta}}(\b0)\Big|_{(\la_1,\la_{-1})=(0,0)}(zU_c,zU_o)\right)\Big|_{z=0}.
\end{align*}
Moreover, by Proposition \ref{prop_bound_schwinger}, for any
 $n\in\N\cup\{0\}$,
\begin{align*}
&\left|\frac{1}{n!}\left(\frac{d}{dz}\right)^n\left(
\sum_{a\in\{1,-1\}}\frac{\partial}{\partial\la_a}J_0^{\ge
 N_{\beta}}(\b0)\Big|_{(\la_1,\la_{-1})=(0,0)}(zU_c,zU_o)\right)\Big|_{z=0}\right|\\
&=\left|\frac{1}{2\pi i}\oint_{|z|=1+\eps}dz\cdot z^{-n-1}
\sum_{a\in\{1,-1\}}\frac{\partial}{\partial\la_a}J_0^{\ge
 N_{\beta}}(\b0)\Big|_{(\la_1,\la_{-1})=(0,0)}(zU_c,zU_o)\right|\\
&\le 2^{13}\beta c_0^2(1+\eps)^{-n}.
\end{align*}
Since $(1+\eps)^{-n}$ is summable over $\N\cup\{0\}$, the dominant
 convergence theorem guarantees that \eqref{eq_target_convergence} 
converges if 
\begin{equation}\label{eq_wanted_convergence}
\lim_{L\to \infty\atop L\in\N}\lim_{h\to \infty\atop h\in 2\N/\beta}
\left(\frac{d}{dz}\right)^n\left(
\sum_{a\in\{1,-1\}}\frac{\partial}{\partial\la_a}J_0^{\ge
 N_{\beta}}(\b0)\Big|_{(\la_1,\la_{-1})=(0,0)}(zU_c,zU_o)\right)\Big|_{z=0}
\end{equation}
exists for all $n\in\N\cup \{0\}$. 

Again by \eqref{eq_consequence_identity} for $n=0$ we can write for any
 $x\in \R$ with $|x|\le 1+\eps$ that
\begin{align}
&\sum_{a\in\{1,-1\}}\frac{\partial}{\partial\la_a}J_0^{\ge
 N_{\beta}}(\b0)\Big|_{(\la_1,\la_{-1})=(0,0)}(xU_c,xU_o)\notag\\
&=\frac{\partial}{\partial \la}\log\left(\int
 e^{V_{(\la,\la)}(\psi)}d\mu_{\cC}(\psi)\right)\Big|_{\la=0}(xU_c,xU_o),\label{eq_identity_application}
\end{align}
which implies that
\begin{align*}
&\sum_{a\in\{1,-1\}}\frac{\partial}{\partial\la_a}J_0^{\ge
 N_{\beta}}(\b0)\Big|_{(\la_1,\la_{-1})=(0,0)}(0,0)\\
&=-\beta (\det(\cC(\hcX_p0,\hcY_q0))_{1\le p,q\le
 2}+\det(\cC(\hcY_p0,\hcX_q0))_{1\le p,q\le 2}).
\end{align*}
Thus, Lemma \ref{lem_covariance_thermodynamic_limit}
 \eqref{item_covariance_limit} proves the existence of
 \eqref{eq_wanted_convergence} for $n=0$.

It follows from \eqref{eq_identity_application} that for any $n\in\N$,
\begin{align*}
&\frac{1}{n!}\left(\frac{d}{dz}\right)^n\left(
\sum_{a\in\{1,-1\}}\frac{\partial}{\partial\la_a}J_0^{\ge
 N_{\beta}}(\b0)\Big|_{(\la_1,\la_{-1})=(0,0)}(zU_c,zU_o)\right)\Big|_{z=0}\\
&=\frac{\partial}{\partial \la}\left(\frac{1}{(n+1)!}\left(\frac{d}{dx}\right)^{n+1}\log\left(\int
 e^{x
 V_{(\la,\la)}(\psi)}d\mu_{\cC}(\psi)\right)\Big|_{x=0}\right)\Big|_{\la=0}\\
&=\frac{\partial}{\partial
 \la}\cP_0T_{ree}(n+1,\cC,V_{(\la,\la)})\Big|_{\la=0}.
\end{align*}
Recall that $T_{ree}(\cdot,\cdot,\cdot)$ is defined in
 \eqref{eq_tree_expansion}. In the expansion of
 $\cP_0T_{ree}(n+1,\cC,V_{(\la,\la)})$ we apply the operator
 $\prod_{\{q,r\}\in T}(\D_{q,r}(\cC)+\D_{r,q}(\cC))$ first and then
 erase the rest of Grassmann polynomials by the operator
 $e^{\sum_{q,r=1}^{n+1}M_{at}(T,\xi,\bs)_{q,r}\D_{q,r}(\cC)}$. By
 recalling the notation \eqref{eq_tree_subset} we observe that
\begin{align}
&\frac{\partial}{\partial
 \la}\cP_0T_{ree}(n+1,\cC,V_{(\la,\la)})\Big|_{\la=0}\notag\\
&=\sum_{a\in\{1,-1\}}\sum_{T\in\T_{n+1}}\frac{1}{h}\sum_{x_1\in
 [0,\beta)_h}\notag\\
&\quad\cdot\prod_{j=2}^{n+1}\left(\sum_{\rho_j\in\{1,2,3\}}(1_{\rho_j=1}U_c+1_{\rho_j=2,3}U_o)\sum_{\s_1^j,\s_2^j,\tau_1^j,\tau_2^j\atop\in
 \spin}1_{(\s_1^j,\s_2^j,\tau_1^j,\tau_2^j)=(\ua,\da,\da,\ua)}\frac{1}{h}\sum_{(\bx_j,x_j)\in\G\times[0,\beta)_h}\right)\notag\\
&\quad\cdot\prod_{\{1,r\}\in
 L_1^1(T)}\left(\sum_{k_{\{1,r\}}=1}^2\sum_{l_{\{1,r\}}=1}^2\sum_{b_{\{1,r\}}\in\{1,-1\}}\cC_{\{1,r\},a}^{k_{\{1,r\}},l_{\{1,r\}},b_{\{1,r\}}}(x_1,\bx_r
 x_r)\right)\notag\\
&\quad\cdot\prod_{q=2}^{n+1}\prod_{\{q,r\}\in L_q^1(T)}\left(\sum_{k_{\{q,r\}}=1}^2\sum_{l_{\{q,r\}}=1}^2\sum_{b_{\{q,r\}}\in\{1,-1\}}\cC_{\{q,r\}}^{k_{\{q,r\}},l_{\{q,r\}},b_{\{q,r\}}}(\bx_qx_q,\bx_r
 x_r)\right)\notag\\
&\quad\cdot f(T,a,\{k_{\{q,r\}},l_{\{q,r\}},b_{\{q,r\}}\}_{\{q,r\}\in T},\cC),\label{eq_tree_expansion_limit}
\end{align}
where 
\begin{align*}
&\cC_{\{1,r\},a}^{k_{\{1,r\}},l_{\{1,r\}},b_{\{1,r\}}}(x_1,\bx_r
 x_r):=\left\{\begin{array}{ll}\cC(\hcX_{k_{\{1,r\}}}x_1,\rho_r\bx_r\tau_{l_{\{1,r\}}}^rx_r)&\text{
	if }a=1,\ b_{\{1,r\}}=1,\\
\cC(\rho_r\bx_r\s_{l_{\{1,r\}}}^rx_r,\hcY_{k_{\{1,r\}}}x_1)&\text{
	if }a=1,\ b_{\{1,r\}}=-1,\\
\cC(\hcY_{k_{\{1,r\}}}x_1,\rho_r\bx_r\tau_{l_{\{1,r\}}}^rx_r)&\text{
	if }a=-1,\ b_{\{1,r\}}=1,\\
\cC(\rho_r\bx_r\s_{l_{\{1,r\}}}^rx_r,\hcX_{k_{\{1,r\}}} x_1)&\text{
	if }a=-1,\ b_{\{1,r\}}=-1,
\end{array}\right.\\
&\cC_{\{q,r\}}^{k_{\{q,r\}},l_{\{q,r\}},b_{\{q,r\}}}(\bx_qx_q,\bx_r
 x_r):=\left\{\begin{array}{ll}\cC(\rho_q\bx_q\s_{k_{\{q,r\}}}^qx_q,\rho_r\bx_r\tau_{l_{\{q,r\}}}^rx_r)&\text{
	if }b_{\{q,r\}}=1,\\
\cC(\rho_r\bx_r\s_{l_{\{q,r\}}}^rx_r,\rho_q\bx_q\tau_{k_{\{q,r\}}}^qx_q)&\text{
	if }b_{\{q,r\}}=-1,
\end{array}
\right.\\
&f(T,a,\{k_{\{q,r\}},l_{\{q,r\}},b_{\{q,r\}}\}_{\{q,r\}\in T},\cC)\\
&:=\frac{1}{n!}\int_{[0,1]^n}d\bs\sum_{\xi\in\S_{n+1}(T)}\varphi(T,\xi,\bs)e^{\sum_{u,v=1}^{n+1}M_{at}(T,\xi,\bs)_{u,v}\D_{u,v}(\cC)}\\
&\quad\cdot\prod_{\{1,r\}\in
 L_1^1(T)}\cL_{\{1,r\},a}^{k_{\{1,r\}},l_{\{1,r\}},b_{\{1,r\}}}(x_1,\bx_rx_r)\prod_{q=2}^{n+1}\prod_{\{q,r\}\in L_q^1(T)}\cL_{\{q,r\}}^{k_{\{q,r\}},l_{\{q,r\}},b_{\{q,r\}}}(\bx_qx_q,\bx_rx_r)\\
&\quad\cdot(-1_{a=1}\opsi_{\hcX_1x_1}^1\opsi_{\hcX_2x_1}^1\psi_{\hcY_2x_1}^1\psi_{\hcY_1x_1}^1-1_{a=-1}\opsi_{\hcY_1x_1}^1\opsi_{\hcY_2x_1}^1\psi_{\hcX_2x_1}^1\psi_{\hcX_1x_1}^1)\\
&\quad\cdot\prod_{s=2}^{n+1}(-\opsi_{\rho_s\bx_s\s_1^s
 x_s}^s\opsi_{\rho_s\bx_s\s_2^s x_s}^s\psi_{\rho_s\bx_s\tau_1^s
 x_s}^s\psi_{\rho_s\bx_s\tau_2^s x_s}^s)\Big|_{\psi^j=\b0\atop \forall
 j\in\{1,\cdots,n+1\}},\\
&\cL_{\{1,r\},a}^{k_{\{1,r\}},l_{\{1,r\}},b_{\{1,r\}}}(x_1,\bx_r
 x_r)\\
&:=\left\{\begin{array}{ll}-(\partial/\partial\opsi_{\hcX_{k_{\{1,r\}}}x_1}^1)(\partial/\partial\psi_{\rho_r\bx_r\tau_{l_{\{1,r\}}}^rx_r}^r)&\text{
	if }a=1,\ b_{\{1,r\}}=1,\\
-(\partial/\partial\opsi_{\rho_r\bx_r\s_{l_{\{1,r\}}}^rx_r}^r)(\partial/\partial \psi_{\hcY_{k_{\{1,r\}}}x_1}^1)&\text{
	if }a=1,\ b_{\{1,r\}}=-1,\\
-(\partial/\partial \opsi_{\hcY_{k_{\{1,r\}}}x_1}^1)(\partial/\partial\psi_{\rho_r\bx_r\tau_{l_{\{1,r\}}}^rx_r}^r)&\text{
	if }a=-1,\ b_{\{1,r\}}=1,\\
-(\partial/\partial \opsi_{\rho_r\bx_r\s_{l_{\{1,r\}}}^rx_r}^r)(\partial/\partial \psi_{\hcX_{k_{\{1,r\}}} x_1}^1)&\text{
	if }a=-1,\ b_{\{1,r\}}=-1,
\end{array}\right.\\
&\cL_{\{q,r\}}^{k_{\{q,r\}},l_{\{q,r\}},b_{\{q,r\}}}(\bx_qx_q,\bx_r
 x_r)\\
&:=\left\{\begin{array}{ll}-(\partial/\partial
	\opsi_{\rho_q\bx_q\s_{k_{\{q,r\}}}^qx_q}^q)(\partial/\partial \psi_{\rho_r\bx_r\tau_{l_{\{q,r\}}}^rx_r}^r)&\text{
	if }b_{\{q,r\}}=1,\\
-(\partial/\partial
 \opsi_{\rho_r\bx_r\s_{l_{\{q,r\}}}^rx_r}^r)(\partial/\partial \psi_{\rho_q\bx_q\tau_{k_{\{q,r\}}}^qx_q}^q)&\text{
	if }b_{\{q,r\}}=-1.
\end{array}
\right.
\end{align*}
By the translation invariance and the periodicity of $\cC(\rho\bx\s
 x,\eta \by \tau y)$ with respect to $\bx,\by\in\G$,
\begin{align}
&\frac{\partial}{\partial \la}\cP_0T_{ree}(n+1,\cC,V_{(\la,\la)})\Big|_{\la=0}\notag\\
&=\frac{1}{h}\sum_{x_1\in
 [0,\beta)_h}\prod_{j=2}^{n+1}\left(\frac{1}{h}\sum_{(\bx_j,x_j)\in\G\times[0,\beta)_h}\right)F_L(x_1,\bx_2x_2,\cdots,\bx_{n+1}x_{n+1}),\label{eq_result_translation_invariance}
\end{align}
where
\begin{align}
&F_L(x_1,\bx_2x_2,\cdots,\bx_{n+1}x_{n+1}):=\notag\\
&\sum_{a\in\{1,-1\}}\sum_{T\in\T_{n+1}}\prod_{j=2}^{n+1}\left(\sum_{\rho_j\in\{1,2,3\}}(1_{\rho_j=1}U_c+1_{\rho_j=2,3}U_o)\sum_{\s_1^j,\s_2^j,\tau_1^j,\tau_2^j\atop\in
 \spin}1_{(\s_1^j,\s_2^j,\tau_1^j,\tau_2^j)=(\ua,\da,\da,\ua)}\right)\notag\\
&\quad\cdot\prod_{\{1,r\}\in
 L_1^1(T)}\left(\sum_{k_{\{1,r\}}=1}^2\sum_{l_{\{1,r\}}=1}^2\sum_{b_{\{1,r\}}\in\{1,-1\}}\tilde{\cC}_{\{1,r\},a}^{k_{\{1,r\}},l_{\{1,r\}},b_{\{1,r\}}}(x_1,\bx_r x_r)\right)\notag\\
&\quad\cdot\prod_{q=2}^{n+1}\prod_{\{q,r\}\in L_q^1(T)}\left(\sum_{k_{\{q,r\}}=1}^2\sum_{l_{\{q,r\}}=1}^2\sum_{b_{\{q,r\}}\in\{1,-1\}}\cC_{\{q,r\}}^{k_{\{q,r\}},l_{\{q,r\}},b_{\{q,r\}}}(\b0x_q,\bx_r
 x_r)\right)\notag\\
&\quad\cdot f(T,a,\{k_{\{q,r\}},l_{\{q,r\}},b_{\{q,r\}}\}_{\{q,r\}\in
 T},\cC),\label{eq_tree_expansion_limit_next}\\
&\tilde{\cC}_{\{1,r\},a}^{k_{\{1,r\}},l_{\{1,r\}},b_{\{1,r\}}}(x_1,\bx_r
 x_r)\notag\\
&:=\left\{\begin{array}{ll}\cC(\hat{\rho}_{k_{\{1,r\}}}\b0\hat{\s}_{k_{\{1,r\}}}x_1,\rho_r\bx_r\tau_{l_{\{1,r\}}}^rx_r)&\text{ if }a=1,\quad b_{\{1,r\}}=1,\\
\cC(\rho_{r}\bx_r\s_{l_{\{1,r\}}}^rx_r,\hat{\eta}_{k_{\{1,r\}}}\b0\hat{\tau}_{k_{\{1,r\}}}x_1)&\text{
 if }a=1,\quad b_{\{1,r\}}=-1,\\
\cC(\hat{\eta}_{k_{\{1,r\}}}\b0\hat{\tau}_{k_{\{1,r\}}}x_1,\rho_r\bx_r\tau_{l_{\{1,r\}}}^rx_r)&\text{
 if }a=-1,\quad b_{\{1,r\}}=1,\\
\cC(\rho_r\bx_r\s_{l_{\{1,r\}}}^rx_r,\hat{\rho}_{k_{\{1,r\}}}\b0\hat{\s}_{k_{\{1,r\}}}x_1)&\text{
 if }a=-1,\quad b_{\{1,r\}}=-1.
\end{array}\right.\notag
\end{align}
Though we do not explicitly write for simplicity, we should remark that
 the dependency of $f(T,a,\{k_{\{q,r\}},l_{\{q,r\}},b_{\{q,r\}}\}_{\{q,r\}\in T},\cC)$ on
 the variables $x_1\in [0,\beta)_h$, $(\bx_j,x_j)\in \G\times [0,\beta)_h$
$(j=2,\cdots,n+1)$ in \eqref{eq_tree_expansion_limit_next} is different
 from that in \eqref{eq_tree_expansion_limit}.

For $s_1\in [0,\beta)$, $(\bx_j,s_j)\in\Z^2\times
 [0,\beta)$ $(j=2,\cdots,n+1)$ set
 $$F_{L,h}(s_1,\bx_2s_2,\cdots,\bx_{n+1}s_{n+1}):=F_L(x_1,\bx_2
 x_2,\cdots,\bx_{n+1}x_{n+1}),$$
where $x_j\in [0,\beta)_h$ satisfies
 that $s_j\in [x_j,x_j+h^{-1})$ $(\forall j\in \{1,\cdots,n+1\})$. Since
 $(x,y)\mapsto \cC(\rho\bx\s x,\eta\by \tau y)$ is continuous a.e. in
 $[0,\beta)^2$, 
$$\lim_{h\to \infty\atop h\in
 2\N/\beta}F_{L,h}(s_1,\bx_2s_2,\cdots,\bx_{n+1}s_{n+1})=F_{L}(s_1,\bx_2s_2,\cdots,\bx_{n+1}s_{n+1})$$
 for a.e. $(s_1,s_2,\cdots,s_{n+1})\in [0,\beta)^{n+1}$, and thus
\begin{align*}
&\lim_{h\to \infty\atop h\in 2\N/\beta}\frac{\partial}{\partial \la}\cP_0T_{ree}(n+1,\cC,V_{(\la,\la)})\Big|_{\la=0}\\
&=\lim_{h\to \infty\atop h\in
 2\N/\beta}\int_0^{\beta}ds_1\prod_{j=2}^{n+1}\left(\int_0^{\beta}ds_j\sum_{\bx_j\in\G}\right)F_{L,h}(s_1,\bx_2s_2,\cdots,\bx_{n+1}s_{n+1})\\
&=\int_0^{\beta}ds_1\prod_{j=2}^{n+1}\left(\int_0^{\beta}ds_j\sum_{\bx_j\in\G}\right) F_{L}(s_1,\bx_2s_2,\cdots,\bx_{n+1}s_{n+1}).
\end{align*}

Lemma \ref{lem_covariance_thermodynamic_limit} \eqref{item_covariance_limit}
implies that $\lim_{L\to
 \infty,L\in\N}F_{L}(s_1,\bx_2s_2,\cdots,\bx_{n+1}s_{n+1})$ exists for
 any $s_1\in [0,\beta)$, $(\bx_j,s_j)\in\Z^2\times
 [0,\beta)$ $(j=2,\cdots,n+1)$.

By Lemma \ref{lem_covariance_thermodynamic_limit}
 \eqref{item_full_covariance_decay}, \eqref{eq_nice_property_phi} and
 the fact that $|M_{at}(T,\xi,\bs)_{u,v}|\le 1$ $(\forall u,v\in
 \{1,\cdots,n+1\})$ (see the proof of \cite[\mbox{Lemma 4.5}]{K1})
 there exists $c(n,T,E_{max},\beta)>0$
 depending only on $n$, $T$, $E_{max}$ and $\beta$ such that $|f(T,a,\{k_{\{q,r\}},l_{\{q,r\}},b_{\{q,r\}}\}_{\{q,r\}\in
 T},\cC)|\le c(n,T,E_{max},\beta)$ for
 a.e. $(s_1,\cdots,s_{n+1})\in [0,\beta)^{n+1}$. Therefore, by setting
 $U_{max}:=\max\{|U_c|,|U_o|\}$ and using Lemma \ref{lem_covariance_thermodynamic_limit} \eqref{item_full_covariance_decay},
\begin{align}
&1_{\bx_j\in\{-\lfloor
 L/2\rfloor,-\lfloor L/2\rfloor+1,\cdots,-\lfloor L/2\rfloor+L-1
 \}^2\ (\forall j\in
 \{2,\cdots,n+1\})}|F_{L}(s_1,\bx_2s_2,\cdots,\bx_{n+1}s_{n+1})|\notag\\
&\le U_{max}^n\sum_{T\in \T_{n+1}}c(n,T,E_{max},\beta)\prod_{\{1,r\}\in
 L_1^1(T)}\frac{1}{1+\left(\frac{2}{\pi}\right)^3\sum_{p=1}^2|\<\bx_r,\be_p\>|^3}\notag\\
&\quad\cdot \prod_{q=2}^{n+1}\prod_{\{q,r\}\in
 L_q^1(T)}\frac{1}{1+\left(\frac{2}{\pi}\right)^3\sum_{p=1}^2|\<\bx_r,\be_p\>|^3}\notag\\
&=U_{max}^n\prod_{j=2}^{n+1}\frac{1}{1+\left(\frac{2}{\pi}\right)^3\sum_{p=1}^2|\<\bx_j,\be_p\>|^3}\sum_{T\in
 \T_{n+1}}c(n,T,E_{max},\beta)
\label{eq_final_dominant_bound}
\end{align}
for a.e. $(s_1,\cdots,s_{n+1})\in [0,\beta)^{n+1}$ and any $\bx_j\in\Z^2$
 $(j=2,\cdots,n+1)$. The right-hand side of \eqref{eq_final_dominant_bound} is in
 $L^1([0,\beta)\times (\Z^2\times [0,\beta))^n)$. Thus, the dominated
 convergence theorem proves that
\begin{align*}
&\lim_{L\to \infty\atop L\in \N}\lim_{h\to\infty\atop h\in 2\N/\beta}\frac{\partial}{\partial\la}\cP_0T_{ree}(n+1,\cC,V_{(\la,\la)})\Big|_{\la=0}\\
&=\lim_{L\to \infty\atop L\in \N}\int_0^{\beta}ds_1\prod_{j=2}^{n+1}\left(\int_0^{\beta}ds_j\sum_{\bx_j\in\Z^2}\right)1_{\bx_j\in\{-\lfloor
 L/2\rfloor,-\lfloor L/2\rfloor+1,\cdots,-\lfloor L/2\rfloor+L-1
 \}^2\ (\forall j\in \{2,\cdots,n+1\})}\\
&\quad\cdot F_{L}(s_1,\bx_2s_2,\cdots,\bx_{n+1}s_{n+1})\\
&=\int_0^{\beta}ds_1\prod_{j=2}^{n+1}\left(\int_0^{\beta}ds_j\sum_{\bx_j\in\Z^2}\right)\lim_{L\to
 \infty\atop L\in \N}F_{L}(s_1,\bx_2s_2,\cdots,\bx_{n+1}s_{n+1}).
\end{align*}
This implies the existence of \eqref{eq_wanted_convergence}
 for $n\in \N$ and completes the proof.
\end{proof}

\section*{Acknowledgments}
The author wishes to thank the referees for their critical reading of
the manuscript.

\section*{Notation}
\subsection*{Parameters and constants}
\begin{tabular}{l|l|l}
Notation & Description & Reference \\
\hline
$L$ & size of lattice of the position variable & Subsection \ref{subsec_model_hamiltonian}\\
$t$ & hopping amplitude & Subsection \ref{subsec_model_hamiltonian}\\
$U_c$ & coupling constant on the Cu sites & Subsection
 \ref{subsec_model_hamiltonian}\\
$U_o$ & coupling constant on the O sites & Subsection
 \ref{subsec_model_hamiltonian}\\
$\ec$, $\eo$ & spin-dependent on-site energies & Subsection
 \ref{subsec_model_hamiltonian}\\
$(\s\in\spin)$ & & \\
$\beta$ & proportional to the inverse of temperature & Subsection
 \ref{subsec_model_hamiltonian}\\ 
$E_{max}$ & $\max_{\s\in\spin}\{1,|t|,|\ec|,|\eo|\}$ & beginning of Section \ref{sec_formulation}\\
$\hcX_j$, $\hcY_j$ & same as $(\hrho_j,\hbx_j,\hs_j)$,
$(\heta_j,\hby_j,\htau_j)$ $(j=1,2)$, & beginning of Section
\ref{sec_formulation}\\
$(j=1,2)$ & fixed sites to define the correlation function& \\
$h$ & step size of the discretization  &  beginning of Section
 \ref{sec_formulation}\\
 & of $[0,\beta)$, $[-\beta,\beta)$ & \\
$N_{L,h}$ & $6L^2\beta h$, cardinality of $I_{L,h}$ & beginning of Section
\ref{sec_formulation}\\
$\la_1$, $\la_{-1}$ & used to modify the interaction & Subsection \ref{subsec_grassmann_formulation}\\
$c$ & generic constant depending & beginning of Section
 \ref{sec_preliminaries}\\
 & only on a fixed smooth function & \\
$M$ & parameter to control the size of the support & Subsection
 \ref{subsec_cut_off}\\
 & of the cut-off function & \\
$N_h$ & $\lfloor \log(2h)/\log (M)\rfloor$ & Subsection \ref{subsec_cut_off}\\
$N_{\beta}$ & $\max\{\lfloor \log(1/\beta)/\log (M)\rfloor +1,1\}$ & Subsection \ref{subsec_cut_off}\\
$c_0$ & constant depending on $M$ and $\beta$ &\eqref{eq_final_coefficient}\\
$\alpha$ & additional parameter used &
 before Proposition \ref{prop_inductive_bound}\\
 &  in the multi-scale integration &  
\end{tabular}

\subsection*{Sets}
\begin{tabular}{l|l|l}
Notation & Description & Reference \\
\hline
$\G$ & $(\Z/L\Z)^2$ & Subsection
\ref{subsec_model_hamiltonian}\\
$[0,\beta)_h$ & $\{0,1/h,\cdots,\beta-1/h\}$ & beginning of Section
\ref{sec_formulation}\\
$[-\beta,\beta)_h$ & $\{-\beta,-\beta+1/h,\cdots,-1/h\}\cup [0,\beta)_h$ & beginning of Section
\ref{sec_formulation}\\
$\G^*$& $(\frac{2\pi}{L}\Z/(2\pi\Z))^2$  & beginning of
Section \ref{sec_formulation}\\
$\cM_h$ & $\{\o\in \pi (2\Z+1)/\beta\ |\ |\o|<\pi h\}$ & beginning of Section \ref{sec_formulation}\\
$I_{L,h}$ & $\{1,2,3\}\times \G\times \spin \times [0,\beta)_h$ & beginning of
Section \ref{sec_formulation}\\
$\tilde{I}_{L,h}$ & $I_{L,h}\times\{1,-1\}$ & Subsection
\ref{subsec_multi_notations}\\
$(I_{L,h})_o^m$ & subset of $I_{L,h}^m$ & Subsection
\ref{subsec_multi_notations}\\
$D_{small}$ & subset of $\C^4$ & Subsection
\ref{subsec_sketch_multiscale}\\
$D_R$ & subset of $\C$ & Subsection
\ref{subsec_sketch_multiscale}
\end{tabular}

\subsection*{Functions}
\begin{tabular}{l|l|l}
Notation & Description & Reference \\
\hline
$\cF_{t,\beta}(\cdot)$ & used to specify the domain & beginning of
 Section \ref{sec_formulation}\\
 & of analyticity of the covariance & \\
$\hat{s}(\cdot)$ & fixed function of spin & beginning of
Section \ref{sec_formulation}\\
$\cC(\cdot,\cdot)$ & covariance of full scale & Subsection
\ref{subsec_covariance}\\
$\chi_l(\cdot)$ & cut-off function of $l$-th scale & Subsection
\ref{subsec_cut_off}\\
$\cC_l(\cdot,\cdot)$ & covariance of $l$-th scale & beginning of Subsection \ref{subsec_sliced_covariance}
\end{tabular}


\begin{thebibliography}{0}
\bibitem{BGM}G. Benfatto, A. Giuliani and V. Mastropietro, 
        Fermi liquid behavior in the 2D Hubbard model at low
        temperatures, {\it Ann. Henri Poincar\'e.} {\bf 7} (2006) 809--898.
\bibitem{BGPS}G. Benfatto, G. Gallavotti, A. Procacci and B. Scoppola,
        Beta function and Schwinger functions for many fermions system
        in one dimension. Anomaly of the Fermi surface, {\it
        Comm. Math. Phys.} {\bf 160} (1994) 93--171.
\bibitem{D}E. Dagotto, Correlated electrons in high-temperature
	superconductors, {\it Rev. Mod. Phys.} {\bf 66} (1994) 763--840.
\bibitem{E}V. J. Emery, Theory of high-Tc superconductivity in oxides,
	{\it Phys. Rev. Lett.} {\bf 58} (1987) 2794.
\bibitem{FKT1}J. Feldman, H. Kn\"orrer and E. Trubowitz, A
	representation for Fermionic correlation functions,
        {\it Comm. Math. Phys.} {\bf 195} (1998) 465--493.
\bibitem{FKT}J. Feldman, H. Kn\"orrer and E. Trubowitz, 
	     {\it Fermionic functional integrals and the renormalization
	group},	     CRM monograph series No. 16 (American Mathematical
	Society, Providence, RI, 2002).
\bibitem{G}A. Giuliani, The ground state construction of the
	two-dimensional Hubbard model on the honeycomb lattice, {\it
	Quantum Theory from Small to Large Scales}, Lecture Notes of the
	Les Houches Summer School: Volume 95, August 2010.
\bibitem{GM}A. Giuliani and V. Mastropietro, The two-dimensional Hubbard
	model on the honeycomb lattice, {\it Comm. Math. Phys.} {\bf 293} (2010) 301--346.
\bibitem{GMP}A. Giuliani, V. Mastropietro and M. Porta, Universality of
	conductivity in interacting graphene, {\it Comm. Math. Phys.}
	{\bf 311} (2012) 317--355.
\bibitem{K1}Y. Kashima, A rigorous treatment of the perturbation theory
	for many-electron systems, {\it Rev. Math. Phys.} {\bf 21}
	(2009) 981--1044.
\bibitem{K2}Y. Kashima, Exponential decay of correlation functions in
	many-electron systems, {\it J. Math. Phys.} {\bf 51} (2010)
	063521.
\bibitem{KT}T. Koma and H. Tasaki, Decay of superconducting and magnetic correlations in one- and two-dimensional Hubbard models, {\it Phys. Rev. Lett.} {\bf 68} (1992) 3248.
\bibitem{PS}W. Pedra and M. Salmhofer, Determinant
	     bounds and the Matsubara UV problem of many-fermion
	     systems, {\it Comm. Math. Phys.} {\bf 282} (2008) 797--818.
\bibitem{R}V. Rivasseau, The two dimensional Hubbard model at
	half-filling. I. Convergent contributions, {\it J. Stat. Phys.}
	{\bf 106} (2002) 693--722.
\bibitem{SW}M. Salmhofer and C. Wieczerkowski,
	     Positivity and convergence in fermionic quantum field
	     theory, {\it J. Stat. Phys.} {\bf 99} (2000) 557--586.
\end{thebibliography}
\end{document}